\documentclass[acmsmall,screen]{acmart}

\AtBeginDocument{%
	}

\setcopyright{none}

\usepackage[T1]{fontenc}
%
\usepackage{hyperref}
\usepackage{fancyhdr}
\usepackage{graphicx}
\usepackage{amsmath}
\usepackage{stmaryrd,wasysym}
\usepackage{booktabs,comment} 
\usepackage{makecell}
\usepackage{units}
\usepackage{pifont}
\usepackage{algpseudocode}

\usepackage{multirow}

\usepackage{framed}

\usepackage[lined,boxed,ruled]{algorithm2e}

\usepackage{amsthm}
\usepackage[normalem]{ulem}
\usepackage{tikz,pgffor}
\usetikzlibrary{arrows}
\usetikzlibrary{shapes}
\usetikzlibrary{calc}
\usetikzlibrary{automata}
\usetikzlibrary{positioning}
\usepackage{longtable}
\usepackage{ltablex}
\usepackage{ltxtable}
\usepackage{ragged2e}
\usepackage{xltabular}

\citestyle{acmnumeric}

\usepackage{bbm}

\tikzstyle{proglabel}=[shape=circle,draw,inner sep=0pt,minimum size=5mm]
\tikzstyle{tran}=[draw,->,>=stealth, rounded corners]

\usepackage{listings}
\lstdefinelanguage{prog}
{
	morekeywords={if, then, else, fi, while, do, od, true, false, and, or, skip, sample, observe,return,score,normal,uniform, prob},
	sensitive = false
}

\usepackage{threeparttable}

\usepackage{subfig}
\usepackage{framed}

\usepackage{tabularx}
\usepackage{dsfont}

\usepackage{bussproofs}

\usepackage[inline]{enumitem}

\usepackage[capitalise]{cleveref}
\usepackage{thmtools}  

\makeatletter
\providecommand{\bigsqcap}{
	\mathop{
		\mathpalette\@updown\bigsqcup
	}
}
\newcommand*{\@updown}[2]{
	\rotatebox[origin=c]{180}{$\m@th#1#2$}
}
\makeatother

{\gdef\scalefactor{#1}\begin{center}\proofSkipAmount \leavevmode}%
	{\scalebox{\scalefactor}{\DisplayProof}\proofSkipAmount \end{center} }

\newif\ifdraft
\draftfalse



\newcommand{\cc}{\ensuremath{C}} 
\newcommand{\AssignSymbol}{\mathrel{\textnormal{$\mathtt{:=}$}}}
\newcommand{\ASSIGN}[2]{\ensuremath{#1 \AssignSymbol #2}}

\newcommand{\AppAssignSymbol}{\mathrel{\textnormal{\texttt{:}}\hspace{-.1em}{\approx}}}
\newcommand{\PASSIGN}[2]{\ensuremath{#1 \AppAssignSymbol\hspace{.1em} #2}}
\newcommand{\IFSYMBOL}{\ensuremath{\textnormal{\texttt{if}}}}

\newcommand{\ELSESYMBOL}{\ensuremath{\textnormal{\texttt{else}}}}

\newcommand{\WHILESYMBOL}{\ensuremath{\textnormal{\texttt{while}}}}



\newcommand{\TAG}[1]{{\tag*{[\small{{#1}}]}}}


\newtheorem{remark}{Remark}


%
%


\definecolor{codegreen}{rgb}{0,0.6,0}
\definecolor{codegray}{rgb}{0.5,0.5,0.5}
\definecolor{codepurple}{rgb}{0.58,0,0.82}
\definecolor{backcolour}{rgb}{0.95,0.95,0.92}

\lstdefinestyle{myStyle}{
    belowcaptionskip=1\baselineskip,
    breaklines=true,
    frame=none,
    basicstyle=\footnotesize\ttfamily,
    keywordstyle=\bfseries\color{green!40!black},
    commentstyle=\itshape\color{purple!40!black},
    identifierstyle=\color{blue},
    backgroundcolor=\color{gray!10!white},
    numberstyle=\tiny\color{codegray},
    stringstyle=\color{codepurple},
    breakatwhitespace=false,                          
    keepspaces=true,                 
    numbers=left,       
    numbersep=5pt,                  
    showspaces=false,                
    showstringspaces=false,
    showtabs=false,                  
    tabsize=2,
}


\newcommand\doubleplus{+\kern-1.3ex+\kern0.8ex}





 

\newcommand{\Uniform}{\ensuremath{\mathrm{Uniform}}}








%

%





\begin{document}

\title{Piecewise Analysis of Probabilistic Programs via $k$-Induction}

\author{Tengshun Yang}
\orcid{0000-0002-2072-0836}
\authornote{Equal Contribution}
\affiliation{%
  \institution{Institute of Software Chinese Academy of Sciences}
  \city{Beijing}
  \country{China}
}
\email{yangts@ios.ac.cn}

\author{Shenghua Feng}
\orcid{0000-0002-5352-4954}
\authornotemark[1]
\affiliation{%
  \institution{Institute of Software Chinese Academy of Sciences}
  \city{Beijing}
  \country{China}
}
\email{fengshenghua@iscas.ac.cn}

\author{Hongfei Fu}
\orcid{0000-0002-7947-3446}
\authornote{The corresponding author} 
\affiliation{%
  \institution{Shanghai Jiao Tong University}
  \city{Shanghai}
  \country{China}
}
\email{jt002845@sjtu.edu.cn}

\author{Naijun Zhan}
\orcid{0000-0003-3298-3817}
\affiliation{%
  \department{School of Computer Science \&  Key Laboratory of High Confidence Software Technology}  
  \institution{Peking University}
  \city{Beijing}
  \country{China}
}
\affiliation{%
  \institution{Zhongguancun Laboratory}
  \city{Beijing}
  \country{China}
}
\email{znj@ios.ac.cn}

\author{Jingyu Ke}
\orcid{0009-0008-6848-3105}
\affiliation{%
  \institution{Shanghai Jiao Tong University}
  \city{Shanghai}
  \country{China}
}
\email{windocotber@gmail.com}

\author{Shiyang Wu}
\orcid{0009-0002-9919-6340}
\affiliation{%
  \institution{Shanghai Jiao Tong University}
  \city{Shanghai}
  \country{China}
}
\email{bravendergclio@gmail.com}

\begin{abstract}

In probabilistic program analysis, quantitative analysis aims at deriving tight numerical bounds for probabilistic properties such as expectation and assertion probability. Most previous works consider numerical bounds over the whole program state space monolithically and do not consider piecewise bounds. Not surprisingly, monolithic bounds are either conservative, or not expressive and succinct enough in general. To derive better bounds, we propose a novel approach for synthesizing piecewise bounds over probabilistic programs. First, we show how to extract useful piecewise information from latticed $k$-induction operators, and combine the piecewise information with Optional Stopping Theorem to obtain a general approach to derive piecewise bounds over probabilistic programs. Second, we develop algorithms to synthesize piecewise polynomial bounds, and show that the synthesis can be reduced to bilinear programming in the linear case, and soundly relaxed to semidefinite programming in the polynomial case. Experimental results show that our approach generates tight piecewise bounds for a wide range of benchmarks when compared with the state of the art.

\end{abstract}

\maketitle

\section{Introduction}\label{sec:introduction}
Probabilistic programming~\cite{DBLP:journals/jcss/Kozen81,DBLP:conf/icse/GordonHNR14,DBLP:journals/corr/abs-1809-10756} is a programming paradigm that extends classical programming languages with probabilistic statements such as sampling and probabilistic branching, and provides a powerful modelling mechanism for randomized algorithms~\cite{DBLP:conf/lics/BartheGGHS16}, machine learning~\cite{DBLP:conf/pldi/BeutnerOZ22}, reliability engineering~\cite{DBLP:conf/oopsla/CarbinMR13}, etc. 
Therefore, analysis of probabilistic programs is becoming increasingly significant, and attracting more and more attention in recent years. 

In this work, we consider the quantitative analysis problem that aims at automated approaches that derive quantitative bounds for probabilistic programs. Common quantitative properties include expected runtime~\cite{DBLP:conf/esop/KaminskiKMO16,DBLP:journals/jacm/KaminskiKMO18,DBLP:conf/vmcai/FuC19,DBLP:conf/cav/AbateGR20}, expected resource consumption~\cite{DBLP:conf/pldi/Wang0GCQS19,DBLP:conf/pldi/Wang0R21,DBLP:conf/pldi/NgoC018}, sensitivity~\cite{DBLP:journals/pacmpl/0001BHKKM21},  assertion probabilities~\cite{DBLP:conf/pldi/WangS0CG21,DBLP:conf/popl/ChatterjeeNZ17,DBLP:journals/toplas/TakisakaOUH21}, and so forth. Most existing works focus on deriving numerical bounds instead of solving the semantic equations exactly, as the latter is impossible theoretically in general.  
In the literature, various approaches have been proposed to address the quantitative analysis problem, including template-based constraint solving~\cite{DBLP:conf/qest/GretzKM13,DBLP:conf/cav/ChakarovS13,DBLP:conf/cav/ChatterjeeFG16,DBLP:journals/toplas/ChatterjeeFNH18}, trace abstraction~\cite{DBLP:journals/pacmpl/SmithHA19},
sampling~\cite{DBLP:conf/pldi/SankaranarayananCG13}, etc. Most of these approaches consider to synthesize a monolithic bound over the whole state space of a probabilistic program of interest, and have the following disadvantages: First, a monolithic bound is either too conservative (e.g., only very coarse bounds exist) or not succinct enough (e.g., although tight monolithic  bounds exist, the tightness usually requires complicated polynomials with higher degree). 
Second, it may be even worse that no monolithic polynomial bounds exist. 

It is straightforward to observe that piecewise bounds are more accurate than monolithic bounds.  Moreover, a recent work~\cite{DBLP:journals/pacmpl/BatzBKW24} demonstrates that probabilistic program analysis requires piecewise feature. However, the synthesis of piecewise bounds for probabilistic programs is not  well investigated in the literature. 
To our best knowledge, a handful relevant work is  by~\cite{DBLP:conf/tacas/BatzCJKKM23}. They propose an approach for generating (piecewise) invariants to \emph{verify} user-provided linear bounds for probabilistic programs with discrete probabilistic choices, which is based on Counterexample-Guided Inductive Synthesis (CEGIS) and template refinement. Another relevant work is ~\cite{DBLP:conf/cav/BaoTPHR22} that proposes a data-driven approach that can synthesize piecewise (sub-)invariants over probabilistic programs with discrete probabilistic choices. Their approach prefers a suitable list of numerical program features (such as multiplication expressions over variables), which requires prior knowledge of the program or user's assistance. Both of these related works require a bound to be verified as an additional program input when synthesizing (super-/sub-) invariants.  

In this work, we propose a novel automated approach that synthesizes piecewise polynomial bounds for probabilistic programs with discrete probability choices without user-provided bounds or piecewise features to assist the derivation of the piecewise bound. The challenges are that (a) We need to resolve a good criterion to partition the state space of a probabilistic program into multiple parts in order to derive the form of the target piecewise bound. (b) We need to devise efficient algorithms to synthesize piecewise bounds given the criterion. Our detailed contributions to address these challenges are as follows.

To address the first challenge, we consider latticed $k$-induction operators~\cite{DBLP:conf/cav/BatzCKKMS20,DBLP:conf/apsec/LuX22}. $k$-induction is a powerful proof tactic in software and hardware verification that generalizes normal inductive reasoning~\cite{DBLP:conf/fmcad/SheeranSS00,DBLP:conf/cav/KrishnanVGG19,DBLP:conf/cav/MouraRS03,DBLP:conf/sas/DonaldsonHKR11}. While standard induction assumes the property holds after a single step, $k$-induction extends this approach by considering sequences of $k$ steps, allowing properties to be established even when a single-step induction is impossible. Latticed $k$-induction~\cite{DBLP:conf/cav/BatzCKKMS20,DBLP:conf/apsec/LuX22} further adapts $k$-induction to lattices and has been applied to probabilistic program analysis~\cite{DBLP:conf/cav/BatzCKKMS20}. 
In this work, we leverage latticed $k$-induction as the central criterion for partitioning the whole state space into multiple parts, and the Optional Stopping Theorem (see the classical Optional Stopping Theorem (OST)~\cite[Chapter 10]{DBLP:books/daglib/0073491}) provides the theoretical foundation of our methods. We develop a novel combination of operators from latticed $k$-induction and OST, which enables the synthesis of both upper and lower bounds for quantitative properties for  probabilistic programs without requiring a global bound of program values (such as non-negativity in~\cite{DBLP:conf/cav/BatzCKKMS20,DBLP:conf/apsec/LuX22,DBLP:conf/tacas/BatzCJKKM23}).
Importantly, the combination is non-trivial: the classical OST is insufficient in our setting, and we rely on an extended version of OST~\cite{10.1145/3656432} to establish our results. Additionally, as a by-product, we slightly extend existing latticed $k$-induction operators. 

To address the second challenge, we propose novel algorithms for synthesizing piecewise linear and polynomial bounds w.r.t our combination of latticed $k$-induction and OST. It is important to observe that the latticed $k$-induction involves \emph{minimum/maximum} operation, and therefore increases the difficulty to synthesize a bound algorithmically. We first introduce a key improvement in time efficiency on the unrolling of the $k$-induction operators. Then, we show that the synthesis of piecewise linear bounds can be equivalently transformed into a bilinear programming problem. A bilinear programming problem is that the variables can be decomposed into two groups so that within each group of variables the constraints are linear, and is a special non-convex programming that admits efficient constraint solving~\cite{DBLP:journals/mp/McCormick76}.
Finally, since even on the linear benchmarks we require piecewise polynomials to upper/lower bound the quantitative properties, we show that the synthesis of the more general piecewise polynomial bounds can be soundly relaxed to semidefinite programming. 
Experimental results over an extensive set of benchmarks that includes various benchmarks from the literature show that our approach is capable of generating tight or even accurate piecewise bounds and can solve benchmarks that previous approaches could not handle. 

\smallskip
\noindent\emph{Technical Contributions.} 
Approaches with latticed $k$-induction has inherent combinatorial explosion~\cite{DBLP:conf/cav/BatzCKKMS20,DBLP:conf/apsec/LuX22}. To address the difficulty, we propose two techniques. The first is a heuristic selection of a small part of the functions in the minimum operation of latticed $k$-induction. The second is the sound relaxation that over-approximates the minimum operation with convex combination.  

\section{Preliminaries}\label{sec:prelim}
In this section, we briefly review probability theory, define the $k$-induction operators, present the probabilistic loops under consideration, and finally formulate the problem 
of interest. 

\subsection{Probability Theory and Martingales}

Consider a probability space $(\Omega, \mathcal{F}, \mathbb{P})$, where $\Omega$ is the sample space, $\mathcal{F}$ is a $\sigma$-algebra on $\Omega$ and $\mathbb{P}: \mathcal{F} \rightarrow [0, 1]$ is a probability measure on the measurable space $(\Omega, \mathcal{F})$. A \textit{random variable} is an $\mathcal{F}$-measurable function $X: \Omega \rightarrow \mathbb{R} \cup \{+\infty, -\infty\}$, i.e., a function satisfying that for all $d \in \mathbb{R} \cup \{+\infty, -\infty\}$, $\{\omega \in \Omega: X(\omega) \leq d\} \in \mathcal{F}$.
The \textit{expectation} of a random variable $X$, denoted by $\mathbb{E}(X)$, is the Lebesgue integral of $X$ w.r.t. $\mathbb{P}$, i.e., $\mathbb{E}(X) = \int \! X d\mathbb{P}$. A \textit{filtration} of the probability space $(\Omega, \mathcal{F}, \mathbb{P})$ is an infinite sequence $\{\mathcal{F}_n\}_{n=0}^{\infty}$ such that for every $n$, the triple $(\Omega, \mathcal{F}_n, \mathbb{P})$ is a probability space and $\mathcal{F}_n \subseteq \mathcal{F}_{n+1} \subseteq \mathcal{F}$. A \textit{stopping time} w.r.t. $\{\mathcal{F}_n\}_{n=0}^{\infty}$ is a random variable $\tau: \Omega \rightarrow \mathbb{N} \cup \{0, \infty\}$ such that for every $n \geq 0$, the event $\{\tau \leq n\} \in \mathcal{F}_n$, i.e., $\{\omega \in \Omega: \tau(\omega) \leq n\} \in \mathcal{F}_n$. Intuitively, $\tau$ is interpreted as the time at which the stochastic process shows a desired behavior. A \textit{discrete-time stochastic process} is a sequence $\Gamma = \{X_n\}_{n=0}^{\infty} $ of random variables in $(\Omega, \mathcal{F}, \mathbb{P})$. The process $\Gamma$  is adapted to a filtration $\{\mathcal{F}_n\}_{n=0}^{\infty}$, if for all $n \geq 0$, $X_n$ is a random variable in $(\Omega, \mathcal{F}_n, \mathbb{P})$.
A discrete-time stochastic process $\Gamma = \{X_n\}_{n=0}^{\infty} $ adapted to a filtration $\{\mathcal{F}_n\}_{n=0}^{\infty}$ is a \textit{martingale} (resp. supermartingale, submartingale) if for all $n \geq 0$, $\mathbb{E}(|X_n|) < \infty$ and it holds almost surely that $\mathbb{E}(X_{n+1}|\mathcal{F}_n) = X_n$ (resp. $\mathbb{E}(X_{n+1}|\mathcal{F}_n) \leq X_n$, $\mathbb{E}(X_{n+1}|\mathcal{F}_n) \geq X_n$). See~\cite {DBLP:books/daglib/0073491} for more details about martingale theory. Applying martingales for probabilistic programs analysis is well-studied~\cite{DBLP:conf/cav/ChakarovS13,DBLP:conf/cav/ChatterjeeFG16,DBLP:conf/popl/ChatterjeeNZ17}.

\subsection{$k$-Induction Operators}\label{sec:operators}

To present $k$-induction operators, we briefly review lattice theory.
Informally, a lattice is a partially ordered set $(E,\sqsubseteq)$ (where $E$ is a set and $\sqsubseteq$ is a partial order on $E$) equipped with a \emph{meet} operation $\sqcap$ and a \emph{join} operation $\sqcup$. Given two elements $u,v\in E$, the meet $u \sqcap v$ is defined as the infimum of $\{u,v\}$ and dually the join $u\sqcup v$ is defined as the supremum of $\{u,v\}$. A partially ordered set $(E,\sqsubseteq)$ is a \emph{lattice} if for any $u,v\in E$, we have that both $u\sqcap v$ and $u\sqcup v$ exist. Given a lattice $(E,\sqsubseteq)$, we say that an operator $\varPhi: E \rightarrow E$ is \emph{monotone} if for all $u,v\in E$, $u\sqsubseteq v$ implies $\varPhi(u)\sqsubseteq \varPhi(v)$. Throughout this section, we fix a lattice $(E,\sqsubseteq)$ and a monotone operator~$\varPhi: E \rightarrow E$.

We recall the $k$-induction operator given in~\cite{DBLP:conf/cav/BatzCKKMS20} as follows, which we refer to as the \emph{upper} $k$-induction operator. 

\begin{definition}[Upper $k$-Induction Operator {\cite{DBLP:conf/cav/BatzCKKMS20}}]\label{def:k_induction_operator}
Given any element $u \in E$, the upper $k$-induction operator $\varPsi_u$ w.r.t. $u$ and the monotone operator $\varPhi$ is defined by: $\varPsi_{u}: E \rightarrow E, v \mapsto \varPhi(v) \sqcap u$~.
\end{definition}

Below we propose a dual version for the upper $k$-induction operator. The intuition is simply to replace the meet operation with join. We call this dual operator as the \emph{lower} $k$-induction operator. 

\begin{definition}[Lower $k$-Induction Operator]\label{def:dual_k_induction_operator1}
Let $u\in E$. The \emph{dual} $k$-induction operator $\varPsi'_u$ w.r.t. $u$ and the aforementioned monotone operator $\varPhi$ is defined by: $\varPsi'_u: E \rightarrow E, v \mapsto \varPhi(v) \sqcup u$~.
\end{definition}

\begin{remark}
Alterative formulation of the $k$-induction operators have also been proposed in~\cite{DBLP:conf/apsec/LuX22}. In Appendix~\ref{app:operators}, We show that these formulations are essentially equivalent to the definitions adopted in this work. Therefore, in the rest of this paper, we focus exclusively on the upper and lower $k$-induction operators defined above.
\qed
\end{remark}

\subsection{Probabilistic Loops}\label{sec:programs}

In this work, we  
use simple 
probabilistic while loops of the form (\ref{eq:pwhileloop}) for easing  the explanation of our basic idea, and will  
discuss how to
extend our approach to general probabilistic while loops like nested loops without substantial changes in ~\cref{sec:extensions}.
Below we define the class of single probabilistic loops.

\smallskip
\noindent {\em Syntax. }A probabilistic while loop takes the form 
\begin{equation}\label{eq:pwhileloop}
\textbf{while} ~ (\varphi) ~  \{C\}
\end{equation}
\noindent where $\varphi$ is the loop guard and $C$ is the loop body without loops. Formally, the loop guard $\varphi$ and loop body $C$ are generated by the grammar in Figure~\ref{fig: syntax}, 
\begin{figure}
    \centering
    \begin{align*}
        C &::= ~\textsf{skip} \mid x:=e  \mid x:\approx \mu \mid C;C \mid \{C\} ~\lbrack p \rbrack  ~\{C\} \mid 
         \textsf{if} ~(\varphi)~\{C\} ~\textsf{else} ~\{C\} \\
        \varphi &::= ~e < e \mid \neg \varphi \mid \varphi \wedge \varphi \ \qquad
        e ::= c \mid x \mid e \cdot e \mid e + e \mid e - e 
    \end{align*}
    \caption{Syntax of Loop Guard and Body in the form (\ref{eq:pwhileloop})}
    \label{fig: syntax}
\end{figure}
where $x$ is a program variable taken from a countable set \textsf{Vars} of variables, $c \in \mathbb{R}$ is a real constant, $e$ is an arithmetic expression that involves addition and multiplication, $\varphi$ is a formula over program variables that is a Boolean combination of arithmetic inequalities, and $\mu$ is a predefined probability distribution. 
In this work, we consider $\mu$ to be a finite discrete probability distribution (i.e., distributions with a finite support) such as Bernoulli distribution and discrete uniform distribution. The semantics of 
\textsf{skip}, assignment, sequential composition, conditional, and  while statement can be understood 
as their counterparts in imperative programs. 
The semantics of a probabilistic choice $\{C_1\}[p]\{C_2\}$ is   that flips a coin with bias $p\in [0, 1]$ and executes the statement $C_1$ if the coin yields head, and $C_2$ otherwise. 
The semantics of a sampling statement 
$x\,{:\approx}\,\mu$  samples a value according to the predefined distribution $\mu$ and assigns the value to the variable $x$. 

Given a probabilistic while loop, a \emph{program state} is a function that maps every program variable to a real number. 
We denote by $S$ the set of program states.
The initial state for a probabilistic while loop is denoted by $s^*$. 
The evaluation $\varphi(s)$ of a logical formula $\varphi$ 
and the evaluation $e(s)$ of an arithmetic expression $e$ over a program state $s$ are defined in the standard way.
$\varphi(s)=\mathtt{true}$ is 
denoted by $s \models \varphi$. 

\smallskip
\noindent {\em Semantics. } The semantics of a probabilistic loop of the form (\ref{eq:pwhileloop}) can be interpreted  as a discrete-time Markov chain, where the state space is the set of all program states $S$, and the transition probability function $\mathbf{P}$ is given by the loop body $C$ and determines the probability  $\mathbf{P}(s,s')$ for $s,s'\in S$, meaning the probability 
producing output state $s'$ 
from input state $s$. If the loop guard $\varphi(s)$ evaluates to false, then we treat the program state $s$ as a sink state, that is $\mathbf{P}(s,s)=1$ and $\mathbf{P}(s,s')=0$ for $s\ne s'$.

Given the Markov chain of a probabilistic while loop as described above, a \emph{path} is an infinite sequence $\pi=s_0, s_1,\dots,s_n,\dots$ of program states such that $\mathbf{P}(s_n, s_{n+1})>0$ for all $n\ge 0$. Intuitively, each $s_n$  corresponds to the state right before the $(n+1)$-th loop iteration. A program state $s$ is \emph{reachable} from an initial program state $s^*$ if there exists a path $\pi = s_0, s_1,\dots$ such that $s_0 = s^*$ and $s_n = s$ for some $n \geq 0$, and define $\mbox{\sl Reach}(s^*)$ as the set of reachable states starting from the initial state $s^*$. By the standard cylinder construction (see e.g. ~\cite[Chapter 10]{DBLP:books/daglib/0020348}), the Markov chain with a designated initial program state $s^*$ for the probabilistic loop induces a probability space over paths and reachable states. We denote the probability measure in this probability space by $\mathbb{P}_{s^*}$ and its related expectation operator by  $\mathbb{E}_{s^*}$. 

\smallskip
\noindent {\em Problem formulation. } Given  a probabilistic loop $P$ in the form (\ref{eq:pwhileloop}), assuming that $P$ terminates with probability 1, a \emph{return function} $f$ is a function $f:S\rightarrow \mathbb{R}$ that is used to specify the output of the loop $P$ in the sense that when the loop $P$ terminates at a program state $s$, then the return value is given as $f(s)$.  
A return function is \emph{piecewise polynomial} if it can be expressed as a piecewise polynomial expression in program variables. We denote by $X_f$ the random variable for the return value of the loop given a return function $f$. In this work, we consider the following problem: Given a probabilistic while loop $P$ in the form (\ref{eq:pwhileloop}) and a piecewise polynomial return function $f$, synthesize \emph{piecewise upper and lower bounds} on the expected value of $X_f$.

\section{An Overview of Our Approach}\label{sec:overview}
Our approach falls in the background of (latticed) $k$-induction~\cite{DBLP:conf/cav/BatzCKKMS20,DBLP:conf/apsec/LuX22}. $k$-induction is an induction principle that generalizes the standard  induction by considering $k$ consecutive transitions together in the inductive condition. Roughly speaking, given a predicate $P$ to be proved via induction, the $k$-induction principle considers the inductive condition as $(P(x_1)\wedge \dots \wedge P(x_k))\rightarrow P(x_{k+1})$, 
for which the premise $P(x_1)\wedge \dots\wedge P(x_k)$ means that the predicate $P$ holds for $k$ consecutive transitions, and the whole condition states that if $P$ holds for $k$ consecutive transitions, then $P$ holds after these consecutive transitions. In particular, $1$-induction coincides with the usual inductive condition.

Latticed $k$-induction~\cite{DBLP:conf/cav/BatzCKKMS20} adapts the idea of $k$-induction to lattices for deriving bounds of fixed points. It considers $k$ consecutive applications of a monotone operator over a lattice and applies the {\em meet/join} operations iteratively in the $k$ consecutive applications. The parameter $k$ here does not matter in the monotone operator (see Definitions~\ref{def:k_induction_operator} and \ref{def:dual_k_induction_operator1}), but is the number of iterative applications (see Definition~\ref{def:potential function}) when the operator is applied. In this work, we propose a novel combination of latticed $k$-induction operators and Optional Stopping Theorem (OST), and propose novel algorithms for deriving piecewise linear and polynomial bounds on probabilistic programs.

We illustrate the main idea of our approach via the following example, which is 
a discretized version of the \textsc{Growing Walk}  in~\cite{DBLP:conf/pldi/BeutnerOZ22}:
    \[
 	{\textsc{Growing Walk}}\colon
		\quad
		\WHILESYMBOL\left(\,0\leq x\,\right)\{
		\{\ASSIGN{x}{x+1}; \ASSIGN{y}{y+x}\}~[0.5]~\{\ASSIGN{x}{-1}\}
		\}
    \]
    
\noindent The example models a simple random walk where the step size $x$ is increased by $1$ with one half probability, and set to $-1$ with the other half probability. 
The program terminates when $x$ becomes negative. The objective is to analyze the expected value of the return function $f(x, y)=y$, which corresponds to the total traveled distance $y$, after the program terminates. We take the synthesis of piecewise linear upper bound as an example.

\smallskip
\noindent {\em Step 1: Establishing $k$-induction operators. }
Let  $\overline{\varPhi}_f\colon (\mathbb{R}\times \mathbb{R} \to \mathbb{R}) \to (\mathbb{R}\times \mathbb{R} \to \mathbb{R})  $ be the operator
\[
\overline{\varPhi}_f(h):=\lambda (x,y). [x<0] \cdot y + [x\ge 0] (0.5\cdot h(x+1, y+x+1) + 0.5\cdot h(-1, y))
\]
for function $h:\mathbb{R}\times \mathbb{R}\rightarrow\mathbb{R}$, and $[x\ge 0]$ denotes the Iverson-bracket of the predicate $x\ge 0$, which evaluates to 1 if $x \ge 0$ holds at state $s$ and 0 otherwise.
Intuitively, $\overline{\varPhi}_f$ outputs $y$ if the loop guard $x \ge 0$ is violated, and the expected value of $h(x, y)$ after the execution of the loop body $\{\ASSIGN{x}{x+1}; \ASSIGN{y}{y+x}\}~[0.5]~\{\ASSIGN{x}{-1}\}$ otherwise. 
We introduce the $k$-induction operator $\varPsi_h$ (c.f. ~\cite{DBLP:conf/cav/BatzCKKMS20}), defined by $\varPsi_h(g) := \min \{\overline{\varPhi}_f(g), h \}$ for any fixed function $h:\mathbb{R}\times \mathbb{R}\rightarrow\mathbb{R}$. Informally, when applied to a function $g$, the operator $\varPsi_h(g)$ pulls $\overline{\varPhi}_f(g)$ down via the pointwise minimum operation with $h$. 

\smallskip
\noindent {\em Step 2: Applying $k$-induction condition. }
Let $k=2$. We unroll the loop $P$ $(k=2)$ times and examine the $(k=2)$-induction condition to upper-bound the expected value of $X_f$.
The resultant inductive condition from our approach is as follows (here $\le$ is taken pointwise), which is obtained by applying the operator $\varPsi_h$ to a candidate bound function $h$ once (i.e., $k-1$ times):
\begin{equation}\label{eg:overview_condition}
    \overline{\varPhi}_f(\varPsi_h(h)) \le h
\end{equation}
We show that under a mild assumption and by using OST, if we have a function $h$ that fulfills this inductive condition, then $\varPsi_h(h)$ is an upper bound for the expected value of $X_f$, for which the \emph{pointwise minimum} in $\varPsi_h(h) = \min \{\overline{\varPhi}_f(h), h\}$ is the key to derive the piecewise partition of the bound apart from loop unrolling.

\smallskip
\noindent {\em Step 3: Simplifying the $k$-induction condition. }
Our approach synthesizes a function $h$ w.r.t the condition (\ref{eg:overview_condition}). To the end, we reduce the condition~(\ref{eg:overview_condition}) to the form below with four functions $h_i$ ($1\le i\le 4$) combined with a minimum operation: 
\begin{equation}\label{eg:overview_condition_miniform}
    \min \{h_1, h_2, h_3, h_4\} \le h,
\end{equation}
where $h_1 = [x < 0]\cdot y + [x \ge 0] \cdot (0.5\cdot h(x+1, x+y+1)+0.5 \cdot h(-1, y))$, $h_2 = [x < 0]\cdot y + [x \ge 0] \cdot (0.25\cdot h(-1, y+x+1) + 0.25 \cdot h(x+2,2x+y+3)+0.5 \cdot h(-1, y))$, $h_3 = [x < 0]\cdot y + [x \ge 0] \cdot (0.25\cdot h(-1, y+x+1) + 0.25\cdot h(x+2,2x+y+3) + 0.5 \cdot y)$ and $h_4 = [x < 0]\cdot y + [x \ge 0] \cdot (0.5\cdot h(x+1, x+y+1) + 0.5 \cdot y)$. Using our algorithm, we employ a loop unrolling based approach to efficiently derive the simplified constraint (\ref{eg:overview_condition_miniform}) 
and we show that each $h_i$ results from the unfolding of the loop up to depth $k=2$ and corresponds to a loop-free program from the unfolding. See {\bf Stage 2} in~\cref{sec:algorithm} for the details. 

\smallskip
\noindent {\em Step 4: Solving the simplified $(k=2)$-induction condition. }
After {\em Step 3}, we obtain the constraint in (\ref{eg:overview_condition_miniform}) and further synthesize the function $h$ in (\ref{eg:overview_condition_miniform}) by assuming a template for $h$ and solving the template w.r.t. the constraint (\ref{eg:overview_condition_miniform}). Every synthesized  function $h$ leads to a piecewise upper bound $\varPsi_h(h) = \min \{\overline{\varPhi}_f(h), h \}$ for the expected value of $X_f$.
Since this constraint includes a minimum operation, it is non-convex and non-trivial to solve.
Our approach reduces the synthesis problem with a linear template to bilinear programming, and obtains a piecewise linear upper bound $[x<0]\cdot y + [x\geq 0]\cdot (x+y+2)$, which is actually the exact expected value of $y$. Similarly, our method can also obtain a piecewise linear lower bound $[x<0]\cdot y + [x\geq 0]\cdot (x+y+\nicefrac{13}{8})$.  

\section{Piecewise Bounds via Latticed $k$-Induction}\label{sec:functions}
In this section, we propose a novel combination of OST and latticed $k$-induction operators to derive bounds for the expected value of $X_f$. We first introduce expectation functions over which we construct concrete $k$-induction operators, then define potential functions, and finally show the soundness of potential functions to derive expectation bounds via OST. Throughout this section, we fix a probabilistic while loop $P=\textbf{while} (\varphi) \{C\}$ in the form of (\ref{eq:pwhileloop}) and a return function $f$. 

\subsection{Expectation Functions}

\begin{definition}[Expectation Functions]\label{def:expfunc}
An \emph{expectation function} is a function $h: S \rightarrow \mathbb{R}$ that assigns to each program state a real value. The partial order $\preceq$ over expectation functions is defined in the pointwise fashion, i.e., $h_1 \preceq h_2  \! \iff \! \forall s \in S, h_1(s) \leq h_2(s)$. We denote the set of expectation functions by $\mathcal{E}$ and the lattice by $(\mathcal{E}, \preceq)$, for which the meet operation $\sqcap$ in the lattice is given by $h_1\sqcap h_2 := \min \{h_1, h_2\}, $where $\min$ is the pointwise minimum on functions, i.e., $\forall s \in S, \min \{h_1, h_2\}(s) = \min\{h_1(s), h_2(s) \}$, and the join operation $\sqcup$ is given by $h_1\sqcup h_2 := \max \{h_1, h_2\}$, where $\max$ is the pointwise maximum.
\end{definition}

Informally, an expectation function $h$ is that for each program state $s\in S$, the value $h(s)$ bounds the expected value of  $X_f$ after the execution of the while loop $P$ when the loop starts with the program state $s$. Although one observes that the partially ordered set $(\mathcal{E}, \preceq)$ with the meet and join operations defined above is a lattice, we do not use lattice properties in our approach.

To instantiate the $k$-induction operators for expectation functions, we construct the monotone operator for the lattice $(\mathcal{E}, \preceq)$. To this end, we first define the notion of pre-expectation as follows, 
wherein $[\varphi]$ denotes the Iverson-bracket of $\varphi$. Notice that the random assignment command $x\,{:\approx}\,\mu$ (where $\mu$ is a discrete distribution  of finite support) can be written in an iterative style of $\{C_1\} ~\lbrack p \rbrack  ~\{C_2\}$, so that we define pre-expectation without random assignment commands.

\begin{definition}[Pre-expectation~\cite{DBLP:conf/cav/ChakarovS13,DBLP:conf/pldi/Wang0GCQS19}]\label{def:pre-exp}
    Given an expectation function $h: S\rightarrow \mathbb{R}$. We define its \textit{pre-expectation} over a loop-free program $Q$, $pre_Q(h): S \rightarrow \mathbb{R}$, recursively on the structure of $Q$:
    \begin{itemize}
        \item $pre_Q(h) := h$, if $Q \equiv \textsf{skip}$. 
        \item $pre_Q(h) := h[x/e]$, if $Q \equiv x := e$, where $h[x/e]$ denotes $h[x/e](s) = h(s[x/e])$ with $s[x/e](x) =e(s)$ and $s[x/e](y) = s(y)$ for all $y \in \textsf{Vars} \backslash \{x\}$.
        \item $pre_Q(h) := pre_{Q_1}(pre_{Q_2}(h))$, if $Q \equiv Q_1; Q_2$. 
        \item $pre_Q(h) := p \cdot pre_{Q_1}(h) + (1-p) \cdot pre_{Q_2}(h) $, if $Q \equiv \{Q_1\} ~\lbrack p \rbrack  ~\{Q_2\}$. 
        \item $pre_Q(h) := [ \phi ] \cdot pre_{Q_1}(h) + [\neg \phi] \cdot pre_{Q_2}(h) $, if $Q \equiv \textsf{if} ~(\phi)~\{Q_1\} ~\textsf{else} ~\{Q_2\}$. 
    \end{itemize}
\end{definition}

The intuition of pre-expectation is that given an expectation function $h$, the pre-expectation $pre_Q$ computes the expected value $pre_Q(h)$ of $h$ after the execution of the command $Q$. 
With pre-expectation, we then define the monotone operator to be the characteristic function $\overline{\varPhi}_f$ of the probabilistic loop $P$ with respect to the return function $f$ as follows. 

For the rest of this section, we fix an initial state $s^*$ and override the set $S$ of program states with $\mbox{\sl Reach}(s^*)$ in Definition~\ref{def:expfunc} so that we consider expectation functions restricted to $\mbox{\sl Reach}(s^*)$.

\begin{definition}[Characteristic Function~\cite{DBLP:conf/esop/KaminskiKMO16,DBLP:conf/cav/ChakarovS13}]\label{def:characteristic function}
The \emph{characteristic function} $\overline{\varPhi}_f: \mathcal{E} \rightarrow \mathcal{E}$ is defined by 
$\overline{\varPhi}_f(h) := [\neg \varphi] \cdot f + [\varphi] \cdot pre_C(h)$.
The monotone operator for the lattice $(\mathcal{E}, \preceq)$ is defined as $\overline{\varPhi}_f$.
\end{definition}
Informally, the characteristic function $\overline{\varPhi}_f$ outputs $f$ if the loop guard $\varphi$ is violated and the loop terminates in the next step, and the pre-expectation of $h$ w.r.t. the loop body $C$ otherwise. 
It is straightforward to verify the monotonicity of $\overline{\varPhi}_f$. 
In the following, we omit the subscript $f$ in $\overline{\varPhi}_f$ if it is clear from the context.
Given the monotone operator, we establish the concrete $k$-induction operators as follows. 

\begin{definition}[$k$-Induction Operators for $(\mathcal{E}, \preceq)$]\label{def:k_induc_func}

Given an expectation function $h$, the \emph{upper (resp. lower) $k$-induction operator} $\overline{\varPsi}_h: \mathcal{E} \rightarrow \mathcal{E}$ (resp. $\overline{\varPsi}'_h: \mathcal{E} \rightarrow \mathcal{E}$) is defined by 
$\overline{\varPsi}_h(g) = \min\{\overline{\varPhi}_f(g), h\}$ (resp. $\overline{\varPsi}'_h(g)= \max\{\overline{\varPhi}_f(g), h\}$)~ for arbitrary expectation function $g \in \mathcal{E}$.

\end{definition}

Note that $k$ does not explicitly appear within the operators; rather, it denotes the number of times these operators are iteratively applied.

\subsection{Potential Functions}\label{sec:potential function}

We define potential functions as expectation functions satisfying the $k$-induction conditions. These potential functions serve as candidate bounds to be synthesized.

\begin{definition}[Potential Functions]~\label{def:potential function}
Let $k$ be a positive integer. A \emph{$k$-upper} (resp. \emph{$k$-lower}) potential function is an expectation function $h$ that satisfies the \emph{upper} (resp. \emph{lower}) $k$-induction condition $\overline{\varPhi}_f(\overline{\varPsi}_h^{k-1}(h)) \preceq h $ (resp. $\overline{\varPhi}_f((\overline{\varPsi}'_h)^{k-1}(h)) \succeq h$), respectively.
\end{definition}

\begin{remark}
Note that both the \emph{base case} and the \emph{induction case} are encapsulated in the operator $\overline{\Phi}_f$ encoding the $k$-induction condition. Specifically, by the definition of $\overline{\Phi}_f$, the $k$-induction condition $\overline{\Phi}_f(\overline{\Psi}_h^{k-1}(h)) \preceq h$ is equivalent to $[\neg \varphi] \cdot f + [\varphi] \cdot \operatorname{pre}_C(\overline{\Psi}_h^{k-1}(h)) \preceq h$. This, in turn, requires that:
\begin{enumerate}
\item $f(s) \preceq h(s)$ when the program state $s$ satisfies $s \models \neg\varphi$, corresponding to the base case; and
\item $\operatorname{pre}_C(\overline{\Psi}_h^{k-1}(h))(s) \preceq h(s)$ when $s \models \varphi$, corresponding to the induction case. \qed
\end{enumerate}
\end{remark}

We apply Optional Stopping Theorem (OST) to address our soundness results. We find that the classical OST~\cite{DBLP:books/daglib/0073491,doi:10.1080/00029890.1971.11992788} cannot handle our problem due to the requirement of bounded step-wise difference (see~\cref{app:classical_OST}), while the OST variant proposed in~\cite{10.1145/3656432} can handle our problem.

\begin{theorem}[Extended OST~\cite{10.1145/3656432}]~\label{thm:ost-variant}
Let $\{X_n\}_{n = 0}^{\infty}$ be a supermartingale adapted to a filtration 
$\mathcal{F}=\{\mathcal{F}_n\}_{n=0}^{\infty}$ and $\tau$ be a stopping time w.r.t the filtration $\mathcal{F}$.
Suppose there exist positive real numbers $b_1,b_2,c_1,c_2,c_3$ such that $c_2>c_3$ and
\begin{itemize}
\item[(a)]  For all sufficiently large natural numbers $n$, it holds that $\mathbb{P}(\tau > n) \leq c_1 \cdot e^{-c_2 \cdot n}$. 
\item [(b)] For every natural number $n\ge 0$, it holds almost-surely that $\left\vert X_{n+1}-X_n \right\vert \le b_1\cdot n^{b_2}\cdot e^{c_3\cdot n}$.
\end{itemize} 
Then we have that $\mathbb{E}(\lvert X_\tau\rvert)<\infty$ and $\mathbb{E}(X_\tau) \leq \mathbb{E}(X_0)$. 
\end{theorem}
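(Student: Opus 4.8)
The plan is to follow the classical route for an optional stopping argument: pass to the stopped process, which is automatically a supermartingale, and then take $n\to\infty$; the only real work is to exhibit an \emph{integrable} random variable dominating the stopped process, and this is where the quantitative hypotheses (a), (b) and the gap $c_2>c_3$ enter.

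First I would record two immediate consequences of the assumptions. Since $\{X_n\}$ is a supermartingale, $\mathbb{E}(\lvert X_0\rvert)<\infty$. From (a), $\mathbb{P}(\tau=\infty)=\lim_{n}\mathbb{P}(\tau>n)\le \lim_n c_1 e^{-c_2 n}=0$, so $\tau<\infty$ almost surely; hence $X_{\tau\wedge n}\to X_\tau$ almost surely as $n\to\infty$. Next, using the standard fact that the stopped process $\{X_{\tau\wedge n}\}_{n\ge 0}$ is again a supermartingale (here $\tau\wedge n$ is a bounded stopping time, so no integrability side conditions are needed), we get $\mathbb{E}(X_{\tau\wedge n})\le \mathbb{E}(X_{\tau\wedge 0})=\mathbb{E}(X_0)$ for every $n$. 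It therefore suffices to justify $\lim_n \mathbb{E}(X_{\tau\wedge n})=\mathbb{E}(X_\tau)$, which I would obtain from the dominated convergence theorem applied with the dominating variable
\[
Y \;:=\; \lvert X_0\rvert \;+\; b_1\sum_{k=0}^{\tau-1} k^{b_2}\, e^{c_3 k}.
\]
Indeed, telescoping $X_{\tau\wedge n}=X_0+\sum_{k=0}^{\tau\wedge n-1}(X_{k+1}-X_k)$ and applying (b) termwise yields $\lvert X_{\tau\wedge n}\rvert\le \lvert X_0\rvert + b_1\sum_{k=0}^{\tau-1}k^{b_2}e^{c_3 k}=Y$ for all $n$.

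The main obstacle — and the only place where $c_2>c_3$ is used — is proving $\mathbb{E}(Y)<\infty$. Since all summands are non-negative, Tonelli's theorem gives
\[
\mathbb{E}\Big( b_1\sum_{k=0}^{\tau-1} k^{b_2} e^{c_3 k}\Big)
\;=\; b_1\sum_{k=0}^{\infty} k^{b_2} e^{c_3 k}\,\mathbb{P}(\tau>k).
\]
Splitting this series at the threshold $N$ beyond which (a) is valid, the finite head $\sum_{k<N}$ is finite, and the tail is bounded by $b_1 c_1\sum_{k\ge N} k^{b_2} e^{-(c_2-c_3)k}$, which converges because $c_2-c_3>0$ (a polynomial factor against a genuinely decaying exponential). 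Hence $\mathbb{E}(Y)<\infty$. The dominated convergence theorem then delivers simultaneously $\mathbb{E}(\lvert X_\tau\rvert)\le \mathbb{E}(Y)<\infty$ and $\mathbb{E}(X_\tau)=\lim_n \mathbb{E}(X_{\tau\wedge n})\le \mathbb{E}(X_0)$, which is the claim. (All steps are symmetric in form, so no separate treatment of positive and negative parts is required once $Y$ is shown integrable; the submartingale/lower-bound analogue follows by applying the result to $\{-X_n\}$.)

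I expect the bookkeeping around the interchange of summation and expectation (Tonelli) and the identity $\mathbb{P}(\tau-1\ge k)=\mathbb{P}(\tau>k)$ to be the most error-prone routine part, but conceptually the whole proof hinges on the single inequality $c_2>c_3$ making the combined exponent $c_3-c_2$ strictly negative; without it, $Y$ need not be integrable and the conclusion can fail, which is precisely why the classical OST (which instead demands a uniform bound on $\lvert X_{n+1}-X_n\rvert$) does not suffice here.
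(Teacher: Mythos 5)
Your proposal is correct; the paper itself does not prove this theorem but imports it from the cited work, whose argument follows exactly the route you describe: pass to the stopped supermartingale $\{X_{\tau\wedge n}\}$, dominate it by $\lvert X_0\rvert + b_1\sum_{k=0}^{\tau-1}k^{b_2}e^{c_3k}$, show this dominator is integrable via Tonelli together with the tail bound $\sum_k k^{b_2}e^{-(c_2-c_3)k}<\infty$ (the only place $c_2>c_3$ is used), and conclude by dominated convergence. All the individual steps you give (almost-sure finiteness of $\tau$ from (a), the telescoping estimate, and the interchange $\mathbb{E}\bigl(\sum_{k=0}^{\tau-1}a_k\bigr)=\sum_k a_k\,\mathbb{P}(\tau>k)$) check out.
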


Under certain side conditions that guarantee the validity of the extended OST, the potential functions provide upper and lower bounds on the expected value of $X_f$. Before presenting this result, we introduce some concepts that capture the magnitude of updates to program variables between two consecutive steps.

\begin{definition}[Affine Programs]
    A probabilistic program is affine if all conditions and assignments within the program are \textbf{affine} functions of the program variables. 
\end{definition}

\begin{definition}[Termination Time]\label{def:termination_time}
    The {\em termination time} $T$ of the loop $P$ is the random variable that for any path of the loop, measures the number of total loop iterations in the path.
\end{definition}
\vspace{0.5ex}
\begin{definition}[Uniform Amplifier]~\label{def:uniform_amplifier}
    Suppose that the loop $P$ is affine, and for each program variable $x$, let $x_n$ denote the random variable representing the value of $x$ at the $n$-th iteration of the loop. A \emph{uniform amplifier} $c$ is  a constant $c > 0$ such that, for all $n \geq 0$, $|x_{n+1}| \leq c \cdot |x_n| + a$ holds for some fixed constant $a$.
\end{definition}

\begin{definition}[Bounded Update]
    The loop $P$ has the \emph{bounded-update} property if there exists a real constant $a > 0$ such that for each program variable $x$, $|x_{n+1} - x_n |\le a$ for every $n \ge 0$ (see Definition~\ref{def:uniform_amplifier} for the meaning of $x_n$).  
\end{definition}

\begin{remark}
    Note that any program satisfying the bounded update property also admits a uniform  amplifier with $c = 0$. \qed
\end{remark}

We now present the soundness theorem of $k$-upper (resp. lower) potential functions. We distinguish between \emph{affine programs} and \emph{polynomial programs}, as each requires different side conditions for potential functions to serve as upper or lower bounds. Notably, the side conditions for affine programs are weaker than those for polynomial programs. 

\begin{theorem}\label{thm:soundness}
Suppose the loop $P$ is affine. Let $k$ be a positive integer and $h$ be a polynomial potential polynomial in the program variables with degree $d$. If there exist real numbers $c_1 >0$ and $c_2 > c_3 > 0$ such that
\begin{itemize}
    \item [(P1)] there exists a uniform  amplifier $c$ satisfying $c \leq e^{c_3/d}$, and 
    \item [(P2)] the termination time $T$ of $P$ has the \emph{concentration property}, i.e., $\mathbb{P}(T > n) \leq c_1 \cdot e^{-c_2 \cdot n}$ holds for sufficiently large $n \in \mathbb{N}$.
\end{itemize}
hold, then for any initial program state $s^*$, we have:
\begin{itemize}
\item $\mathbb{E}_{s^*}(X_f)\le \overline{\varPsi}_h^{k-1}(h)(s^*) \le h(s^*)$ holds for any $k$-upper potential polynomial $h$.

\item $\mathbb{E}_{s^*}(X_f)\ge (\overline{\varPsi}'_h)^{k-1}(h)(s^*) \ge h(s^*)$ holds for any $k$-lower potential polynomial $h$. 
\end{itemize}

\end{theorem}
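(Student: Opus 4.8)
The plan is to build, for a fixed initial state $s^*$ and a $k$-upper potential polynomial $h$, a supermartingale out of the iterated $k$-induction operator $\overline{\varPsi}_h$, and then apply the Extended OST (Theorem~\ref{thm:ost-variant}) to the termination time $T$. First I would set $g_j := \overline{\varPsi}_h^{j}(h)$ for $0 \le j \le k-1$ and record the two facts that drive everything: (i) by the $k$-induction condition, $\overline{\varPhi}_f(g_{k-1}) \preceq h = g_0$, and more generally, unfolding one application of $\overline{\varPsi}_h$, one gets $\overline{\varPhi}_f(g_{j}) \preceq g_{j}$ combined with $g_{j+1} = \min\{\overline{\varPhi}_f(g_j), h\}$; (ii) $g_{k-1} \preceq g_{k-2} \preceq \dots \preceq g_0 = h$, which follows because $\overline{\varPsi}_h(v) = \min\{\overline{\varPhi}_f(v), h\} \preceq h$ for every $v$, so in particular $g_1 \preceq h = g_0$, and then monotonicity of $\overline{\varPsi}_h$ propagates the chain downward. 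The value we want to bound from above is $\overline{\varPsi}_h^{k-1}(h)(s^*) = g_{k-1}(s^*)$; the inequality $g_{k-1}(s^*) \le h(s^*)$ is immediate from (ii), so the real content is $\mathbb{E}_{s^*}(X_f) \le g_{k-1}(s^*)$.

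Next I would define the process $\{Y_n\}_{n \ge 0}$ that tracks, at loop iteration $n$, the value $g_{\iota(n)}(s_n)$ where $\iota(n)$ is a ``phase counter'' that cycles through $0,1,\dots,k-1$ — concretely, once $n \ge$ some offset, $\iota$ advances by one each step until it reaches $k-1$ and then stays, or one synchronizes $\iota$ with $n \bmod k$; the precise bookkeeping is exactly the standard latticed-$k$-induction-to-martingale construction and I would follow~\cite{DBLP:conf/cav/BatzCKKMS20}. The key step is to check the supermartingale inequality $\mathbb{E}(Y_{n+1} \mid \mathcal{F}_n) \le Y_n$: when the loop is still running at $s_n \models \varphi$, the conditional expectation of $g_{\iota(n+1)}(s_{n+1})$ given $\mathcal{F}_n$ equals $pre_C(g_{\iota(n+1)})(s_n)$, and since $\overline{\varPhi}_f(g_{\iota(n+1)})(s_n) = pre_C(g_{\iota(n+1)})(s_n)$ on $\varphi$, fact (i) (in the form $\overline{\varPhi}_f(g_{j}) \preceq g_{j-1}$ for the appropriate index, using the chain (ii) to absorb the $\min$ with $h$) gives exactly $\le g_{\iota(n)}(s_n) = Y_n$; when $s_n \models \neg\varphi$, the state is a sink and $g_j(s_n)$ collapses to $f(s_n)$ for every $j$ (again using $\overline{\varPhi}_f(g)(s) = f(s)$ on $\neg\varphi$ together with the chain), so $Y_{n+1} = Y_n$ there. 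Hence $\{Y_n\}$ is a supermartingale with $Y_0 = g_{k-1}(s^*)$ (choosing the offset so the phase counter starts at $k-1$) and $Y_T = f(s_T) = X_f$.

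It then remains to verify the two hypotheses of Theorem~\ref{thm:ost-variant} for $\{Y_n\}$ with stopping time $T$. Hypothesis (a), the tail bound $\mathbb{P}(T > n) \le c_1 e^{-c_2 n}$ for large $n$, is exactly the concentration property (P2). Hypothesis (b), the bound $|Y_{n+1} - Y_n| \le b_1 n^{b_2} e^{c_3 n}$, is where the affineness, the degree-$d$ polynomial shape of $h$, and the uniform amplifier condition (P1) enter: since $P$ is affine with uniform amplifier $c$, each program variable satisfies $|x_n| \le c^n |x_0| + a\frac{c^n - 1}{c-1}$ (or $|x_n| \le |x_0| + an$ when $c \le 1$), so $|x_n|$ grows at most like $O(c^n)$, hence $O(e^{(c_3/d) n})$ by (P1); a degree-$d$ polynomial in such variables — and each $g_j$, being built from $h$, $f$ and finitely many applications of $pre_C$ which is affine-substitution-plus-convex-combination, is again a polynomial of bounded degree (I would note $\deg g_j \le d$ or at worst a fixed bound, since affine substitution does not raise degree and the $\min$ only selects among such polynomials piecewise) — is therefore $O(e^{c_3 n})$ up to polynomial-in-$n$ factors, which is the required $b_1 n^{b_2} e^{c_3 n}$ form. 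I would carry out this estimate carefully for one variable and one monomial and then sum. Applying Theorem~\ref{thm:ost-variant} yields $\mathbb{E}(X_f) = \mathbb{E}(Y_T) \le \mathbb{E}(Y_0) = g_{k-1}(s^*) = \overline{\varPsi}_h^{k-1}(h)(s^*)$, and combined with (ii) this is the claimed upper-bound statement. The lower-bound statement is the order-dual: run the same argument with $\overline{\varPsi}'_h$, $\max$ in place of $\min$, and a submartingale, applying Theorem~\ref{thm:ost-variant} to $\{-Y_n\}$ (noting $c_2 > c_3$ is symmetric and the growth bound on $|Y_{n+1}-Y_n|$ is unchanged).

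The main obstacle I anticipate is the phase-counter supermartingale construction together with the degree-control lemma: one must argue that each iterate $g_j = \overline{\varPsi}_h^{j}(h)$ remains a piecewise polynomial of degree at most $d$ (so that the Extended-OST step-difference bound goes through with the \emph{same} $c_3$ fed into (P1)), and one must be scrupulous about how the pointwise $\min$ with $h$ interacts with the conditional-expectation computation — it is precisely here that the monotone chain $g_{k-1} \preceq \dots \preceq g_0$ is used to discharge the $\min$ and recover a clean supermartingale inequality rather than merely a ``supermartingale up to the $\min$.'' Everything else is either bookkeeping or a direct invocation of (P1), (P2), and Theorem~\ref{thm:ost-variant}.
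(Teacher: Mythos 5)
Your overall architecture (supermartingale from the iterated operator, then Extended OST with (P1) supplying hypothesis~(b) and (P2) supplying hypothesis~(a), dual argument via $-Y_n$ for the lower bound) matches the paper, and your degree-control observation and the descending chain $g_{k-1}\preceq\dots\preceq g_0$ are correct. But the supermartingale construction itself has a genuine gap: the intermediate facts you rely on are false. You assert ``$\overline{\varPhi}_f(g_j)\preceq g_j$'' and later use ``$\overline{\varPhi}_f(g_j)\preceq g_{j-1}$ for the appropriate index.'' For $j=0$ the first of these reads $\overline{\varPhi}_f(h)\preceq h$, i.e.\ the $1$-induction condition, and for $j=1$ the second reads $\overline{\varPhi}_f(\overline{\varPsi}_h(h))\preceq h$, i.e.\ the $2$-induction condition; neither is implied by the $k$-induction hypothesis for $k>2$ --- indeed $j$-induction for smaller $j$ is the \emph{stronger} condition, and the entire point of $k$-induction is that these may fail. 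What the definition of $\overline{\varPsi}_h$ actually gives is $g_{j+1}=\min\{\overline{\varPhi}_f(g_j),h\}\preceq\overline{\varPhi}_f(g_j)$, i.e.\ the inequality in the \emph{wrong} direction for a supermartingale step, so the phase-counter process $Y_n=g_{\iota(n)}(s_n)$ does not satisfy $\mathbb{E}(Y_{n+1}\mid\mathcal{F}_n)\le Y_n$ under any cyclic schedule of $\iota$; there is no ``standard bookkeeping'' to defer to here.

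The missing idea is the Park-induction lemma (Theorem~\ref{lemma:park}, from~\cite{DBLP:conf/cav/BatzCKKMS20}): $\overline{\varPhi}_f(\overline{\varPsi}_h^{k-1}(h))\preceq h$ holds if and only if $\overline{\varPhi}_f(\overline{\varPsi}_h^{k-1}(h))\preceq\overline{\varPsi}_h^{k-1}(h)$. This is exactly the one index ($j=k-1$) at which your claimed inequality is true, and it is a nontrivial consequence of the $k$-induction condition rather than a definitional unfolding. With it, the paper sets $H=\overline{\varPsi}_h^{k-1}(h)$ and uses the \emph{single fixed} function $H$ at every time step: $X_n=H(s_n)$ is a supermartingale because $\mathbb{E}(X_{n+1}\mid\mathcal{F}_n)=pre_C(H)(s_n)=\overline{\varPhi}_f(H)(s_n)\le H(s_n)$ on $\varphi$, and sink states are handled trivially. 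No phase counter is needed, and the rest of your argument (the $O(e^{c_3 n})$ step-difference bound from the uniform amplifier applied to the piecewise degree-$d$ function $H$, the invocation of Theorem~\ref{thm:ost-variant}, and the duality for the lower bound) then goes through essentially as you wrote it.
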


\begin{proof}[Proof Sketch]
(See~\cref{app:soundness} for the full proof)
Let $s_n$ be the random variable of the program state at the $n$-th iteration with $s_0=s^*$, and let $H = \overline{\varPsi}_h^{k-1}(h)$. 
A {\em key point} is that since $H$ is piecewise polynomial (by the definition of $\overline{\varPsi}_h$) and condition (P1) holds, condition (b) in Theorem~\ref{thm:ost-variant} holds for process $ \{X_n\}_{n\in \mathbb{N}} = \{H(s_n)\}_{n\in \mathbb{N}}$. Combining with the fact that $h$ is a $k$-upper potential function, one can further deduce $ \{X_n\}_{n\in \mathbb{N}} = \{H(s_n)\}_{n\in \mathbb{N}}$ is a supermartingale. 
Note that $X_0$ is the initial value of the process $X_0$. and $X_T$ represents the value of the process $X_n$ at loop termination. 
By applying Theorem~\ref{thm:ost-variant}, we have $\mathbb{E}_{s^*}(X_T) \leq\mathbb{E}_{s^*}(X_0)$ ($T$ is a stopping time), thus $\mathbb{E}_{s^*}(X_f)\le \mathbb{E}_{s^*}(X_0) = H(s^*)$. The lower case is derived similarly.
\end{proof}

The side condition (P1) for affine programs requires that the loop $P$ possesses a uniform  amplifier constant. In contrast, for polynomial programs, a stronger property is needed: the program must satisfy the bounded update property, which imposes stricter constraints than (P1).

\begin{theorem}\label{thm:soundness_poly}
Let $k$ be a positive integer. Suppose there exist real numbers $c_1 >0$ and $c_2 > 0$ such that condition
(P1') loop $P$ has the bounded update property; and condition (P2) in~\cref{thm:soundness} holds, then for any initial program state $s^*$, we have
\begin{itemize}
\item $\mathbb{E}_{s^*}(X_f)\le \overline{\varPsi}_h^{k-1}(h)(s^*) \le h(s^*)$ holds for any $k$-upper potential polynomial $h$.

\item $\mathbb{E}_{s^*}(X_f)\ge (\overline{\varPsi}'_h)^{k-1}(h)(s^*) \ge h(s^*)$ holds for any $k$-lower potential polynomial $h$. 
\end{itemize}

\end{theorem}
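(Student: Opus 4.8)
The plan is to run the same argument as for \cref{thm:soundness}: apply the Extended OST (\cref{thm:ost-variant}) to the process $\{X_n\}_{n\in\mathbb{N}}:=\{H(s_n)\}_{n\in\mathbb{N}}$, where $s_n$ is the (random) program state right before the $(n{+}1)$-st iteration with $s_0=s^*$, and $H:=\overline{\varPsi}_h^{k-1}(h)$ for a $k$-upper potential polynomial $h$ (dually $(\overline{\varPsi}'_h)^{k-1}(h)$ for the lower case). The only step in which the hypothesis (P1') is used differently from (P1) is the verification of condition~(b) of \cref{thm:ost-variant}; the order-theoretic setup, the (super/sub)martingale argument, and the concluding bookkeeping are all identical to the affine case, and in fact condition~(b) becomes \emph{easier}, not harder.

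First I would isolate the purely order-theoretic facts, which do not depend on (P1) versus (P1'). Since $\overline{\varPsi}_h(h)=\min\{\overline{\varPhi}_f(h),h\}\preceq h$ and $\overline{\varPsi}_h$ is monotone (because $\overline{\varPhi}_f$ is), iteration gives the descending chain $h\succeq\overline{\varPsi}_h(h)\succeq\cdots\succeq\overline{\varPsi}_h^{k-1}(h)=H$, so $H(s^*)\le h(s^*)$, which is the second inequality in the claim. Moreover, since $h$ is a $k$-upper potential polynomial, $\overline{\varPhi}_f(H)=\overline{\varPhi}_f(\overline{\varPsi}_h^{k-1}(h))\preceq h$, hence $\overline{\varPsi}_h(H)=\min\{\overline{\varPhi}_f(H),h\}=\overline{\varPhi}_f(H)$, and comparing with the chain yields $\overline{\varPhi}_f(H)=\overline{\varPsi}_h^{k}(h)\preceq\overline{\varPsi}_h^{k-1}(h)=H$. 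Expanding $\overline{\varPhi}_f(H)=[\neg\varphi]\cdot f+[\varphi]\cdot pre_C(H)\preceq H$ gives two consequences: (i) $pre_C(H)\preceq H$ on states with $s\models\varphi$, and (ii) $f\preceq H$ on states with $s\models\neg\varphi$. Fact~(i), together with the semantic identity $\mathbb{E}_{s^*}(H(s_{n+1})\mid\mathcal{F}_n)=[\varphi](s_n)\cdot pre_C(H)(s_n)+[\neg\varphi](s_n)\cdot H(s_n)$ (the loop body realizes $pre_C$, and a terminated state is a sink), shows that $\{X_n\}$ is a supermartingale w.r.t.\ the natural filtration $\mathcal{F}_n=\sigma(s_0,\dots,s_n)$; integrability of each $X_n$ will follow from the step bound below, since $s_0$ is deterministic.

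The technical core is condition~(b), an almost-sure bound $|X_{n+1}-X_n|\le b_1\cdot n^{b_2}\cdot e^{c_3\cdot n}$, and this is where (P1') enters. Bounded update provides a constant $a>0$ with $|x_{n+1}-x_n|\le a$ for every program variable $x$ and every $n$, hence $|x_n|\le|x_0|+a\cdot n$ along every path: each iterate grows at most linearly in $n$. Since $\overline{\varPsi}_h$ is assembled from pointwise $\min$, pre-expectation, Iverson brackets, and the piecewise-polynomial $f$, the function $H$ is piecewise polynomial; letting $d'$ bound its degree, $|H(s_n)|=O(n^{d'})$ uniformly over all paths, so $|X_{n+1}-X_n|\le|X_{n+1}|+|X_n|=O(n^{d'})$. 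This polynomial bound is dominated by $b_1\cdot n^{b_2}\cdot e^{c_3\cdot n}$ for suitable $b_1,b_2>0$ and \emph{any} $c_3>0$; since (P2) only constrains $c_2>0$, we take $c_3:=c_2/2$, so that $c_2>c_3>0$, and then conditions~(a)$\,=\,$(P2) and~(b) of \cref{thm:ost-variant} both hold (and the same bound shows each $X_n$ is a.s.\ bounded, hence integrable).

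Finally, applying \cref{thm:ost-variant} with the stopping time $\tau:=T$ (the termination time of $P$, a genuine stopping time since $\{T=m\}\in\mathcal{F}_m$; recall $P$ terminates a.s.) gives $\mathbb{E}_{s^*}(X_T)\le\mathbb{E}_{s^*}(X_0)=H(s^*)$. At termination $s_T\models\neg\varphi$, so fact~(ii) yields $X_f=f(s_T)\le H(s_T)=X_T$ pointwise, whence $\mathbb{E}_{s^*}(X_f)\le\mathbb{E}_{s^*}(X_T)\le H(s^*)=\overline{\varPsi}_h^{k-1}(h)(s^*)\le h(s^*)$. The lower-bound statement is symmetric: use the ascending chain $h\preceq\overline{\varPsi}'_h(h)\preceq\cdots\preceq(\overline{\varPsi}'_h)^{k-1}(h)=H$, replace $\min$ by $\max$ and $\preceq$ by $\succeq$ throughout, obtain that $\{H(s_n)\}$ is a submartingale, apply \cref{thm:ost-variant} to $\{-H(s_n)\}$, and conclude $\mathbb{E}_{s^*}(X_f)\ge H(s^*)\ge h(s^*)$. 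I expect the only genuine content is the growth estimate in condition~(b): in contrast to \cref{thm:soundness}, where $c\le e^{c_3/d}$ must be arranged to match \emph{exponential} iterate growth against the $e^{c_3 n}$ budget of~(b), bounded updates here force merely \emph{polynomial} growth, so~(b) holds with room to spare and with no constraint linking $c_3$ to the program structure.
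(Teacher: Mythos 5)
Your proposal is correct and follows essentially the same route as the paper's proof in Appendix~B.3: define $X_n := H(s_n)$ with $H=\overline{\varPsi}_h^{k-1}(h)$, show it is a supermartingale via $\overline{\varPhi}_f(H)\preceq H$, verify condition~(b) of the extended OST from the polynomial growth $|x_n|\le|x_0|+an$ afforded by bounded update (with $c_3$ chosen freely below $c_2$), and conclude via the stopping time $T$. The only cosmetic differences are that you re-derive the Park-induction equivalence $\overline{\varPhi}_f(\overline{\varPsi}_h^{k-1}(h))\preceq h \Rightarrow \overline{\varPhi}_f(H)\preceq H$ inline where the paper cites its Lemma, and you close with the inequality $f(s_T)\le H(s_T)$ where the paper uses the equality $H(s_T)=f(s_T)$; both are sound.
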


\begin{remark}
    See~\cref{app:soundness_poly} for the proof of ~\cref{thm:soundness_poly}. The concentration condition (P2), which ensures exponentially decreasing nontermination probabilities as stated in~\cref{thm:soundness,thm:soundness_poly}, guarantees that loop $P$ terminates almost surely. This condition has been extensively studied in the literature (see, e.g., \cite{DBLP:conf/popl/ChatterjeeFNH16,DBLP:conf/cav/ChatterjeeFG16,DBLP:journals/pacmpl/FengCSKKZ23}).  \qed
\end{remark}

According to~\cref{thm:soundness,thm:soundness_poly}, synthesizing upper and lower bounds reduces to finding a potential function $h$ that satisfies the conditions outlined in these theorems. However, solving the $k$-upper and $k$-lower potential function conditions is challenging due to the intricate combination of minimum and indicator functions involved. In the following sections, we introduce algorithmic approaches to systematically synthesize these upper and lower bounds.

\section{Algorithms for Bound Synthesis}\label{sec:algorithm}
In this section, we first present algorithms for synthesizing upper and lower bounds for single-loop programs. We then demonstrate how our approach naturally extends to handle programs containing nested or sequential loops.

\subsection{Algorithms for Probabilistic  Single Loops}
In this subsection, we present algorithms for synthesizing 
$k$-upper and $k$-lower potential functions that satisfy the conditions specified in~\cref{thm:soundness} and~\cref{thm:soundness_poly}, leading to piecewise bounds on the expected value of $X_f$.
Below, we consider a fixed probabilistic loop $P$ of the form~(\ref{eq:pwhileloop}) along with a return function $f$. Due to the space limit, we only illustrate the synthesis procedure for upper bounds. The case for lower bounds is nearly  analogous, obtained by replacing minimum with maximum and substituting $\preceq$ by $\succeq$. The pseudocode for our algorithm is presented in~\cref{alg}. Our approach consists of the following major steps: 

\smallskip
\noindent\textbf{Stage 1: Prerequisites Checking and External Inputs.}
Our algorithm first verifies the side conditions (P1) and (P2) (respectively, (P1') and (P2')) for affine (respectively, polynomial) programs, as specified by~\cref{thm:soundness,thm:soundness_poly}. The algorithm also accepts the hyperparameter $k$ and a program invariant as input parameters.

\smallskip
\noindent {\em Prerequisites checking.}  
When $P$ is affine, condition (P1) is verified by syntactically inspecting the loop body to identify a positive constant $c_3$, ensuring that each program variable is amplified by at most $e^{c_3/d}$, up to an additive constant, within a single loop iteration, where $d$ denotes the degree of the polynomial template potential function $h$ (c.f. Stage 2). Condition (P2) is guaranteed either by synthesizing a difference-bounded ranking supermartingale (dbRSM) that demonstrates the exponentially decreasing concentration property~\cite{DBLP:conf/popl/ChatterjeeFNH16,DBLP:conf/cav/ChatterjeeFG16}, or by syntactically analyzing probabilistic branching within the loop to extract a suitable constant $c_2$ satisfying $c_2 > c_3 > 0$. For polynomial programs, condition (P1')—the bounded update property—is checked via an SMT solver (e.g., Z3~\cite{Z3paper}), while condition (P2) is ensured analogously to the affine case.

\smallskip
\noindent {\em External inputs. } 
Our algorithm requires the following hyperparameters as input:
\begin{enumerate*}
    \item \emph{Induction parameter $k$:} We specify a positive real number $k$ as the parameter for $k$-induction, along with the initial program state $s^*$.
    \item \emph{Program invariant:} We assume the existence of an invariant $I$ at the entry point of the loop, which over-approximates the set of reachable program states $\mbox{\sl Reach}(s^*)$. That is, for every $s \in \mbox{\sl Reach}(s^*)$, we have $s \models I$. The state space is thus restricted to program states satisfying $I$, and the relation $\preceq$ is interpreted over $I$, i.e., $h_1 \preceq h_2 \iff \forall s \models I,\, h_1(s) \leq h_2(s)$. The rational  of this restriction follows from the over-approximation property of $I$. Invariants can be obtained using external invariant generators, such as~\cite{DBLP:conf/sas/SankaranarayananSM04}.
\end{enumerate*}

\begin{example}\label{eg:runningexample}
We still take the example in Section~\ref{sec:overview} as a running example, which is a discretized version of the \textsc{Growing Walk}  in~\cite{DBLP:conf/pldi/BeutnerOZ22}: 
\[ \WHILESYMBOL\left(\,0\leq x\,\right)\{
		\{\ASSIGN{x}{x+1}; \ASSIGN{y}{y+x}\}~[0.5]~\{\ASSIGN{x}{-1}\} \}
\]
In this example, our goal is to analyze the expected value of $y$ upon program termination. 
We check the prerequisites and specify the external inputs as follows:
\begin{enumerate*}
    \item \emph{Prerequisite Verification}: We find that $c = 1$ serves as a uniform amplifier, satisfying $c \leq e^{c_3/d}$ with $c_3 = \ln 1.5$ and $d = 1$. The concentration condition (P2) is also met with $c_2 = \ln 2$.
    \item \emph{External Inputs}: We set $k=2$, and choose the invariant $I = \{x\mid -1 \leq x\}$ with initial state $s^* = (x, y) = (1, 1)$.
\end{enumerate*}
\qed 
\end{example}

\smallskip
\noindent\textbf{Stage 2: Templates and Constraints.} After verifying the prerequisites and identifying the external inputs as described in \textbf{Stage 1}, our algorithm predefines a $d$-degree polynomial template $h$ as the candidate $k$-upper potential function for the loop $P$. This template consists of a linear combination of all monomials in the program variables of degree at most $d$, where each monomial is multiplied by an unknown coefficient.

Next, we apply the $k$-induction conditions in Definition~\ref{def:potential function}, resulting in the constraint ${\overline{\varPhi}_f(\overline{\varPsi}_h^{k-1}(h))} \allowbreak \preceq h$. The presence of $\min$ and indicator operators within this constraint complicates direct simplification. To address this, we reformulate the constraint into the form $\min\{h_1, h_2, \dots, h_m\} \preceq h$, where each $h_i$ is free of the minimum operator. Although a brute-force arithmetic expansion can achieve this transformation (see~\cref{app:general_approach} for details), our algorithm employs a more efficient unfolding strategy, which we outline below.

\smallskip
\noindent {\em The unfolding process for constraint simplification: }
We symbolically unroll the probabilistic loop from the initial state up to $k$ iterations, exploring all possible unfolding strategies. Here, "symbolic" means that program variables in each program state retain their original variable names and represent undetermined values.
An \emph{unfolding strategy} operates at each symbolic program state encountered during the unfolding process (excluding the initial state),  and iteratively selects among actions (i), (ii), and (iii): Action (i) corresponds to unfolding the loop once more. Action (ii) represents the scenario where we actively choose to stop unrolling before reaching the total of $k$ unfoldings (the $k$-induction parameter). In contrast, action (iii) occurs when unrolling is forcibly stopped because the number of unfoldings has reached $k$; beyond this point, no further unrolling is allowed by our method.
Each unfolding strategy, determined by the choices made at each unfolding step, yields a loop-free program. Let $C_1, \dots, C_m$ denote all loop-free programs generated by applying the above decision process across all possible unfolding strategies. 
For each loop-free program $C_d$, we compute the \emph{pre-expectation} $pre_{C_d}(h)$ of $h$ with respect to $C_d$ (see~\cref{def:pre-exp}), allowing us to equivalently rewrite the constraint $\overline{\varPhi}_f(\overline{\varPsi}_h^{k-1}(h)) \preceq h$ as:
\begin{align}\label{eq:minform}
\min \{h_1, h_2,\dots, h_m \} \preceq h,
\end{align}
where each $h_i$ is given by $pre_{C_d}(h)$ for some $C_d$. According to the computation of pre-expectation (\cref{def:pre-exp}), each $h_i$ can be represented as $h_i = \sum_r [B_{ir}] \cdot e_{ir}$, where $B_{ir}$ is a predicate independent of the template's unknown coefficients, and $e_{ir}$ is a monolithic polynomial in the program variables, potentially containing unknown coefficients.
Moreover, the $B_{ir}$'s are pairwise logically disjoint. 

The following proposition formally establishes the relationship between the unfolding process and the $k$-induction condition. The proof is provided in~\cref{proof:relation}.
\begin{proposition}\label{prop:relation}
The upper $k$-induction condition  $\overline{\varPhi}_f(\overline{\varPsi}_h^{k-1}(h)) \preceq h$ is equivalent to  constraint $\min\{h_1, h_2,\dots, h_m\} \preceq h$, where each $h_i$ equals $pre_{C_{d}}(h)$ for some unique $C_d \in \{C_1,\dots, C_m\}$ from the unfolding process above. 
\end{proposition}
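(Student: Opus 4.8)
The plan is to prove the stronger statement that the two sides coincide \emph{as expectation functions}, namely $\overline{\varPhi}_f(\overline{\varPsi}_h^{k-1}(h)) = \min\{h_1,\dots,h_m\}$ with $h_i = pre_{C_i}(h)$; the claimed equivalence of the two inequalities (each of the form ``$\cdots \preceq h$'') is then immediate. The crux is a distributivity property of the operators over the pointwise minimum. By a straightforward structural induction on a loop-free program $Q$ drawn from the grammar of Figure~\ref{fig: syntax}, Definition~\ref{def:pre-exp} gives $pre_Q(g)(s) = \sum_{i} \lambda_i(s)\cdot g(\theta_i(s))$ for finitely many nonnegative coefficients $\lambda_i(s)$ and state-transformers $\theta_i$, the index $i$ ranging over the deterministic paths through $Q$ (one path per resolution of the probabilistic choices and conditionals in $Q$); likewise $\overline{\varPhi}_f(g)(s) = [\neg\varphi(s)]\cdot f(s) + [\varphi(s)]\cdot pre_C(g)(s)$ only prepends a $g$-independent, guard-triggered term. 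Because the $\lambda_i(s)$ and the indicator $[\varphi(s)]$ are nonnegative, for any finite family $g_1,\dots,g_t$ one obtains $pre_Q(\min\{g_1,\dots,g_t\}) = \min_{\chi}\sum_i \lambda_i(\cdot)\,g_{\chi(i)}(\theta_i(\cdot))$, where $\chi$ ranges over all assignments of an index in $\{1,\dots,t\}$ independently to each path $i$; at any fixed state the optimal $\chi$ is the pointwise-best choice at each path, and no $\chi$ can beat it since the coefficients are nonnegative. The same holds for $\overline{\varPhi}_f$ (the constant summand is unaffected).

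With this in hand I would expand $\overline{\varPhi}_f(\overline{\varPsi}_h^{k-1}(h))$ from the outside inward. Since $\overline{\varPsi}_h^{j}(h) = \min\{\overline{\varPhi}_f(\overline{\varPsi}_h^{j-1}(h)),\,h\}$, each of the $k-1$ nested levels contributes a binary choice ``stop now and return $h$'' versus ``take one more iteration via $\overline{\varPhi}_f$'', and the outermost $\overline{\varPhi}_f$ is a forced first iteration. Repeatedly invoking the distributivity of $\overline{\varPhi}_f$ and $pre_C$ moves every $\min$ to the front; each such move branches precisely along the deterministic paths of the loop body $C$, i.e.\ along the symbolic post-iteration states produced by that iteration. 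I would make this precise by induction on $k$, the claim being that the fully expanded expression is $\min_{\tau} pre_{C_\tau}(h)$, where $\tau$ ranges over all unfolding strategies of the loop up to $k$ iterations --- a forced first iteration (from the outer $\overline{\varPhi}_f$), then at every encountered symbolic state a choice among action (i) (unfold again), action (ii) (stop early with $h$, originating from the ``$h$'' argument of the corresponding $\overline{\varPsi}_h$), and action (iii) (the forced stop at depth $k$, originating from the innermost $h$ of $\overline{\varPsi}_h^{k-1}$) --- and $C_\tau$ is the loop-free program generated by $\tau$. By construction this family of loop-free programs is exactly $\{C_1,\dots,C_m\}$ of the unfolding process, with $pre_{C_i}(h) = h_i$; the base case $k=1$ is just $\overline{\varPhi}_f(h) = pre_{C_1}(h)$ for the unique ``one forced step, then forced stop'' strategy, and the inductive step peels off the outer $\overline{\varPhi}_f$ (and one $\overline{\varPsi}_h$ layer) and applies distributivity. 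Hence $\overline{\varPhi}_f(\overline{\varPsi}_h^{k-1}(h)) = \min\{h_1,\dots,h_m\}$, and therefore $\overline{\varPhi}_f(\overline{\varPsi}_h^{k-1}(h)) \preceq h$ iff $\min\{h_1,\dots,h_m\}\preceq h$.

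I expect the main obstacle to be the bookkeeping that pins down the correspondence between unfolding strategies $\tau$ and the branches of the nested minimum --- making sure the finite family recovered on the left is exactly $\{h_1,\dots,h_m\}$, with no spurious or missing terms (repeated entries are harmless, as they change neither the value of $\min$ nor the inequality). Two points need care. First, $pre_C$ does \emph{not} commute with $\min$ directly: a probabilistic choice $\{Q_1\}\,[p]\,\{Q_2\}$ mixes its two subprograms with weights $p$ and $1-p$, so one must first decompose $pre_C$ into its deterministic-path components and only then pull the $\min$ out --- this is precisely where nonnegativity of probabilities and indicators is used, and it is why a branch point of the unfolding tree is a \emph{deterministic} path through $C$ rather than $C$ itself. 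Second, the two ``terminal'' situations must be matched correctly: a branch on which the loop guard $\varphi$ fails collapses $\overline{\varPhi}_f$ to its $[\neg\varphi]\cdot f$ summand (so that further unfolding of that branch is trivial), whereas the innermost $h$ of $\overline{\varPsi}_h^{k-1}(h)$ is the action-(iii) forced stop at depth $k$; keeping these straight, together with checking that the enumeration of strategies is exhaustive, is the technical heart of the argument.
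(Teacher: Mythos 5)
Your proposal is correct and follows essentially the same route as the paper's proof: both establish the stronger pointwise equality $\overline{\varPhi}_f(\overline{\varPsi}_h^{k-1}(h)) = \min\{h_1,\dots,h_m\}$ by distributing the pointwise minimum through the nonnegative-coefficient operators $pre_C$ and $\overline{\varPhi}_f$ (the paper's reduction rules (R1)--(R3)) and then inducting on $k$, matching the resulting branches of the minimum with unfolding strategies. The only difference is organizational: you package the structural induction on the loop body into a single deterministic-path decomposition lemma $pre_Q(g)=\sum_i \lambda_i\cdot (g\circ\theta_i)$, whereas the paper carries out the case analysis on the syntax of $C$ explicitly for the base case $k=2$.
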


By Proposition~\ref{prop:relation}, the $k$-induction constraint can be simplified by computing the pre-expectations of all programs $\{C_1, \dots, C_m\}$ generated by all possible unfolding strategy within $k$ loop iterations. Since these programs are structurally similar, we can efficiently compute $pre_{C_d}(h)$ for all $C_d \in \{C_1, \dots, C_m\}$ simultaneously by traversing the $k$-unfolding of the program loop once. This approach reduces runtime by eliminating excessive and repeated computations.

\begin{remark}
The unfolding process also provides an explanation for how the partitioning of the input domain is determined in the final piecewise bound $\overline{\varPsi}_h^{k-1}(h)$. Specifically, the partition of the input domain is primarily governed by the number of iterations required for the state to potentially violate the loop guard, mirroring the $k$-fold unrolling of the loop in $k$-induction. This partition is further refined by the application of the $\min$ operator in the $k$-induction construction, resulting in a tighter and more precise bound. \qed
\end{remark}

\smallskip
\noindent {\em Illustrative Example of the Unfolding Process. }
We demonstrate our unfolding process via a simple but illustrative example as follows: \begin{equation}\label{eg:simple_loop}
P:= \WHILESYMBOL\left(\,\varphi(x)\,\right)\{\{\ASSIGN{x}{a_1 x + b_1}\}~[p]~\{\ASSIGN{x}{a_2 x + b_2}\} \}
\end{equation}
where $x$ is a real-valued program variable, $a_i, b_i (i = 1, 2)$ are real constants, 
$p\in [0, 1]$ and $\varphi(x)$ is a guard condition. 
Let $f$ be the return function, and let $\overline{\varPhi}_f$ be the operator defined as
\[
\overline{\varPhi}_f(h)(x):=[\neg \varphi(x)] \cdot f(x) + [\varphi(x)] (p\cdot h(a_1 x + b_1) + (1-p)\cdot h(a_2 x + b_2))
\]
for any function $h : \mathbb{R} \rightarrow \mathbb{R}$ (with $S = \mathbb{R}$), where $[\varphi]$ denotes the Iverson bracket for the predicate $\varphi$.
In this example, we consider the $2$-induction operator $\overline{\varPsi}_h$ for a fixed function $h: \mathbb{R} \rightarrow \mathbb{R}$, as defined in~\cite{DBLP:conf/cav/BatzCKKMS20}. Specifically, $\overline{\varPsi}_h(g)$ is given by
$
\overline{\varPsi}_h(g) := \min \{\overline{\varPhi}_f(g), h \},
$
and the corresponding $2$-upper induction condition is :
\begin{equation}\label{eg:condition}
    \overline{\varPhi}_f(\overline{\varPsi}_h(h)) \preceq h \,.
\end{equation}
According to Proposition~\ref{prop:relation}, we simplify this constraint by transforming~(\ref{eg:condition}) into the following form, which expresses the minimum over four functions $h_i$ ($1 \leq i \leq 4$): 
\begin{equation*}\label{eg:condition_miniform}
    \min \{h_1, h_2, h_3, h_4\} \preceq h\,,
\end{equation*}
where each $h_i$ corresponds to a loop-free program $C_i$ generated during the unfolding process up to depth $k=2$. All such unfolded programs are summarized in~\cref{fig:k=2induction}.

\begin{figure}[htbp]
  \centering
  \subfloat[\textbf{Case 1:} Program $C_1$]
  {\includegraphics[width=0.45\textwidth]{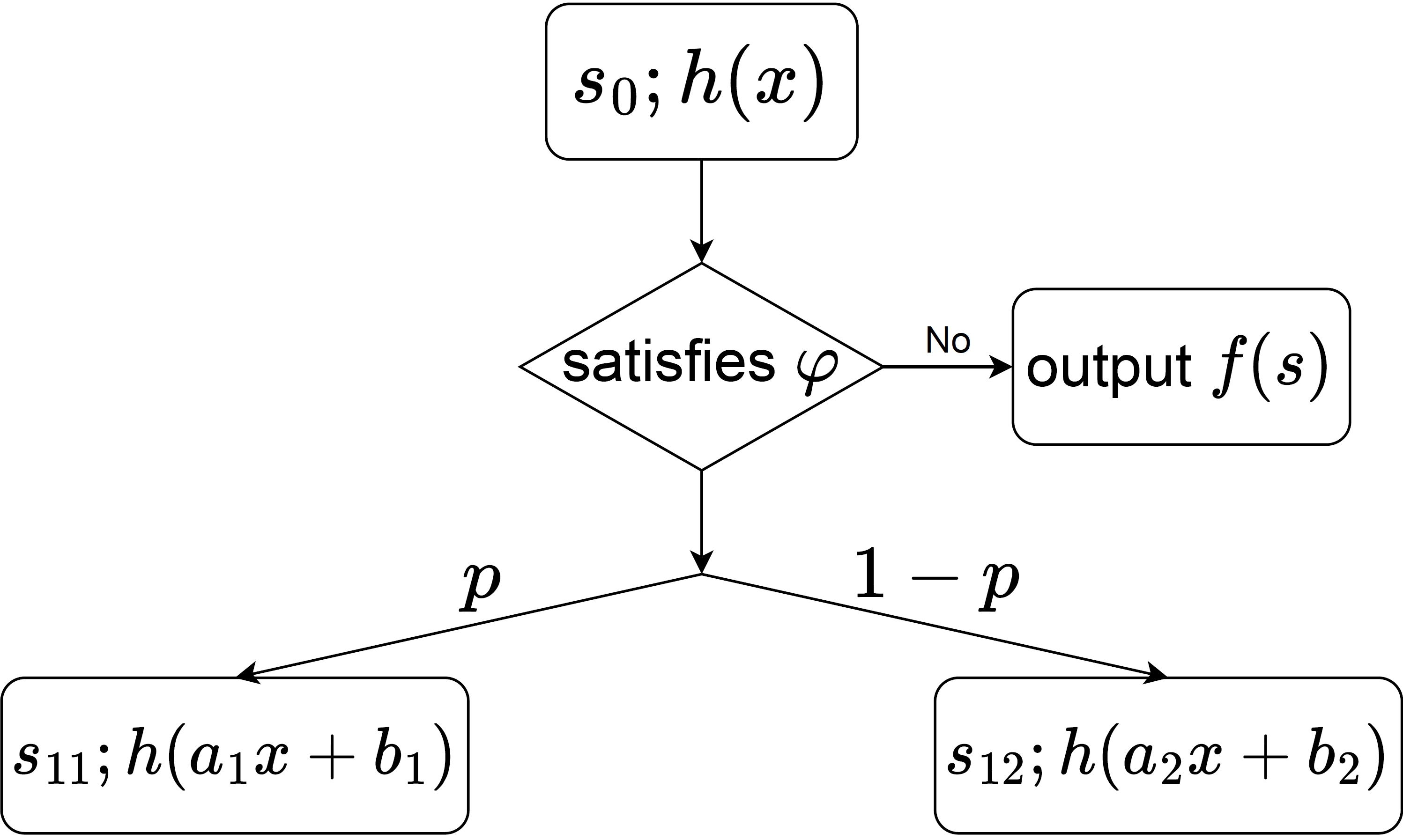}\label{fig:case1}} \quad   
  \subfloat[\textbf{Case 2:} Program $C_2$]
  {\includegraphics[width=0.45\textwidth]{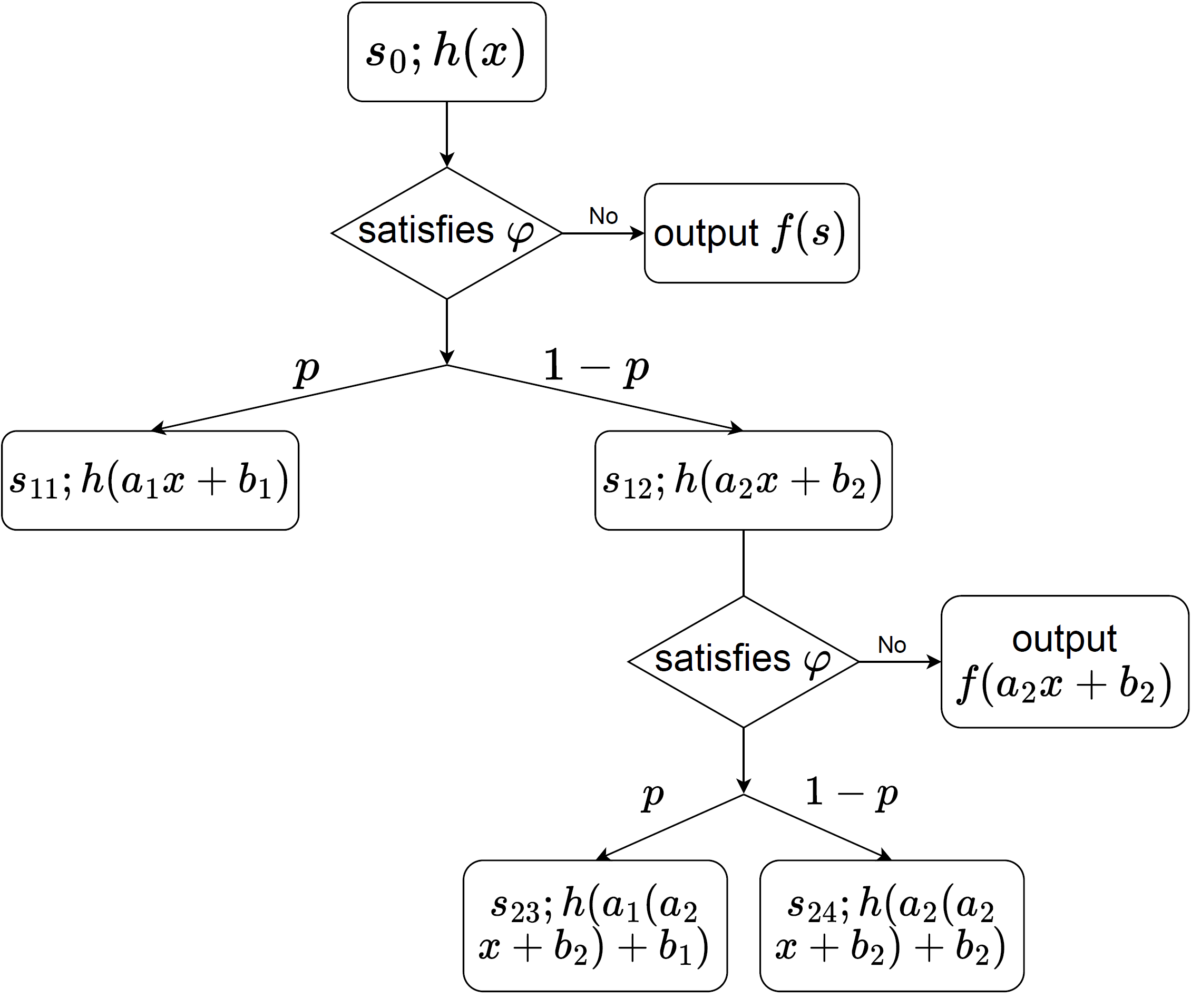}\label{fig:case2}} \quad 
  \subfloat[\textbf{Case 3:} Program $C_3$]
  {\includegraphics[width=0.45\textwidth]{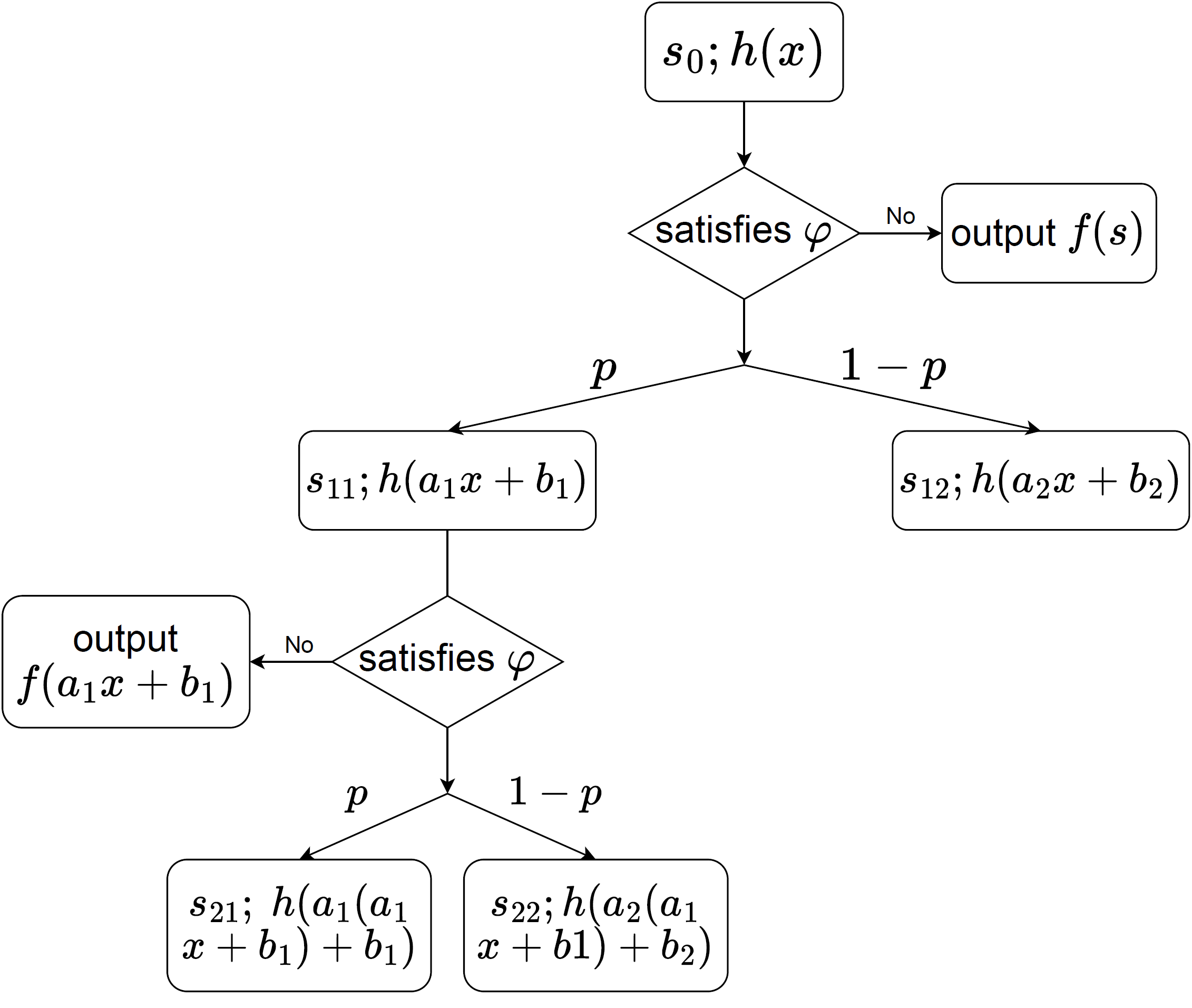}\label{fig:case3}} \quad 
  \subfloat[\textbf{Case 4:} Program $C_4$]
  {\includegraphics[width=0.45\textwidth]{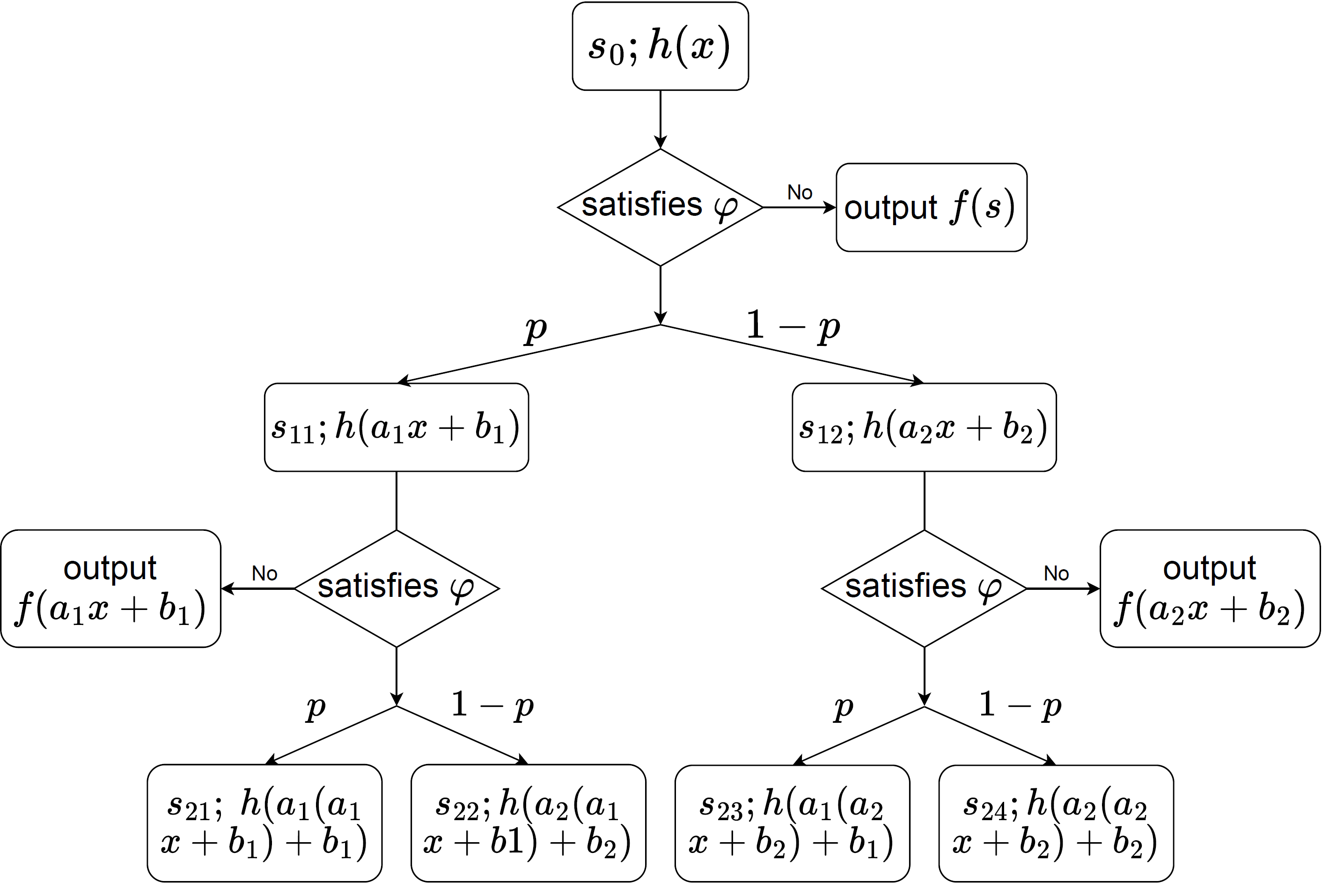}\label{fig:case4}} \quad 

  \caption{Loop-free programs generated by $(k=2)$-induction} \label{fig:k=2induction}
\end{figure}

We illustrate the unfolding process as follows. Starting from an initial value $x$, if $\varphi(x)$ is not satisfied, the loop terminates immediately and outputs $f(x)$. If $\varphi(x)$ holds, we proceed to unfold the loop, resulting in four distinct cases. Due to space constraints, we describe only the first case in detail here; the remaining three cases are depicted in~\cref{fig:k=2induction}, with further explanations provided in~\cref{app:explanation}.
In \text{Case 1}, the loop executes once and transitions to two possible states, $a_1 x + b_1$ and $a_2 x + b_2$, then it terminates. This corresponds to a single unrolling of the loop and terminating the unfolding at both resulting symbolic states, yielding the loop-free program $C_1$ as shown in~\cref{fig:case1}. The associated expression is $h_1 = [\neg \varphi(x)] \cdot f(x) + [\varphi(x)] (p\cdot h(a_1 x + b_1) + (1-p)\cdot h(a_2 x + b_2))$, which represents the expected value of $h(x)$ after executing program $C_1$. Cases 2, 3, and 4 are derived analogously by unrolling the loop up to two iterations.

\begin{example}\label{eg:extract}
Returning to the running example in Example~\ref{eg:runningexample}, 
we establish a 1-degree, i.e., linear template $h = a\cdot x+b\cdot y+c$, where $a,b,c$ are unknown coefficients.
We apply $2$-induction condition to synthesize a piecewise linear upper bound. Starting from a symbolic initial program state $s^* = (x, y)$, we unroll the loop once and arrive at two new symbolic program states $(x+1, x+y+1)$ and $(-1, y)$. Over each new state, we take the decision separately and the unfolding strategy produces four loop-free programs. The $pre_{C_{d}}(h)$ w.r.t. these four programs are as follows:
\begin{equation}\label{eq:unfolding term}
  {\small   \begin{aligned}
        h_1 &= [x < 0]\cdot y + [x \geq 0] \cdot (0.5\cdot h(x+1, x+y+1)+0.5 \cdot h(-1, y))  \\
        h_2 &= [x < 0]\cdot y + [x \geq 0] \cdot (0.25\cdot h(-1, y+x) + 0.25\cdot h(x+2,2x+y+3)+0.5\cdot h(-1, y))  \\
        h_3 &= [x < 0]\cdot y + [x \geq 0] \cdot (0.25\cdot h(-1, y+x) + 0.25\cdot h(x+2,2x+y+3) + 0.5 \cdot y)  \\
        h_4 &= [x < 0]\cdot y + [x \geq 0] \cdot (0.5\cdot h(x+1, x+y+1) + 0.5 \cdot y )
    \end{aligned} }
    \end{equation} 
   Thus,  we have the simplified constraint $\forall (x, y) \models I, \min \{h_1, h_2,h_3, h_4 \} \preceq h$.   \qed
\end{example}

\smallskip
\noindent {\em Branch reduction.}\label{page:branchreduction} 
During the unfolding process used to simplify the latticed $k$-induction condition $\overline{\varPhi}_f(\overline{\varPsi}_h^{k-1}(h)) \preceq h$, the number of resulting functions $h_i$ in (\ref{eq:minform}) grows rapidly with the number of probabilistic choices in the loop body. This combinatorial growth occurs because, when computing the pre-expectation for probabilistic branches, the sum of two minimum expressions results in a new minimum taken over the Cartesian product of the original function sets. To address this issue, we introduce a heuristic that selects only a small subset of "representative" functions from the complete set of $h_i$ in (\ref{eq:minform}). Importantly, this approach does not compromise soundness (see~\cref{thm:soundness,thm:soundness_poly}), as the minimum over any subset is always at least as the minimum over the full set.

Taking the case of $k=2$ as an example, by definition of operator $\overline{\varPsi}_h$, we have
\begin{equation*}
    \begin{aligned}
        \overline{\varPhi}_f(\overline{\varPsi}_h(h)) & = \overline{\varPhi}_f(\min\{\{\overline{\varPhi}_f(h), h \}\}  \\
        & = [\neg \varphi]\cdot f + [\varphi]\cdot \sum_{i=1}^n p_i\cdot \min\{\{\overline{\varPhi}_f(h(u_i(s))), h(u_i(s))\}
    \end{aligned}
\end{equation*}
where each $p_i$ denotes a probabilistic choice in the characteristic function $\overline{\varPhi}_f$, and $u_i$ represents the corresponding state update function under that choice.
Instead of enumerating all possible $2^n$ combinations in choosing either $\overline{\varPhi}_f(h(u_i(s)))$ or $h(u_i(s))$ for each $p_i$ (to expand into the minimum form (\ref{eq:minform})), one could consider combinations that have at most one $\overline{\varPhi}_f(h(u_i(s)))$ and at most one $h(u_i(s))$, so that only a linear number of combinations are considered while retaining soundness. For the case of $k>2$, a possible way for relaxation is to recursively consider combinations that have at most one $\overline{\varPhi}_f(\overline{\varPsi}_h^{k-2}(h(u_i(s)))$ and at  most one $h(u_i(s))$.

\smallskip
\noindent\textbf{Stage 3: Transforming to Canonical Form.} At this stage, our algorithm transforms the constraint of the form~(\ref{eq:minform}) from \textbf{Stage 2} into the following canonical form:
\begin{equation}\label{eq:canon_form}  
\textstyle
[B_{1}] \implies \min \{e_{11}, \dots, e_{m 1}\} \leq h, \,\dots,\, [B_{l}] \implies \min \{e_{1l}, \dots, e_{m l}\} \leq  h 
\end{equation}
where $h$ is the predefined polynomial template. Each $B_{j} (j\in \{1,...,l\})$ is a conjunction of predicates over the program variables that does not involve the template's unknown coefficients, and each $e_{ij}$ is a polynomial expression in these unknown coefficients.
The transformation begins by rewriting the inequality~(\ref{eq:minform}) as
\begin{align}\label{eq:minformpushed}
\textstyle
 \min \left\{\sum_r [B_{1r}]\cdot e_{1r},\,\dots,\,\sum_r [B_{mr}]\cdot e_{mr} \right\} \preceq h
\end{align}
where, as described previously, each $h_i$ is expressed as $h_i=\sum_r [B_{ir}]\cdot e_{ir}$. Next, for each conjunction $B=\bigwedge_{i=1}^m B_{ir_i}$ -- with each $B_{ir_i}$ taken from the summation $\sum_r [B_{ir}] \cdot e_{ir}$ -- we obtain the constraint $\Psi_B=[B] \implies \min_{i=1}^m e_{ir_i} \leq h$. The transformed system of inequalities~(\ref{eq:canon_form}) is thus precisely the set of all such $\Psi_B$ constraints. Infeasible constraints (i.e., those with unsatisfiable $B$) are removed, whenever possible, using an SMT solver such as Z3~\cite{Z3paper}.

\begin{example}\label{eg:gw_canonical}
Continuing from Example~\ref{eg:extract}, we convert (\ref{eq:unfolding term}) into its canonical form by partitioning the state space $S$ into two regions: $[x < 0]$ and $[x \geq 0]$, as indicated in~(\ref{eq:unfolding term}). Applying \textbf{Stage 3} and eliminating unsatisfiable predicates yields the following canonical form:

{\small
    \begin{equation*}\label{eq:x<0}
        [x<0]\implies \min\{y\} \leq h 
    \end{equation*}
    
     \begin{equation}\label{eq:x>=0}
        [x \geq 0] \implies \min \left\{
        \begin{aligned}
            &0.5\cdot h(x+1, x+y+1)+0.5 \cdot h(-1, y)  \\
            &0.25\cdot h(-1, y+x+1) + 0.25 \cdot h(x+2,2x+y+3)+0.5 \cdot h(-1, y)  \\
            &0.25\cdot h(-1, y+x+1) + 0.25\cdot h(x+2,2x+y+3) + 0.5 \cdot y  \\
            &0.5\cdot h(x+1, x+y+1) + 0.5 \cdot y 
        \end{aligned}
        \right\} \leq h  
    \end{equation} \qed}    
\end{example}

\smallskip
\noindent \textbf{Stage 4: Solving Constraints. }
Below, we describe our approach for solving the canonical constraints given in (\ref{eq:canon_form}). It is important to note that the presence of the \emph{minimum} operator in this canonical form makes the constraint \emph{non-convex}. To address this, we develop distinct algorithms for the linear and polynomial cases. In the linear case, where the program is affine (i.e., all conditions and assignments are linear), we employ a linear template for the $k$-upper potential function $h$. In the polynomial case, where the program may be non-affine, we utilize a polynomial template.

\smallskip 
\noindent {\em Solving constraints (linear case).} 
In our algorithm for the linear case, we require that the return function be piecewise linear and that the invariant be affine in the program variables. We first eliminate the minimum operator in~(\ref{eq:canon_form}) by considering its negation. This allows us to transform the constraint into a set of bilinear constraints using Motzkin's Transposition Theorem, which can then be solved with off-the-shelf bilinear programming solvers such as \emph{Gurobi}.

Below, we present a variant of Motzkin’s Transposition Theorem~\cite{motzkin1936beitrage}, which will be utilized in the subsequent analysis. The proof is provided in~\cref{app:constraint}.

\begin{theorem}[Motzkin's Transposition Theorem~\cite{motzkin1936beitrage}]
\label{corollary_motzkin}
Let  $S = \left( A_1\cdot \mathbf{x} + \mathbf{b_1}\leq 0 \right)$ and $T = \left( A_2 \cdot \mathbf{x} + \mathbf{b_2} < 0\right)$ be systems of linear inequalities, where $A_1 = (\alpha_{i,j} )\in \mathbb{R}^{m\times n}$ and $A_2 = (\alpha_{m+i,j} )\in \mathbb{R}^{k\times n}$ are real coefficient matrices, $\mathbf{b_1} = (\beta_1,...\beta_m)^\top $ and $\mathbf{b_2} = (\beta_{m+1},...\beta_{m+k})^\top $ are real vectors, and $\mathbf{x} = (x_1,...x_n)^\top $.
If $S$ is satisfiable, then $S \wedge T$ is unsatisfiable if and only if there exist non-negative real numbers $\lambda_0, \lambda_1, ..., \lambda_{m+k}$, with at least one $\lambda_i$ for $i\in \lbrace m+1, ..., m+k \rbrace$ being nonzero, such that:

\begin{equation}\label{motzkin bilinear}
\small
\sum_{i=1}^{m+k} \lambda_i \alpha_{(i,1)} =0,\; \dots, \sum_{i=1}^{m+k} \lambda_i \alpha_{(i,n)} =0 ,\;   (\sum_{i=1}^{m+k} \lambda_i \beta_i) - \lambda_0= 0. 
\end{equation}
\end{theorem}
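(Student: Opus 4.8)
The plan is to derive this inhomogeneous, strict‑inequality form of Motzkin's transposition theorem from the classical \emph{homogeneous} transposition theorem together with Farkas' lemma, via a homogenization trick. I expect the standing hypothesis that $S$ is satisfiable to enter only at one point, namely to rule out a degenerate alternative certificate.

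I would first dispatch the $(\Leftarrow)$ direction, which is the easy soundness check and does not use satisfiability of $S$ (this is also the direction actually exploited by the algorithm). Suppose nonnegative $\lambda_0,\lambda_1,\dots,\lambda_{m+k}$ with some $\lambda_i\,(i>m)$ positive satisfy \eqref{motzkin bilinear}. Towards a contradiction, let $\mathbf{x}$ satisfy $S\wedge T$ and consider the nonnegative combination $\sum_{i=1}^{m+k}\lambda_i\bigl(\sum_{j}\alpha_{i,j}x_j+\beta_i\bigr)$. By the first $n$ identities in \eqref{motzkin bilinear} the $\mathbf{x}$‑dependent part vanishes, so the sum equals $\sum_i\lambda_i\beta_i=\lambda_0\ge 0$. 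On the other hand each summand with $i\le m$ is $\le 0$ (as $\lambda_i\ge 0$ and $A_1\mathbf{x}+\mathbf{b_1}\le 0$), and each summand with $i>m$ is $\le 0$ and is strictly $<0$ whenever $\lambda_i>0$; since some such $\lambda_i$ is positive, the sum is $<0$, a contradiction. Hence $S\wedge T$ is unsatisfiable.

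For the converse, assume $S$ is satisfiable and $S\wedge T$ is unsatisfiable, and homogenize: introduce a fresh variable $t$, set $\mathbf{w}=(\mathbf{x},t)\in\mathbb{R}^{n+1}$, let $P=[\,A_1\mid\mathbf{b_1}\,]$, and let $Q$ be $[\,A_2\mid\mathbf{b_2}\,]$ augmented with one extra row $(0,\dots,0,-1)$ encoding $t>0$. Then the homogeneous system $P\mathbf{w}\le 0\wedge Q\mathbf{w}<0$ has no solution, since a solution would have $t>0$ and then $\mathbf{x}/t$ would satisfy $S\wedge T$. Applying the classical homogeneous Motzkin transposition theorem yields $\mathbf{y}\ge 0$ and $\mathbf{z}=(\mathbf{z}',z_0)\ge 0$ with $\mathbf{z}\neq 0$ and $P^{\top}\mathbf{y}+Q^{\top}\mathbf{z}=0$; reading off coordinates gives $A_1^{\top}\mathbf{y}+A_2^{\top}\mathbf{z}'=0$ and $\mathbf{b_1}^{\top}\mathbf{y}+\mathbf{b_2}^{\top}\mathbf{z}'=z_0\ge 0$. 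It remains to rule out $\mathbf{z}'=0$: in that case $z_0>0$ and $\mathbf{y}\ge 0$ satisfies $A_1^{\top}\mathbf{y}=0$ and $\mathbf{b_1}^{\top}\mathbf{y}>0$, which by Farkas' lemma certifies that $S$ is infeasible, contradicting the hypothesis. Hence $\mathbf{z}'\neq 0$, and taking $\lambda_0:=z_0$, $(\lambda_1,\dots,\lambda_m):=\mathbf{y}$, $(\lambda_{m+1},\dots,\lambda_{m+k}):=\mathbf{z}'$ gives multipliers meeting all requirements, including some $\lambda_i\,(i>m)$ nonzero, and satisfying \eqref{motzkin bilinear}.

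The main obstacle will be getting the homogenization exactly right --- the extra row enforcing $t>0$ is essential, since without it the homogeneous system is trivially solved by $\mathbf{w}=0$ --- together with the case split that eliminates $\mathbf{z}'=0$, which is precisely where satisfiability of $S$ is consumed; dropping that assumption would instead force a second disjunct, namely a Farkas certificate that $S$ itself is infeasible. The classical homogeneous transposition theorem and Farkas' lemma are used as black boxes, and the remaining work is the routine coordinate bookkeeping identifying the blocks of $\mathbf{y},\mathbf{z}$ with $(\lambda_i)_{i\le m}$, $(\lambda_i)_{i>m}$ and $\lambda_0$.
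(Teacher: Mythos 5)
Your proof is correct. It differs from the paper's argument mainly in which form of Motzkin's theorem it takes as the black box. The paper starts from the \emph{affine} two-alternative transposition theorem (Theorem~\ref{thm:Motzkin}, with disjuncts $(A_1)$ and $(A_2)$) and then argues two things: first, via Farkas's lemma, that satisfiability of $S$ forces at least one multiplier of a strict inequality to be nonzero (otherwise an $(A_1)$-certificate would yield a derivation of $1\le 0$ from $S$); second, that any $(A_1)$-certificate converts into an $(A_2)$-certificate by replacing $\lambda_0$ with $\lambda_0+1$, so the disjunction collapses to $(A_2)$. You instead start from the \emph{homogeneous} transposition theorem, homogenize with a fresh variable $t$ and an extra strict row enforcing $t>0$, and then invoke Farkas's lemma for exactly the same purpose as the paper does --- to exclude the degenerate certificate in which the multipliers $\mathbf{z}'$ of the original strict rows all vanish, which would make $S$ infeasible. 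The two Farkas steps are the same idea consuming the same hypothesis; what your homogenization buys is that there is only one kind of certificate from the outset, so you never need the paper's $(A_1)\Rightarrow(A_2)$ absorption step, at the cost of the extra-variable bookkeeping and of having to check that infeasibility of $S\wedge T$ transfers to the homogenized system (which you do correctly via the $t>0$ row and rescaling by $1/t$). You also prove the easy $(\Leftarrow)$ direction explicitly by a nonnegative-combination argument, whereas the paper inherits it from the ``if and only if'' of the cited theorem; both are sound.
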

\begin{remark}
    Note that, since $\lambda_i \geq 0$ for $0 \leq i \leq m+k$, the requirement that at least one $\lambda_i$ for $i \in \{m+1, \ldots, m+k\}$ is nonzero can be equivalently encoded as the linear constraint $\sum_{i = m+1}^{m+k} \lambda_i > 0$.
\end{remark}

In what follows, we demonstrate how to apply Theorem~\ref{corollary_motzkin} to solve the canonical constraints~(\ref{eq:canon_form}). We begin by conjunct the affine invariant $I$ with the antecedent predicates in~(\ref{eq:canon_form}) and eliminating any constraints with unsatisfiable antecedents, resulting in
\begin{equation}\label{eq:indcano}
[I \wedge B_j] \implies \min \{e_{1j}, e_{2j},..., e_{mj}\}\leq  h \quad \text{for $j\in \{1,2,\dots, l\}$}\,,
\end{equation}
where we assume that each $I \wedge B_j$ is satisfiable. For each $j$, we have
\begin{align*}
    & \left([I \wedge B_j] \implies \min \{e_{1j}, e_{2j},..., e_{mj}\}\leq  h \right) \text{ holds} \\
    \iff &\, \left([I \wedge B_j] \wedge \left(\wedge_{i=1}^m (e_{ij} > h) \right)\right) \text{ is not satisfiable}  \TAG{Apply Thm~\ref{corollary_motzkin}} \\
    \iff &\, 
    \begin{aligned}[t]
        & \text{exists nonnegative real vector $\boldsymbol{\lambda_j} = (\lambda_{0,j},\dots,\lambda_{m_j+k_j, j})$}, \\
    & \text{s.t. $(\lambda_{m_j+1, j}, \dots, \lambda_{m_j+k_j, j} )\neq \mathbf{0}$, and eq.~\eqref{motzkin bilinear} holds.} 
    \end{aligned}
\end{align*}
The second equivalence follows from the Motzkin’s Transposition Theorem by setting $S = I \wedge B_j$ and $T = \left(\wedge_{i=1}^m (e_{ij} > h) \right)$ for each $j \in \{1,2,\dots, l\}$.
Note that (\ref{motzkin bilinear}) constitutes a bilinear constraint problem, as its nonlinearity arises solely from the products of unknown template coefficients and the variables $\boldsymbol{\lambda_j}$. Our approach aggregates all such bilinear constraints and utilizes off-the-shelf bilinear solvers to obtain concrete solutions for the template $h$.

\begin{example}\label{eg:gw_bilinear}
Continuing from Example~\ref{eg:gw_canonical}, recall that we choose $x \geq -1$ as the invariant. For the constraint (\ref{eq:x>=0}), substituting $h(x, y)$ with the template $a x + b y + c$ and considering its negation as previously illustrated, we obtain the following inequalities:
    {\small
        \begin{gather*}
            -x \leq 0 \qquad 0.5(a-b) x-0.5 b < 0 \qquad
            0.75 (a-b)  x - (b-0.25 a) < 0 \\
            0.75 (a-b)  x+ 0.5 (b-1)  y + (0.5c-0.25a-b) < 0 \qquad
            0.5(a-b)  x + 0.5(b-1)  y + 0.5(c-a-b) < 0.
        \end{gather*}}
Then by~\cref{corollary_motzkin}, the constraint (\ref{eq:x>=0}) is equivalent to solving the following set of bilinear constraints involving the unknown coefficients $a$, $b$, and $c$.
    {\small   \begin{align*} \label{eq:compostion-1}
            &\exists \lambda_{0} \ge 0, \lambda_{1} \ge 0, \cdots,  
             \lambda_{5}  \ge 0  \quad  \textbf{s.t.}  \quad (\lambda_2 \neq 0 \vee \lambda_3 \neq 0 \vee \lambda_4 \neq 0 \vee \lambda_5 \neq 0)  \wedge\\
            0 &= (-1)\cdot \lambda_1 + 0.5(a-b) \cdot \lambda_2 + 0.75 (a-b) \cdot \lambda_3 + 0.75 (a-b)\cdot \lambda_4 + 0.5(a-b) \cdot \lambda_5 \ \wedge\\
            0 &= 0.5 (b-1)\cdot \lambda_4 + 0.5(b-1) \cdot \lambda_5 \ \wedge \\
            0 &= -0.5b \cdot \lambda_2 - (b-0.25 a) \cdot \lambda_3 + (0.5c-0.25a-b) \cdot \lambda_4 + 0.5(c-a-b) \cdot \lambda_5 - \lambda_0.  \qed
    \end{align*} } 
\end{example}

Our algorithm utilizes bilinear solvers to address the derived bilinear constraints. Since these constraints define only a feasible region, we heuristically select an objective function to guide the solver toward solutions that yield tighter upper bounds. 
Specifically, we minimize $h(s^*)$, where $s^*$ is a designated initial program state of interest. Once the template coefficients for $h$ are determined (yielding a candidate $h^*$), we reconstruct the piecewise linear upper bound by applying the upper $k$-induction operator $\overline{\varPsi}_{h^*}$ iteratively $k-1$ times, resulting in $\overline{\varPsi}_{h^*}^{k-1}(h^*)$. We claim that our linear bound algorithm is complete in the sense that the reduction to bilinear programming preserves the original $k$-induction condition.

\label{opt1}

\begin{example}
Continuing with Example~\ref{eg:gw_bilinear}, we use the objective function $h = a x + b y + c$ with the initial state $s^* = (x, y) = (1, 1)$. Solving the optimization yields the candidate $h^*(x,y) = x + y + 2$. We then reconstruct the piecewise upper bound by applying $\overline{\varPsi}_{h^*}$ once, resulting in the upper bound $[x < 0] \cdot y + [x \geq 0] \cdot (x + y + 2)$.
\end{example}

\smallskip
\noindent {\em Solving constraints (polynomial case). } In our algorithm for the polynomial case, we assume that the return function is piecewise polynomial and that the invariant is a polynomial predicate over the program variables. We design a sound approach that relaxes the $k$-induction constraint and reduces the relaxed formulation to a semidefinite programming (SDP) problem using Putinar's Positivstellensatz~\cite{putinar}. This relaxation guarantees that the synthesized upper bound $h$ satisfies the original $k$-induction condition (see Definition~\ref{def:potential function}). The algorithm is described as follows.
 
First, for each constraint in the canonical form~(\ref{eq:canon_form}), namely $[B_{j}] \implies \min \{e_{1j}, \dots, e_{m j}\} \leq h$ for $j \in \{1, \dots, l\}$, we relax the constraint by replacing the minimum operator with a convex combination of the terms $\{e_{ij}\}_{i=1}^m$. This results in the following relaxed form:
\begin{equation}\label{eq:approximation_each_form}
\textstyle
   [B_{j}] \implies \sum_{i=1}^m w_i \cdot e_{i j} \le h, \, ,j \in \{1, \dots, l\} 
\end{equation}
where each weight $w_i \geq 0$ and the set of weights satisfies $\sum_{i=1}^m w_i = 1$. Various forms of weight combinations $\{w_i \}_{i=1}^m$ can be employed, such as uniform weights (where each $w_i = 1/m$) or randomly generated weights normalized to sum to one. This relaxation is sound: any function $h$ and set $\{e{ij}\}_{i=1}^m$ that satisfy the relaxed constraint~(\ref{eq:approximation_each_form}) will also satisfy the original canonical form~(\ref{eq:canon_form}). This follows from the fact that $\sum_{i=1}^m w_i\cdot e_{ij} \le h \implies \min\limits_{i \in \{1, \dots, m\}} \{e_{i j}\} \le h$.

Next, we conjunct the invariant $I$ with each constraint in~(\ref{eq:approximation_each_form}), resulting in the following form:
\begin{equation}\label{eq:approximation_form}
\small
   \bigwedge_{j \in \{1, \dots, l\}}  [I \wedge B_{j}] \implies \sum_{i=1}^m w_i \cdot e_{i j} \le h,
\end{equation}
We then apply Putinar's Positivstellensatz~\cite{putinar}, following previous work~\cite{DBLP:conf/cav/ChatterjeeFG16,10.1145/3656432}, to generate constraints on the unknown coefficients, which are solved using off-the-shelf SDP solvers (see~\cref{app:putinar} for details). As these constraints define only a feasible region, we employ a heuristic objective function to guide the solver towards tighter upper bounds. Specifically, we minimize $\sum_{i}h(s^*_i)$, where $s^*_i$ are selected initial program states of interest. \label{opt2}
After obtaining the optimal solution $h^*$ from the SDP solver, we reconstruct the piecewise polynomial upper bound $\overline{\varPsi}_{h^*}^{k-1}(h^*)$ by iteratively applying the upper $k$-induction operator $\overline{\varPsi}_{h^*}$ to $h^*$ $k-1$ times.

{\small
\begin{algorithm}
    \SetKwData{Left}{left}\SetKwData{This}{this}\SetKwData{Up}{up}
    \SetKwFunction{Union}{Union}\SetKwFunction{FindCompress}{FindCompress}
    \SetKwInOut{Input}{Input}\SetKwInOut{Output}{Output}

    \Input{Probabilistic loop $P$ in the form of~(\ref{eq:pwhileloop}) and a piecewise return function $f$}
    \Output{Piecewise bounds for the expected value of $X_f$ upon termination of $P$}
    \BlankLine
    \textbf{Prerequisites Checking and External Inputs:}
    
    (a) Prerequisites Checking: Verify the prerequisites in~\cref{thm:soundness} (\cref{thm:soundness_poly}). \\
    (b) External Inputs: Generate an invariant $I$,   
    select parameter $k$ and specify initial program state $s^*$. \\
    \textbf{Templates and Constraints: }\\
    (a) Predefining a (monolithic) polynomial template $h$. \\(b) Unfolding the loop within $k$ times and calculate $pre_{C_d}(h)$ for all $C_d \in \{C_1,\dots, C_m\}$ (generated by our unfolding process) to obtain the constraint $\min \{h_1, h_2,\dots, h_m \} \preceq h$. \\
    \textbf{Transforming to Canonical Form: } \\
    Transform the constraints (\ref{eq:minform}) into the form of (\ref{eq:canon_form}) through an iterative approach and obtain $l$ canonical constraints\;
    \textbf{Constraints Solving:} \\
    \uIf {\emph{the loop $P$ is linear and the template $h$ is linear}} { 
    $Cons \leftarrow \emptyset$; \Comment{Linear Case} \\
    \For {$j \leftarrow 1~ \KwTo~ l$}{
        Extract the coefficients of the variables from canonical-formed constraints\;
        Construct bilinear constraints $K_j$ with auxiliary variables $\boldsymbol{\lambda}_j$\;
        $Cons \leftarrow Cons \cup K_j$\;
     }
     Call bilinear solver to solve $Cons$ and obtain the piecewise bound with the solution $h^*$}
    \Else{  
    (a) Soundly relax the original canonical constraints (\ref{eq:canon_form}) into (\ref{eq:approximation_form}). \Comment{Polynomial Case}\\
    (b) Call SDP solver to solve and obtain the piecewise bound with $h^*$.
}
    \caption{Synthesizing Bounds}
    \label{alg}
\end{algorithm}
}

\smallskip
\noindent {\bf Correctness. }
Our algorithms are guaranteed to produce correct bounds by~\cref{thm:soundness,thm:soundness_poly}. The \emph{Prerequisites Checking} stage ensures that all prerequisites in~\cref{thm:soundness} and~\cref{thm:soundness_poly} are met, and the function $h$ is determined according to the $k$-induction conditions (see Definition~\ref{def:potential function}). Additionally, the invariants we use over-approximate the set of reachable program states, thereby preserving the soundness of our approach. Specifically, our linear bound algorithm is both sound and complete in the sense that the reduction to bilinear programming exactly preserves the original $k$-induction condition. In the polynomial case, our algorithm employs a sound relaxation, which likewise guarantees the correctness of the synthesized bounds.

\subsection{Extensions: Handling Probabilistic Programs with Multiple Loops}\label{sec:extensions}

Below, we describe the extension of our approach to probabilistic programs with multiple loops, including both sequential compositions of probabilistic loops and nested loops. For brevity, we focus on the synthesis of upper bounds; the synthesis of lower bounds is entirely analogous.

\smallskip
\noindent {\em Sequential Composition. }
For a sequential composition $P = P_1;\dots;P_n$ of loops $P_1,\dots,P_n$ with return function $f$, our method analyzes each loop in reverse order. To illustrate the approach, we focus on the case $P = P_1; P_2$. Given a $k$-induction parameter $k$, the procedure for synthesizing upper bounds proceeds as follows:
(i) Begin by computing a piecewise upper bound $h_2$ for the expected value of $f$ after the execution of $P_2$.
(ii) Then, treat $h_2$ as the return function for $P_1$ and compute its piecewise upper bound, resulting in a final bound $h_1$ for the entire composition.
This backward compositional reasoning can be systematically extended to compositions with more than two loops.

\smallskip
\noindent {\em Nested Loops. }
To address nested loops, we incorporate our approach with the methods proposed in~\cite{DBLP:conf/atva/FengZJZX17,DBLP:journals/pacmpl/FengCSKKZ23}, applying $k$-induction exclusively to the innermost loop and $1$-induction to the outer loops. Since the innermost loop can be unfolded independently of the outer loops, we are able to derive tight piecewise bounds for the inner loop via $k$-induction and subsequently propagate these bounds to the outer loops. For clarity, we focus on the case where the program $R$ contains a single inner loop and has the following structure:
\[
R=\WHILESYMBOL (\, \psi \, ) \{P\}\mbox{ with }P = \WHILESYMBOL (\, \varphi \, ) \{Q\}\mbox{ and }Q\mbox{ loop-free}.
\]
Our objective is to analyze the expected value of $X_f$ upon termination of the loop. Let $\varPhi^{out}_f$ denote the characteristic function (see Definition~\ref{def:characteristic function}) with respect to the outer loop and return function $f$, and let $\varPhi^{in}_g$ denote the characteristic function for the inner loop $P$ and return function $g$. While $\varPhi^{in}_g$ can be computed explicitly, $\varPhi^{out}_f$ typically cannot. We therefore apply $1$-induction to the outer loop and $k$-induction to the inner loop, as summarized below:
\begin{itemize}
\item Define templates $h_{out}$ and $h_{in}$ at the entry of the outer and inner loop respectively.
\item For the outer loop, the $1$-induction rule yields the constraint $\varPhi^{out}_f(h_{out}) \preceq h_{out}$. Since $\varPhi^{out}_f(h_{out})$ cannot generally be computed explicitly, we upper-approximate the expected value of $h_{out}$ after executing the inner loop $P$ by $h_{in}$, i.e., $\varPhi^{out}_f(h_{out})  \preceq [\neg \psi] \cdot f + [\psi] \cdot h_{in} $,  and the original constraint  $\varPhi^{out}_f(h_{out}) \preceq h_{out}$ can be strengthened into $[\neg \psi] \cdot f + [\psi] \cdot h_{in} \preceq h_{out} $.
\item For the inner loop, we apply the $k$-induction condition (see Definition~\ref{def:k_induc_func}) to ensure that $h_{in}$ upper-approximates the expected value of $h_{out}$ after executing the inner loop. This leads to the constraint $\varPhi^{in}_{h_{out}}((\varPsi^{in}_{h_{in}})^{k-1}(h_{in})) \preceq h_{in}$, where $\varPsi^{in}_{h_{in}}(g) = \min \{\varPhi^{in}_{h_{out}}(g), h_{in}\}$ is the upper $k$-induction operator for the inner loop $P$  (see Definition~\ref{def:k_induction_operator}). 
\item Collect the resulting constraints and apply our synthesis algorithm as described in~\cref{sec:algorithm}.
\end{itemize}

Through this process, we obtain $h_{out}$ as a piecewise upper bound for the expected value of $X_f$ with respect to the return function $f$ upon termination of the entire while loop $R$.

\section{Experimental Results}\label{sec:experiment}
We implement our algorithms in Python 3.9.12 and Julia 1.9.4. We use Gurobi in Python for bilinear programming and Mosek in Julia for semi-definite programming. All experiments are conducted on a Windows 10 (64-bit) machine equipped with an Intel(R) Core(TM) i7-9750H CPU at 2.60GHz and 16GB of RAM. We evaluate our algorithms for synthesizing piecewise linear and polynomial upper bounds, as detailed in~\cref{sec:linear_res} and~\cref{sec:poly_res}. Results for lower bound synthesis, which exhibit similar performance and comparative advantages, are provided in Appendix~\ref{app:linear_lower} and Appendix~\ref{app:poly_lower} due to space limitations.

\smallskip
\noindent {\bf Evaluation Goals. } Our experiments are designed to address the following research questions:
\begin{itemize}
\item[\textbf{RQ1.}] How effective is our approach in generating piecewise bounds?
\item[\textbf{RQ2.}] How does our approach compare to the most closely related methods?
\item[\textbf{RQ3.}] How do our piecewise bounds compare to monolithic polynomial bounds?
\end{itemize}

\smallskip
\noindent {\bf Experimental Settings. } 
We address the evaluation goals for our piecewise linear and polynomial algorithms separately. The experiments are conducted under the following settings:

\smallskip
\noindent{\em Invariants.} We employ invariants to over-approximate the set of reachable states, which is standard in various existing results~\cite{DBLP:conf/cav/ChakarovS13,DBLP:conf/vmcai/FuC19,DBLP:conf/cav/ChatterjeeFG16}. Note that invariants do not provide information about the piecewise partitioning of the bounds to be computed. In our experiment, we minimize their impact by deliberately choosing trivial interval-bound invariants that can be directly derived as the union of loop guard and its post image under the increment/decrement operations within the loop body.

\smallskip
\noindent{\em Prerequisites Checking.} Our experiments cover both linear and polynomial probabilistic programs (see~\cref{app:programs} for details). For linear programs with monolithic linear return functions, we use a linear template and apply our linear algorithm. For more general cases involving polynomial programs with piecewise polynomial return functions, we employ a higher-degree polynomial template and apply our polynomial algorithm.
In our piecewise linear experiments, we ensure that the prerequisites (P1) and (P2) in~\cref{thm:soundness} are satisfied as follows. For (P1), we verify syntactically that the uniform amplifier $c$ can typically be set to $1$ across most benchmarks, ensuring that (P1) holds for any positive $c_2$. For the remaining benchmarks, we take the maximum coefficient of the program variables in the loop body as $c$. For example, in the \textsc{St-Petersburg} benchmark, we set the uniform amplifier $c$ to $2$, choosing $c_3 = \ln 2$ (since $e^{c_3} = 2$) and $c_2 = \ln 4$ to meet the required conditions.
For (P2), in benchmarks where each loop iteration terminates with probability $p$ and continues with probability $1-p$, we can syntactically extract $p$ and verify that the concentration property holds, exhibiting exponential decay at a rate of $e^{\ln(1-p)}$. For the remaining benchmarks, we construct difference-bounded ranking supermartingales (dbRSMs) to ensure the concentration property. Such dbRSMs can be synthesized automatically using methods described in~\cite{DBLP:conf/popl/ChatterjeeFNH16,DBLP:conf/cav/ChatterjeeFG16}.
In our piecewise polynomial experiments, we ensure that the prerequisites (P1') and (P2) in~\cref{thm:soundness_poly} are satisfied as follows. For (P1'), we verify the bounded-update property on each polynomial benchmark using an SMT solver~\cite{Z3paper}. For (P2), we apply the same approach as in the linear case to establish the concentration property for polynomial programs.

\smallskip
\noindent{\em Bound Optimization. } 
Recall that in our algorithms described on pages~\pageref{opt1} and~\pageref{opt2}, we optimize the synthesized upper bounds by minimizing their values over the initial states of interest, which serve as the objective function. In the piecewise linear experiments, we typically set the default initial state $s^*$ by assigning the value 1 to all program variables across most benchmarks. For specific cases, such as \textsc{Fair Coin}, we assign initial values $x = 0$ and $y = 0$ --- since $(x, y) = (0, 0)$ is the only state from which the loop can be entered --- and set the variable $i$ to its default value of 1.
In the piecewise polynomial experiments, for path probability estimation benchmarks selected from~\cite{Beutner2022b,DBLP:conf/pldi/SankaranarayananCG13,10.1145/3656432,DBLP:conf/cav/GehrMV16}, we adopt the default initial state $s^*$ used in previous work to ensure consistency. For the remaining benchmarks, we first define an interval-bound region, with real-valued variables ranging over $[0, 10]$ and Boolean variables over $[0, 1]$. We then select 10 initial states comprising the boundary points of the region, the midpoints of each boundary, the center point, and uniformly distributed integer points within the region.

\smallskip
\noindent{\em Weights Selection.} For the polynomial experiments, recall that our algorithm requires a predefined set of weight combinations (see~\cref{eq:approximation_form}). We employ uniformly distributed weights (i.e., each weight is $\frac{1}{m}$) and additionally generate 10 sets of randomly selected weights, each normalized to sum to one. Independent computations are performed for each of these 11 weight combinations. From the resulting solutions, we select the function with the minimum objective value as the synthesized upper bound $h^*$. The total execution time is reported as the cumulative runtime for the 11 independent runs with different weight settings. 

\smallskip
\noindent{\em Numerical Repair.} To address the inherent numerical issues associated with numerical solvers, we apply a post-processing step to repair the computed results. In the linear experiments, we approximate the output floating-point coefficients with rational numbers using continued fractions (see~\cref{app:continued fraction}, ~\cite{Jones1984ContinuedFA}), and validate these approximations by checking the constraints in (\ref{eq:canon_form}). This numerical repair is applied to all benchmarks except \textsc{Expected Time}. For this benchmark, since suitable rational approximations could not be found, we truncate the floating-point results to a precision of $10^{-4}$ and verify their validity against the same constraints. In the polynomial experiments, we similarly truncate all floating-point results to $10^{-4}$ precision, then substitute the results into the constraints in~(\ref{eq:approximation_form}) to check feasibility. Of the 20 benchmarks evaluated, the results for 16 passed our validation procedure, while the remainder remain unknown.

\subsection{Piecewise Linear Bound Synthesis}\label{sec:linear_res}
\noindent {\bf Benchmark Selection. }
We choose upper-bound benchmarks from existing works ~\cite{DBLP:conf/cav/BaoTPHR22,DBLP:conf/cav/ChenHWZ15,DBLP:conf/atva/FengZJZX17,DBLP:conf/tacas/BatzCJKKM23,DBLP:conf/cav/BatzCKKMS20,DBLP:conf/cav/GehrMV16,DBLP:conf/pldi/BeutnerOZ22,DBLP:journals/pacmpl/FengCSKKZ23} that fall into our scope and have the following adaptions. First, for those that do not have linear return functions, we add simple linear return functions. Second, for those whose upper bound that can be handled directly by 1-induction (except for several classical examples: \textsc{k-Geo, RevBin, Fair Coin}), we adapt them by reasonable perturbations (such as changing the assignment statement, changing the probability parameters, reducing the continuous distribution to discrete distribution, etc) so that they require $(k>1)$-induction. Third, for those whose upper bound that cannot be handled by $k$-induction with small $k=1,2,3$, we adapt them by reasonable perturbations as above so that they can be handled by $(k>1)$-induction, while still cannot be handled by 1-induction. 

In detail, we consider 7 original examples and 6 adapted examples from the literature. The examples \textsc{Geo}, \textsc{k-Geo} and \textsc{Equal-Prob-Grid} are taken from~\cite{DBLP:conf/cav/BatzCKKMS20,DBLP:conf/tacas/BatzCJKKM23}, for which we replace the assertion probability with a linear return function $goal$ in \textsc{Equal-Prob-Grid}. We consider the benchmark \textsc{Zero-Conf-Variant} adapted from~\cite{DBLP:conf/tacas/BatzCJKKM23,DBLP:journals/pacmpl/FengCSKKZ23}. We revise the assignments and probabilistic parameters in the original program, and add a linear return function $curprobe$. The benchmark \textsc{St-Petersburg variant} is taken from~\cite{DBLP:journals/pacmpl/FengCSKKZ23} where we replace the probability parameter $\frac{1}{2}$ with $\frac{3}{4}$ since the original program does not satisfy the prerequisites in~\cref{thm:soundness}. From~\cite{DBLP:conf/cav/BaoTPHR22,DBLP:conf/cav/ChenHWZ15,DBLP:conf/atva/FengZJZX17}, we consider the benchmarks \textsc{Coin}, \textsc{Mart}, \textsc{RevBin} and \textsc{Fair Coin}, and revise the assignments, guards on the original benchmarks \textsc{Bin} series so that we obtain a more complex version \textsc{Bin-Ran}. The remaining three examples, \textsc{Expected Time, Growing Walk} and its variant, are all adapted  from~\cite{DBLP:conf/cav/GehrMV16,DBLP:conf/pldi/BeutnerOZ22} by reducing the continuous distributions to discrete distributions. For affine programs, as $k$ increases, the bottleneck emerges in the unfolding process of Stage 2. In our experiments with piecewise linear bounds, the parameter $k$ is chosen as the largest value such that the runtime of Stage 2 in Algorithm~\ref{alg} remains below 600 seconds.

\smallskip 
\noindent{\bf Answering RQ1.} We present the experimental results on these 13 benchmarks in Table~\ref{table:1}. As bilinear solving is an iterative search for optimal solutions, we set the maximum searching time for Gurobi to $100$s. On most benchmarks, we find that a monolithic linear bound with 1-induction does not exist but obtain a piecewise linear upper bound via ($k>1$)-induction in a few minutes. Our approach derives the exact bound, i.e., the tightest upper bound, on the benchmarks~\textsc{Geo, Coin, k-Geo, Mart, Growing Walk, Equal-Prob-Grid, RevBin, Fair Coin, St-Petersburg variant}. The exactness of these bounds is established by comparison with the exact invariants synthesized in~\cite{DBLP:conf/cav/BaoTPHR22} (see \textbf{RQ2}) and with the piecewise lower bounds presented in~\cref{app:linear_lower}.
We also show that on a significant number of benchmarks (e.g., \textsc{k-Geo, Bin-Ran, Growing Walk-variant, Expected Time}, etc), the piecewise bounds we synthesize are non-trivial (i.e., the program state space $S$ is partitioned into more than $[\varphi]$ and $[\neg \varphi]$). 

\begin{table*}
    \renewcommand{\arraystretch}{1.8}
	\caption{Experimental Results for {\bf RQ1} and {\bf RQ2}, Linear Case (Upper Bounds). "$f$" stands for the return function considered in the benchmark, "T(s)" (of our approach) stands for the execution time of our approach (in seconds), including the parsing from the program input,  transforming the $k$-induction constraint into the bilinear problems, bilinear solving time and verification time. "Conventional Approach ($k=1$)" stands for the monolithic linear upper bound synthesized via 1-induction, "$k$" stands for the $k$-induction we apply, "Solution" stands for the linear candidate solved by Gurobi, and "Piecewise Linear Upper Bound" stands for our piecewise results. "Result" stands for the synthesized results by other tools and "T(s)" (of their approaches) stands for the execution time of their tools.}
	\label{table:1}
	\resizebox{\textwidth}{!}{
	\begin{threeparttable}
	   \begin{tabular}{|c|c|c|c|c|c|c|c|c|c|c|}
				\hline
				\multicolumn{1}{|c|}{\multirow{2}{*}{\textbf{Benchmark}}}  &
				\multicolumn{1}{c|}{\multirow{2}{*}{\textbf{$f$}}}      &
                \multicolumn{1}{c|}{\multirow{2}{*}{\textbf{\makecell{Conventional \\ Approach \\ ($k=1$)}}}}  &
				\multicolumn{4}{c|}{\multirow{1}{*}{\textbf{Our Approach}}}  &
                \multicolumn{2}{c|}{\multirow{1}{*}{\textsc{cegispro2}}}  &
				\multicolumn{2}{c|}{\multirow{1}{*}{\textsc{exist}}} 
				\\ \cline{4-11}
				\multicolumn{1}{|c|}{}  & \multicolumn{1}{c|}{} &  \multicolumn{1}{c|}{} &
                \multicolumn{1}{c|}{\textbf{$k$}}  & 
				\multicolumn{1}{c|}{\textbf{ Solution }} &
				\multicolumn{1}{c|}{\textbf{ Piecewise Linear Upper Bound }}   &    
                \multicolumn{1}{c|}{\textbf{T(s)}} &
                \multicolumn{1}{c|}{\textbf{ Result }} &
                \multicolumn{1}{c|}{\textbf{ T(s) }} &
                \multicolumn{1}{c|}{\textbf{ Result }} &
                \multicolumn{1}{c|}{\textbf{ T(s) }} 
                \\ \hline \hline
                {\textsc{Geo}}& $x$ &  \ding{55} &  3 &   $x+1$   & \makecell{$[c>0]\cdot x +$ $[c\leq 0]\cdot (x+1)$} & $1.92$  &  \makecell{$[c>0]\cdot x +$ \\ $[c\leq 0]\cdot (x+1)$} & 0.05 &  $x+[c=0]$ & 17.29 \\
                \hline
                {\textsc{k-Geo}}& $y$ &  $\makecell{-k + N + \\  x + y +1}$ &  3  &    $\makecell{-k + N  \\+  x + y +1}$  & \makecell{$[k>N]\cdot y + [k\leq N-1] \cdot $ \\ $(-k + N +x + y +1)+$ \\$ [N-1 < k \leq N] \cdot $\\ $(-0.5k + 0.5N + x + y +1)$} & 132.76   & \makecell{$ [k>N] \cdot y + [ k \leq N]$\\$ \cdot (-k+N+x+y+1)$} & 0.38  & \makecell{$y + [k\le n] \cdot $\\$(x - k + n+1)$} & 76.74 \\
                \hline
                {\textsc{Bin-ran}}& $y$  &  \ding{55}  & 2  &  $\makecell{0.9x-21i\\+y+233}$   & \makecell{$[i>10]\cdot y + $\\$[\frac{90}{11} < i\leq 10]\cdot (0.9x-21i+y+233)$ \\ $+[i\leq \frac{90}{11}] \cdot (0.9x-18.8i+y+215)$} &  106.29  &\makecell{inconsistent \\ results} & -  & \makecell{inner  error } & - \\
                \hline
                {\textsc{Coin}}& $i$ & \ding{55}  &  2   &    $i+\frac{8}{3}$ & \makecell{$[x\neq y]\cdot i + $ \\ $[x=y]\cdot (i+\frac{8}{3})$} &104.13  & not terminate &  - & fail & -\\
                \hline
                {\textsc{Mart}}& $i$ & \ding{55}  &  3   &  $i+2$ & \makecell{$[x\le 0]\cdot i + [x > 0]\cdot (i+2)$} & $19.29$  & \makecell{violation of \\ non-negativity } & -  & $i + [x > 0] *  2 $ & 37.23 \\
                \hline
                {\makecell{\textsc{Growing} \\\textsc{Walk}  }}& $y$ & \ding{55}  & 3  &  $x+y+2$ & \makecell{$[x<0]\cdot y + [x\geq 0]\cdot (x+y+2)$} & $4.03$  & \makecell{violation of \\ non-negativity } & -  & \makecell{$y + [x \ge 0] \cdot $\\$ (x+2)$} & 21.98 \\
                \hline
                {\makecell{\textsc{Growing} \\ \textsc{Walk}\\ \textsc{-variant}}}& $y$ &  $x+y+1$  & 3  &  $x+y+1$ & \makecell{$[x<0]\cdot y +$ \\ $ [0\leq x <1]\cdot (0.5x+y+0.25) $ \\ $+[x\geq 1] \cdot (x+y)$} & $125.19$  & \makecell{violation of \\ non-negativity }& - & not terminate &- \\
                \hline
                {\makecell{\textsc{Expected} \\ \textsc{Time}}}& $t$  &  \ding{55}  &  3  &  \makecell{$4.4280x + t$\\$ + 6.2461$} & \makecell{$[x<0]\cdot t + [0\leq x<1]\cdot (t+1) +$ \\ $ [1\leq x <3.258] \cdot (3.9852x+t+7.39)$\\ $+[3.258 \leq x < 3.3772]\cdot $\\$(4.4280 x + t + 6.2461)+$ \\ $[3.3772 \leq x] \cdot (3.5867 x + t + 9.0874)$} &  $109.35$  & \makecell{violation of \\ non-negativity }& - & not terminate &-\\
                \hline
                {\makecell{\textsc{Zero-Conf} \\ \textsc{-variant}}}& $\textit{cur}$  & \ding{55} & 3   & $\textit{cur} + 140$ &  \makecell{$[\textit{est} >0]\cdot \textit{cur}+$ \\ $[\textit{start}=0 \wedge \textit{est} \leq 0]\cdot (\textit{cur} + 140)$ \\ $+[\textit{start}\geq 1 \wedge \textit{est} \leq 0]\cdot(\textit{cur} + 42)$} &  180.42  & \makecell{violation of \\ non-negativity } &-  & \makecell{$cur + [\textit{est}=0]$\\$\cdot (-49\cdot \textit{start}^2 -$\\$ 49\cdot \textit{start}+141 )$} & 392.19 \\
				\hline
                {\makecell{\textsc{Equal-} \\ \textsc{Prob-Grid}}}& $\textit{goal}$ & \ding{55} & 2  &  $\textit{goal} + 1.5$   &  \makecell{$[a>10\vee b>10 \vee \textit{goal} \neq 0]\cdot \textit{goal}$ \\ $+[a\leq 10 \wedge b\leq 10 \wedge \textit{goal} = 0]\cdot 1.5$} &$142.68$ & \makecell{$[a>10\vee b>10 \vee $\\$\textit{goal} \neq 0]\cdot \textit{goal} +$ \\ $[a\leq 10 \wedge b\leq 10 $\\$ \wedge \textit{goal} = 0] \cdot 1.5$}  &0.11  & fail &- \\
				\hline
                {\textsc{RevBin}}& $z$  & $2x+z$   &  3    & $2x+z$ &  \makecell{$[x<1] \cdot z +$  $[1 \leq x <2] \cdot (z+x+1) $ \\ $+[x \geq 2] \cdot (z+2x)$} &  $70.30$  & \makecell{$[x<1] \cdot z +$\\  $[x \geq 1] \cdot (z+2x) $} & 0.22  & $z + [x > 0]\cdot 2x$ & 151.26 \\
				\hline

                {\textsc{Fair Coin}}& $i$   &  $i-2y+2$ &  3  &  $i+\frac{4}{3}$ & \makecell{$[x > 0 \vee y>0]\cdot i + $ \\ $[x\leq 0 \wedge y \leq 0]\cdot (i+\frac{4}{3})$} &  $129.34$ & \makecell{$[x > 0 \vee y>0]\cdot i + $ \\ $[x\leq 0 \wedge y \leq 0]\cdot $\\$(i+\frac{4}{3})$} &  0.06  & \makecell{$[ x+y = 0] \cdot $\\$\frac{4}{3} + i$} & 17.95 \\
                \hline
                {\makecell{\textsc{St-Petersburg} \\ \textsc{variant}}} & $y$  & \ding{55} & 3  & $\frac{3}{2} y$ & $[x > 0]\cdot y+ [x \leq 0]\cdot \frac{3}{2} y$ &  1.53  & \makecell{$[x > 0]\cdot y+ $\\$[x \leq 0]\cdot \frac{3}{2} y$} & 0.04 & \makecell{$y+[x=0]\cdot$\\$ 0.5y$} & 13.39  \\
                \hline
			         \end{tabular}
	\end{threeparttable}}
\end{table*}

\smallskip 
\noindent {\bf Answering RQ2.} 
We answer {\bf RQ2} by comparing our approach with the most related approaches~\cite{DBLP:conf/tacas/BatzCJKKM23,DBLP:conf/cav/BaoTPHR22}. We present our comparison results in~\cref{table:1}. The main difference between \textsc{cegispro2}~\cite{DBLP:conf/tacas/BatzCJKKM23} and our approach is that \textsc{cegispro2} requires an upper bound to be verified as an additional program input and it will only return a super-invariant (i.e., a possibly piecewise upper-bound) that is sufficient to \emph{verify} (i.e., smaller than) the input upper bound, while we intend to synthesize a tight piecewise upper bound directly. 
The benchmarks \textsc{Geo, k-Geo} are the common benchmarks in these two works and the direct comparisons are as follows: For the benchmark \textsc{Geo}, the piecewise upper bounds of the two methods are the same. For \textsc{k-Geo}, their piecewise result is consistent with our result over $\mathbb{Z}_{\ge 0}$. While in the scope of real numbers, our piecewise upper bound is tighter than theirs. To have a richer comparison with \textsc{cegispro2}, we give \textsc{cegispro2} an advantage by feeding our benchmarks (including the above two benchmarks) in~\cref{table:1} to \textsc{cegispro2} paired with the piecewise upper bounds synthesized by our approach. We find \textsc{cegispro2} cannot adequately handle piecewise inputs. Additionally, it reports violation of non-negativity on 5 of our benchmarks (see~\cref{table:1}). By feeding one segment from the piecewise bounds synthesized via our approaches for the remaining 8 benchmarks, we find on 6 benchmarks, \textsc{cegispro2} produce the consistent results with our inputs on $\mathbb{Z}_{\ge 0}$, while some of them (e.g., \textsc{k-Geo}) are incorrect over $\mathbb{R}$. On \textsc{Bin-Ran}, the results they produce are impossible to compare since it produces sophisticated and different results when we feed different segments from our piecewise upper bound. On \textsc{Coin}, the execution using their tool does not terminate, which prevents the output of a result. 

The work~\cite{DBLP:conf/cav/BaoTPHR22} considers the probabilistic invariant synthesis via data-driven approach. Note that the synthesis of upper bounds (i.e., super-invariants) is not considered in their work, and the only relevant work in~\cite{DBLP:conf/cav/BaoTPHR22} with our upper bound synthesis is the exact invariant synthesis. For a further comparison, We apply their tool \textsc{exist} on our benchmarks to try to generate exact invariants. On the benchmarks~\textsc{Geo, k-Geo, Mart, Growing Walk, RevBin, Fair Coin, St-Petersburg variant}, \textsc{exist} can generate an exact invariant for each benchmark and we show that on these benchmarks, the piecewise upper bounds we synthesize are equal to their exact invariants so that the upper bounds we synthesize are actually the exact expected value of  $X_f$. On the benchmark~\textsc{Zero-Conf-Variant}, they spend about 400s while we obtain a respectable piecewise linear bound in around 180s. For the remaining benchmarks, their tool fails or the computation seems to be stuck. 

In conclusion, our approach can handle many benchmarks that these two works~\cite{DBLP:conf/tacas/BatzCJKKM23,DBLP:conf/cav/BaoTPHR22} cannot handle. When feeding our benchmarks with the bounds synthesized through our approach to \textsc{cegispro2} and \textsc{exist}, they fail on about $40 \%$ of our benchmarks. Over most of the benchmarks that their and our approaches can handle, our bounds are comparable with theirs. 

\smallskip
\noindent{\bf Answering RQ3.} In addition {\bf RQ2}, we compare our piecewise linear upper bounds with monolithic polynomial bounds via 1-induction in~\cref{table:comparison_upper}. Following~\cite{DBLP:conf/cav/ChatterjeeFG16,10.1145/3656432}, we implement the polynomial synthesis with Putinar's Positivstellensatz~\cite{putinar} (see~\cref{app:putinar}). For a fair comparison, we generate the polynomial bounds with the same invariant and optimal objective function for each benchmark. All the numerical results in the polynomial bounds are cut to $10^{-4}$ precision. 
We compare two results by uniformly taking the grid points in the invariant and evaluate two results, and we compute the percentage of the points that our piecewise upper bound are larger (i.e., no better) than monolithic polynomial, which is shown in the last column "PCT" in~\cref{table:comparison_upper}. 
For each benchmark, we provide difference plots that classify all grid points into three disjoint regions according to the magnitude of difference: red points correspond to cases where our piecewise linear upper bounds are notably smaller (diff $ > 10^{-3}$), blue points correspond to cases where the monolithic upper bounds are notably smaller, and gray points represent regions where the two bounds are nearly identical (diff $ \le 10^{-3}$). We display part of the comparison in \cref{fig:diff_plots_upper1}, see~\cref{app:linear_upper_fig} for other figures.
We observe that on our benchmarks except \textsc{Expected Time}, our piecewise linear bounds are significantly tighter than monolithic polynomial bounds. 
In addition, we quantify the difference between the two upper bounds by subtracting the piecewise upper bound from the monolithic one and taking the unbiased average of the resulting values, which is reported in the last column “Diff” of~\cref{table:comparison_upper}. We observe that, except for \textsc{Expected Time}, our piecewise bounds are generally tighter than the monolithic ones, with especially notable improvements on benchmarks such as \textsc{Mart} and \textsc{Zero-Conf-Variant}.
We conjecture that the relatively poor performance of our piecewise linear algorithm in  \textsc{Expected Time} is due to the fact that the true expected value is closer to a piecewise polynomial function.

\begin{table*}
    \renewcommand{\arraystretch}{1.8}
        \caption{Experimental Results for {\bf RQ3}, Linear Case (Upper Bounds). "$f$" stands for the return function considered in the benchmark, "$k$" stands for the $k$-induction condition we apply in this comparison, "Monolithic Polynomial via 1-Induction" stands for the monolithic polynomial bounds synthesized via 1-induction, and "d" stands for the degree of polynomial template we use. "PCT" stands for the percentage of the points that our piecewise upper bound are larger (i.e., no better) than monolithic polynomial, and "Diff" stands for the unbiased average difference between our piecewise polynomial bound and the monolithic polynomial. A positive value indicates how much tighter our piecewise bounds are on average. }
        \label{table:comparison_upper}
        \resizebox{\textwidth}{!}{
		\begin{threeparttable}
            \begin{tabular}{|c|c|c|c|c|c|c|c|}
				\hline
				\multicolumn{1}{|c|}{\multirow{2}{*}{\textbf{Benchmark}}}  &
				\multicolumn{1}{c|}{\multirow{2}{*}{\textbf{$f$}}}      & 
				\multicolumn{2}{c|}{\multirow{1}{*}{\textbf{Our Approach}}}  &
				\multicolumn{2}{c|}{\multirow{1}{*}{\textbf{\makecell{Monolithic  Polynomial  via 1-Induction }}}} &
                \multicolumn{1}{c|}{\multirow{2}{*}{\textbf{PCT}}} &
                \multicolumn{1}{c|}{\multirow{2}{*}{\textbf{Diff}}} \\ 
                \cline{3-6}
                \multicolumn{1}{|c|}{} & \multicolumn{1}{c|}{} &  
                \multicolumn{1}{c|}{$\; k\;$}  &   
				\multicolumn{1}{c|}{\textbf{Piecewise Linear Upper Bound} } & \multicolumn{1}{c|}{\textbf{d}}  &   
				\multicolumn{1}{c|}{\textbf{Monolithic Polynomial Upper Bound} }
                 & \multicolumn{1}{c|}{}  & \multicolumn{1}{c|}{}    \\ \hline \hline

                {\textsc{Geo}}& $x$  & 3  & \makecell{$[c>0]\cdot x +[c\leq 0]\cdot (x+1)$} & 3 &\makecell{$1.0000 - 1.9996*c + 1.0000*x + $\\$0.9996*c^2 - 0.0002*x*c + 0.0002*x*c^2$}& $0.0\%$ & 0.3249 \\
                \hline
                {\textsc{k-Geo}}& $y$  & 3 &     \makecell{$[k>N]\cdot y + $ \\ $[k\leq N-1] \cdot (-k + N +x + y +1)+$ \\$ [N-1 < k \leq N] \cdot (-0.5k + 0.5N + x + y +1)$} & 2 & \makecell{$269.8049 - 52.761*N - 1.0000*k +$\\$ 1.0000*y + 1.0000*x + 2.688*N^2$} & 10.0\% & 77.813 \\
                \hline
                {\textsc{Bin-ran}}& $y$   &  2   & \makecell{$[i>10]\cdot y + $\\$[\frac{90}{11} < i\leq 10]\cdot (0.9x-21i+y+233)$ \\ $+[i\leq \frac{90}{11}] \cdot (0.9x-18.8i+y+215)$} & 3 & \makecell{$54.2875 + 26.0308*i - 38.3708*y -$\\$ 20.4776*x - 1.683*i^2 - 0.2*y*i - $\\$ 0.035*y^2 - 0.1881*x*i - 0.4591*x*y - $\\$ 1.8397*x^2 - 0.0129*i^3 + 0.3751*y*i^2 + $\\$ 0.0141*y^2*i - 0.0062*y^3 + 0.5049*x*i^2 - $\\$0.0123x*y*i - 0.0126*x*y^2 + 0.9318x^2*i $\\ - $0.0562*x^2*y + 1.0057*x^3$} &  40.29\% & 233.49 \\
                \hline
                {\textsc{Coin}}& $i$    & 3  & \makecell{$[x\neq y]\cdot i + [x=y]\cdot (i+\frac{8}{3})$} & 2 & \makecell{$2.6667 + 1.0000*i - 0.6381*y + 4.2840*x$\\$ - 2.0286*y^2 - 2.0067*x*y + 0.3893*x^2$} & 0.0\% & 3.2915 \\
                \hline
                {\textsc{Mart}}& $i$    & 3   & \makecell{$[x\le 0]\cdot i + [x > 0]\cdot (i+2)$} & 2 & \makecell{$630.7295 + 0.9941*i + $\\$ 199099.682*x + 44.2303*x^2$} &  0.0\% & $9.9*10^5$ \\
                \hline
                {\makecell{\textsc{Growing} \\ \textsc{Walk}}}& $y$   & 3  &   \makecell{$[x<0]\cdot y + [x\geq 0]\cdot (x+y+2)$} & 3 & \makecell{$2.5000+1.0000*y+1.900*x $ \\ $-0.5000*x^2+0.1000*x^3$} & 0.0\% & 8.5964 \\
                \hline
                {\makecell{\textsc{Growing} \\ \textsc{Walk} \\ \textsc{variant}}}& $y$   & 3    &  \makecell{$[x<0]\cdot y +$ \\ $ [0\leq x <1]\cdot (0.5x+y+0.25) $ \\ $+[x\geq 1] \cdot (x+y)$} & 3 &\makecell{$1.0000*y-0.2380*x+0.1041*y^2-$ \\ $0.0686*x*y+0.0951*x^2+0.03558*x*y^2$ \\ $+0.0686*x^2*y+0.1430*x^3$} & 5.52\% & 52.315 \\
                \hline
                {\makecell{\textsc{Expected} \\ \textsc{Time}}}& $t$   &  3  & \makecell{$[x<0]\cdot t + [0\leq x<1]\cdot (t+1)+ $ \\ $ [1\leq x <3.258] \cdot (3.9852x+t+7.39)+$\\ $[3.258 \leq x < 3.3772]\cdot (4.4280 x + t + 6.2461)$ \\ $+[3.3772 \leq x] \cdot (3.5867 x + t + 9.0874)$} & 3 & \makecell{$3.1203 + 0.9622*t + 2.8278*x + $\\$0.0015*t^2 - 0.01558*x*t - 0.1397*x^2 - $\\$0.0003*x*t^2 - 0.0002*x^2*t + 0.0025*x^3$} & 82.0\% & -11.669 \\
                \hline
                {\makecell{\textsc{Zero-Conf} \\ \textsc{-variant}}}& $\textit{cur}$   &3 &  \makecell{$[\textit{est} >0]\cdot \textit{cur}+$ \\ $[\textit{start}==0 \wedge \textit{est} \leq 0]\cdot (\textit{cur} + 140)$ \\ $+[\textit{start}\geq 1 \wedge \textit{est} \leq 0]\cdot(\textit{cur} + 42)$} 
                & 2 & \makecell{$109.8660 - 0.1357*cur + 293795.0410*start + $\\$209178.7117est + 0.0019cur^2 + 0.7202start*cur$\\$-293865.0570*start^2 + 1.0313*est*cur + $\\$274251.8886*est*start - 209283.0750*est^2$} & 0.5 \% & $2.0*10^5$ \\
				\hline
                {\makecell{\textsc{Equal-} \\ \textsc{Prob-Grid}}}& $\textit{goal}$  & 2 &    \makecell{$[a>10\vee b>10 \vee \textit{goal} \neq 0]\cdot \textit{goal}$ \\ $+ [a\leq 10 \wedge b\leq 10 \wedge \textit{goal} = 0]\cdot 1.5$} & 2 &\makecell{$1.6661 + 5.7396*goal - 9.4857*10^{-5}*b +$\\$ 1.5707*10^{-5}*a + 0.6003goal^2 - 0.6740b*goal $\\$ + 1.5975^10{-5}*b^2 + 2.2074*10^{-5}*a*goal$} & 0.0\% & 3.757 \\
				\hline
                {\textsc{RevBin}}& $z$  & 3   &  \makecell{$[x<1] \cdot z +[1 \leq x <2] \cdot (z+x+1) $ \\ $+[x \geq 2] \cdot (z+2x)$} & 2 & $z + 0.7098*x + 1.2902*x^2$ & 0.0\% & 36.909 \\
				\hline
                {\textsc{Fair Coin}}& $i$   &  3  & \makecell{$[x > 0 \vee y>0]\cdot i + $ \\ $[x\leq 0 \wedge y \leq 0]\cdot (i+\frac{4}{3})$} & 2 & \makecell{$1.3333 + 1.0000*i - 0.4141*y - 0.4141*x$\\$ + 1.1743*i^2 - 2.3486*y*i + 0.2551*y^2 $\\$- 2.3486*x*i + 3.6820*x*y + 0.2551*x^2$} & 0.0\% & 23.677 \\
                \hline
                {\makecell{\textsc{St-Petersburg} \\ \textsc{variant}}} & $y$ & 3 & $[x > 0]\cdot y+ [x \leq 0]\cdot \frac{3}{2} y$ & 3 &\makecell{$0.0018 + 1.5006*y + 808.8832*x $\\$- 0.0112*y^2 + 7.7505*x*y - 191.2143*x^2$\\$ + 0.0113*x*y^2 - 8.251*x^2*y - 617.6696*x^3$} & 0.0\% & 192.25 \\
                \hline
                
            \end{tabular}
        \end{threeparttable}}
\end{table*}

\begin{figure}[htbp]
  \centering
  \subfloat[\textsc{Growing Walk}]
  {\includegraphics[width=0.2\textwidth]{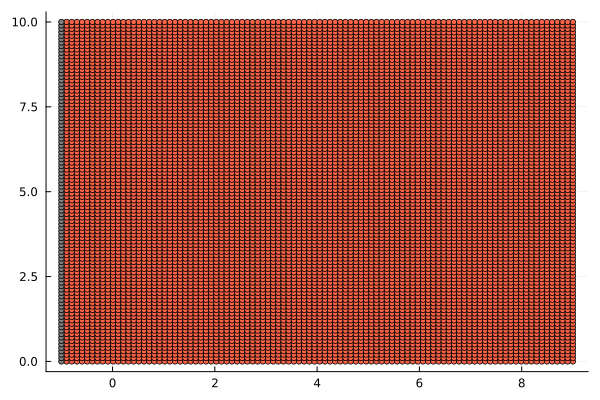}} \hfill   
  \subfloat[\textsc{Geo}]
  {\includegraphics[width=0.2\textwidth]{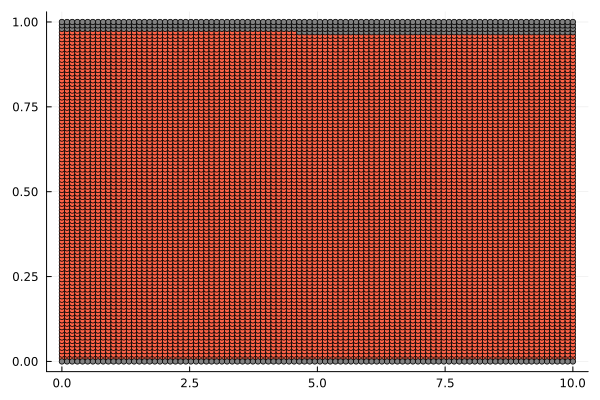}}  \hspace{0.1cm}
  \subfloat[\textsc{Zero-Conf-variant}]
  {\includegraphics[width=0.31\textwidth]{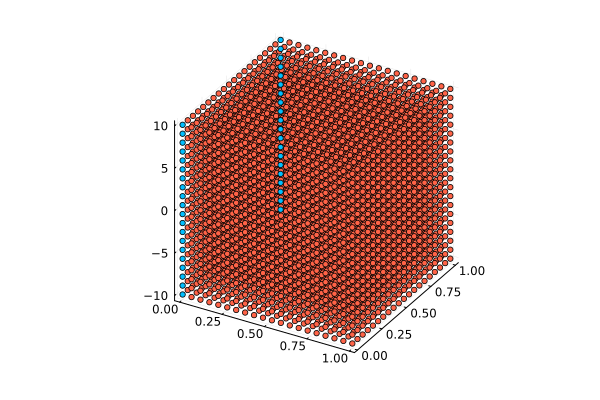}} \hspace{-0.8cm}
  \subfloat[\textsc{Bin-Ran}]
  {\includegraphics[width=0.31\textwidth]{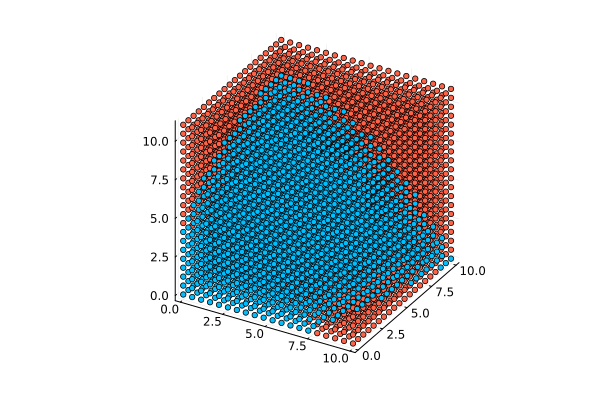}}\hspace{-0.5cm}  

  \caption{Difference Plots of the Comparison in Piecewise Linear Case}
  \label{fig:diff_plots_upper1}
  
  \begin{tablenotes}
      \item \textcolor{red}{Red} points indicate where our piecewise upper bounds are smaller than the monolithic ones \\ by more than $10^{-3}$; 
      \textcolor{blue}{blue} points indicate where the monolithic bounds are smaller by more \\ than $10^{-3}$;
      \textcolor{gray}{gray} points denote cases with negligible differences ($\le 10^{-3}$).
  \end{tablenotes}
  \vspace{-2ex}
\end{figure}

\subsection{Piecewise Polynomial Bound Synthesis}\label{sec:poly_res}

\noindent {\bf Benchmark Selection. } We select all remaining benchmarks from~\cite{DBLP:conf/cav/BaoTPHR22,DBLP:conf/tacas/BatzCJKKM23} that are not used in the previous linear experiments, as well as  path probability estimation benchmarks from~\cite{Beutner2022b,DBLP:conf/pldi/SankaranarayananCG13,10.1145/3656432,DBLP:conf/cav/GehrMV16}, including all unbounded loop benchmarks from~\cite{DBLP:conf/pldi/SankaranarayananCG13} in particular. 
For the former 7 benchmarks from~\cite{DBLP:conf/cav/BaoTPHR22}, we instantiate the probability parameters with commonly used values (such as $0.5$). Note that among them, the benchmarks \textsc{GeoAr, Bin0, Bin2, Sum0, Duel} cannot be handled by our piecewise linear algorithm with $k$-induction when $k=1,2,3$, even though both the program and the return function are linear. 
For the benchmarks from~\cite{DBLP:conf/tacas/BatzCJKKM23}, the benchmarks \textsc{chain, brp} exhibit numerical pathologies due to extremely large constants, which can cause numerical instability and render our algorithms ineffective. To address this issue, we scale down these pathological values to more moderate magnitudes—for instance, replacing 1000000000000 with 100 in the \textsc{chain} benchmark and 8000000000 with 800 in the \textsc{brp} benchmark—so that our numerical algorithm can operate reliably.
For the benchmarks from~\cite{Beutner2022b,DBLP:conf/pldi/SankaranarayananCG13,10.1145/3656432,DBLP:conf/cav/GehrMV16}, since 5 of 9 benchmarks contain continuous distributions originally, we make simple adaptions on these benchmarks by replacing each continuous distribution (e.g. uniform distribution over $[0,1]$) with a uniform discrete choice of the same range (e.g. $0$ with probability $0.5$ and $1$ also with $0.5$), resulting in 5 adapted benchmarks. 
The benchmark \textsc{inv-Pend} in~\cite{DBLP:conf/pldi/SankaranarayananCG13} does not pass our checking of prerequisite (P2). Therefore we make minor modifications to the coefficients in this benchmark so that we can synthesize a dbRSM to satisfy~(P2), thereby obtaining the benchmark~\textsc{inv-Pend variant}.
For piecewise polynomial bound synthesis, larger values of $k$ significantly increase the computational cost. Therefore, we apply $2$-induction on these 24 benchmarks.

\smallskip 
\noindent{\bf Answering RQ1.} 
Our algorithm successfully handles all of the aforementioned benchmarks except for four. The failures in these cases are attributed to excessive branching introduced by our algorithm based on Proposition~\ref{prop:relation} (see \textbf{Stage 2} in~\cref{sec:algorithm}), and branch reduction techniques (see Page~\pageref{page:branchreduction}) have not yet been incorporated into our implementation. Nevertheless, our current implementation is capable of addressing a wide range of complex benchmarks.
For example, the benchmark \textsc{cav5} comprises 35 lines of code (see~\cref{app:programs}), the benchmark \textsc{inv-Pend variant} benchmark features 4 variables with complex polynomial updates, posing significant challenges for analysis.
We leave further optimization for future work. 
We present the experimental results for the synthesis of piecewise polynomial upper bounds on the remaining 20 benchmarks in Table~\ref{table:poly1}. Our approach successfully derives piecewise polynomial upper bounds for 16 out of 20 benchmarks within seconds. Of the remaining four, two benchmarks (\textsc{fig-6} and \textsc{fig-7}) are solved within tens of seconds, while only \textsc{inv-Pend variant} and \textsc{cav-5} require more than five minutes to compute a result.
Our algorithms obtain the exact bound (i.e., the tightest upper bound) on the benchmarks~\textsc{Bin0, Bin2, DepRV, Prinsys, Sum0}. The exactness of these results is verified by comparison with the exact invariants synthesized in~\cite{DBLP:conf/cav/BaoTPHR22} (see \textbf{RQ2}) and with our corresponding lower bounds in~\cref{app:poly_lower}.

\begin{table*}
    \renewcommand{\arraystretch}{1.5}
	\caption{Experimental Results for {\bf RQ1} and {\bf RQ2}, Polynomial Case (Upper Bounds). "$f$" stands for the return function considered in the benchmark, "T(s)" stands for the execution time of our approach (in seconds), including the parsing procedure from the program input, relaxing the $k$-induction constraint into the SDP problems, the SDP solving time and verification time. "d" stands for the degree of polynomial template we use and "Solution $h^*$" is the candidate polynomial solved directly by the solver. "Piecewise Polynomial upper Bound" stands for the piecewise bound we synthesize. "Exact" stands for the exact expected result from~\textsc{exist}.}
	\label{table:poly1}
	\resizebox{\textwidth}{!}{
		\begin{threeparttable}
			\begin{tabular}{|c|c|c|c|c|c|c|c|}
				\hline
				\multicolumn{1}{|c|}{\multirow{2}{*}{\textbf{Benchmark}}}  &
				\multicolumn{1}{c|}{\multirow{2}{*}{\textbf{$f$}}}      &
				\multicolumn{4}{c|}{\multirow{1}{*}{\textbf{Our Approach}}}  &
                    \multicolumn{2}{c|}{\multirow{1}{*}{\textbf{\textsc{exist}}}}  
				\\ \cline{3-8}
				\multicolumn{1}{|c|}{} & \multicolumn{1}{c|}{} &  
                    \multicolumn{1}{c|}{\textbf{d}} &
                    \multicolumn{1}{c|}{\textbf{Solution $h^*$}}  & 
				\multicolumn{1}{c|}{\textbf{ T(s) }} &
				\multicolumn{1}{c|}{\textbf{ Piecewise Polynomial Upper Bound}}   &    
				\multicolumn{1}{c|}{\textbf{ Exact }} &  
                \multicolumn{1}{c|}{\textbf{T(s)}} 
                \\ \hline \hline
                {\textsc{GeoAr}}& $x$ &  2  &  \makecell{$0.0001*x^2 - 0.0004*x*y + 0.0005*x*z $\\$ +0.0011*y^2 + 0.0079*z^2 + 0.9998*x $\\$+ 1.5398*y - 0.0085*z + 5.0078$}  & 3.92 & \makecell{$\min\{[z>0] \cdot (0.0001x^2 - 0.0003*x*y + $\\$0.0003*x*z + 0.0010y^2 + 0.0003*y*z $\\$+ 0.0040z^2 + 0.9995x + 2.0416y - $\\$0.004z + 7.0485) + [z\le 0]\cdot x, h^*\}$} &  \makecell{inner error} & - \\
                \hline
                {\textsc{Bin0}}& $x$   &  2 &  $x + 0.5*y*n$  &  6.05 & $x + [n>0]\cdot 0.5*y*n$  & \makecell{$x + [n>0]\cdot $\\$0.5*y*n$} & 79.04\\
                \hline
                {\textsc{Bin2}}& $x$ &  2  &  $0.25*n + x + 0.25*n^2 + 0.5*y*n$  &  5.81   & \makecell{$x + [n>0]\cdot(0.25*n + x $\\$+ 0.25*n^2 + 0.5*y*n)$} &  \makecell{$x + [n>0]\cdot$\\$(0.25*n$\\$ +  x + 0.25*n^2$\\$ + 0.5*y*n)$} & 250.60\\
                \hline
                {\textsc{DepRV}}& $x*y$  &  2  &  \makecell{$-0.25*n + 0.25*n^2 + 0.5*y*n$\\$ + 0.5*x*n + x*y$}  &  5.91  & \makecell{$[n >0] \cdot (-0.25*n + 0.25*n^2 +$ \\$ 0.5*y*n + 0.5*x*n + x*y) $\\$+ [n\leq 0] \cdot x*y$} &  \makecell{inner error} & -\\
                \hline
                {\textsc{Prinsys}}& $[x==1]$ &   2  &  $0.5 + 0.5*x$  &  2.35   & \makecell{$[x==1]*1 + [x==0]*0.5$} &  \makecell{$[x==1]*1 + $\\$[x==0]*0.5$} & 3.02 \\
                \hline
                {\textsc{Sum0}}& $x$ & 2 & $0.25*i^2+0.25*i+x$ &  2.33 & $[i>0]*(0.25*i^2+0.25*i)+x$ & \makecell{$x+[i>0]*$\\$(0.25i+0.25i^2)$} & 105.01\\
                \hline
                {\textsc{Duel}}& $t$ & 2 & \makecell{$-20.267*x^2 - 0.4198*x*t - 2.5502*t^2$\\$ + 20.6657*x + 3.5505*t + 0.0013$} & 6.9 & \makecell{$\min\{[t>0\wedge x\ge 1]\cdot (-10.1335x^2 - 2.5502t^2$\\$+0.2099*x*t  + 10.1230*x  + 2.5502*t $\\$ + 0.5015)+[t\le 0\wedge x \ge 1]\cdot (-5.0668*x^2 $\\$+ 0.1050*x*t - 2.5502*t^2 + 5.0615*x $\\$+ 3.0504*t + 0.2514)+[x < 1]\cdot t, h^*\}$} & fail & -\\
                \hline
                {\textsc{brp}}& \makecell{$[failed$\\$=10]$}  & 2 & \makecell{$38912.3699*failed^2 + 0.7329*sent^2+$\\$3.2173*failed*sent + 1486.258*failed $\\$ -573.6644*sent - 2459.9909$} & 10.12 & \makecell{$\min\{[failed<10 \wedge sent < 800]\cdot (0.7329sent^2 $\\$+0.0322*failed*sent + 389.1237*failed^2 $\\$+793.1100*failed - 572.1811*sent$\\$ - 2623.2068)+[failed=10] , h^*\}$} & \makecell{not \\ terminate} & -\\
                \hline
                {\textsc{chain}}& $[y=1]$ &  2 & \makecell{$-0.006*x*y + 0.4841*y^2 - $\\$0.0021*x + 0.4477*y + 0.1007$} & 4.79 & \makecell{$\min\{[y= 0 \wedge x < 100] \cdot (-0.0059*x*y $\\$ + 0.4793*y^2 -0.0022*x + 0.4373*y$\\$ + 0.1079) + [y=1], h^* \}$} & fail & -\\
                \hline
                {\textsc{grid small}}& \makecell{$[a<10 \wedge $\\$ b\ge 10]$} &  3 & \makecell{$ 0.0018*a*b^2 -0.0003*a^3 - 0.0008*a^2*b  $\\$  -0.0011*b^3 + 0.0117*a^2 - 0.0154*a*b + $\\$ 0.0136*b^2 - 0.097*a + 0.0239*b + 0.5355$} 
                & 6.71 & \makecell{$\min\{[a<10 \wedge b < 10] \cdot (-0.0003*a^3 - $\\$0.0011*b^3 -0.0008*a^2*b+ 0.0018*a*b^2  $\\$+ 0.0109*a^2 \cdots + 0.0277*b$\\$  + 0.5109)+[a<10 \wedge b\ge 10], h^*\}$} & Not support & - \\
                \hline
                {\textsc{grid big}}& \makecell{$[a<1000 \wedge$\\$ b \ge 1000]$} &  2 &  \makecell{$0.0159*a^2 - 0.0319*a*b + 0.0159*b^2$\\$ + 0.2715*a - 0.3086*b - 0.437$}& 7.74 & \makecell{$\min\{[a < 1000 \wedge b < 1000]\cdot (0.0159*a^2 $\\$ - 0.0319*a*b + 0.0159*b^2 + 0.2714*a $\\$- 0.3087*b - 0.4397)+$\\$  [a < 1000 \wedge b \ge 1000], h^* \}$} & Not support & -\\
                \hline
                {\textsc{cav-2}}& $[h>1+t]$ &  3  & 0.0 & 3.78 & $[h>t+1]$ & fail & - \\
                \hline
                {\textsc{cav-4}}& $[x \le 10]$  & 2 & 1.0 & 2.75 & 1.0 & \makecell{inner error} & - \\
                \hline
                {\textsc{fig-6}}& $[y\le 5]$ & 4 & \makecell{$ 0.0011*x^3*y -0.0001*x^4  - 0.0001*y^4  $\\$+ 0.0008*x*y^3  - 0.001*x^2*y^2 + \cdots $\\$+ 0.5712*x - 0.281*y + 0.6009$} & 109.03 & \makecell{$\min\{[x \le 4 ]\cdot (-0.0001*x^4 + 0.0011*x^3*y $\\$- 0.001*x^2*y^2 + 0.0008*x*y^3 - 0.0001*y^4 $\\$+ 0.0023*x^3 \cdots - 0.0094*y^2+ 0.5530*x - $\\$ 0.2782*y + 0.6027) + [x > 4 \wedge y \le 5], h^*\}$} & inner error & - \\
                \hline
                {\textsc{fig-7}}& $[x \le 1000]$  & 2 & \makecell{$0.0005*y^2 - 0.0008*y*i $\\$+ 0.0002*i^2 - 0.0001*x +$\\$ 0.001*y - 0.0005*i + 1.0003$} & 24.32 & \makecell{$\min\{[y\le 0]\cdot (0.0002*i^2 - 0.0002*x - $\\$0.0005*i + 1.0004)+[y > 0\wedge x \le 1000], h^* \}$} & inner error & - \\
                \hline
                {\textsc{\makecell{inv-Pend \\ variant}}}& $[pA\le 1]$ & 3 & \makecell{$0.0058*pAD^2*pA + 0.0023*pAD^2*cV- $\\$0.1313*pAD^2*cP - 0.6278*pAD*pA^2-$\\$ 0.2352pAD*pA*cV\cdots +5.9637*cV*cP $\\$+ 60.4194*cP^2 + 7.1495*cV + 1.0$} & 412.20 & \makecell{$\min \{[cp>0.5 \vee cp < -0.5 \vee pA > 0.1 \vee $\\$  pA < -0.1] \cdot (0.0058pAD^2pA - 0.0011pAD^2cV $\\$- 0.1313pAD^2*cP+ \cdots + 0.0689*cP + 0.3238)$\\$ + [-0.5 \le cp \le 0.5 \wedge -0.1 \le pA \le 0.1], h^* \}$} &  inner error & - \\

                \hline
                {\textsc{CAV-7}}& $[x\leq 30]$   &  3  &  \makecell{$-0.0001*i^3 + 0.0002*i^2*x $\\$+  0.0011*i^2-0.0012*i*x - $\\$ 0.0009*i - 0.0001*x + 0.9993$}  &  5.26   & \makecell{$\min\{[i<5]\cdot (0.0001*i^2*x + 0.0005*i^2 - $\\$0.0006*i*x + 0.0004*i - 0.0011*x $\\$+ 0.9983) + [i<5 \wedge x \le 30], h^* \}$} &  \makecell{inner error} & -\\
                \hline
                {\textsc{cav-5}}& $[i\le 10]$ &   3  & \makecell{$ -0.0001*i*money^2 - 0.0004*i^2 - $\\$ 0.0006*i*money + 0.1029*money^2 $\\$+ 0.0037*i + 1.0$} &  892.6  & \makecell{$\min\{[money \ge 10]\cdot (-0.0001*i*money^2 -$\\$ 0.0004*i^2 - 0.0004*i*money + 0.0015*i $\\$+0.1028*money^2  - 0.2118*money + $\\$3.1283)+[money < 10 \wedge i \le 10], h^* \}$} &  \makecell{inner error} & - \\
                \hline
                {\textsc{Add}}& $[x>5]$ &   3  &  \makecell{$0.9402 - 0.1887*y + 0.0833*x - 0.0751*y^2 $\\$+ 0.0556*x*y - 0.0107*x^2 - 0.0281*y^3 $\\$+ 0.023*x*y^2 - 0.0052*x^2*y + 0.0005*x^3$} &  3.63  & \makecell{$\min\{[y \le 1]\cdot (0.9403 - 0.1953*y + 0.0831*x$\\$ - 0.0736*y^2 + 0.0565*x*y - 0.0105*x^2 $\\$- 0.0281*y^3 + 0.023*x*y^2 - 0.0052*x^2*y$\\$ + 0.0005*x^3) + [y > 1 \wedge x > 5], h^* \}$} &  \makecell{inner error} & - \\ 
                \hline
                {\textsc{\makecell{Growing\\Walk\\variant2}}} & $y$ & 2 & \makecell{$0.0622*x^2 - 1.2722*x*r +  6.5027*r^2 $\\$+0.6396*x + y - 6.5379*r + 1.6433$} & 5.33 & \makecell{$\min\{[r\le 0]\cdot(0.0622*x^2 + 0.6279*x$\\$ + y + 1.6914)+[r > 0]\cdot y, h^*\}$}& inner error & - \\
                \hline
			\end{tabular}
	\end{threeparttable}}
\end{table*}

\smallskip
\noindent{\bf Answering RQ2.} We answer {\bf RQ2} by comparing our approach with the relevant work \textsc{exist}~\cite{DBLP:conf/cav/BaoTPHR22} in~\cref{table:poly1}, whose illustration is the same to~\cref{table:1}. 
It is worth noting that \textsc{cegispro2} only supports linear bounds and does not accept nonlinear expressions as additional program input. Therefore, we exclude it from our comparison.
Note that the only relevant aspect of~\cite{DBLP:conf/cav/BaoTPHR22} with respect to upper bound synthesis (i.e., super-invariants) is their method for exact invariant synthesis. For comparison, we apply their tool \textsc{exist} to our benchmarks in an attempt to generate exact invariants.
On the benchmarks \textsc{Bin0}, \textsc{Bin2}, \textsc{Prinsys}, and \textsc{Sum0}, we show that the piecewise polynomial upper bounds we synthesize are actually the exact expected value of $X_f$, i.e., the tightest upper bounds, by comparing them with the exact invariants synthesized by \textsc{exist}. Among these benchmarks, \textsc{exist} spends about 80s on \textsc{Bin0}, about 250s on \textsc{Bin2}, and about 100s on \textsc{Sum0}, while we spend only several seconds to obtain the specific results. Thus, our algorithm is much more efficient.
For the benchmarks \textsc{Duel}, \textsc{chain}, and \textsc{cav2}, the tool \textsc{exist} is able to identify candidates for exact invariants but fails to verify them, and thus does not produce exact invariants. Additionally, \textsc{exist} does not support the benchmarks \textsc{grid small} and \textsc{grid big}. For the remaining benchmarks, \textsc{exist} fails to generate results due to internal errors.
Moreover, for the benchmark \textsc{DepRV}, we demonstrate that the piecewise polynomial upper bound synthesized by our approach is exact, as it coincides with the corresponding lower bound (see~\cref{app:poly_lower}). Thus, our method yields the tightest upper bound for 5 out of the 20 benchmarks in this table.
In summary, our approach successfully handles more benchmarks than~\cite{DBLP:conf/cav/BaoTPHR22}, and for those benchmarks that both methods can process, our approach is more efficient and produces comparable bounds.

\smallskip
\noindent{\bf Answering RQ3.} In addition to the comparisons in \textbf{RQ2}, we further evaluate our piecewise polynomial upper bounds (obtained via $k$-induction) against monolithic polynomial bounds of higher degree synthesized using simple induction (i.e., $1$-induction). The synthesis of these monolithic polynomial bounds is implemented using Putinar's Positivstellensatz~\cite{putinar} (see~\cref{app:putinar} for details). For a fair comparison, we use the same invariant and optimal objective function for each benchmark. We also verify the validity of the monolithic polynomial bounds (see \emph{Numerical Repair}). In our experimental evaluation, we observe that for most benchmarks, when the degree of the polynomial template exceeds 5, numerical performance deteriorates and the synthesized monolithic bounds fail our validation process. Therefore, in this experiment, we restrict the degree of monolithic polynomial bounds to at most 5. 

We present the comparison results in~\cref{table:comparison_upper_poly}, whose illustration is the same to~\cref{table:comparison_upper}. To compare the two synthesized bounds, we uniformly sample grid points from a region of interest (typically a subset of the invariant) and evaluate both results at these points. We compute the percentage of points where our piecewise polynomial upper bound is larger (i.e., no better) than the (higher degree) monolithic polynomial, which is shown in the column "PCT" in~\cref{table:comparison_upper_poly}. 
For each benchmark, we provide difference plots that classify all grid points into three disjoint regions according to the magnitude of difference: red points correspond to cases where our piecewise linear upper bounds are notably smaller (diff $ > 10^{-3}$), blue points correspond to cases where the monolithic upper bounds are notably smaller, and gray points represent regions where the two bounds are nearly identical (diff $ \le 10^{-3}$). We display part of the comparison in \cref{fig:diff_plots_upper2}, see~\cref{app:poly_upper_fig} for other figures.
We show that on all the benchmarks except~\textsc{grid small, grid big, fig-6}, our piecewise polynomial bounds are {\em significantly} tighter and simpler than monolithic polynomial bounds. 
In addition, we quantify the difference between the two upper bounds by subtracting the piecewise upper bound from the monolithic one and taking the unbiased average of the resulting values, which is reported in the column “Diff” of~\cref{table:comparison_upper_poly}. We observe that, on the benchmarks~\textsc{Bin0, DepRV, Prinsys, Sum0}, our piecewise polynomial result is consistent with the monolithic one, which makes both the "PCT" and "Diff" values nearly zero. On the benchmarks \textsc{Bin2, grid small, cav-2, fig-6, inv-Pend variant, Add}, our piecewise upper bounds are close to the monolithic upper bounds on average. On the remaining benchmarks, our piecewise bounds are generally tighter than the monolithic ones, with especially notable improvements on benchmarks such as \textsc{chain} and \textsc{grid big}.
Although our running time is a bit longer than that of monolithic polynomial experiments, our approach allows to synthesize lower-degree polynomials while achieving better precision against higher-degree polynomials. This advantage is critical as the synthesis of higher-degree polynomials suffers from a large amount of numerical errors as stated previously. 

\begin{table*}
    \renewcommand{\arraystretch}{1.8}
        \caption{Experimental Results for {\bf RQ3}, Polynomial Case (Upper Bounds). "$f$" stands for the return function considered in the benchmark. "Piecewise Polynomial Upper Bound" stands for the results synthesized by our algorithm. "Monolithic Polynomial via 1-Induction" stands for the monolithic polynomial bounds synthesized via 1-induction, and "T(s)" stands for the total execution time. "PCT" stands for the percentage of the points that our piecewise polynomial upper bound are larger (i.e., no better) than (higher degree) monolithic polynomial, and "Diff" stands for the unbiased average difference between our piecewise polynomial bound and the monolithic polynomial. A positive value indicates how much tighter our piecewise bounds are on average. }
        \label{table:comparison_upper_poly}
        \resizebox{\textwidth}{!}{
		\begin{threeparttable}
            \begin{tabular}{|c|c|c|c|c|c|c|c|c|c|}
				\hline
				\multicolumn{1}{|c|}{\multirow{2}{*}{\textbf{Benchmark}}}  &
				\multicolumn{1}{c|}{\multirow{2}{*}{\textbf{$f$}}}      &
				\multicolumn{3}{c|}{\multirow{1}{*}{\textbf{Our Approach}}}  &
				\multicolumn{3}{c|}{\multirow{1}{*}{\makecell{\textbf{Monolithic Polynomial  via 1-induction}}}} &
                \multicolumn{1}{c|}{\multirow{2}{*}{\textbf{PCT}}} &
                \multicolumn{1}{c|}{\multirow{2}{*}{\textbf{Diff}}} \\ 
                \cline{3-8}
                \multicolumn{1}{|c|}{}  & \multicolumn{1}{c|}{} &  
                \multicolumn{1}{c|}{\textbf{d}}  &  \multicolumn{1}{c|}{\textbf{T(s)}} &  
				\multicolumn{1}{c|}{\textbf{Piecewise Polynomial Upper Bound}} &\multicolumn{1}{c|}{\textbf{d}} &  \multicolumn{1}{c|}{\textbf{T(s)}} &   
				\multicolumn{1}{c|}{\textbf{Monolithic Polynomial Upper Bound}} 
                & \multicolumn{1}{c|}{} & \multicolumn{1}{c|}{}  \\ \hline \hline

                {\textsc{GeoAr}}& $x$ & 2 &  3.92 &  \makecell{$\min\{[z>0] \cdot (0.0001x^2 - 0.0003*x*y + $\\$0.0003*x*z + 0.0010y^2 + 0.0003*y*z $\\$+ 0.0040z^2 + 0.9995x + 2.0416y - $\\$0.004z + 7.0485) + [z\le 0]\cdot x, h^*\}$} &  3 & 0.84  &  \makecell{$-0.0001x^3 + 0.0001*x^2*y - 0.0011*x^2*z-$\\$ 0.0004*x*y^2 - 0.0112*x*y*z + 0.164*x*z^2$\\$ + 0.0012*y^3+ \cdots - 0.0137*y^2 + 2.7194*y*z $\\$+ 0.9993*x + 0.0417*y + 89867.2768*z + 0.078$}   &  5.0\% & 4976.4\\
                \hline

                {\textsc{Bin0}}& $x$ & 2    &  6.05 &  \makecell{$x + [n>0]\cdot 0.5*y*n$}  & 3 & 0.65 & \makecell{$0.5*y*n + x$}  &  0.0\%  & 0.019 \\
                \hline
                {\textsc{Bin2}}& $x$    &  2  &   5.81  & \makecell{$x + [n>0]\cdot(0.25*n + $\\$x + 0.25*n^2 + 0.5*y*n)$} &  3 & 0.53 & \makecell{$-0.0001*x*y^2 - 0.0002*x*y*n -$\\$ 0.0001*x*n^2 + \dots + 0.2496*n^2 $\\$+ 0.9986*x +  0.033*y + 0.2641*n + 0.051$} & 9.2\%  & 0.014 \\
                \hline
                {\textsc{DepRV}}& $x*y$ &  2 &  5.91  &  \makecell{$[n >0] \cdot (-0.25*n + 0.25*n^2 + 0.5*y*n $ \\$+ 0.5*x*n + x*y) + [n\leq 0] \cdot x*y$}  &  3 &  0.74  & \makecell{$x*y + 0.5*x*n + 0.5*y*n + $\\$ 0.25*n^2 - 0.2499*n + 0.0001$} &  0.0\%  & 0.0005 \\
                \hline
                {\textsc{Prinsys}}& $[x==1]$ &   2  & 2.35 & \makecell{$[x==1]*1 + [x==0]*0.5$}     & 3 & 0.75 & $0.2973*x^3 + 0.2027*x + 0.5$ & 0.0\%  & 0.0 \\
                \hline
                {\textsc{Sum0}}& $x$ &   2  & 2.33 & $[i>0]*(0.25*i^2+0.25*i)+x$     & 4 & 0.7 & $0.25*i^2 + 0.25*i + x$ & 0.0\%  & 0.0 \\
                \hline
                {\textsc{Duel}}& $t$ & 2  & 6.90 &  \makecell{$\min\{[t>0\wedge x\ge 1]\cdot (-10.1335x^2 - 2.5502t^2$\\$+0.2099*x*t  + 10.1230*x  + 2.5502*t $\\$ + 0.5015)+[t\le 0\wedge x \ge 1]\cdot (-5.0668*x^2 $\\$+ 0.1050*x*t - 2.5502*t^2 + 5.0615*x $\\$+ 3.0504*t + 0.2514)+[x < 1]\cdot t, h^*\}$}  & 4 & 0.92 & \makecell{$-175.0474x^4 - 33.1201x^3t - 256.8154*x^2*t^2$\\$ + 74.5673*x*t^3 + 81.1314*t^4 - 115.4608*x^3+$\\$ 153.7459*x^2*t - 125.7204*x*t^2 - 104.9856t^3 $\\$+ 78.3171*x^2 + 186.7714*x*t - 135.7646*t^2 $\\$+ 212.334*x + 160.6187*t$} &  0.02\%  & 119.7 \\
                \hline
                {\textsc{brp}}& \makecell{$[failed$\\$=10]$} &   2  & 10.12 & \makecell{$\min\{[failed<10 \wedge sent < 800]\cdot (0.7329sent^2 $\\$+0.0322*failed*sent + 389.1237*failed^2 $\\$+793.1100*failed - 572.1811*sent$\\$ - 2623.2068)+[failed=10] , h^*\}$}
                & 4 & 1.27 & \makecell{$5.6049failed^4 + 4.902failed^3sent + 3.2666failed^3$\\$ - 0.0035failed^2sent^2   - 7.0269*failed^2*sent $\\$+ 0.0019*failed*sent^2 + 2.9608*failed*sent $\\$- 0.0001sent^3 + 5.1816*failed^2 -0.0288*sent^2$\\$ + 2.4293*failed - 7.3179*sent - 0.9176$} & 28.8\%  & 2964.2 \\
                \hline
                {\textsc{chain}}& $[y=1]$ &   2  & 4.79 & \makecell{$\min\{[y= 0 \wedge x < 100] \cdot (-0.0059*x*y $\\$ + 0.4793*y^2 -0.0022*x + 0.4373*y$\\$ + 0.1079) + [y=1], h^* \}$} & 3 & 1.15 & \makecell{$-0.0449*x^3 - 0.5045*x^2*y + 5.611*x*y^2- $\\$155242.5616*y^3 + 5.5921*x^2 + 43.0661*x*y $\\$- 668140.0947*y^2 - 117.5705*x + $\\$823721.7882*y + 160.1718$}  & 0.85\%  & $1.2*10^5$ \\
                \hline
                {\textsc{grid small}}& \makecell{$[a<10 \wedge $\\$ b \ge 10]$} &   3  & 6.71 & \makecell{$\min\{[a<10 \wedge b < 10] \cdot (-0.0003*a^3 - $\\$0.0011*b^3 -0.0008*a^2*b+ 0.0018*a*b^2  $\\$+ 0.0109*a^2 \cdots + 0.0277*b$\\$  + 0.5109)+[a<10 \wedge b\ge 10], h^*\}$} 
                & 4 & 1.16 & \makecell{$0.0001*a^3 - 0.0003*a^2*b + 0.0002*a*b^2 $\\$- 0.0001*b^3 - 0.0002*a^2 + 0.0001*a*b $\\$+ 0.0003*b^2 - 0.0326*a + 0.0322*b + 0.4628$} & 43.54\%  & 0.0014 \\
                \hline
                {\textsc{grid big}}& \makecell{$[a<1000 \wedge$\\$ b \ge 1000]$} &   2  & 7.74 & \makecell{$\min\{[a < 1000 \wedge b < 1000]\cdot (0.0159*a^2 $\\$ - 0.0319*a*b + 0.0159*b^2 + 0.2714*a $\\$- 0.3087*b - 0.4397)+$\\$  [a < 1000 \wedge b \ge 1000], h^* \}$}     & 3 & 0.83 & \makecell{$-809.0361 - 1408.2916*b + 1397.5134*a$\\$ + 2.9331*b^2 - 5.8658*a*b + 2.9431*a^2 $\\$- 0.0004*b^3 + 0.0006*a*b^2 + $\\$0.0011*a^2*b - 0.0012*a^3$} & 34.49\%  & $3.8*10^5$ \\
                \hline
                {\textsc{cav-2}}& $[h>t+1]$ &   3  & 3.78 & $[h>t+1]$  & 4 & 0.75 & \makecell{$0.0008*h^2 - 0.001*h*t + 0.001*t^2 $\\$- 0.0066*h - 0.0073*t + 0.0885$} &  0.0\%  & 0.0489\\
                \hline
                {\textsc{cav-4}}& $[x\le 10]$ &   2  & 2.75 &  1.0   & 3 & 0.62 & \makecell{$0.0007*x*y^2 - 20.236*y^3 - 0.0007*x*y$\\$ + 13.2821*y^2 + 6.9539*y + 1.0$} & 0.0\%  & 364.32\\
                \hline
                {\textsc{fig-6}}& $[y\le 5]$ &   4  & 109.03 & \makecell{$\min\{[x \le 4 ]\cdot (-0.0001*x^4 + 0.0011*x^3*y $\\$- 0.001*x^2*y^2 + 0.0008*x*y^3 - 0.0001*y^4 $\\$+ 0.0023*x^3 \cdots - 0.0094*y^2+ 0.5530*x - $\\$ 0.2782*y + 0.6027) + [x > 4 \wedge y \le 5], h^*\}$}    & 5 & 1.12 & \makecell{$-0.0001*x^5 - 0.0002*x^4*y - 0.0003*x^2*y^3 $\\$+ 0.0001*x*y^4 - 0.0002*y^5 + 0.0011*x^4+$\\$  0.0037*x^3*y \cdots + 0.1432*x*y + 0.0064*y^2 $\\$+ 0.9708*x - 0.6526*y + 0.575$} & 42.73\%  &  0.2507\\
                \hline
                {\textsc{fig-7}}& $[x \le 1000]$ &   2  & 24.32 & \makecell{$\min\{[y\le 0]\cdot (0.0002*i^2 - 0.0002*x - $\\$0.0005*i + 1.0004)+[y > 0\wedge x \le 1000], h^* \}$}     & 3 & 2.65 & \makecell{$0.0003*x^2*i-0.083*x^2*y  + 48.5638*x*y^2+$\\$  0.5267*x*y*i - 0.018*x*i^2 + 2600.9691*y^3 $\\$- 36.705*y^2*i - 2.646*y*i^2 \cdots- 3.3923*x $\\$+ 56310.8279*y - 0.0114*i + 7.2868$} & 2.58\%  & 11570.1\\
                \hline
                {\textsc{\makecell{inv-Pend \\ variant}}}& $[pA\le1]$ &   3  & 412.20 &  \makecell{$\min \{[cp>0.5 \vee cp < -0.5 \vee pA > 0.1 \vee $\\$  pA < -0.1] \cdot (0.0058pAD^2pA - 0.0011pAD^2cV $\\$- 0.1313pAD^2*cP+ \cdots + 0.0689*cP + 0.3238)$\\$ + [-0.5 \le cp \le 0.5 \wedge -0.1 \le pA \le 0.1], h^* \}$}  & 4 & 7.42 &\makecell{$0.2264*pAD^4 + 1.1448*pAD^3*pA $\\$- 0.1026*pAD^3*cV - 0.1107*pAD^3*cP +$\\$ 5.2869*pAD^2*pA^2 + \cdots + 10.6625*cP^2 $\\$- 0.0001*pA + 53.8573*cV + 1.0$} & 4.04\%  & 0.0\\
                \hline
                
                {\textsc{CAV-7}}& $[x\leq 30]$ &  3 &  5.26   & \makecell{$\min\{[i<5]\cdot (0.0001*i^2*x + 0.0005*i^2 - $\\$0.0006*i*x + 0.0004*i - 0.0011*x $\\$+ 0.9983) + [i<5 \wedge x \le 30], h^* \}$} &  4 & 1.17  & \makecell{$0.0007*i^4 - 0.0011*i^3*x + 0.0005*i^2*x^2$\\$ - 0.0001*i*x^3 - 0.0045*i^3 + 0.0052*i^2*x$\\$ - 0.0012*i*x^2 +0.0134*i^2 - 0.012*i*x + $\\$0.002*x^2 - 0.0135*i + 0.0046*x + 1.0034$} &  37.37\%  & 0.0049\\
                \hline
                {\textsc{cav-5}}& $[i \leq 10 ]$ &  3 &  892.6  & \makecell{$\min\{[money \ge 10]\cdot (-0.0001*i*money^2 -$\\$ 0.0004*i^2 - 0.0004*i*money + 0.0015*i $\\$+0.1028*money^2  - 0.2118*money + $\\$3.1283)+[money < 10 \wedge i \le 10], h^* \}$} &  4 & 1.27 & \makecell{$0.0001*i^2*money^2 + 0.0002*i*money^3 + $\\$ 0.0001*money^4  + 0.0184*i^2*money $\\$-0.0396*i*money^2 - 0.0168*money^3  $\\$+ 0.0009*i^3 -0.0291*i^2 + 2.8701*i $\\$+ 0.2414*i*money + 4.264*money^2  + 1.0$} &  0.0\%  & 507.20\\
                \hline

                {\textsc{Add}}& $[x > 5]$ &  3 &   3.63  & \makecell{$\min\{[y \le 1]\cdot (0.9403 - 0.1953*y + 0.0831*x$\\$ - 0.0736*y^2 + 0.0565*x*y - 0.0105*x^2 $\\$- 0.0281*y^3 + 0.023*x*y^2 - 0.0052*x^2*y$\\$ + 0.0005*x^3) + [y > 1 \wedge x > 5], h^* \}$} &   4 & 0.81  & \makecell{$0.9205 + 0.6262*y + 0.0819*x $\\$- 1.6833*y^2 + 0.0397*x*y - 0.0119*x^2 + $\\$0.9726*y^3 + 0.0598*x*y^2 - 0.0059*x^2*y $\\$+ 0.0006*x^3 - 0.2007*y^4 - 0.0104*x*y^3 $\\$- 0.0009*x^2*y^2 + 0.0001*x^3*y$} &  3.84 \%  & 0.253\\
                \hline
                {\textsc{\makecell{Growing\\Walk\\variant2}}} & $y$ & 2 & 5.33 & \makecell{$\min\{[r\le 0]\cdot(0.0622*x^2 + 0.6279*x$\\$ + y + 1.6914)+[r > 0]\cdot y, h^*\}$} & 3 &  1.22 &\makecell{$0.999*x*r^2 + 0.0008*y*r^2 + 700.3292*r^3 $\\$- 1.999*x*r - 0.0008*y*r - 1399.6591*r^2 $\\$+x + y + 698.3298*r + 1.0001$}  & 5.0 \%  & 211.91\\ 
                \hline
            \end{tabular}
        \end{threeparttable}}
\end{table*}

{\begin{figure}[tbp]
  \centering
  \subfloat[\textsc{Add}]
  {\includegraphics[width=0.2\textwidth]{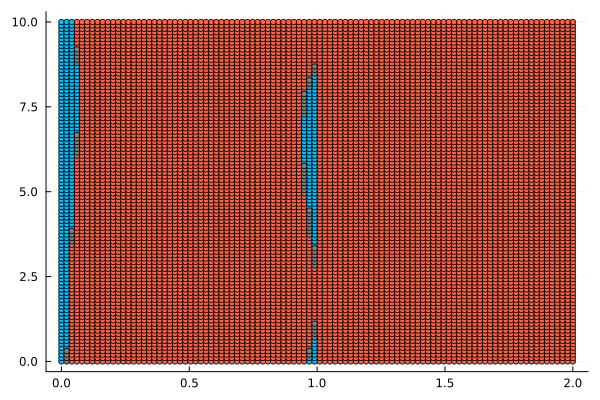}} \hfill   
  \subfloat[\textsc{cav-5}]
  {\includegraphics[width=0.2\textwidth]{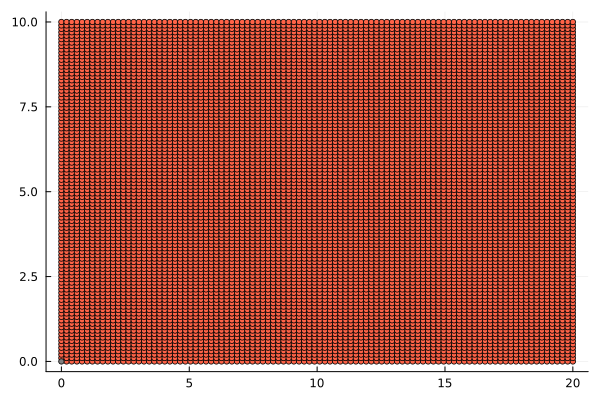}} \hspace{0.1cm} 
  \subfloat[\textsc{fig-7}]
  {\includegraphics[width=0.31\textwidth]{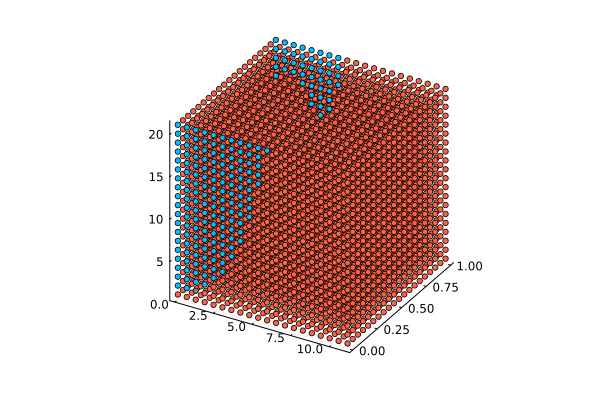}} \hspace{-0.8cm} 
  \subfloat[\textsc{Growing Walk variant2}]
  {\includegraphics[width=0.31\textwidth]{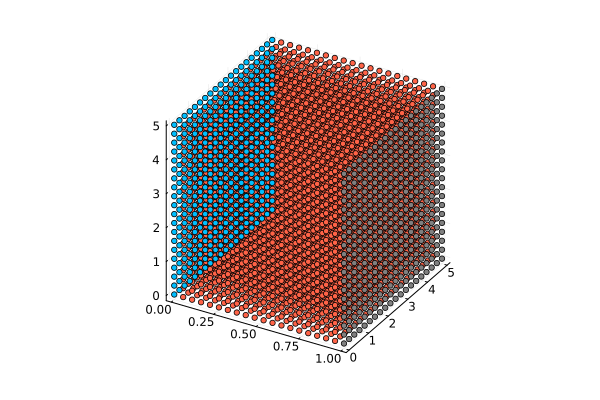}} \hspace{-0.5cm} 

  \caption{Difference Plots of the Comparison in Piecewise Polynomial Case}
  \label{fig:diff_plots_upper2}
  
  \begin{tablenotes}
      \item \textcolor{red}{Red} points indicate where our piecewise upper bounds are smaller than the monolithic ones \\ by more than $10^{-3}$; 
      \textcolor{blue}{blue} points indicate where the monolithic bounds are smaller by more \\ than $10^{-3}$;
      \textcolor{gray}{gray} points denote cases with negligible differences ($\le 10^{-3}$).
  \end{tablenotes}
  \vspace{-2ex}
\end{figure}

}

\section{Related Works \& Conclusion}
In this work, we propose a novel approach to synthesize piecewise probabilistic bounds for probabilistic programs. Further improvements include optimization on the branch reduction and the constraint solving of latticed $k$-induction constraints with minimum. Below we compare our approach with most related approaches.

Compared with previous approaches (e.g.~\cite{DBLP:conf/cav/ChakarovS13,DBLP:conf/cav/ChatterjeeFG16,DBLP:journals/toplas/ChatterjeeFNH18}) that mostly focus on synthesizing monolithic bounds over probabilistic programs, our approach targets piecewise bounds, and hence is orthogonal. 
The work~\cite{DBLP:conf/cav/BatzCKKMS20} proposes latticed $k$-induction. We claim that their work differs significantly from ours. They do not synthesize bounds and only verify whether a given bound is an upper bound or not.
The work~\cite{DBLP:conf/tacas/BatzCJKKM23} synthesize piecewise linear bounds to verify the input upper bound via counterexample-guided inductive synthesis (CEGIS), while we do not need this additional input bound and we solve the bounds by bilinear and semidefinite programming rather than CEGIS. 
For the verification of lower bounds, their work applies a proof rule in~\cite{DBLP:journals/pacmpl/HarkKGK20,DBLP:conf/vmcai/FuC19} derived from the original OST, while our approach applies extended OST. The work \cite{DBLP:conf/cav/BaoTPHR22} synthesizes (piecewise) exact invariants and sub-invariants (to verify the input lower bound) via data-driven learning. Their work additionally requires a list of features composed of numerical expressions, while our approach captures the piecewise feature via $k$-induction automatically and without such additional inputs. 
The works~\cite{Beutner2022b,10.1145/3656432,DBLP:journals/pacmpl/ZaiserMO25} focus on deriving bounds for the posterior distribution in Bayesian probabilistic programs, whereas our work aims at deriving piecewise bounds for the expected output of the probabilistic programs.

Other approaches~\cite{DBLP:conf/atva/BartocciKS19,DBLP:conf/tacas/BartocciKS20,DBLP:conf/sas/AmrollahiBKKMS22,DBLP:journals/corr/abs-2306-07072} focus on moment-based invariants generation and high-order moments derivation for probabilistic programs. These works can even handle the probabilistic program with non-polynomial expressions and continuous distributions, but they only consider the probabilistic while loop in a rather restricted form: $\textbf{while} ~ \texttt{true} ~ \{C\}$. The work~\cite{DBLP:journals/pacmpl/MoosbruggerSBK22} enlarges the theoretical foundation through the assumption
that all variables appearing in if-conditions (loop guards) are finitely valued, and \cite{DBLP:journals/pacmpl/MullnerMK24} further provides an algorithm about computing the strongest polynomial moment invariants for this kind of loops, but their works still cannot handle most of our benchmarks. Our approach can handle all the polynomial forms of loop guards and if-conditions. In a similar vein, the works~\cite{DBLP:conf/tacas/KuraUH19,DBLP:conf/pldi/Wang0R21} bound higher central moments for running time and other monotonically increasing quantities, but are limited to programs with constant size increments.

\clearpage

\section*{Data-Availability Statement}

The artifact of this paper is available at the link~\cite{Yang2025}. 

\begin{acks}
We thank the anonymous reviewers of POPL 2026 for their valuable comments and helpful suggestions. This work has been partially funded by the National Key R\&D Program of China under grant
No. 2022YFA1005100 and 2022YFA1005101, and the National Natural Science Foundation of China under grant No. 62192732, W2511064, 62172271, 62502475.
\end{acks}

\bibliographystyle{ACM-Reference-Format}
\bibliography{references}

\clearpage
\appendix
\crefalias{section}{appendix}
\crefalias{subsection}{appendix}
\section{Supplementary Material for Section~\ref{sec:operators}}~\label{app:operators}

In this section, we supplement the introduction of the variant of $k$-induction operators proposed in~\cite{DBLP:conf/apsec/LuX22}, some important properties of these two $k$-induction operators, the equivalence between them and all their proofs.

Recall that in~\cref{sec:operators}, we fix a lattice $(E,\sqsubseteq)$ and a monotone operator~$\varPhi: E \rightarrow E$.

\subsection{Property of the Upper $k$-Induction Operator in~\cite{DBLP:conf/cav/BatzCKKMS20}}\label{app:park}

We attach an important property of the upper $k$-induction operator $\varPsi_u$ in~\cite{DBLP:conf/cav/BatzCKKMS20} here.

\begin{theorem}[Park Induction from $k$-Induction~{\cite{DBLP:conf/cav/BatzCKKMS20}}]\label{lemma:park}
For any $u\in E$ and $k\in\mathbb{N}$, we have that  
$\varPhi(\varPsi_u^{k}(u)) \sqsubseteq u \iff \varPhi(\varPsi_u^{k}(u)) \sqsubseteq \varPsi_u^{k}(u)$~.
\end{theorem}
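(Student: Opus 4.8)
The plan is to reduce the claimed equivalence to two elementary monotonicity facts about the iterated operator $\varPsi_u$ and then chain inequalities. First I would record the one-step unrolling $\varPsi_u^{k+1}(u) = \varPhi(\varPsi_u^k(u)) \sqcap u$, which is just the definition of $\varPsi_u$ applied to $\varPsi_u^k(u)$. This identity is the bridge between the two sides of the biconditional.

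Next I would establish two lemmas by induction on $k$. Lemma A: $\varPsi_u^k(u) \sqsubseteq u$ for every $k \ge 0$ — trivial for $k = 0$, and for $k \ge 1$ it holds because $\varPsi_u^k(u) = \varPhi(\varPsi_u^{k-1}(u)) \sqcap u \sqsubseteq u$, since a meet lies below each of its arguments. Lemma B: the chain $\{\varPsi_u^k(u)\}_{k \ge 0}$ is $\sqsubseteq$-descending, i.e.\ $\varPsi_u^{k+1}(u) \sqsubseteq \varPsi_u^k(u)$. The base case $\varPsi_u(u) = \varPhi(u) \sqcap u \sqsubseteq u$ is again a property of the meet; the inductive step uses that $\varPsi_u$ is monotone (because $\varPhi$ is monotone by the standing hypothesis and $\sqcap$ is monotone in each argument), so that $\varPsi_u^{k+1}(u) = \varPsi_u(\varPsi_u^k(u)) \sqsubseteq \varPsi_u(\varPsi_u^{k-1}(u)) = \varPsi_u^k(u)$.

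With these in hand both implications become one-liners. For $\varPhi(\varPsi_u^k(u)) \sqsubseteq \varPsi_u^k(u) \Rightarrow \varPhi(\varPsi_u^k(u)) \sqsubseteq u$, I would simply append Lemma A: $\varPhi(\varPsi_u^k(u)) \sqsubseteq \varPsi_u^k(u) \sqsubseteq u$. For the converse, assume $\varPhi(\varPsi_u^k(u)) \sqsubseteq u$; then the meet in the unrolling identity collapses, giving $\varPsi_u^{k+1}(u) = \varPhi(\varPsi_u^k(u)) \sqcap u = \varPhi(\varPsi_u^k(u))$, and combining this with Lemma B yields $\varPhi(\varPsi_u^k(u)) = \varPsi_u^{k+1}(u) \sqsubseteq \varPsi_u^k(u)$, as required. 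I do not expect a genuine obstacle here; the only point needing care is the inductive step of Lemma B, where monotonicity of $\varPsi_u$ — and hence the hypothesis that $\varPhi$ is monotone — is actually invoked, while the rest is bookkeeping with the meet operation.
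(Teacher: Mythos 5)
Your proof is correct. The paper does not actually spell out a proof of this theorem---it defers to Lemma~2 of the cited latticed $k$-induction paper---but your argument is the standard one and matches the machinery the paper does develop for the analogous statements in Appendix~A (monotonicity of the operator, the descending chain $\varPsi_u^{k+1}(u)\sqsubseteq\varPsi_u^k(u)$, and collapsing the meet under the hypothesis $\varPhi(\varPsi_u^k(u))\sqsubseteq u$). One small point worth noting: your forward direction needs only the single unrolling $\varPsi_u^{k+1}(u)=\varPhi(\varPsi_u^k(u))\sqcap u$, because the meet with $u$ is built into $\varPsi_u$; in the paper's proof of the corresponding fact for the operator $\Psi(v)=\varPhi(v)\sqcap v$ (Proposition~\ref{prop:Psi}), the meet is with the argument itself, so the unfolding must be pushed all the way down to $u$ through all $k$ levels. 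Your version is thus genuinely shorter for this operator, and there is no gap.
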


The proof is given in~\cite[Lemma 2]{DBLP:conf/cav/BatzCKKMS20}. 

\subsection{Upper $k$-Induction Operator in~\cite{DBLP:conf/apsec/LuX22}}~\label{app:k_operator}
First we recall the definition of the upper $k$-induction operator proposed in~\cite{DBLP:conf/apsec/LuX22}.

\begin{definition}[The $k$-Induction Operator in~\cite{DBLP:conf/apsec/LuX22}]\label{def:k-induction_operator2}
The upper $k$-induction operator $\Psi$ is defined by:
$\Psi: E \rightarrow E, v \mapsto \varPhi(v) \sqcap v$.
\end{definition}

Intuitively, it can be seen as a natural tightening of the operator $\varPsi_u$, which considers the meet with the input element $v$ itself. Below we introduce some important properties of the operator $\Psi$.

\begin{lemma}\label{property_itself}
    Let $\Psi$ be the $k$-induction operator in~\cite{DBLP:conf/apsec/LuX22} w.r.t. $\varPhi$. Then
    \begin{enumerate}
        \item
        \label{lemma:itself_monotonic} 
            $\Psi$ is monotonic, i.e., $\forall v_1, v_2 \in E, v_1 \sqsubseteq v_2$ implies $\Psi(v_1) \sqsubseteq \Psi(v_2)$.
        
        \item
        \label{lemma:itself_descending_chain}
            Iterations of $\Psi$ starting from $u$ are descending, i.e., 
        $$ \ldots \sqsubseteq \Psi^{k}(u) \sqsubseteq \Psi^{k-1}(u) \sqsubseteq \ldots \sqsubseteq \Psi(u) \sqsubseteq u$$
        And thus we have for all $m < n \in \mathbb{N}, \Psi^n(u) \sqsubseteq \Psi^m(u)$.
    \end{enumerate}
\end{lemma}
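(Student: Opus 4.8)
\textbf{Proof plan for Lemma~\ref{property_itself}.}

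The plan is to prove the two parts in order, since part~\ref{lemma:itself_descending_chain} relies on part~\ref{lemma:itself_monotonic}. For part~\ref{lemma:itself_monotonic} (monotonicity of $\Psi$), I would take $v_1 \sqsubseteq v_2$ and argue componentwise: since $\varPhi$ is monotone by assumption, $\varPhi(v_1) \sqsubseteq \varPhi(v_2)$; also trivially $v_1 \sqsubseteq v_2$. The meet $\sqcap$ is itself monotone in each argument in any lattice (if $a_1 \sqsubseteq a_2$ and $b_1 \sqsubseteq b_2$ then $a_1 \sqcap b_1 \sqsubseteq a_2 \sqcap b_2$, because $a_1 \sqcap b_1 \sqsubseteq a_1 \sqsubseteq a_2$ and $a_1 \sqcap b_1 \sqsubseteq b_1 \sqsubseteq b_2$, so $a_1 \sqcap b_1$ is a lower bound of $\{a_2, b_2\}$ and hence below their infimum). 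Applying this with $a_i = \varPhi(v_i)$ and $b_i = v_i$ gives $\Psi(v_1) = \varPhi(v_1) \sqcap v_1 \sqsubseteq \varPhi(v_2) \sqcap v_2 = \Psi(v_2)$.

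For part~\ref{lemma:itself_descending_chain}, the first observation is the single-step inequality $\Psi(u) \sqsubseteq u$, which is immediate from the definition: $\Psi(u) = \varPhi(u) \sqcap u \sqsubseteq u$ since a meet is below each of its operands. The descending chain then follows by induction on the exponent: assuming $\Psi^{n}(u) \sqsubseteq \Psi^{n-1}(u)$, apply the monotonicity from part~\ref{lemma:itself_monotonic} to get $\Psi^{n+1}(u) = \Psi(\Psi^n(u)) \sqsubseteq \Psi(\Psi^{n-1}(u)) = \Psi^n(u)$, with base case $\Psi(u) \sqsubseteq u$ just shown. Transitivity of $\sqsubseteq$ then yields $\Psi^n(u) \sqsubseteq \Psi^m(u)$ for all $m < n$.

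I do not anticipate a genuine obstacle here; the lemma is elementary order theory. The only point requiring a moment's care is establishing that $\sqcap$ is monotone in both arguments simultaneously (as opposed to invoking it as folklore), and being careful in the inductive step of part~\ref{lemma:itself_descending_chain} to apply monotonicity of $\Psi$ rather than monotonicity of $\varPhi$ alone, since the chain involves nested iterates of $\Psi$. Both are routine once stated explicitly.
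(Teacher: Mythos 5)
Your proof is correct. Part~\ref{lemma:itself_monotonic} is essentially identical to the paper's argument: both rest on the monotonicity of $\sqcap$ in each argument combined with the monotonicity of $\varPhi$ (you additionally spell out why $a_1 \sqcap b_1 \sqsubseteq a_2 \sqcap b_2$, which the paper states as an observation without proof). For part~\ref{lemma:itself_descending_chain} you take a slightly different route: you establish the single inequality $\Psi(u) \sqsubseteq u$ and then propagate it down the chain by induction, invoking the monotonicity of $\Psi$ from part~\ref{lemma:itself_monotonic}. The paper instead notes that $\Psi(v) = \varPhi(v)\sqcap v \sqsubseteq v$ holds for \emph{every} $v$, and applies this directly with $v = \Psi^{k-1}(u)$, so each step of the chain is a one-line consequence of the definition of $\sqcap$ and part~\ref{lemma:itself_descending_chain} does not depend on part~\ref{lemma:itself_monotonic} at all. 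Both arguments are valid; the paper's is marginally more economical, while yours makes the dependence structure between the two parts explicit (and would generalize to any operator $\Psi$ that is monotone and deflationary at $u$, even if not deflationary everywhere).
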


\begin{proof}
    For item (\ref{lemma:itself_monotonic}), observe that if we have $w_1 \sqsubseteq w_2$ and $v_1 \sqsubseteq v_2$, then we have $w_1 \sqcap v_1 \sqsubseteq w_2 \sqcap v_2 $.
    \begin{align*}
        \Psi(v_1) &= \varPhi(v_1) \sqcap v_1 \tag{by definition of $\Psi$}\\
                  &\sqsubseteq \varPhi(v_2) \sqcap v_2 \tag{by monotonicity of $\varPhi$ and above property} \\
                  &= \Psi(v_2) \tag{by definition of $\Psi$}
    \end{align*}
    
    For item (\ref{lemma:itself_descending_chain}), we can immediately derived from the definition of $\Psi$ as 
    \begin{align*}
        \Psi^k(u) &= \Psi(\Psi^{k-1}(u)) \tag{by definition of $\Psi^k(u)$} \\
                  &= \varPhi(\Psi^{k-1}(u)) \sqcap \Psi^{k-1}(u) \tag{by definition of $\Psi$} \\
                  &\sqsubseteq \Psi^{k-1}(u) \tag{by definition of $\sqcap$}
    \end{align*}  
\end{proof}

\begin{proposition}\label{prop:Psi}
For any $u\in E$, $\varPhi(\Psi^k(u)) \sqsubseteq u \iff \varPhi(\Psi^k(u)) \sqsubseteq \Psi^k(u)$.
\end{proposition}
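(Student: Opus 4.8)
The plan is to prove the two implications separately, using only the monotonicity of $\varPhi$ together with Lemma~\ref{property_itself}, in particular the descending-chain property $\ldots \sqsubseteq \Psi^{k}(u) \sqsubseteq \ldots \sqsubseteq \Psi(u) \sqsubseteq u$ and the defining identity $\Psi^{j+1}(u) = \varPhi(\Psi^{j}(u)) \sqcap \Psi^{j}(u)$.

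The backward direction ($\Leftarrow$) is immediate: if $\varPhi(\Psi^k(u)) \sqsubseteq \Psi^k(u)$, then since $\Psi^k(u) \sqsubseteq u$ by item~(\ref{lemma:itself_descending_chain}) of Lemma~\ref{property_itself}, transitivity of $\sqsubseteq$ yields $\varPhi(\Psi^k(u)) \sqsubseteq u$.

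For the forward direction ($\Rightarrow$), assume $\varPhi(\Psi^k(u)) \sqsubseteq u$. I would prove, by induction on $j \in \{0,1,\dots,k\}$, the strengthened statement $\varPhi(\Psi^k(u)) \sqsubseteq \Psi^{j}(u)$; the instance $j = k$ is then exactly the desired conclusion. The base case $j=0$ is the hypothesis, since $\Psi^0(u) = u$. For the inductive step, suppose $\varPhi(\Psi^k(u)) \sqsubseteq \Psi^{j}(u)$ with $j < k$. Because $\Psi^k(u) \sqsubseteq \Psi^{j}(u)$ by the descending-chain property and $\varPhi$ is monotone, we get $\varPhi(\Psi^k(u)) \sqsubseteq \varPhi(\Psi^{j}(u))$. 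Combining this with the inductive hypothesis $\varPhi(\Psi^k(u)) \sqsubseteq \Psi^{j}(u)$ and the fact that $\sqcap$ is the greatest lower bound of its arguments, we obtain $\varPhi(\Psi^k(u)) \sqsubseteq \varPhi(\Psi^{j}(u)) \sqcap \Psi^{j}(u) = \Psi^{j+1}(u)$, which closes the induction.

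The argument is short, and the only mild subtlety — the ``main obstacle'', such as it is — is recognizing that one must strengthen the target from $j = k$ to all $j \le k$ so that the induction actually goes through; once that is in place, each step invokes nothing beyond monotonicity of $\varPhi$, the descending chain of Lemma~\ref{property_itself}, and the defining property of the meet. This parallels the proof of Theorem~\ref{lemma:park} for the operator $\varPsi_u$ given in~\cite{DBLP:conf/cav/BatzCKKMS20}.
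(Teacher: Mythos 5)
Your proof is correct and is essentially the paper's own argument: the backward direction is handled identically via the descending chain, and your induction on $j$ establishing $\varPhi(\Psi^k(u)) \sqsubseteq \Psi^{j}(u)$ is precisely the paper's telescoping computation (which repeatedly unfolds $\Psi^{j+1}(u) = \varPhi(\Psi^{j}(u)) \sqcap \Psi^{j}(u)$ and absorbs $\varPhi(\Psi^{j}(u))$ by monotonicity) run in the opposite direction and written as a clean induction rather than a chain of equalities ending at the premise $\varPhi(\Psi^k(u)) \sqcap u = \varPhi(\Psi^k(u))$. No gaps; the strengthening to all $j \le k$ that you identify is exactly the content of the paper's iterated unfolding.
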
 

\begin{proof}
    The if-direction is trivial as $\Psi^k(u) \sqsubseteq u$ (by Lemma \ref{property_itself}(\ref{lemma:itself_descending_chain})). For the only-if direction:
    
    \begin{align*}
        \Psi^k(u) &\sqsupseteq \Psi^{k+1}(u) \tag{by Lemma \ref{property_itself}(\ref{lemma:itself_descending_chain})} \\ 
                  &= \varPhi(\Psi^k(u)) \sqcap \Psi^k(u) \tag{by definition of $\Psi$}\\
                  &= \varPhi(\Psi^k(u)) \sqcap \Psi(\Psi^{k-1}(u)) \tag{by definition of $\Psi^{k}(u)$} \\
                  &= \varPhi(\Psi^k(u)) \sqcap (\varPhi(\Psi^{k-1}(u)) \sqcap \Psi^{k-1}(u)) \tag{by definition of $\Psi$} \\
                  &= (\varPhi(\Psi^k(u)) \sqcap \varPhi(\Psi^{k-1}(u))) \sqcap \Psi^{k-1}(u)) \tag{by associative law} \\
                  &= \varPhi(\Psi^k(u)) \sqcap \Psi^{k-1}(u) \tag{by monotonicity of $\varPhi$ and Lemma \ref{property_itself}(\ref{lemma:itself_descending_chain})} \\
                  & \quad \vdots \\
                  &= (\varPhi(\Psi^k(u)) \sqcap \varPhi(u)) \sqcap u \tag{by unfolding $\Psi^k$ until $k = 1$}\\
                  &= \varPhi(\Psi^k(u)) \sqcap u \tag{by monotonicity of $\varPhi$ and Lemma \ref{property_itself}(\ref{lemma:itself_descending_chain})} \\
                  &= \varPhi(\Psi^k(u)) \tag{by the premise}
    \end{align*}  
\end{proof}

\subsection{Equivalence between $\varPsi_u$ and $\Psi$}

\begin{theorem}[Equivalence between $\varPsi_u$ and $\Psi$]~\label{thm:equ}
    For any element $u\in E$, the sequence $\{\varPsi_u^{k}(u)\}_{k\ge 0}$ of elements in $E$ coincides with the sequence $\{\Psi^k(u)\}_{k\ge 0}$. In other words, for any natural number $k\ge 0$, we have that $\varPsi_u^{k}(u)=\Psi^k(u)$. 
\end{theorem}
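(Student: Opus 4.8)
The plan is to prove $\varPsi_u^{k}(u)=\Psi^k(u)$ for all $k\ge 0$ by induction on $k$, exploiting the fact that the two operators differ only in their second meet-argument ($u$ versus the input $v$), and showing that along the iteration starting from $u$ this difference is invisible because every iterate is already $\sqsubseteq u$.

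First I would record the base case $k=0$: both sides equal $u$ by convention on the $0$-th iterate. For the inductive step, assume $\varPsi_u^{k}(u)=\Psi^k(u)$; call this common element $w_k$. I would then compute
\begin{align*}
\varPsi_u^{k+1}(u) &= \varPsi_u(w_k) = \varPhi(w_k)\sqcap u,\\
\Psi^{k+1}(u) &= \Psi(w_k) = \varPhi(w_k)\sqcap w_k.
\end{align*}
So the goal reduces to showing $\varPhi(w_k)\sqcap u = \varPhi(w_k)\sqcap w_k$. The key fact I would invoke is that $w_k = \Psi^k(u)\sqsubseteq u$ by Lemma~\ref{property_itself}(\ref{lemma:itself_descending_chain}) (the descending-chain property). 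From $w_k\sqsubseteq u$ one gets, for any element $z$, that $z\sqcap w_k \sqsubseteq z\sqcap u$ by monotonicity of meet; taking $z=\varPhi(w_k)$ gives $\varPhi(w_k)\sqcap w_k\sqsubseteq \varPhi(w_k)\sqcap u$. For the reverse inequality I would use that $\Psi^{k+1}(u)\sqsubseteq \Psi^k(u) = w_k$ (descending chain again) together with $\Psi^{k+1}(u)\sqsubseteq \varPhi(w_k)$ (since $\Psi^{k+1}(u)=\varPhi(w_k)\sqcap w_k$), hence $\varPhi(w_k)\sqcap w_k$ is a lower bound of both $\varPhi(w_k)$ and $w_k$; but actually the cleanest route is: since $w_k\sqsubseteq u$, the meet $\varPhi(w_k)\sqcap u$ satisfies $\varPhi(w_k)\sqcap u \sqsubseteq \varPhi(w_k)$, and I must additionally show $\varPhi(w_k)\sqcap u\sqsubseteq w_k$. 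This last inequality is the crux and is exactly the telescoping computation already carried out inside the proof of Proposition~\ref{prop:Psi}: unfolding $w_k=\Psi^k(u)$ down to $\Psi^1(u)=\varPhi(u)\sqcap u$ and repeatedly absorbing $\varPhi(w_k)\sqcap\varPhi(\Psi^{j}(u))=\varPhi(w_k)$ (by monotonicity of $\varPhi$ and $w_k\sqsubseteq\Psi^j(u)$) shows $\varPhi(w_k)\sqcap u\sqsubseteq\varPhi(w_k)\sqcap w_k$. Combining both directions yields $\varPhi(w_k)\sqcap u=\varPhi(w_k)\sqcap w_k$, completing the induction.

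The main obstacle I anticipate is precisely establishing $\varPhi(\Psi^k(u))\sqcap u\sqsubseteq \Psi^k(u)$ (equivalently $\varPhi(w_k)\sqcap u=\varPhi(w_k)\sqcap w_k$): this does not follow from a one-line lattice manipulation but requires the telescoping/absorption argument across all intermediate iterates $\Psi^0(u),\dots,\Psi^{k-1}(u)$, leaning on monotonicity of $\varPhi$ and the descending-chain Lemma~\ref{property_itself}(\ref{lemma:itself_descending_chain}) at each step. A slicker alternative, which I would mention, is to strengthen the induction hypothesis to also carry the invariant ``$\Psi^k(u)\sqsubseteq u$'' (available from the descending chain) and observe that $\varPsi_u$ and $\Psi$ agree on any argument $v$ with $v\sqsubseteq u$, since then $\varPhi(v)\sqcap u$ and $\varPhi(v)\sqcap v$ — no wait, these still differ in general; the agreement only kicks in because $\varPhi(v)\sqcap v\sqsubseteq u$ as well, which again needs the telescoping. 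So the genuinely substantive content is the same absorption lemma that underlies Proposition~\ref{prop:Psi}, and the equivalence theorem is essentially a repackaging of it.
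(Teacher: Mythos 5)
Your proposal is correct and follows essentially the same route as the paper: induction on $k$, with the inductive step reduced to showing $\varPhi(w_k)\sqcap u=\varPhi(w_k)\sqcap w_k$ via the telescoping unfolding of $\Psi^k(u)$ and the absorption $\varPhi(w_k)\sqcap\varPhi(\Psi^j(u))=\varPhi(w_k)$ (monotonicity of $\varPhi$ plus the descending chain). The paper carries out exactly this telescoping computation on $Y_{k+1}=\Psi^{k+1}(u)$ to rewrite it as $\varPhi(\Psi^k(u))\sqcap u$; your two-inequality packaging is a cosmetic difference only.
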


\begin{proof}
    Proof by mathematical induction. We denote $X_k = \varPsi_u^{k}(u)$ and $Y_k = \Psi^k(u)$. when $k = 0$, $X_0 = u = Y_0$. When $k = 1$, $X_1 = \varPhi(u)\sqcap u = Y_1$, by definition of two operators, respectively.

    Now we suppose that $X_k = Y_k$, i.e., $\varPsi_u^{k}(u) = \Psi ^ k(u)$, and we aim to prove that $\varPsi_u^{k+1}(u) = \Psi ^ {k+1}(u)$.
    \begin{align*}
        X_{k+1} &= \varPsi_u(\varPsi_u^k(u))  \tag{by definition of $\varPsi_u^{k+1}(u)$}\\
                &= \varPhi(\varPsi_u^k(u)\sqcap u  \tag{by definition of $\varPsi_u$}        
    \end{align*}
    \begin{align*}
        Y_{k+1} &= \Psi(\Psi^k(u)) \tag{by definition of $\Psi^{k+1}(u)$}\\
                &= \varPhi(\Psi^k(u)) \sqcap \Psi^k(u) \tag{by definition of $\Psi$}\\
                &= \varPhi(\Psi^k(u)) \sqcap \Psi(\Psi^{k-1}(u)) \tag{by definition of $\Psi^{k}(u)$} \\
                &= \varPhi(\Psi^k(u)) \sqcap (\varPhi(\Psi^{k-1}(u)) \sqcap \Psi^{k-1}(u)) \tag{by definition of $\Psi$} \\
                &= (\varPhi(\Psi^k(u)) \sqcap \varPhi(\Psi^{k-1}(u))) \sqcap \Psi^{k-1}(u)) \tag{by associative law} \\
                &= \varPhi(\Psi^k(u)) \sqcap \Psi^{k-1}(u) \tag{by monotonicity of $\varPhi$ and Lemma \ref{property_itself}(\ref{lemma:itself_descending_chain})} \\
                & \quad \vdots \\
                &= (\varPhi(\Psi^k(u)) \sqcap \varPhi(u)) \sqcap u \tag{by unfolding $\Psi^k$ until $k = 1$}\\
                &= \varPhi(\Psi^k(u)) \sqcap u \tag{by monotonicity of $\varPhi$ and Lemma \ref{property_itself}(\ref{lemma:itself_descending_chain})}
    \end{align*}
    Since we suppose that $\varPsi_u^{k}(u) = \Psi^k(u)$, we obtain that $\varPhi(\varPsi_u^k(u)\sqcap u = \varPhi(\Psi^k(u)) \sqcap u$, thus we have $\varPsi_u^{k+1}(u) = \Psi ^ {k+1}(u)$, i.e., $X_{k+1} = Y_{k+1}$. 
\end{proof}

\subsection{Supplementary Materials for the Dual $k$-Induction Operators $\varPsi'_u$ and $\Psi'$}~\label{app:proof_of_dual}

We first give the definition of the Dual $k$-Induction Operators $\Psi'$, which has been examined in~\cite{DBLP:conf/apsec/LuX22}.

\begin{definition}[Dual $k$-Induction Operator in~{\cite{DBLP:conf/apsec/LuX22}}]
The lower $k$-induction operator $\Psi'$ is given by: $\Psi': E \rightarrow E, v \mapsto \varPhi(v) \sqcup v$.
\end{definition}

\begin{lemma}~\label{pro:dual_operator}
   Fix a  lattice $(E, \sqsubseteq)$ and a monotone operator $\varPhi$. For any element $u \in E$, both of these two dual $k$-induction operators $\varPsi'_u$ and $\Psi'$ have the following properties:
    \begin{enumerate}
        \item $\varPsi'_u$(resp. $\Psi'$) is monotone. \label{pro:dual}
        \item Iterations of $\varPsi'_u$ (resp. $\Psi'$) starting from $u$ are ascending, i.e., \label{pro:dual2}
        $$ u \sqsubseteq \varPsi'_u(u) \sqsubseteq \ldots (\varPsi'_u)^{k-1}(u) \sqsubseteq (\varPsi'_u)^{k}(u) \ldots$$
        $$ u \sqsubseteq \Psi'(u) \sqsubseteq \ldots (\Psi')^{k-1}(u) \sqsubseteq (\Psi')^{k}(u) \ldots$$
    \end{enumerate}
    Thus we have for all $m < n \in \mathbb{N}$, $(\varPsi'_u)^{m}(u) \sqsubseteq (\varPsi'_u)^{n}(u)$ and $(\Psi')^{m}(u) \sqsubseteq (\Psi')^{n}(u)$.
 
\end{lemma}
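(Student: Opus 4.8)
The plan is to dualize the arguments already carried out for the upper operators $\varPsi_u$ and $\Psi$ in Lemma~\ref{property_itself} and its companion statement. The only facts about the lattice $(E,\sqsubseteq)$ that the proof needs are: (i) the join $\sqcup$ is monotone in each argument, i.e.\ $w_1 \sqsubseteq w_2$ and $v_1 \sqsubseteq v_2$ imply $w_1 \sqcup v_1 \sqsubseteq w_2 \sqcup v_2$; (ii) for any $w,v \in E$ one has $v \sqsubseteq w \sqcup v$; and (iii) the monotonicity of the fixed operator $\varPhi$. Since these are exactly the order-dual counterparts of the meet facts used before, I expect the write-up to be a mirror image of the earlier proofs with $\sqcap$ replaced by $\sqcup$ and ``descending'' replaced by ``ascending''; accordingly I do not anticipate a genuine obstacle, and the only point requiring care is to get the direction of the base-case inequality right (it flips relative to the $\Psi$ case) and to invoke part~(\ref{pro:dual}) rather than re-proving monotonicity when propagating the chain.

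For item~(\ref{pro:dual}) (monotonicity), I would argue as follows. Suppose $v_1 \sqsubseteq v_2$. By monotonicity of $\varPhi$, $\varPhi(v_1) \sqsubseteq \varPhi(v_2)$. Joining both sides with the fixed element $u$ and using fact (i) gives $\varPsi'_u(v_1) = \varPhi(v_1)\sqcup u \sqsubseteq \varPhi(v_2)\sqcup u = \varPsi'_u(v_2)$. For $\Psi'$, combine $\varPhi(v_1)\sqsubseteq\varPhi(v_2)$ with $v_1\sqsubseteq v_2$ and apply fact (i) to the pairs $(\varPhi(v_1),v_1)$ and $(\varPhi(v_2),v_2)$, obtaining $\Psi'(v_1)=\varPhi(v_1)\sqcup v_1 \sqsubseteq \varPhi(v_2)\sqcup v_2 = \Psi'(v_2)$.

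For item~(\ref{pro:dual2}) (ascending iterates), I would first establish the base step $u \sqsubseteq \varPsi'_u(u)$ and $u \sqsubseteq \Psi'(u)$: both are immediate from fact (ii), since $\varPsi'_u(u) = \varPhi(u)\sqcup u$ and $\Psi'(u) = \varPhi(u)\sqcup u$ each lie above $u$. A one-line induction on $k$ using item~(\ref{pro:dual}) then propagates this: from $(\varPsi'_u)^{k-1}(u) \sqsubseteq (\varPsi'_u)^{k}(u)$, applying the monotone map $\varPsi'_u$ to both sides yields $(\varPsi'_u)^{k}(u) \sqsubseteq (\varPsi'_u)^{k+1}(u)$, and likewise for $\Psi'$; this produces the two displayed ascending chains. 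Finally, transitivity of $\sqsubseteq$ along each chain gives $(\varPsi'_u)^{m}(u) \sqsubseteq (\varPsi'_u)^{n}(u)$ and $(\Psi')^{m}(u) \sqsubseteq (\Psi')^{n}(u)$ for all $m < n$, which is the concluding assertion of the lemma.
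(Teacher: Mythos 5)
Your proposal is correct and follows essentially the same route as the paper's own proof: monotonicity of each operator is obtained from the monotonicity of $\varPhi$ together with the monotonicity of $\sqcup$ in each argument, the base inequality $u \sqsubseteq \varPhi(u)\sqcup u$ comes directly from the definition of join, and the ascending chain is then propagated by induction using part~(1). No gaps.
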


\begin{proof}
    We only prove the case of dual $k$-induction operator $\varPsi'_u$, since the proof of the properties of the dual $k$-induction operator $\Psi'$ is similar with that of $\varPsi'_u$. 

    For item~(\ref{pro:dual}), observe that if we have $w_1 \sqsubseteq w_2$, then we have $w_1 \sqcup u \sqsubseteq w_2 \sqcup u$. Assume that $v_1 \sqsubseteq v_2$
    \begin{align*}
        \varPsi'_u(v_1) &= \varPhi(v_1) \sqcup u \tag{by definition of $\varPsi'_h $}\\
                  &\sqsubseteq \varPhi(v_2) \sqcup u \tag{by monotonicity of $\varPhi$ and above property} \\
                  &= \varPsi'_u(v_2) \tag{by definition of $\varPsi'_h$}
    \end{align*}

    For item~(\ref{pro:dual2}), we prove it by mathematical induction. We have $u \sqsubseteq \varPsi'_u(u)$ as $\varPsi'_u(u) = \varPhi(u) \sqcup u$. We then assume that $(\varPsi'_u)^{k}(u) \sqsupseteq (\varPsi'_h)^{k-1}(u)$, and we prove that 
    \begin{align*}
        (\varPsi'_u)^{k+1}(u) &= \varPsi'_u((\varPsi'_u)^{k}(u)) \tag{by definition of $(\varPsi'_u)^{k+1}(u)$} \\
                  &\sqsupseteq \varPsi'_u((\varPsi'_u)^{k-1}(u)) \tag{by monotonicity of $\varPsi'_u$ and assumption} \\
                  &=(\varPsi'_u)^{k}(u) \tag{by definition of $(\varPsi'_u)^{k}(u)$}
    \end{align*}
    Thus the value sequence is an ascending chain. 
\end{proof}

\begin{proposition}~\label{prop_dual}
For any element $u \in E$, the lower $k$-induction operators $\varPsi'_u$ and $\Psi'$ have the following properties:
{\small
    \begin{equation*}
        \begin{aligned}
            \varPhi((\varPsi'_u)^{k}(u)) \sqsupseteq u &\iff \varPhi((\varPsi'_u)^{k}(u)) \sqsupseteq (\varPsi'_u)^{k}(u) \\
            \varPhi((\Psi')^k(u)) \sqsupseteq u &\iff \varPhi((\Psi')^{k}(u)) \sqsupseteq (\Psi')^{k}(u)
        \end{aligned}
    \end{equation*}
}

\end{proposition}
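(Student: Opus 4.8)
This proposition is precisely the order-dual of Theorem~\ref{lemma:park} and Proposition~\ref{prop:Psi}: replacing $\sqcap$ by $\sqcup$ and reversing every $\sqsubseteq$ turns the upper operators $\varPsi_u,\Psi$ into the lower operators $\varPsi'_u,\Psi'$, and turns "iterations descend below $u$" into "iterations ascend above $u$" (Lemma~\ref{pro:dual_operator}(\ref{pro:dual2})). The plan is therefore to replay those two arguments in dual form, treating the two operators separately since $\varPsi'_u$ always joins with the fixed element $u$ whereas $\Psi'$ joins with the current iterate.

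For the two $\Longleftarrow$ directions, nothing beyond monotonicity of the iteration is needed: by Lemma~\ref{pro:dual_operator}(\ref{pro:dual2}) we have $u \sqsubseteq (\varPsi'_u)^{k}(u)$ and $u \sqsubseteq (\Psi')^{k}(u)$, so from $\varPhi((\varPsi'_u)^{k}(u)) \sqsupseteq (\varPsi'_u)^{k}(u)$ (resp. $\varPhi((\Psi')^{k}(u)) \sqsupseteq (\Psi')^{k}(u)$) we immediately get the weaker statement $\varPhi(\cdot) \sqsupseteq u$ by transitivity. For the $\Longrightarrow$ direction of the $\varPsi'_u$ case, the argument is one line: assuming $\varPhi((\varPsi'_u)^{k}(u)) \sqsupseteq u$, we have
\[
(\varPsi'_u)^{k}(u) \;\sqsubseteq\; (\varPsi'_u)^{k+1}(u) \;=\; \varPhi((\varPsi'_u)^{k}(u)) \sqcup u \;=\; \varPhi((\varPsi'_u)^{k}(u)),
\]
using Lemma~\ref{pro:dual_operator}(\ref{pro:dual2}) for the first step, the definition of $\varPsi'_u$ for the second, and the premise for the third. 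This gives $\varPhi((\varPsi'_u)^{k}(u)) \sqsupseteq (\varPsi'_u)^{k}(u)$ as desired.

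The only genuine work is the $\Longrightarrow$ direction for $\Psi'$, which mirrors the telescoping computation in the proof of Proposition~\ref{prop:Psi}. Starting again from $(\Psi')^{k}(u) \sqsubseteq (\Psi')^{k+1}(u) = \varPhi((\Psi')^{k}(u)) \sqcup (\Psi')^{k}(u)$, I would repeatedly expand the trailing iterate via $(\Psi')^{j}(u) = \varPhi((\Psi')^{j-1}(u)) \sqcup (\Psi')^{j-1}(u)$ and reassociate, obtaining after $k$ unfoldings a join of the form $\bigl(\varPhi((\Psi')^{k}(u)) \sqcup \varPhi((\Psi')^{k-1}(u)) \sqcup \cdots \sqcup \varPhi(u)\bigr) \sqcup u$. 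The key absorption step is that each $\varPhi((\Psi')^{j}(u))$ with $j < k$ satisfies $\varPhi((\Psi')^{j}(u)) \sqsubseteq \varPhi((\Psi')^{k}(u))$, which follows from $(\Psi')^{j}(u) \sqsubseteq (\Psi')^{k}(u)$ (Lemma~\ref{pro:dual_operator}(\ref{pro:dual2})) and monotonicity of $\varPhi$; hence all those terms collapse, leaving $\varPhi((\Psi')^{k}(u)) \sqcup u$, which equals $\varPhi((\Psi')^{k}(u))$ by the premise. I expect this absorption/reassociation bookkeeping — making sure the chain of dual identities is applied with the correct indices — to be the main (though still routine) obstacle; everything else is a mechanical dualization of results already in the paper. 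An alternative, even shorter route would be to invoke a formal lattice-duality principle: $(E,\sqsubseteq)$ dual to $(E,\sqsupseteq)$ sends $\varPhi$ to a monotone operator on the opposite lattice, $\Psi'$ to $\Psi$, and the claimed equivalence to Proposition~\ref{prop:Psi}; I would mention this as a remark but still present the direct computation for self-containedness.
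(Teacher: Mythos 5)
Your proposal is correct and follows essentially the same route as the paper's proof: the $\Longleftarrow$ directions via the ascending-chain property, the one-line $\Longrightarrow$ argument for $\varPsi'_u$, and the telescoping expansion with absorption by monotonicity of $\varPhi$ for $\Psi'$, ending with $\varPhi((\Psi')^{k}(u)) \sqcup u = \varPhi((\Psi')^{k}(u))$ by the premise. The only cosmetic difference is that the paper absorbs each $\varPhi((\Psi')^{j}(u))$ term immediately after expanding it rather than collecting the full join first.
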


\begin{proof}
    For the case of the dual $k$-induction operator $\varPsi'_u$:

    The if-direction is trivial as $(\varPsi'_u)^k(u) \sqsupseteq u$ (by Lemma \ref{pro:dual_operator}(\ref{pro:dual2})). For the only-if direction:
    \begin{align*}
        (\varPsi'_u)^k(u) &\sqsubseteq (\varPsi'_u)^{k+1}(u) \tag{by Lemma~\ref{pro:dual_operator}(\ref{pro:dual2}))} \\
        &= \varPsi'_u((\varPsi'_u)^k(u)) \tag{by the definition of $(\varPsi'_u)^{k+1}(u)$}\\ 
        &= \varPhi((\varPsi'_u)^k(u)) \sqcup u \tag{by the definition of $\varPsi'_u$}\\
        &= \varPsi((\varPsi'_u)^k(u)) \tag{by the premise} 
    \end{align*}

    For the case of the dual $k$-induction operator $\Psi'$:

    The if-direction is trivial as $(\Psi')^k(u) \sqsupseteq u$ (by Lemma \ref{pro:dual_operator}(\ref{pro:dual2})). For the only-if direction:
    \begin{align*}
        (\Psi')^k(u) &\sqsubseteq (\Psi')^{k+1}(u) \tag{by Lemma~\ref{pro:dual_operator}(\ref{pro:dual2}))} \\
        &= \Psi'((\Psi')^k(u)) \tag{by the definition of $(\Psi')^{k+1}(u)$} \\
        &= \varPhi((\Psi')^k(u)) \sqcup (\Psi')^k(u) \tag{by the definition of $\Psi'$} \\
        &= \varPhi((\Psi')^k(u)) \sqcup \Psi'((\Psi')^{k-1}(u)) \tag{by the definition of $(\Psi')^k(u)$} \\
        &= \varPhi((\Psi')^k(u)) \sqcup \varPhi((\Psi')^{k-1}(u)) \sqcup (\Psi')^{k-1}(u) \tag{by the definition of $\Psi'$} \\
        &= (\varPhi((\Psi')^k(u)) \sqcup \varPhi((\Psi')^{k-1}(u))) \sqcup (\Psi')^{k-1}(u) \tag{by associate law} \\
        &= \varPhi((\Psi')^k(u)) \sqcup (\Psi')^{k-1}(u) \tag{by monotonicity of $\varPhi$ and Lemma~\ref{pro:dual_operator}(\ref{pro:dual2}))} \\
        & \quad \vdots \\
        &= \varPhi((\Psi')^k(u)) \sqcup \Psi'(u) \tag{by unfolding $(\Psi')^k(u)$ until $k=1$} \\
        &= \varPhi((\Psi')^k(u)) \sqcup \varPhi(u) \sqcup u \tag{by definition of $\Psi'$} \\
        &= \varPhi((\Psi')^k(u)) \sqcup u \tag{by monotonicity of $\varPhi$ and Lemma~\ref{pro:dual_operator}(\ref{pro:dual2}))} \\
        &= \varPhi((\Psi')^k(u)) \tag{by the premise}
    \end{align*} 
    
\end{proof}

\subsection{Equivalence between $\varPsi'_u$ and $\Psi'$}

\begin{theorem}[Equivalence between $\varPsi'_u$ and $\Psi'$]~\label{thm:equ_dual}
    For any element $u\in E$, we have that the sequence $\{(\varPsi'_u)^{k}(u)\}_{k\ge 0}$ of elements in $E$ coincides with the sequence $\{(\Psi')^k(u)\}_{k\ge 0}$. In other words, for any natural number $k\ge 0$, we have that $(\varPsi'_u)^{k}(u)=(\Psi')^k(u)$. 
\end{theorem}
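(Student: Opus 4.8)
The plan is to mirror the proof of Theorem~\ref{thm:equ} (the equivalence between $\varPsi_u$ and $\Psi$) in the dual setting, replacing meet by join and the descending-chain argument by the ascending-chain statement in Lemma~\ref{pro:dual_operator}. Concretely, I would argue by induction on $k$ that $(\varPsi'_u)^k(u) = (\Psi')^k(u)$, writing $X_k := (\varPsi'_u)^k(u)$ and $Y_k := (\Psi')^k(u)$ for brevity.

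For the base cases, $X_0 = u = Y_0$ by convention, and $X_1 = \varPhi(u)\sqcup u = Y_1$ directly from the definitions of $\varPsi'_u$ and $\Psi'$. For the inductive step, assume $X_k = Y_k$. On the one hand, $X_{k+1} = \varPsi'_u(X_k) = \varPhi(X_k)\sqcup u$ by the definition of $\varPsi'_u$. On the other hand, the key observation is that $(\Psi')^{k+1}(u)$ collapses to the same shape: expanding $Y_{k+1} = \Psi'(Y_k) = \varPhi(Y_k)\sqcup Y_k$ and then repeatedly unfolding $Y_k = \varPhi(Y_{k-1})\sqcup Y_{k-1}$, each newly produced term $\varPhi(Y_j)$ with $j < k$ is absorbed, because $Y_j \sqsubseteq Y_k$ by Lemma~\ref{pro:dual_operator}(\ref{pro:dual2}) and hence $\varPhi(Y_j)\sqsubseteq\varPhi(Y_k)$ by monotonicity of $\varPhi$; iterating the unfolding down to $k=1$ leaves $Y_{k+1} = \varPhi(Y_k)\sqcup u$. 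This is exactly the unconditional part of the chain of equalities already carried out in the proof of Proposition~\ref{prop_dual}. Combining the two computations with the induction hypothesis $X_k = Y_k$ gives $X_{k+1} = \varPhi(X_k)\sqcup u = \varPhi(Y_k)\sqcup u = Y_{k+1}$, which closes the induction and yields $(\varPsi'_u)^k(u) = (\Psi')^k(u)$ for every $k \ge 0$.

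The argument is essentially routine; the only step requiring any care is the absorption of the intermediate $\varPhi(Y_j)$ terms when unfolding $(\Psi')^{k+1}(u)$, which relies on having already established both the monotonicity of $\varPhi$ and the ascending-chain property of $\{(\Psi')^j(u)\}_j$ from Lemma~\ref{pro:dual_operator}. I would present this unfolding with the vertical-dots notation used in the proof of Theorem~\ref{thm:equ} rather than spelling out every intermediate line, since it is a verbatim dualization of that computation.
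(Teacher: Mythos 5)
Your proposal is correct and follows essentially the same route as the paper's proof: induction on $k$ with base cases $k=0,1$, computing $X_{k+1}=\varPhi(X_k)\sqcup u$ directly and collapsing $Y_{k+1}$ to $\varPhi(Y_k)\sqcup u$ by repeated unfolding, absorbing each intermediate $\varPhi(Y_j)$ via monotonicity of $\varPhi$ together with the ascending-chain property from Lemma~\ref{pro:dual_operator}(\ref{pro:dual2}). Your observation that this unfolding is the unconditional prefix of the computation in the proof of Proposition~\ref{prop_dual} is also accurate.
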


\begin{proof}
    Analogously, we proof it by mathematical induction. $X_k = (\varPsi'_u)^{k}(u)$ and $Y_k = (\Psi')^k(u)$. when $k = 0$, $X_0 = u = Y_0$. When $k = 1$, $X_1 = \varPhi(u)\sqcup u = Y_1$, by definition of two dual operators, respectively.

    Now we suppose that $X_k = Y_k$, i.e., $(\varPsi'_u)^{k}(u) = (\Psi') ^ k(u)$, and we aim to prove that $(\varPsi'_u)^{k+1}(u) = (\Psi') ^ {k+1}(u)$.
    \begin{align*}
        X_{k+1} &= \varPsi'_u((\varPsi'_u)^k(u))  \tag{by definition of $(\varPsi'_u)^{k+1}(u)$}\\
                &= \varPhi((\varPsi'_u)^k(u)) \sqcup u  \tag{by definition of $\varPsi'_u$}        
    \end{align*}
    \begin{align*}
        Y_{k+1} &= \Psi'((\Psi')^k(u)) \tag{by definition of $(\Psi')^{k+1}(u)$}\\
                &= \varPhi((\Psi')^k(u)) \sqcup (\Psi')^k(u) \tag{by definition of $\Psi'$} \\
                &= \varPhi((\Psi')^k(u)) \sqcup \Psi'((\Psi')^{k-1}(u)) \tag{by definition of $(\Psi')^{k}(u)$} \\
                &= \varPhi((\Psi')^k(u)) \sqcup (\varPhi((\Psi')^{k-1}(u)) \sqcup \Psi'^{k-1}(u)) \tag{by definition of $\Psi'$} \\
                &= (\varPhi((\Psi')^k(u)) \sqcup \varPhi((\Psi')^{k-1}(u))) \sqcup \Psi'^{k-1}(u)) \tag{by associative law} \\
                &= \varPhi((\Psi')^k(u)) \sqcup (\Psi')^{k-1}(u) \tag{by monotonicity of $\varPhi$ and Lemma~\ref{pro:dual_operator}(\ref{pro:dual2}))} \\
                & \quad \vdots \\
                &= (\varPhi((\Psi')^k(u)) \sqcup \varPhi(u)) \sqcup u \tag{by unfolding $(\Psi')^k$ until $k = 1$}\\
                &= \varPhi((\Psi')^k(u)) \sqcup u \tag{by monotonicity of $\varPhi$ and Lemma~\ref{pro:dual_operator}(\ref{pro:dual2})}
    \end{align*}
    Since we suppose that $(\varPsi'_u)^{k}(u) = (\Psi')^k(u)$, we obtain that $\varPhi((\varPsi'_u)^{k}(u)\sqcup u = \varPhi((\Psi')^k(u)) \sqcup u$, thus we have $(\varPsi'_u)^{k+1}(u) = (\Psi') ^ {k+1}(u)$, i.e., $X_{k+1} = Y_{k+1}$. 
\end{proof}

\section{Supplementary Material for Section~\ref{sec:functions}}~\label{app:functions}

\subsection{Classical OST}~\label{app:classical_OST}
Optional Stopping Theorem (OST) is a classical theorem in martingale theory that characterizes the relationship between the expected values initially and at a stopping time in a supermartingale. Below we present the classical form of OST.
\begin{theorem}[Optional Stopping Theorem (OST)~{\cite[Chapter 10]{DBLP:books/daglib/0073491}}]
Let $\{X_n\}_{n = 0}^{\infty}$ be a martingale (resp. supermartingale) adapted to a filtration $\mathcal{F}=\{\mathcal{F}_n\}_{n=0}^{\infty}$ and $\tau$ be a stopping time w.r.t the filtration $\mathcal{F}$. If we have that:
\begin{itemize}
    \item $\mathbb{E}(\tau)<\infty$;
    \item exists an $M \in [0, \infty)$ such that $\lvert X_{n+1} - X_n \rvert \leq M$ holds almost surely for every $n \geq 0$, 
\end{itemize}
then it follows that  $\mathbb(\lvert X_\tau \rvert ) < \infty$ and $\mathbb{E}(X_\tau) =  \mathbb{E}(X_0)$(resp. $\mathbb{E}(X_\tau) \leq \mathbb{E}(X_0)$). 
\end{theorem}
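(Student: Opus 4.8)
\medskip
\noindent\textbf{Proof proposal.}
The plan is to deduce the statement from the dominated convergence theorem applied to the \emph{stopped process} $Y_n := X_{n \wedge \tau}$. First I would establish that $\{Y_n\}_{n=0}^{\infty}$ is again a supermartingale (resp.\ martingale) adapted to $\mathcal{F}$, using only that $\tau$ is a stopping time. The key point is that $\{\tau \ge n+1\} = \{\tau \le n\}^{c} \in \mathcal{F}_n$, so
\[
Y_{n+1} - Y_n \;=\; \mathbf{1}_{\{\tau \ge n+1\}} \cdot (X_{n+1} - X_n),
\]
whence $\mathbb{E}(Y_{n+1} - Y_n \mid \mathcal{F}_n) = \mathbf{1}_{\{\tau \ge n+1\}} \cdot \mathbb{E}(X_{n+1} - X_n \mid \mathcal{F}_n) \le 0$ (resp.\ $= 0$) almost surely. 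In particular $\mathbb{E}(Y_n) \le \mathbb{E}(Y_0) = \mathbb{E}(X_0)$ (resp.\ with equality) for every $n$.

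Next, I would produce an integrable dominating variable for $\{Y_n\}$. Telescoping gives $Y_n - X_0 = \sum_{i=0}^{(n\wedge\tau)-1}(X_{i+1}-X_i)$, so the bounded-increment hypothesis $|X_{i+1}-X_i| \le M$ yields $|Y_n| \le |X_0| + M\,(n\wedge\tau) \le |X_0| + M\tau$ almost surely, uniformly in $n$. Since $\mathbb{E}(|X_0|)<\infty$ (supermartingales are integrable by definition) and $\mathbb{E}(\tau)<\infty$ by assumption, the envelope $Z := |X_0| + M\tau$ is integrable. Moreover $\mathbb{E}(\tau)<\infty$ forces $\tau<\infty$ almost surely, so $Y_n = X_{n\wedge\tau} \to X_\tau$ pointwise almost surely. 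Dominated convergence then gives $\mathbb{E}(|X_\tau|) \le \mathbb{E}(Z) < \infty$ together with $\mathbb{E}(X_\tau) = \lim_{n} \mathbb{E}(Y_n) \le \mathbb{E}(X_0)$ in the supermartingale case; applying the same reasoning to $\{-X_n\}_{n}$ upgrades this to the equality $\mathbb{E}(X_\tau) = \mathbb{E}(X_0)$ in the martingale case.

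The step I expect to require the most care is this final limit passage: one must convert $\mathbb{E}(\tau)<\infty$ both into almost-sure finiteness of $\tau$ — so that $X_{n\wedge\tau}$ really does converge to $X_\tau$ — and into integrability of the envelope $M\tau$, and only then invoke dominated convergence. Once $Z$ is in hand, the remaining ingredients (the stopped-process (super)martingale identity and the telescoping estimate) are routine; the whole argument is essentially the textbook proof, cf.\ \cite[Chapter~10]{DBLP:books/daglib/0073491}.
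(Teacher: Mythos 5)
Your proposal is correct: this is precisely the standard textbook argument via the stopped process $X_{n\wedge\tau}$, the telescoping bound $|X_{n\wedge\tau}|\le |X_0|+M\tau$, and dominated convergence. The paper does not reprove this classical result but simply cites it from Williams~\cite[Chapter~10]{DBLP:books/daglib/0073491}, and your argument is the one given there, so there is nothing to compare beyond noting that your proof fills in the deferred reference faithfully.
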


Since the classical Optional Stopping Theorem~\cite{DBLP:books/daglib/0073491,doi:10.1080/00029890.1971.11992788} requires bounded step-wise difference $\lvert X_{n+1} - X_n \rvert$ in a stochastic process $\{X_n\}_{n\ge 0}$, which cannot handle our problem due to the assignment commands in the loop body. To address this difficulty, We have sought several extended versions of OST, as proposed in~\cite{DBLP:conf/pldi/Wang0GCQS19,DBLP:journals/corr/abs-2103-16105,10.1145/3656432}, etc. Among which we find the OST variant proposed in~\cite{10.1145/3656432} can handle our problem. 

\subsection{Proof of~\cref{thm:soundness}}\label{app:soundness}
\noindent\textbf{Theorem}~\ref{thm:soundness}. 
Suppose the loop $P$ is affine. Let $k$ be a positive integer and $h$ be a polynomial potential polynomial in the program variables with degree $d$. If there exist real numbers $c_1 >0$ and $c_2 > c_3 > 0$ such that
\begin{itemize}
    \item [(P1)] there exists a uniform  amplifier $c$ satisfying $c \leq e^{c_3/d}$, and 
    \item [(P2)] the termination time $T$ of $P$ has the \emph{concentration property}, i.e., $\mathbb{P}(T > n) \leq c_1 \cdot e^{-c_2 \cdot n}$ holds for sufficiently large $n \in \mathbb{N}$.
\end{itemize}
hold, then for any initial program state $s^*$, we have:
\begin{itemize}
\item $\mathbb{E}_{s^*}(X_f)\le \overline{\varPsi}_h^{k-1}(h)(s^*) \le h(s^*)$ holds for any $k$-upper potential polynomial $h$.

\item $\mathbb{E}_{s^*}(X_f)\ge (\overline{\varPsi}'_h)^{k-1}(h)(s^*) \ge h(s^*)$ holds for any $k$-lower potential polynomial $h$. 
\end{itemize}

\begin{proof}
    We first proof the soundness of upper potential functions. Let $s_n$ be the random vector (random variable) of the program state at the $n$-th iteration of the probabilistic while loop $P$, where $s_0 = s^*$, and let $\{\mathcal{F}_n\}_{n\ge 0}$ be the filtration such that each $\mathcal{F}_n$ is the $\sigma$-algebra that describes the first $n$ iterations of the loop, i.e., the smallest $\sigma$-algebra that makes the random values during the first $n$ executions measurable. This choice of $\mathcal{F}_n$ is standard in previous martingale-based results~\cite{DBLP:conf/popl/ChatterjeeFNH16,DBLP:journals/toplas/ChatterjeeFNH18,DBLP:conf/popl/ChatterjeeNZ17,DBLP:conf/pldi/Wang0GCQS19}.
    
    We also define $H = \overline{\varPsi}_h^{k-1}(h)$. Note that $H$ is piecewise linear or polynomial (by the definition of $\overline{\varPsi}_h$ in~\cref{def:k_induc_func}) . By~\cref{def:potential function} and the property that $\overline{\varPhi}(\overline{\varPsi}_h^{k-1}(h)) \preceq h \iff \overline{\varPhi}(\overline{\varPsi}_h^{k-1}(h)) \preceq \overline{\varPsi}_h^{k-1}(h)$ (\cref{lemma:park}), we obtain that $\forall s \in \mbox{\sl Reach}(s^*)$, $\overline{\varPhi}(H)(s) \le H(s)$. We define the stochastic process $\{X_n\}_{n=0}^{\infty}$ by 
    \[
        X_n := H(s_n). 
    \]
    We first prove that the stochastic process $\{X_n\}$ is a supermartingale. We discuss this in the following two scenarios:
    \begin{itemize}
        \item if $s_n \not\models \varphi$, by the semantics of probabilistic while loop (see~\cref{sec:programs}), $s_{n+1} = s_n$, and thus $X_{n+1} = X_n$, which satisfies the conditions of supermartingale; 
        \item if $s_n \models \varphi$, 
        we have 
            \begin{align*}
                \mathbb{E}_{s^*}[X_{n+1}|\mathcal{F}_n] &= \mathbb{E}_{s^*} [H(s_{n+1})|\mathcal{F}_n] \\
                &= \mathbb{E}_{s_n}  [H(s_{n+1})|\mathcal{F}_n] \tag{by definition of conditional expectation}\\
                &= pre_{C}(H)(s_n) \tag{by definition of pre-expectation}\\
                &= \overline{\varPhi}(H)(s_n) \tag{by definition of characteristic function}\\
                &\le H(s_n) \tag{by property of $H$} \\
                & = X_n
            \end{align*}
    where the property of conditional expectation is the “take out what is known” property of conditional expectation (see~\cite{DBLP:books/daglib/0073491}).
    From (P1) and the definition of uniform amplifier (see~\cref{def:uniform_amplifier}), for each program variable $x$, the value of $x_n$ is bounded by $|X_n| \leq c^n \cdot |x_0| + a\cdot (c^0 + \cdots c^{n-1}) \leq K_n \cdot c^n \leq K_n \cdot e^{c_3*n/d} $ for some positive constant $K_n$, where $d$ is the degree of the polynomial $H$ (i.e., the degree of $h$). From that $H$ is piecewise linear (resp. polynomial with degree $d$), i.e., $H$ is linear (resp. polynomial with degree $d$) on each segment, we can obtain $\mathbb{E}_{s^*}[X_n] = \mathbb{E}_{s^*}[H(s_n)]  = \mathbb{E}_{s^*}[ M_n \cdot c^n ]< \infty$ for some positive constant $M_n >0$ by the definition of $X_n$. Thus $\{X_n\}$ is a supermartingale. 
    \end{itemize}
    
    The condition (a) in~\cref{thm:ost-variant} follows from the assumption that (P2) $P$ has the concentration property.

    Then we prove the condition (b) in~\cref{thm:ost-variant}. 
    From (P1), we have that for each program variable $x$, the value of $x_n$ at $n$-th iteration, i.e., at the program state $s_n$, is bounded by $ K_n \cdot c^n $.
    When $H$ is piecewise linear, i.e., $d=1$, we have that $H(s_n) \leq M_n \cdot c^n $ for $M_n > 0$.
        \begin{align*}
            \lvert X_{n+1} - X_n \rvert &= \lvert H(s_{n+1}) - H(s_n) \rvert \\
            &\le \lvert H(s_{n+1}) \rvert + \lvert H(s_n) \rvert \\
            &\le M_n \cdot |c|^n + M_{n+1} \cdot |c|^{n+1} \\
            &\le (M_n + |c| \cdot M_{n+1})\cdot |c|^n \\
            &\le b_1 \cdot e^{c_3 n} 
        \end{align*}
    When $H$ is piecewise polynomial with degree $d$, we have that $H(s_n) \leq M_n \cdot c^{nd} $ for $M_n > 0$.
        \begin{align*}
            \lvert X_{n+1} - X_n \rvert &= \lvert H(s_{n+1}) - H(s_n) \rvert \\
            &\le \lvert H(s_{n+1}) \rvert + \lvert H(s_n) \rvert \\
            &\le M_n \cdot |c|^{nd} + M_{n+1} \cdot |c|^{(n+1)d} \\
            &\le (M_n + |c^d| \cdot M_{n+1})\cdot |c|^{nd} \\
            &\le b_1 \cdot (e^{c_3/d})^{nd} \\
            &\le b_1 \cdot e^{c_3 n}
        \end{align*}
    Especially, if the uniform amplifier $c$ is chosen as $1$, then $c_3$ can be chosen arbitrarily small, the prerequisites of this theorem always holds regardless of the values taken by $c_2$ and $d$.
    
    By applying Theorem~\ref{thm:ost-variant}, we have that $\mathbb{E}_{s^*}(X_T) \leq \mathbb{E}_{s^*}(X_0)$. Since the termination time $T$ is a stopping time w.r.t. the filtration $\{\mathcal{F}_n\}_{n\ge 0}$, and there will be $s_{T} \not\models \varphi $, thus $X_{T} = f(s_T) = X_f$. We have $\mathbb{E}_{s^*}(X_f)\le \mathbb{E}_{s^*}(X_0) = H(s^*)$. The second inequality, i.e., $\overline{\varPsi}_h^{k-1}(h)(s^*) \le h(s^*) (\forall s^*)$ can be derived directly from the property that $\overline{\varPsi}_h^{k-1}(h) \preceq h$ holds (see~\cref{app:k_operator} and~\cite{DBLP:conf/cav/BatzCKKMS20}).
    The case of lower potential functions is completely dual to the case of upper potential functions since we can consider the stochastic process $\{-X_n\}$, that is, define the stochastic process by $Y_n := -H(s_n)$.
    The remaining proof is essentially the same.

\end{proof}

\subsection{Proof of~\cref{thm:soundness_poly}}\label{app:soundness_poly}
\noindent\textbf{Theorem}~\ref{thm:soundness_poly}. Let $k$ be a positive integer. Suppose there exist real numbers $c_1 >0$ and $c_2 > 0$ such that condition
(P1') loop $P$ has the bounded update property; and condition (P2) in~\cref{thm:soundness} holds, then for any initial program state $s^*$, we have
\begin{itemize}
\item $\mathbb{E}_{s^*}(X_f)\le \overline{\varPsi}_h^{k-1}(h)(s^*) \le h(s^*)$ holds for any $k$-upper potential polynomial $h$.

\item $\mathbb{E}_{s^*}(X_f)\ge (\overline{\varPsi}'_h)^{k-1}(h)(s^*) \ge h(s^*)$ holds for any $k$-lower potential polynomial $h$. 
\end{itemize}

\begin{proof}
    We first proof the soundness of upper potential functions. Let $s_n$ be the random vector (random variable) of the program state at the $n$-th iteration of the probabilistic while loop $P$, where $s_0 = s^*$, and let $\{\mathcal{F}_n\}_{n\ge 0}$ be the filtration such that each $\mathcal{F}_n$ is the $\sigma$-algebra that describes the first $n$ iterations of the loop, i.e., the smallest $\sigma$-algebra that makes the random values during the first $n$ executions measurable. This choice of $\mathcal{F}_n$ is standard in previous martingale-based results~\cite{DBLP:conf/popl/ChatterjeeFNH16,DBLP:journals/toplas/ChatterjeeFNH18,DBLP:conf/popl/ChatterjeeNZ17,DBLP:conf/pldi/Wang0GCQS19}.
    
    We suppose that $h$ is a $k$-upper potential polynomial with degree $d$ and define $H = \overline{\varPsi}_h^{k-1}(h)$. Note that $H$ is piecewise polynomial with degree $d$ (by the definition of $\overline{\varPsi}_h$ in~\cref{def:k_induc_func}) . By~\cref{def:potential function} and the property that $\overline{\varPhi}(\overline{\varPsi}_h^{k-1}(h)) \preceq h \iff \overline{\varPhi}(\overline{\varPsi}_h^{k-1}(h)) \preceq \overline{\varPsi}_h^{k-1}(h)$ (\cref{lemma:park}), we obtain that $\forall s \in \mbox{\sl Reach}(s^*)$, $\overline{\varPhi}(H)(s) \le H(s)$. We define the stochastic process $\{X_n\}_{n=0}^{\infty}$ by 
    \[
        X_n := H(s_n). 
    \]
    We first prove that the stochastic process $\{X_n\}$ is a supermartingale. We discuss this in the following two scenarios:
    \begin{itemize}
        \item if $s_n \not\models \varphi$, by the semantics of probabilistic while loop (see~\cref{sec:programs}), $s_{n+1} = s_n$, and thus $X_{n+1} = X_n$, which satisfies the conditions of supermartingale; 
        \item if $s_n \models \varphi$, we have 
            \begin{align*}
                \mathbb{E}_{s^*}[X_{n+1}|\mathcal{F}_n] &= \mathbb{E}_{s^*} [H(s_{n+1})|\mathcal{F}_n] \\
                &= \mathbb{E}_{s_n}  [H(s_{n+1})|\mathcal{F}_n] \tag{by definition of conditional expectation}\\
                &= pre_{C}(H)(s_n) \tag{by definition of pre-expectation}\\
                &= \overline{\varPhi}(H)(s_n) \tag{by definition of characteristic function}\\
                &\le H(s_n) \tag{by property of $H$} \\
                & = X_n
            \end{align*}

    \end{itemize}
    where the property of conditional expectation is the “take out what is known” property of conditional expectation (see~\cite{DBLP:books/daglib/0073491}).
    From (P1') that $P$ has the bounded update property and $H$ is a piecewise polynomial with degree $d$, i.e., $H$ is a polynomial with degree $d$ on each segment, we can obtain $\mathbb{E}_{s^*}[X_{n}] = \mathbb{E}_{s^*}[H(s_n)] \le \zeta \cdot n^d$ for a positive constant $\zeta>0$, thus $\{ X_n \}$ is a supermartingale.
    
    The condition (a) in~\cref{thm:ost-variant} follows from the assumption that (P2) $P$ has the concentration property.

    Then we prove the condition (b) in~\cref{thm:ost-variant}. From that $P$ has the bounded update property and $H$ is a piecewise polynomial with degree $d$, we also have that $|X_{n}| \le \zeta \cdot n^d$ for a positive constant $\zeta>0$, thus we have
    \begin{align*}
            \lvert X_{n+1} - X_n \rvert 
            &\le |X_{n+1}| + |X_n| \\
            &\le \zeta \cdot n^d + \zeta \cdot (n+1)^d \\
            &\le b_1 \cdot n^d
        \end{align*}
    Note that in this theorem, $c_3$ in ~\cref{thm:ost-variant}(b) is chosen arbitrarily small, therefore the prerequisites of~\cref{thm:ost-variant} always holds regardless of the values taken by $c_2$.

    By applying Theorem~\ref{thm:ost-variant}, we have that $\mathbb{E}_{s^*}(X_T) \leq \mathbb{E}_{s^*}(X_0)$. Since the termination time $T$ is a stopping time w.r.t. the filtration $\{\mathcal{F}_n\}_{n\ge 0}$, and there will be $s_{T} \not\models \varphi $, thus $X_{T} = f(s_T) = X_f$. We have $\mathbb{E}_{s^*}(X_f)\le \mathbb{E}_{s^*}(X_0) = H(s^*)$. The second inequality, i.e., $\overline{\varPsi}_h^{k-1}(h)(s^*) \le h(s^*) (\forall s^*)$, can be derived directly from the property that $\overline{\varPsi}_h^{k-1}(h) \preceq h$ holds (see~\cref{app:k_operator} and~\cite{DBLP:conf/cav/BatzCKKMS20}).
    The case of lower potential functions is completely dual to the case of upper potential functions since we can consider the stochastic process $\{-X_n\}$, that is, define the stochastic process by $Y_n := -H(s_n)$.
    The remaining proof is essentially the same.

\end{proof}

\section{Supplementary Material for Section~\ref{sec:algorithm}}~\label{app:algorithm}

\subsection{Supplementary Material for Brute-Force Arithmetic Expansion in {\bf Stage 2}}\label{app:general_approach}
In this section, we supplement the brute-force arithmetic expansion that can simplify the $k$-induction constraint. To transform the $k$-induction constraint $\overline{\varPhi}_f(\overline{\varPsi}_h^{k-1}(h)) \preceq h$ into a simpler form, our algorithm further unrolls this $k$-induction conditions so that the minimum operations appear at the outermost of the left-hand-side of the inequality. In detail, from the definition of the operator $\overline{\varPsi}_h$ (Definition~\ref{def:k_induc_func}), the unrolling is reduced to the recursive computation of {\em pre-expectation} and the pointwise minimum operation. Following the definition of pre-expectation (\cref{def:pre-exp}), the unrolling can be done by  the following reduction rules for functions $f_1,\dots, f_m$, $g_1,\dots, g_n$:
\begin{itemize}
    \item[(R1)] $\min\{f_1,\dots, f_m\} + \min\{g_1,\dots,g_n\}=\min_{1\le i\le m,1\le j\le n}\{f_i+g_j\}$; 
    \item[(R2)] $c\cdot \min\{f_1,\dots, f_m\}=\min\{c\cdot f_1,\dots, f_m\}$ for constant $c\ge 0$; 
    \item[(R3)] $[B]\cdot \min\{f_1,\dots, f_m\}=\min\{[B]\cdot f_1,\dots, [B]\cdot f_m\}$ for predicate $B$. 
\end{itemize}

By iterative applications of the reduction rules, the constraint $\overline{\varPhi}_f(\overline{\varPsi}_h^{k-1}(h)) \preceq h$ can be transformed into a succinct form with only one minimum operation:
\begin{align*}
    \min \{h_1, h_2,\dots, h_m \} \preceq h
\end{align*}
where $h$ is the predefined polynomial template and each $h_i~(i = 1, \dots, m)$ is a piecewise expression derived from the unrolling that does not contain the minimum operation. 

\subsection{Proof of~\cref{prop:relation}} \label{proof:relation}
We give a proof for~\cref{prop:relation} in this section.

\noindent \textbf{Proposition}~\ref{prop:relation}.
The upper $k$-induction condition  $\overline{\varPhi}_f(\overline{\varPsi}_h^{k-1}(h)) \preceq h$ is equivalent to  constraint $\min\{h_1, h_2,\dots, h_m\} \preceq h$, where each $h_i$ equals $pre_{C_{d}}(h)$ for some unique $C_d \in \{C_1,\dots, C_m\}$ from the unfolding process above. 

\begin{proof}
    We concentrate on the left side of the constraint: $\overline{\varPhi}_f(\overline{\varPsi}_h^{k-1}(h)) \preceq h$.
    
    We first proof the case of $k=2$, i.e., $\overline{\varPhi}_f(\overline{\varPsi}_h^{1}(h)) \preceq h$. Since our syntax of the probabilistic programs is defined in a compositional style (see \cref{fig: syntax} in \cref{sec:programs} for more details), we proof by induction on the structure of programs. For simplicity, we denote $pre_C([\varPhi])$ by $[\varPhi(C)]$, which represent the evaluation of $[\varPhi]$ after the execution of $C$. 
    \begin{itemize}
        \item Case $C \equiv \textsf{skip}$.
        \begin{eqnarray*}
            & &\overline{\varPhi}_f(\overline{\varPsi}_h(h)) \\
            &=& [\neg \varphi] \cdot f + [\varphi] \cdot pre_C(\overline{\varPsi}_h(h)) \\
            &=& [\neg \varphi] \cdot f + [\varphi] \cdot \overline{\varPsi}_h(h) \\
            &=& [\neg \varphi] \cdot f + [\varphi] \cdot \min \{\overline{\varPhi}_f(h), h\} \\
            &=& [\neg \varphi] \cdot f + [\varphi] \cdot \min \{[\neg \varphi] \cdot f + [\varphi] \cdot h, h\} \\
            &=& [\neg \varphi] \cdot f + \min \{[\varphi] \cdot h, [\varphi] \cdot h\} \\
            &=& [\neg \varphi] \cdot f + [\varphi] \cdot h \\
            &=& \overline{\varPhi}_f(h)
        \end{eqnarray*}
        It corresponds to pre-expectation of the loop-free program unfolded with twice (only one program).
        \item Case $C \equiv x := e$.
        \begin{eqnarray*}
            & &\overline{\varPhi}_f(\overline{\varPsi}_h(h)) \\
            &=& [\neg \varphi] \cdot f + [\varphi] \cdot pre_C(\overline{\varPsi}_h(h)) \\
            &=& [\neg \varphi] \cdot f + [\varphi] \cdot \overline{\varPsi}_h(h)([x/e]) \\
            &=& [\neg \varphi] \cdot f + [\varphi] \cdot \min \{[\neg \varphi] \cdot f + [\varphi] \cdot h([x/e]), h\}([x/e]) \\
            &=& [\neg \varphi] \cdot f + [\varphi] \cdot  \min \{[\neg \varphi([x/e])]\cdot f([x/e]) + \\
            & & [\varphi([x/e])] \cdot h([x/e])([x/e]), h([x/e])\} \\
            &=& \min\{[\neg \varphi] \cdot f + [\varphi \wedge \neg \varphi([x/e])] \cdot f([x/e]) + \\
            & & [\varphi \wedge \varphi([x/e])] \cdot h([x/e])([x/e]),
            [\neg \varphi] \cdot f + [\varphi] \cdot h([x/e]) \}\\
            &=& \min\{[\neg \varphi] \cdot f + [\varphi \wedge \neg \varphi([x/e])] \cdot f([x/e]) + \\
            & & [\varphi \wedge \varphi([x/e])] \cdot pre_{C;C}(h),
            [\neg \varphi] \cdot f + [\varphi] \cdot h([x/e]) \}
        \end{eqnarray*}
        the expressions in the minimize operator correspond to pre-expectation of the two loop-free programs unfolded within twice (one for once, and another for twice).
        \item Case $C \equiv C_1;C_2$.
        \begin{eqnarray*}
            & &\overline{\varPhi}_f(\overline{\varPsi}_h(h)) \\
            &=& [\neg \varphi] \cdot f + [\varphi] \cdot pre_C(\overline{\varPsi}_h(h)) \\
            &=& [\neg \varphi] \cdot f + [\varphi] pre_{C_1}(pre_{C_2}(\min \{[\neg \varphi] \cdot f + [\varphi] \cdot pre_{C_1}(pre_{C_2}(h)), h\})) \\
            &=& [\neg \varphi] \cdot f + [\varphi] \cdot \min \{[\neg \varphi(C_1;C_2)] \cdot pre_{C_1;C_2}(f) + \\
            & & [\varphi(C_1;C_2)]\cdot pre_{C_1;C_2}(h), pre_{C_1;C_2}(h)\} \\
            &=& \min \{[\neg \varphi] \cdot f + [\varphi \wedge \varphi(C_1;C_2)] \cdot pre_{C_1;C_2}(f) + \\
            & &[\varphi \wedge \neg \varphi(C_1;C_2)] \cdot pre_{C_1;C_2}(h), \\
            & &[\neg \varphi] \cdot f + [\varphi] \cdot pre_{C_1;C_2}(h) \} \\
        \end{eqnarray*}
        the expressions in the minimize operator correspond to pre-expectation of the two loop-free programs unfolded within twice (one for once, and another for twice)
        \item case $C \equiv \{C_1\}[p]\{C_2\}$.
        \begin{eqnarray*}
            & &\overline{\varPhi}_f(\overline{\varPsi}_h(h)) \\
            &=& [\neg \varphi] \cdot f + [\varphi] \cdot pre_C(\overline{\varPsi}_h(h)) \\
            &=& [\neg \varphi] \cdot f + [\varphi] \cdot p \cdot pre_{C_1}(\overline{\varPsi}_h(h)) + [\varphi] \cdot (1-p) \cdot pre_{C_2}(\overline{\varPsi}_h(h))
        \end{eqnarray*}
        wherein
        \begin{eqnarray*}
            pre_{C_1}(\overline{\varPsi}_h(h)) &=& pre_{C_1}(\min \{[\neg \varphi] \cdot f + [\varphi]\cdot (p\cdot pre_{C_1}(h) + (1-p)  \cdot pre_{C_2}(h)), h\} \\ 
            &=& \min \{[\neg \varphi(C_1)] \cdot pre_{C_1}(f) + [\varphi(C_1)]\cdot \\
            & &(p\cdot pre_{C_1;C_1}(h) + (1-p) \cdot pre_{C_1;C_2}(h)), pre_{C_1}(h)\}
        \end{eqnarray*}
        and
        \begin{eqnarray*}
            pre_{C_2}(\overline{\varPsi}_h(h)) &=& pre_{C_2}(\min \{[\neg \varphi] \cdot f + [\varphi]\cdot (p\cdot pre_{C_1}(h) + (1-p)  \cdot pre_{C_2}(h)), h\} \\ 
            &=& \min \{[\neg \varphi(C_2)] \cdot pre_{C_2}(f) + [\varphi(C_2)]\cdot \\
            & &(p\cdot pre_{C_2;C_1}(h) + (1-p) \cdot pre_{C_2;C_2}(h)), pre_{C_2}(h)\}
        \end{eqnarray*}
        Thus we have
        \begin{eqnarray*}
            & &\overline{\varPhi}_f(\overline{\varPsi}_h(h)) \\
            &=& [\neg \varphi] \cdot f + [\varphi] \cdot p \cdot \min \{[\neg \varphi(C_1)] \cdot pre_{C_1}(f) \\
            & &+ [\varphi(C_1)]\cdot (p\cdot pre_{C_1;C_1}(h) + (1-p) \cdot pre_{C_1;C_2}(h)), pre_{C_1}(h)\} +\\
            & &[\varphi] \cdot (1-p) \cdot \min \{[\neg \varphi(C_2)] \cdot pre_{C_2}(f) \\
            & &+[\varphi(C_2)]\cdot (p\cdot pre_{C_2;C_1}(h) + (1-p) \cdot pre_{C_2;C_2}(h)), pre_{C_2}(h)\} \\
            &=& \min \{[\neg \varphi] \cdot f + [\varphi \wedge \neg \varphi(C_1)] \cdot p \cdot pre_{C_1}(f) + [\varphi \wedge \neg \varphi(C_2)] \cdot (1-p) \cdot pre_{C_2}(f) \\
            & &+ [\varphi \wedge \varphi(C_1)] \cdot (p^2 \cdot pre_{C_1;C_1}(h) + p(1-p) \cdot pre_{C_1;C_2}(h)) \\
            & &+ [\varphi \wedge \varphi(C_2)] \cdot ((1-p)p\cdot pre_{C_2;C_1}(h) + (1-p)^2 \cdot pre_{C_2;C_2}(h)), \\            
            & & [\neg \varphi] \cdot f + [\varphi \wedge \neg \varphi(C_1)] \cdot p \cdot pre_{C_1}(f) + \\
            & & [\varphi \wedge \varphi(C_1)] \cdot (p^2 \cdot pre_{C_1;C_1}(h) + p(1-p) \cdot pre_{C_1;C_2}(h)) +\\
            & & [\varphi]\cdot (1-p) \cdot pre_{C_2}(h), \\            
            & & [\neg \varphi] \cdot f + [\varphi \wedge \neg \varphi(C_2)] \cdot (1-p) \cdot pre_{C_2}(f) + \\
            & & [\varphi \wedge \varphi(C_2)] \cdot ((1-p)p \cdot pre_{C_2;C_1}(h) + (1-p)^2 \cdot pre_{C_2;C_2}(h)) +\\
            & & [\varphi]\cdot p \cdot pre_{C_1}(h), \\            
            & & [\neg \varphi] \cdot f + [\varphi]\cdot p \cdot pre_{C_1}(h) + [\varphi]\cdot (1-p) \cdot pre_{C_2}(h) \\
        \end{eqnarray*}
        The first expression corresponds to the case that we unfold for twice at each state we reach (after the execution of $C_1$ and $C_2$), and the second (resp. third) expression corresponds to the case that we unfold for twice at the state that we reach after the execution of $C_1$ (resp. $C_2$) and unfold for once at the state that we reach after the execution of $C_2$ (resp. $C_1$). The fourth expression corresponds to the case that we unfold for once at both states, i.e., 1-induction principle.
        \item case $C \equiv \textsf{if} ~(\phi)~\{C_1\} ~\textsf{else} ~\{C_2\}$.
        \begin{eqnarray*}
            & &\overline{\varPhi}_f(\overline{\varPsi}_h(h)) \\
            &=& [\neg \varphi] \cdot f + [\varphi] \cdot pre_C(\overline{\varPsi}_h(h)) \\
            &=& [\neg \varphi] \cdot f + [\varphi \wedge \phi] \cdot pre_{C_1}(\overline{\varPsi}_h(h)) + [\varphi \wedge \neg \phi] \cdot pre_{C_2}(\overline{\varPsi}_h(h))
        \end{eqnarray*}
        wherein
        \begin{eqnarray*}
            pre_{C_1}(\overline{\varPsi}_h(h)) &=& pre_{C_1}(\min \{[\neg \varphi] \cdot f + [\varphi]\cdot ([\phi]\cdot pre_{C_1}(h) + [\neg \phi] \cdot pre_{C_2}(h)), h\} \\ 
            &=& \min \{[\neg \varphi(C_1)] \cdot pre_{C_1}(f) + [\varphi(C_1)]\cdot \\
            & &([\phi(C_1)]\cdot pre_{C_1;C_1}(h) + [\neg \phi(C_1)] \cdot pre_{C_1;C_2}(h)), pre_{C_1}(h)\}
        \end{eqnarray*}
        and
        \begin{eqnarray*}
            pre_{C_2}(\overline{\varPsi}_h(h)) &=& pre_{C_2}(\min \{[\neg \varphi] \cdot f + [\varphi]\cdot ([\phi]\cdot pre_{C_1}(h) + [\neg \phi]  \cdot pre_{C_2}(h)), h\} \\ 
            &=& \min \{[\neg \varphi(C_2)] \cdot pre_{C_2}(f) + [\varphi(C_2)]\cdot \\
            & &([\phi(C_2)]\cdot pre_{C_2;C_1}(h) + [\neg \phi(C_2)] \cdot pre_{C_2;C_2}(h)), pre_{C_2}(h)\}
        \end{eqnarray*}
        Thus we have
        \begin{eqnarray*}
            & & \overline{\varPhi}_f(\overline{\varPsi}_h(h)) \\
            &=& [\neg \varphi] \cdot f + [\varphi \wedge \phi]  \cdot \min \{[\neg \varphi(C_1)] \cdot pre_{C_1}(f) \\
            & &+ [\varphi(C_1)]\cdot ([\phi(C_1)]\cdot pre_{C_1;C_1}(h) + [\neg \phi(C_1)] \cdot pre_{C_1;C_2}(h)), pre_{C_1}(h)\} +\\
            & &[\varphi\wedge \neg \phi ] \cdot \min \{[\neg \varphi(C_2)] \cdot pre_{C_2}(f) \\
            & &+[\varphi(C_2)]\cdot ([\phi(C_2)]\cdot pre_{C_2;C_1}(h) + [\neg \phi(C_2)] \cdot pre_{C_2;C_2}(h)), pre_{C_2}(h)\} \\
            &=& \min \{[\neg \varphi] \cdot f + [\varphi \wedge \phi \wedge \neg \varphi(C_1)]  \cdot pre_{C_1}(f) + \\
            & &[\varphi \wedge \neg \phi \wedge \neg \varphi(C_2)]  \cdot pre_{C_2}(f)+ \\
            & &[\varphi \wedge \phi \wedge \varphi(C_1) \wedge \phi(C_1)]  \cdot pre_{C_1;C_1}(h) + [\varphi \wedge \phi \wedge \varphi(C_1) \wedge \neg \phi(C_1)] \cdot pre_{C_1;C_2}(h)) +\\
            & & [\varphi \wedge \neg \phi \wedge \varphi(C_2) \wedge \phi(C_2)] \cdot pre_{C_2;C_1}(h) + [\varphi \wedge \neg \phi \wedge \varphi(C_2) \wedge \neg \phi(C_2)] \cdot pre_{C_2;C_2}(h)), \\            
            & & [\neg \varphi] \cdot f + [\varphi \wedge \phi \wedge\neg \varphi(C_1)]  \cdot pre_{C_1}(f) + \\
            & & [\varphi \wedge \phi \wedge \varphi(C_1) \wedge \phi(C_1)]  \cdot pre_{C_1;C_1}(h) + [\varphi \wedge \phi \wedge \varphi(C_1) \wedge \neg \phi(C_1)] \cdot pre_{C_1;C_2}(h)) + \\
            & & [\varphi \wedge \neg \phi] \cdot pre_{C_2}(h), \\            
            & & [\neg \varphi] \cdot f + [\varphi \wedge \neg \phi \wedge \neg \varphi(C_2)] \cdot pre_{C_2}(f) + \\
            & & [\varphi \wedge \neg \phi \wedge \varphi(C_2) \wedge \phi(C_2)] \cdot pre_{C_2;C_1}(h) + [\varphi \wedge \neg \phi \wedge \varphi(C_2) \wedge \neg \phi(C_2)] \cdot pre_{C_2;C_2}(h)) + \\
            & & [\varphi \wedge \phi] \cdot pre_{C_1}(h), \\            
            & & [\neg \varphi] \cdot f + [\varphi \wedge \phi] \cdot pre_{C_1}(h) + [\varphi \wedge \neg \phi] \cdot pre_{C_2}(h) \\
        \end{eqnarray*}
        The one-to-one relation is the same as that in the former case (probabilistic choice case). 
    \end{itemize} 

    Then we proof the case of $k>2$ by mathematical induction.
    Suppose that the proposition holds when $k = n$, i.e., the upper $n$-induction condition $\overline{\varPhi}_f(\overline{\varPsi}_h^{n-1}(h)) \preceq h$ is equivalent with $\min\{h_1, h_2,\dots, h_m\} \preceq h$ , where each $h_i$ uniquely corresponds to one $C_d \in \{C_1,\dots, C_m\}$ and is equal to $pre_{C_{d}}(h)$, where $\{C_1,\dots, C_m\}$ are all the loop-free programs generated by following the decision process in \textbf{Stage 2} in~\cref{sec:algorithm} within $m$ unfolding.

    Then we proof the case of $n+1$.
    \begin{align*}
        \overline{\varPhi}_f(\overline{\varPsi}_h^{n}(h)) &=\overline{\varPhi}_f(\overline{\varPsi}_h(\overline{\varPsi}_h^{n-1}(h))) \\
        &=\overline{\varPhi}_f(\min\{\overline{\varPhi}_f(\overline{\varPsi}_h^{n-1}(h)), h\})\\
        &=\overline{\varPhi}_f(\min\{ \min\{h_1, h_2,\dots, h_m\}, h\}) \\
        &=\overline{\varPhi}_f(\min\{h_1, h_2,\dots, h_m, h\}) \\
        &= [\neg \varphi] \cdot f + [\varphi] \cdot pre_C(\min\{h_1, h_2,\dots, h_m, h\})
    \end{align*}
    Through the same inference on the structure $C$ as above, we show it is equivalent to $\min\{g_1, g_2,\dots, g_M\}$, where $M \geq m+1$ and each $g_i$ uniquely corresponds to one $C_d \in \{C_1,\dots, C_M\}$ and is equal to $pre_{C_{d}}(h)$, where $\{C_1,\dots, C_M\}$ are all the loop-free programs generated by following the decision process in \textbf{Stage 2} in~\cref{sec:algorithm} within $n+1$ unfolding. Thus the proposition holds when $k = n+1$. Notice that the operators $\overline{\varPhi}_f$ and pointwise $\min$ are noncommutative.
    
    By mathematical induction, the proposition holds for $k\geq 2$.
\end{proof}

\begin{remark}
    In~\cref{prop:relation}, We only propose the case of upper $k$-induction condition, and the case of lower $k$-induction condition is completely dual. \qed 
\end{remark}

\subsection{Supplementary Material for the Pedagogical Explanation in Stage 2}\label{app:explanation}
We now present a detailed mathematical analysis of the program in (\ref{eg:simple_loop}). 

Recall that we denote $f$ as the return function, and denote $\overline{\varPhi}_f$ as the function given by 
\[
\overline{\varPhi}_f(h)(x):=[\neg \varphi(x)] \cdot f(x) + [\varphi(x)] (p\cdot h(a_1 x + b_1) + (1-p)\cdot h(a_2 x + b_2))
\]
for every function $h:\mathbb{R}\rightarrow\mathbb{R}$. We use the $k$-induction operator $\varPsi_h$  from~\cite{DBLP:conf/cav/BatzCKKMS20} ($k$ is dummy here) which is given by
$\varPsi_h(g) := \min \{\overline{\varPhi}_f(g), h \}$. We apply the $k=2$-induction condition 
to upper-bound the expected value of $X_f$ and perform a key simplification for this condition via loop unfolding as follows. 
For the ease of understanding, we let $H_1 = [\neg \varphi(a_1 x+b_1)] \cdot f(a_1 x + b_1) + [\varphi(a_1x+b_1)] \cdot (p \cdot h(a_1(a_1 x+b_1)+b_1) + (1-p) \cdot h(a_2(a_1 x+b_1)+b_2))$, which intuitively represents that we unfold the loop once at the state of $a_1 x + b_1$, and $H_2 = [\neg \varphi(a_2x+b_2)] \cdot f(a_2 x + b_2) + [\varphi(a_2x+b_2)] \cdot (p \cdot h(a_1(a_2 x+b_2)+b_1) + (1-p) \cdot h(a_2(a_2 x+b_2)+b_2))$, which intuitively represents that we unfold the loop once at the state of $a_2 x + b_2$. 

\begin{itemize}
    \item 
    \textbf{Case 1}: In this case, the loop is executed once, reaching two states $a_1 x + b_1$ and $a_2 x + b_2$, and does not continue. In other words, we unfold the loop only once and obtain the loop-free program $C_1$ as in ~\cref{fig:case1}. This amounts to $h_1 = [\neg \varphi(x)] \cdot f(x) + [\varphi(x)] (p\cdot h(a_1 x + b_1) + (1-p)\cdot h(a_2 x + b_2))$, which is the expected value of $h(x)$ after the execution of the program $C_1$.
    \item 
        \textbf{Case 2}: In this case, the loop is first executed once, reaching two states $a_1 x + b_1$ and $a_2 x + b_2$. Then, we clarify two cases below.
        \begin{itemize}
            \item At the state $a_1 x + b_1$, we stop the execution of the loop and have the value $h(a_1 x + b_1)$.  
            \item At the state $a_2 x + b_2$, we continue the execution of the loop and obtain two branches: (i) if $\varphi$ is not satisfied, we directly have the return function $f(a_2 x + b_2)$; (ii) if $\varphi$ is satisfied, we arrive at the states $a_1(a_2 x+b_2)+b_1$ and $a_2(a_2 x+b_2)+b_2 $. 
        \end{itemize}
        The unfolding process above generates a loop-free program $C_2$ (see~\cref{fig:case2}), and $h_2$ is derived from the program $C_2$ in a way similar to $h_1$. We have that $h_2 = [\neg \varphi(x)] \cdot f(x) + [\varphi(x)] \cdot (p \cdot h(a_1 x + b_1) + (1-p)\cdot H_2 )$, which is the expected value of $h(x)$ after the execution of  the program $C_2$. 
    \item \textbf{Case 3}: This case is similar to \textbf{Case 2}, with the only difference that we choose to continue the execution of the loop at the state $a_1 x + b_1$ and do not unfold the loop at $a_2 x + b_2$. Then, we clarify two cases below. 
       \begin{itemize}
           \item At the state $a_1 x + b_1$, we continue the execution of the loop and we will attain two branches: (i) if $\varphi$ is not satisfied, output the return function $f(a_1 x + b_1)$; (ii) if $\varphi$ is satisfied, we will arrive at the states $a_1(a_1 x+b_1)+b_1$ and $a_2(a_1 x+b_1)+b_2 $.
           \item At the state of $a_2 x + b_2$, we stop the execution of the loop and have the value $h(a_2 x + b_2)$.
       \end{itemize}
        This generates a loop-free program $C_3$ (see~\cref{fig:case3}), from which
        $h_3$ is derived similar to $h_1,h_2$. We have that $h_3 = [\neg \varphi(x)] \cdot f(x) + [\varphi(x)] \cdot (p \cdot H_1 + (1-p) \cdot h(a_2 x + b_2))$, which is the expected value of $h(x)$ after the execution of  the program $C_3$.
    \item \textbf{Case 4}: In this case, at both the states $a_1 x + b_1$ and $a_2 x + b_2$, we choose to execute the loop once more. This generates a loop-free program $C_4$ (see~\cref{fig:case4}). $h_4$ is derived from the program $C_4$ similar to the previous cases. We have that $h_4 = [\neg \varphi(x)] \cdot f(x) + [\varphi(x)] \cdot (p \cdot H_1 + (1-p)\cdot H_2)$, which is the expected value of $h(x)$ after the execution of the program $C_4$.
\end{itemize}

\subsection{Supplementary Material for \textbf{Stage 4}}  \label{app:constraint}

Motzkin's Transposition Theorem is a classical theorem that provides a dual characterization for the satisfiability of a system of strict and non-strict inequalities. Below we present the original Motzkin's Transposition Theorem.

\begin{theorem}[Motzkin's Transposition Theorem~\cite{motzkin1936beitrage}]\label{thm:Motzkin}
    Given the set of linear, and strict linear, inequalities over real-valued variables $x_1, x_2,..., x_n$, 
    \begin{align*}
    \begin{aligned}
        S = \left[  
        \begin{aligned}
            &\sum_{i = 1}^n \alpha_{(1,i)} \cdot x_{i} + \beta_1 \leq 0 \\
            &\qquad \qquad \quad \vdots \\
            &\sum_{i = 1}^n \alpha_{(m,i)} \cdot x_{i} + \beta_m \leq 0 \\
        \end{aligned}
        \right] and
    \end{aligned}
    \quad
    \begin{aligned}
        T = \left[  
        \begin{aligned}
            &\sum_{i = 1}^n \alpha_{(m+1,i)} \cdot x_{i} + \beta_{m+1} < 0 \\
            &\qquad \qquad \qquad \vdots \\
            &\sum_{i = 1}^n \alpha_{(m+k,i)} \cdot x_{i} + \beta_{m+k} < 0 \\
        \end{aligned}
        \right]
    \end{aligned}
\end{align*}

    in which $\alpha_{(1,1)}, ..., \alpha_{(m+k, n)}$ and $\beta_1, ..., \beta_{m+k}$ are real-valued, we have that $S$ and $T$ simultaneously are \textit{not} satisfiable (i.e., they have no solution in $x$) if and only if there exist non-negative real numbers $\lambda_0, \lambda_1, ..., \lambda_{m+k}$ such that either the condition $(A_1)$: 
    $$ \textstyle 0 = \sum_{i=1}^{m+k} \lambda_i \alpha_{(i,1)}, ..., 0 = \sum_{i=1}^{m+k} \lambda_i \alpha_{(i,n)}, 1 = (\sum_{i=1}^{m+k} \lambda_i \beta_i) - \lambda_0,  $$
    or condition $(A_2)$: at least one coefficient $\lambda_i$ for $i$ in the range $\lbrace m+1, ..., m+k \rbrace$ is non-zero and 
    $$ \textstyle 0 = \sum_{i=1}^{m+k} \lambda_i \alpha_{(i,1)}, ..., 0 = \sum_{i=1}^{m+k} \lambda_i \alpha_{(i,n)}, 0 = (\sum_{i=1}^{m+k} \lambda_i \beta_i) - \lambda_0. $$
\end{theorem}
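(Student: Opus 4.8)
\emph{Identifying the statement.} The final statement is the classical Motzkin Transposition Theorem (Theorem~\ref{thm:Motzkin}), a certificate-style characterization of infeasibility of a mixed system $S$ (non‑strict) and $T$ (strict) of affine inequalities. My plan is: (i) dispatch the easy ``if'' direction by a direct nonnegative‑combination argument; (ii) prove the ``only if'' direction either by Fourier--Motzkin elimination with careful strict/non‑strict bookkeeping, or, more slickly, via LP duality plus Farkas' lemma; and (iii) record that the variant used in the body, Theorem~\ref{corollary_motzkin}, is the immediate specialization to the case where $S$ is assumed satisfiable (so only branch $(A_2)$ survives).

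\emph{Soundness (the ``if'' direction).} Suppose $\lambda_0,\dots,\lambda_{m+k}\ge 0$ witness $(A_1)$, and for contradiction take $x$ satisfying $S\wedge T$. Multiplying the $i$-th inequality by $\lambda_i$ and summing over $i=1,\dots,m+k$ gives $\sum_i\lambda_i\big(\sum_j\alpha_{(i,j)}x_j+\beta_i\big)\le 0$, since every summand is $\le 0$. But the coefficient constraints $\sum_i\lambda_i\alpha_{(i,j)}=0$ collapse the left side to $\sum_i\lambda_i\beta_i=\lambda_0+1\ge 1>0$, a contradiction. Under $(A_2)$ the same combination is still $\le 0$ and collapses to $\sum_i\lambda_i\beta_i=\lambda_0\ge 0$; however some $\lambda_i$ with $i>m$ is strictly positive and the corresponding inequality of $T$ is strict, so the sum is in fact $<0$, again a contradiction. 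Hence in either case $S\wedge T$ is unsatisfiable.

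\emph{Completeness (the ``only if'' direction).} Eliminate $x_1,\dots,x_n$ one at a time by Fourier--Motzkin elimination, recording for each derived inequality the nonnegative vector of multipliers that produced it and tracking strictness: combining a positive‑coefficient and a negative‑coefficient inequality in $x_j$ yields an inequality that is strict iff at least one parent is strict. After eliminating all variables one is left with a finite set of residual inequalities ``$c\le 0$'' or ``$c<0$'' in zero variables, each carrying multipliers $\mu_1,\dots,\mu_{m+k}\ge 0$ with $\sum_i\mu_i\alpha_{(i,j)}=0$ for all $j$ and $\sum_i\mu_i\beta_i=c$. Since Fourier--Motzkin is sound and complete (also for mixed systems), $S\wedge T$ is unsatisfiable iff some residual is false: either a ``$\le 0$'' residual has $c>0$ — rescale its multipliers so $c=1$ and set $\lambda_0=0$, yielding $(A_1)$ — or a ``$<0$'' residual has $c\ge 0$; if $c>0$ rescale and absorb into $\lambda_0$ to obtain $(A_1)$, and if $c=0$ set $\lambda_0=0$ and note that a strict residual cannot arise from a purely non‑strict combination, so its multiplier vector puts positive weight on some original strict index, which is exactly $(A_2)$.

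\emph{Alternative route and the main obstacle.} A cleaner completeness proof avoids the elimination bookkeeping by LP duality: $S\wedge T$ is satisfiable iff the program $\sup\{\delta : A_1\mathbf{x}+\mathbf{b}_1\le 0,\ A_2\mathbf{x}+\mathbf{b}_2+\delta\mathbf{1}\le 0,\ \delta\le 1\}$ has value $>0$ (feasibility at $\delta_0:=\min(1,\min_{i>m}(-\sum_j\alpha_{(m+i,j)}x_j-\beta_{m+i}))>0$ one way, reading off $\mathbf{x}$ the other). If the value is $\le 0$ then either $S$ is itself infeasible, and Farkas applied to $S$ gives $(A_1)$ directly; or $S$ is feasible, so the program is feasible and bounded, and strong duality yields $y_1,y_2,\mu\ge 0$ with $A_1^\top y_1+A_2^\top y_2=0$, $\mathbf{1}^\top y_2+\mu=1$, and $\mathbf{b}_1^\top y_1+\mathbf{b}_2^\top y_2-\mu\ge 0$; putting $\lambda_i=(y_1)_i$, $\lambda_{m+i}=(y_2)_i$, $\lambda_0=\mu$, the case $>0$ gives $(A_1)$ after rescaling, and the case $=0$ gives $(A_2)$ once one observes that $y_2=0$ would force $A_1^\top y_1=0,\ \mathbf{b}_1^\top y_1=1$, i.e.\ $S$ infeasible — a contradiction. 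The one genuine subtlety in either route is precisely the strict/non‑strict accounting: guaranteeing that the infeasibility certificate for a system containing strict inequalities ``remembers'' that a strict inequality was used with positive weight. This is exactly the content of branch $(A_2)$ (equivalently the linear constraint $\sum_{i=m+1}^{m+k}\lambda_i>0$ in the remark following Theorem~\ref{corollary_motzkin}), and deriving Theorem~\ref{corollary_motzkin} is then routine: assuming $S$ satisfiable rules out $(A_1)$, leaving precisely the stated coefficient equalities together with positivity of some strict‑index multiplier.
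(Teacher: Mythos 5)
The paper does not actually prove Theorem~\ref{thm:Motzkin}: it is quoted verbatim as the classical Motzkin Transposition Theorem with a citation to the 1936 source, and the only argument the appendix supplies is the derivation of the specialized variant (Theorem~\ref{corollary_motzkin}) \emph{from} it, using Farkas's lemma to show that branch $(A_1)$ collapses into $(A_2)$ once $S$ is assumed satisfiable. Your proposal therefore does strictly more than the paper: you give a self-contained proof of the classical theorem itself. Your argument is correct. The soundness direction (summing the inequalities with weights $\lambda_i$, using the coefficient equalities to collapse the left side to $\sum_i\lambda_i\beta_i$, and deriving $\ge 1>0$ under $(A_1)$ or a strict--versus--nonnegative clash under $(A_2)$) is exactly the standard argument. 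For completeness, both of your routes work: the Fourier--Motzkin route is sound provided one accepts the (true, standard, but only asserted here) facts that elimination is complete for mixed strict/non-strict systems and that a strict residual necessarily carries positive weight on some original strict inequality; the LP-duality route is cleaner and your handling of the degenerate case $y_2=0$ (which would yield a Farkas certificate contradicting feasibility of $S$) is the right way to force the $(A_2)$ positivity condition. Your closing observation that Theorem~\ref{corollary_motzkin} follows by ruling out $(A_1)$ under satisfiability of $S$ matches what the paper actually proves in Appendix~\ref{app:constraint}, where the same point is made by showing $(A_1)\implies(A_2)$ via Farkas. The only caveat is presentational: if this were to replace the paper's treatment, the Fourier--Motzkin strictness bookkeeping should be spelled out as an induction on the elimination steps rather than asserted, and the edge case $k=0$ (no strict inequalities, so $(A_2)$ is vacuous) should be mentioned.
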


In our work, we consider the variant form of Motzkin's Transposition Theorem (see~\cref{corollary_motzkin}). Theorem~\ref{corollary_motzkin} is first proposed in \cite[Theorem 4.5 and Remark 4.6]{DBLP:journals/toplas/ChatterjeeFNH18} without proof. We give a complete proof here.

\smallskip
\noindent\textbf{Theorem}~\ref{corollary_motzkin}. [Corollary of Motzkin's Transposition Theorem]
Let $S$ and $T$ be the same systems of linear inequalities as that in~\cref{thm:Motzkin}. If $S$ is satisfiable, then $S\wedge T$ is unsatisfiable iff there exist non-negative reals $\lambda_0, \lambda_1, ..., \lambda_{m+k}$ and at least one coefficient $\lambda_i$ for $i\in \lbrace m+1, ..., m+k \rbrace$ is non-zero, such that:
\begin{equation*}
\textstyle 0 = \sum_{i=1}^{m+k} \lambda_i \alpha_{(i,1)}, ..., 0 = \sum_{i=1}^{m+k} \lambda_i \alpha_{(i,n)}, 0 = (\sum_{i=1}^{m+k} \lambda_i \beta_i) - \lambda_0. 
\end{equation*}
i.e., the condition $(A_2)$ in~\cref{thm:Motzkin}.

\smallskip
Before we proof the theorem, we introduce the desired theorem: Farkas's Lemma:

\begin{lemma}[Farkas's lemma~\cite{farkas1894fourier}]
    Consider the following system of linear inequalities over real-valued variables $x_1, x_2,..., x_n$,
    \begin{equation*}
        S = \left[
        \begin{aligned}
            &\alpha_{(1,1)}x_{1}&  &+ \cdots + &  &\alpha_{(1, n)} x_{n}&  + &\beta_1&  \leq 0\\
            &\vdots&               &\vdots&       &\vdots&                   &\vdots  \\ 
            &\alpha_{(m,1)}x_{1}&  &+ \cdots + &  &\alpha_{(m, n)} x_{n}&  + &\beta_m&  \leq 0 \\
        \end{aligned}
        \right]
    \end{equation*}
    When $S$ is satisfiable, it entails a given linear inequality 
    $$\phi: c_1 x_1 + ... +c_n x_n +d \leq 0$$
    if and only if there exist non-negative real numbers $\lambda_0, \lambda_1, ..., \lambda_{m}$, such that 
    $$ c_1 = \sum_{i=1}^{m} \lambda_i \alpha_{(i,1)}, ..., c_n = \sum_{i=1}^{m} \lambda_i \alpha_{(i,n)}, d = (\sum_{i=1}^{m} \lambda_i \beta_i) - \lambda_0$$
    Furthermore, $S$ is unsatisfiable if and only if the inequality $1 \leq 0$ can be derived
    as shown above.
\end{lemma}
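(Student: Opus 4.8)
The statement is the classical Farkas Lemma in affine inequality form, and although it is cited, I would give a self-contained proof by Fourier--Motzkin elimination, matching the lineage in \cite{farkas1894fourier}. The plan is to dispatch the easy directions by a one-line nonnegative-combination calculation, reduce the hard direction to a single certified-infeasibility statement, and prove the latter by eliminating variables one at a time.

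\textbf{Reduction to a certified infeasibility statement.} First I handle the ``if'' part of the entailment claim: given nonnegative $\lambda_0,\dots,\lambda_m$ with $c_j=\sum_{i=1}^m\lambda_i\alpha_{(i,j)}$ and $d=\sum_{i=1}^m\lambda_i\beta_i-\lambda_0$, summing the rows of $S$ with weights $\lambda_i$ yields $\sum_j c_j x_j + (d+\lambda_0)\le 0$ at every point of $S$, hence $\sum_j c_j x_j + d\le -\lambda_0\le 0$, i.e. $S\models\phi$. For the hard ``only if'' direction I reduce entailment to infeasibility: since $S$ is satisfiable, $S\models\phi$ holds iff the mixed system $S'=S\cup\{-\sum_j c_j x_j - d<0\}$ (the negation of $\phi$) has no solution. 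Thus both halves of the lemma follow from one \emph{certified infeasibility} result for systems that may contain a single strict inequality.

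\textbf{The Fourier--Motzkin elimination lemma (core step).} I would prove, by induction on the number of variables, that for a finite system $\Sigma$ of linear inequalities (each $\le$ or $<$) over $x_1,\dots,x_n$ there is a system $\Sigma'$ over $x_1,\dots,x_{n-1}$ such that (a) $\Sigma'$ is satisfiable iff $\Sigma$ is, and (b) every inequality of $\Sigma'$ is a nonnegative linear combination of at most two inequalities of $\Sigma$, carrying a strict flag iff at least one of its two parents is strict. To eliminate $x_n$ I partition the rows by the sign of the coefficient of $x_n$ into lower-bound rows, upper-bound rows, and rows free of $x_n$; $\Sigma'$ keeps the last group verbatim and, for each lower/upper pair, adds the positive combination cancelling $x_n$. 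Projection-equivalence (a) is the standard fact that a value of $x_n$ lying between all lower and upper bounds exists iff every lower bound is below every upper bound, and the strictness bookkeeping in (b) records that a combination is strict exactly when a strict parent contributes. Iterating $n$ times leaves a system of constant inequalities $\gamma\le 0$ or $\gamma<0$, each of which is a nonnegative combination of the original rows, where I accumulate the weights $\lambda_i$.

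\textbf{Assembling the two parts and the obstacle.} Running elimination on the all-nonstrict system $S$, it is unsatisfiable iff some residual constant row is violated, i.e. nonnegative $\lambda_i$ give $\sum_i\lambda_i\alpha_{(i,j)}=0$ for all $j$ and $\sum_i\lambda_i\beta_i=\gamma>0$; rescaling by $1/\gamma$ produces the derivation of $1\le 0$ (with $\lambda_0=0$, $d=1$), which is the ``furthermore'' clause, whose converse is the easy direction applied to $\phi:1\le 0$. Running elimination on $S'$, infeasibility produces a contradictory constant row; a combination using only the nonstrict rows of $S$ would certify $S$ itself infeasible, contradicting satisfiability, so the strict row $-\sum_j c_j x_j-d<0$ must enter with some weight $\lambda'>0$. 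Cancellation of the variable coefficients gives $\lambda'\mathbf c=\sum_i\lambda_i\boldsymbol\alpha_i$ and the contradictory constant gives $\sum_i\lambda_i\beta_i-\lambda' d\ge 0$; dividing by $\lambda'$ and setting $\bar\lambda_i=\lambda_i/\lambda'$ yields $c_j=\sum_i\bar\lambda_i\alpha_{(i,j)}$ and $\lambda_0:=\sum_i\bar\lambda_i\beta_i-d\ge 0$, exactly the claimed certificate. The main obstacle is the elimination lemma itself: formulating the strict/nonstrict flag correctly and proving projection-equivalence so that (i) every derived row is provably a nonnegative combination of originals, and (ii) the strict row is forced to receive positive weight in the entailment reduction. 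Once this bookkeeping is in place, both directions and the ``furthermore'' clause follow by routine arithmetic.
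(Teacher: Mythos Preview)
The paper does not actually prove this lemma: it is stated as the classical Farkas Lemma with a citation to \cite{farkas1894fourier} and is then invoked as a black box in the proof of Theorem~\ref{corollary_motzkin}. So there is no ``paper's own proof'' to compare against.

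That said, your proposal is a correct and self-contained proof along one of the standard routes. The reduction of the entailment direction to certified infeasibility of $S\cup\{\neg\phi\}$ is the right move, and the Fourier--Motzkin elimination with strict/non-strict bookkeeping is exactly what is needed to extract the nonnegative multipliers. Your key observation---that in the mixed system the strict row must receive positive weight because otherwise the residual contradiction would already witness infeasibility of the satisfiable system $S$---is the crux, and you have it right. The normalization by $\lambda'$ and the handling of the ``furthermore'' clause are both fine. The only place to be slightly careful is in the projection-equivalence step when one side (lower or upper) of the $x_n$-bounds is empty: then $x_n$ can be chosen freely and no pairwise constraints are generated, which your description implicitly handles but is worth stating explicitly in a formal write-up.
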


Now we proof the corollary (\cref{corollary_motzkin}).

\begin{proof}
    Proof by contradiction. According to Motzkin's Transposition Theorem, $S$ and $T$ have no solution in $x$ if and only if there exists non-negative real numbers $\lambda_0, \lambda_1, ..., \lambda_{m+k}$ such that either condition $(A_1)$ or $(A_2)$ is satisfied. We first proof $(\lambda_{m+1} \neq 0) \vee (\lambda_{m+2} \neq 0) \vee ... \vee (\lambda_{m+k} \neq 0)$.

    If it is not satisfied, we assume that $\lambda_{m+1} = ... = \lambda_{m+k} = 0$. Then we know the condition $(A_1)$ must be satisfied and we have (By applying the assumption $\lambda_{m+1} = ... = \lambda_{m+k} = 0$):
    $$ 0 = \sum_{i=1}^{m} \lambda_i \alpha_{(i,1)}, ..., 0 = \sum_{i=1}^{m} \lambda_i \alpha_{(i,n)}, \sum_{i=1}^{m} \lambda_i \beta_i = \lambda_0 + 1 \geq 1,  $$
    By applying Farkas's Lemma, we have:
    $$c_1 = \sum_{i=1}^{m} \lambda_i \alpha_{(i,1)} = 0, ..., c_n = \sum_{i=1}^{m} \lambda_i \alpha_{(i,n)}  = 0, d = (\sum_{i=1}^{m} \lambda_i \beta_i) - \lambda_0 = \lambda_0 + 1 - \lambda_0 = 1,$$
    Thus we have:
    $$\phi = c_1 x_1 + ... +c_n x_n +d = d = 1 \leq 0$$
    if and only if $S$ is not satisfiable, which contradicts the assumption, so the assumption does not hold. We have proved $(\lambda_{m+1} \neq 0) \vee (\lambda_{m+2} \neq 0) \vee ... \vee (\lambda_{m+k} \neq 0)$. 

    If condition $(A_1)$ is satisfied, then exists non-negative real numbers $\lambda_0, \lambda_1, ..., \lambda_{m+k}$ and $(\lambda_{m+1} \neq 0) \vee (\lambda_{m+2} \neq 0) \vee ... \vee (\lambda_{m+k} \neq 0)$(what we just prove) such that 
    $$ 0 = \sum_{i=1}^{m+k} \lambda_i \alpha_{(i,1)}, ..., 0 = \sum_{i=1}^{m+k} \lambda_i \alpha_{(i,n)}, 1 = (\sum_{i=1}^{m+k} \lambda_i \beta_i) - \lambda_0,  $$
    let $\lambda'_0 = \lambda_0 + 1 \geq 0$ and we can find that it also satisfies the condition $(A_2)$, that is $A_1 \implies A_2$. Thus, Motzkin's Transposition Theorem can be simplified as: If $S$ is satisfiable, then $S$ and $T$ have no solution in $x$ if and only if there exists non-negative real numbers $\lambda_0, \lambda_1, ..., \lambda_{m+k}$, such that:
    $$ ((A_1 \vee A_2) \wedge (A_1 \implies A_2)) \iff A_2 $$  
    Thus we prove~\cref{corollary_motzkin}.
\end{proof}

\subsection{Application of Putinar's Positivstellensatz~\cite{putinar}}\label{app:putinar}
We recall Putinar's Positivstellensatz below.
\begin{theorem}[Putinar's Positivstellensatz~\cite{putinar}] \label{thm:putinar} Let $V$ be a finite set of real-valued variables and $g, g_1, \ldots, g_m \in \mathbb{R}[V]$ be polynomials over $V$ with real coefficients. Consider the set $\mathcal{S}:=\{\mathbf{x} \in \mathbb{R}^V\,\mid\, g_i(\mathbf{x}) \geq 0  \mbox{ for all }1\le i\le m \}$ which is the set of all real vectors at which every $g_i$ is non-negative. If (i)~there exists some $g_k$ such that the set $\{ \mathbf{x} \in \mathbb{R}^V ~\mid~ g_k(\mathbf{x}) \geq 0  \}$ is compact and (ii)~$g(\mathbf{x})>0$ for all $\mathbf{x} \in \mathcal{S}$, then we have that 
	\begin{equation} \label{eq:putinar}
	\textstyle g = f_0 + \sum_{i=1}^m f_i \cdot g_i
	\end{equation}
	for some polynomials $f_0,f_1\dots, f_m\in \mathbb{R}[V]$ such that each polynomial $f_i$ is the  a sum of squares (of polynomials in $\mathbb{R}[V]$), i.e.~$f_i = \sum_{j=0}^{k} q_{i,j}^2$ for polynomials $q_{i,j}$'s in $\mathbb{R}[V]$.
\end{theorem}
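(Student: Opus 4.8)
The plan is to follow the classical moment-problem argument of Putinar. Write $M := M(g_1,\dots,g_m) = \{\sigma_0 + \sum_{i=1}^m \sigma_i\cdot g_i \mid \text{each } \sigma_i \text{ a sum of squares in } \mathbb{R}[V]\}$ for the quadratic module generated by $g_1,\dots,g_m$; the desired representation $g = f_0 + \sum_i f_i\cdot g_i$ with the $f_i$ sums of squares is precisely the statement $g\in M$. So it suffices to prove $g\in M$ under hypotheses (i) and (ii).

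First I would use the compactness hypothesis (i) to establish that $M$ is \emph{Archimedean}, i.e.\ that there is a real $N$ with $N-\sum_{v\in V}v^2\in M$. Since $\{\mathbf{x}\mid g_k(\mathbf{x})\ge 0\}\supseteq\mathcal{S}$ is compact, $\mathcal{S}$ lies inside some ball, and one derives (this is the standard but slightly delicate point, where if necessary one adjoins $N-\sum_v v^2\ge 0$ as a redundant generator) that $N-\sum_v v^2\in M$ for $N$ large. A routine induction on degree then shows that for every $p\in\mathbb{R}[V]$ there is $\lambda>0$ with $\lambda\pm p\in M$ — every polynomial is ``bounded'' modulo $M$.

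Next, suppose for contradiction that $g\notin M$. Then $M$ is a convex cone in the $\mathbb{R}$-vector space $\mathbb{R}[V]$ with $1$ as an algebraic interior point (for any $p$, $1\pm\varepsilon p\in M$ for small $\varepsilon>0$), so a Hahn--Banach / Eidelheit separation argument in the finest locally convex topology on $\mathbb{R}[V]$ yields a nonzero linear functional $L:\mathbb{R}[V]\to\mathbb{R}$ with $L\ge 0$ on $M$ and $L(g)\le 0$; since $1$ is interior and $L\neq 0$ we get $L(1)>0$ and may normalize $L(1)=1$. Now I would invoke the solution of the moment problem on compact semialgebraic sets: because $L\ge 0$ on the Archimedean module $M$, $L$ is nonnegative on every polynomial nonnegative on $\mathcal{S}$, and Haviland's theorem (or, as in Putinar's original route, a GNS construction together with the spectral theorem for the commuting bounded multiplication operators $X_v$, bounded thanks to $N-\sum_v v^2\in M$) produces a Borel probability measure $\mu$ supported on $\mathcal{S}$ with $L(p)=\int_{\mathcal{S}}p\,d\mu$ for all $p$. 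Then, by hypothesis (ii), $L(g)=\int_{\mathcal{S}}g\,d\mu>0$ because $g>0$ on $\mathcal{S}$ and $\mu$ is a nonzero measure, contradicting $L(g)\le 0$. Hence $g\in M$, which is the claimed representation.

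The main obstacle is the passage from the abstract separating functional $L$ to an honest representing measure supported on $\mathcal{S}$: this is where the Archimedean property — hence hypothesis (i) — is indispensable, and it is the technical heart of the theorem. Deriving the Archimedean property cleanly from ``some level set $\{g_k\ge 0\}$ is compact'' in the first step is the other point requiring care.
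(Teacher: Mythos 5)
The paper does not prove this statement at all: it is imported verbatim as a classical result and cited to Putinar's 1993 paper, so there is no in-paper argument to compare against. Your sketch is, as far as it goes, a faithful outline of the standard proof: quadratic module $M(g_1,\dots,g_m)$, Archimedean property from hypothesis (i), Eidelheit--Kakutani separation of $g$ from $M$ using that $1$ is an algebraic interior point, and a GNS/spectral-theorem (or Haviland) construction of a representing probability measure on $\mathcal{S}$, contradicting $L(g)\le 0$ via hypothesis (ii). The overall architecture is right and each step is the one used in the literature.

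The one place where your sketch is genuinely thin is the step you yourself flag: passing from ``some $\{g_k\ge 0\}$ is compact'' to ``$N-\sum_{v\in V}v^2\in M$.'' Your parenthetical suggestion to ``adjoin $N-\sum_v v^2$ as a redundant generator'' does not work as stated: adjoining it enlarges the module, and you would then only obtain a representation in the larger module, which is not the claimed conclusion. The honest argument (Putinar's Lemma~1.3, or W\"ormann's refinement) shows that $N-\sum_v v^2$ already lies in the submodule $M(g_k)\subseteq M$, and the known proofs of this implication themselves rely on Schm\"udgen's Positivstellensatz for the preordering generated by $g_k$ --- so this is not a small technical adjustment but an invocation of another deep theorem. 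If you intend this as a complete proof rather than a roadmap, that step needs to be either carried out or explicitly cited; everything after it (separation, boundedness of the multiplication operators $X_v$ from the Archimedean property, support of the spectral measure in $\mathcal{S}$, and the final contradiction $\int_{\mathcal{S}}g\,d\mu>0$) is sound as sketched.
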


In our comparison, we utilize the sound form in \eqref{eq:putinar} for witnessing a polynomial $g$ to be non-negative over a semi-algebraic set $P$  for each inductive constraint $\forall x\in P, g(x)\ge 0$. 

In our experiments, the  maximum degree of unknown SOS polynomials is set to the degree of the polynomial template plus 2.

\section{Supplementary Material for Section~\ref{sec:experiment}}\label{app:section6}

\subsection{Continued Fraction}\label{app:continued fraction}
Continued fraction can represent a real number $r$ by an expression as follows:
\begin{equation*}
    r = a_0 + \frac{1}{a_1 + \frac{1}{a_2 + \frac{1}{\ddots}}}
\end{equation*}
and $r$ is abbreviated as $[a_0, a_1, a_2,...]$. In our implementation, we first transform each output float coefficient into its continued fraction form $[a_0, a_1, a_2,...]$. Then we perform the truncation operation that we find the first $a_i (i\geq 1)$ that is greater than a large threshold, for which we choose $100$, and truncate from there (including this number). We keep only the previous parts, as our rational approximation results. 

\subsection{Experimental Results of Piecewise Linear Lower Bounds}
\label{app:linear_lower}

We present the experimental results of piecewise linear lower bounds in this section. For the piecewise linear lower bound experiments, we consider the same benchmarks and return functions $f$ as in~\cref{sec:linear_res}, and use the same invariant from {\em Invariants} in {\bf Experimental Setting} for each benchmark. Moreover, we follow the rest of the experimental setup described in {\bf Experimental Setting} in~\cref{sec:experiment}.

\smallskip
\noindent{\bf Answering RQ1.}
We present the experimental results for the synthesis of piecewise linear lower bounds on the 13 benchmarks in~\cref{table:2}. In this table, we only show the piecewise results with $(k\leq 3)$-induction. We observe that on most of the benchmarks, we can obtain a linear lower bound via the conventional approach, i.e., 1-induction, while we can synthesize better (tighter) piecewise linear lower bounds via $(k>1)$-induction within a few minutes. Only on the benchmark \textsc{Growing Walk-variant}, we require $(k > 1)$-induction to synthesize a lower bound. Our approach derives the exact bound, i.e., the tightest upper bound, on the benchmark~\textsc{Equal-Prob-Grid}. The exactness of these bounds is established by comparison with the piecewise upper bound presented in~\cref{sec:linear_res}.

\begin{table*}

    \renewcommand{\arraystretch}{2.0}
	\caption{Experimental Results for {\bf RQ1} and {\bf RQ2}, Linear Case (Lower Bounds). "$f$" stands for the return function considered in the benchmark, "T(s)" (of our approach) stands for the execution time of our approach (in seconds), including the parsing from the program input, transforming the $k$-induction constraint into the bilinear problems, bilinear solving time and verification time. "Conventional Approach ($k=1$)" stands for the monolithic linear lower bound synthesized via 1-induction, "$k$" stands for the $k$-induction we apply, "Solution" stands for the linear candidate solved by Gurobi, and "Piecewise Linear Lower Bound" stands for our piecewise results. "Result" stands for the synthesized results by other approaches and "T(s)" (of their approaches) stands for the execution time of their tools.}
	\label{table:2}
	\resizebox{\textwidth}{!}{
		\begin{threeparttable}
			\begin{tabular}{|c|c|c|c|c|c|c|c|c|c|c|c|}
				\hline
                \multicolumn{1}{|c|}{\multirow{2}{*}{\textbf{Benchmark}}}  &
				\multicolumn{1}{c|}{\multirow{2}{*}{\textbf{$f$}}}      &
                \multicolumn{2}{c|}{\multirow{1}{*}{\makecell{\textbf{Conventional} \\ \textbf{Approach} ($k=1$)}}}  &
			\multicolumn{4}{c|}{\multirow{1}{*}{\textbf{Our Approach}}}  &
                \multicolumn{2}{c|}{\multirow{1}{*}{\textsc{cegispro2}}}  &
                \multicolumn{2}{c|}{\multirow{1}{*}{\textsc{exist}}}  
                \\ \cline{3-12}
                \multicolumn{1}{|c|}{} & \multicolumn{1}{c|}{} &
                \multicolumn{1}{c|}{\textbf{Result}}  & 
				\multicolumn{1}{c|}{\textbf{T(s)}} &
                    \multicolumn{1}{c|}{\textbf{$k$}} &
				\multicolumn{1}{c|}{\textbf{Solution}}   &    
				\multicolumn{1}{c|}{\textbf{Piecewise Linear Lower Bound}} & 
    		      \multicolumn{1}{c|}{\textbf{T(s)}} &
                    \multicolumn{1}{c|}{\textbf{ Result  }} &
                    \multicolumn{1}{c|}{\textbf{ T(s) }} &
                    \multicolumn{1}{c|}{\textbf{ Result  }} &
                    \multicolumn{1}{c|}{\textbf{ T(s) }}  
                \\ \hline \hline
                {\textsc{Geo}}& $x$  & $x$ & 0.33  &3   &  $x$ & $[c>0]\cdot x + [c\leq 0]\cdot (x+\frac{3}{4})$ & 2.19  & \makecell{$[c>0]\cdot x + $\\$[c\leq 0]\cdot (x+\frac{3}{4})$} & 0.06  & $x+[c=0]$ & 83.01 \\
                \hline
                {\textsc{k-Geo}}& $y$  & $y$ & 100.18 & 3   &{$y$ } & \makecell{$[k>N]\cdot y + [k\leq N] \cdot $ \\ $(0.75x+y+0.25)$} &  133.81   & \makecell{$[k>N]\cdot y + $\\$[k\leq N] \cdot $\\$(-k+N+x+y+1)$} & 0.2 & \makecell{$y + [k\leq n] \cdot$\\$(0.8x  -0.3k $\\$+ 0.3n + 0.5)$} & 239.95 \\
                \hline
                {\textsc{Bin-ran}}& $y$ & \makecell{$-0.5i+$\\$y+5$}  & 1.18 & 2  &  \makecell{$-\frac{21}{29}*i+$\\$y+\frac{210}{29}$}  & \makecell{$[i>10] \cdot y + [1< i\leq 10] \cdot $ \\ $ (-\frac{21}{29}i+y+\frac{9}{20}x+\frac{1068}{145}) $} & 106.59  & \makecell{$[i>10] \cdot y + [ i\leq 10] \cdot $ \\ $ (\frac{9}{19}*x +y -$\\$\frac{53059}{112955}*i +\frac{154900}{22591}) $} & 0.26  &fail &- \\
                \hline
                {\textsc{Coin}}& $i$  & $i$ & 100.51 &2  & $i$ &  $[y\neq x]\cdot i + [y=x] \cdot (i+\frac{13}{8})$ & 5.99  & \makecell{$[y\neq x]\cdot i + $\\$[y=x] \cdot (i+\frac{13}{8})$} & 0.07 & $i + [x = y]\cdot 2.2$ & 116.67\\
                \hline
                {\textsc{Mart}}& $i$  & $i$ & 0.37 & 3   & $i$ &  $[x\leq 0]\cdot i + [x>0]\cdot (i+1.5)$ & 2.44  &\makecell{violation of \\ non-negativity } & -   & $i + [x > 0] *  2 $ & 122.93\\
                \hline
                {\makecell{\textsc{Growing} \\\textsc{Walk}}} & $y$ & $x+y$  & 100.16 & 3   & $x+y$ &  \makecell{$[x<0]\cdot y + [x\geq 0]\cdot (x+y+\frac{5}{4})$} & 101.80 & \makecell{violation of \\ non-negativity } & - & fail & -\\
                \hline
                {\makecell{\textsc{Growing} \\\textsc{Walk} \\ \textsc{-variant}}} & $y$ & \ding{55} & - & 3  & $y-1$  & \makecell{$[x<0] \cdot y + $ \\ $[0\leq x <1] \cdot (y+0.5x-1)$ + \\ $[1\leq x <2]\cdot (y+0.5x-1.5)$ \\$+[2 \leq x]\cdot (y+0.75x-2)$ } & 125.53  & \makecell{violation of \\ non-negativity } & - & fail & -\\
                \hline
                {\makecell{\textsc{Expected} \\ \textsc{Time}}}& $t$ & \makecell{$1.1111x$\\$+t$} & 0.25 & 3  &$1.240x+t$  & \makecell{$[x<0]\cdot t+ $\\$[0\leq x <1]\cdot (0.124x+t+0.9)$\\$[1 \leq x \leq 10] \cdot(1.1284x+t+1.9116)$} & 125.54  & \makecell{violation of \\ non-negativity } & - & fail & - \\
                \hline
                {\makecell{\textsc{Zero-Conf} \\ \textsc{-variant}}} & $\textit{cur}$  &  $\textit{cur}$ &100.32 &3  &$\textit{cur}$ &  
                \makecell{$[\textit{est} >0]\cdot \textit{cur}+$ \\ $[\textit{start}==0 \wedge \textit{est} \leq 0]\cdot (\textit{cur} + 1.9502)$ \\ $+[\textit{start}\geq 1 \wedge \textit{est} \leq 0]\cdot(\textit{cur} + 0.287)$} & 183.63  & \makecell{violation of \\ non-negativity } & - & inner error & - \\
                \hline
                {\makecell{\textsc{Equal-} \\ \textsc{Prob-Grid}}}& $\textit{goal}$  &$\textit{goal}$ & 100.38 & 2  & $\textit{goal}$  & \makecell{$[a>10\vee b>10  \vee \textit{goal} \neq 0]\cdot \textit{goal}$ \\ $[a\leq 10 \wedge b \leq 10 \wedge \textit{goal} = 0]\cdot 1.5$} &139.80  & \makecell{$[a>10\vee b>10 \vee $\\$\textit{goal} \neq 0]\cdot \textit{goal}+$ \\ $[a\leq 10 \wedge b\leq 10 $\\$ \wedge \textit{goal} = 0] \cdot 1.5$} & 0.1  & inner error & - \\
                \hline
                {\textsc{RevBin}}& $z$  & $z+2x-2$ & 100.14 & 3  & \makecell{$z+2x$\\$-2$}   & \makecell{$[x<1] \cdot z +$ \\ $[1 \leq x <2] \cdot (z+x) $ \\ $+[x \geq 2] \cdot (z+2x-2)$} & 129.46  & \makecell{$[x<1] \cdot z +$ \\ $[x \geq 1] \cdot (z+2x-2)$} & 0.11  & \makecell{$[x > 0] \cdot 2x$\\ + z$ $} & 122.85 \\
                \hline
                {\textsc{Fair Coin}}& $i$  & $i$ &100.34 & 3 & $i$  &  \makecell{$[x>0 \vee y>0] \cdot i+$\\$[x=0 \wedge y=0]\cdot(i + \frac{5}{4})$}  &43.84  & \makecell{$[x>0 \vee y>0] \cdot i+$ \\$[x=0 \wedge y=0] $\\$ \cdot(i + \frac{5}{4})$} & 0.06  & \makecell{$[ x+y = 0]\cdot 1.3$\\$+i$} &82.67\\ 
                \hline
                {\makecell{\textsc{St-Petersburg} \\ \textsc{variant}}} & $y$  & $y$ & 0.32 & 3  & $y$ & $[x > 0]\cdot y+ [x \leq 0]\cdot \frac{11}{8} y$ &2.28  & \makecell{$[x > 0]\cdot y+ $\\$[x \leq 0]\cdot \frac{11}{8} y$} & 0.21  & \makecell{$ [x=0] \cdot 0.4y$\\$+y$} & 98.05 \\
                \hline
                
			\end{tabular}
	\end{threeparttable}}

\end{table*}

\smallskip
\noindent{\bf Answering RQ2.}
We answer RQ2 by comparing our approach with the most related approaches~\cite{DBLP:conf/tacas/BatzCJKKM23,DBLP:conf/cav/BaoTPHR22} in~\cref{table:2}. The relevant explanations for {\bf RQ2} in~\cref{table:2} are totally the same to~\cref{table:1}. These two relevant works require a (possibly piecewise) lower bound to be verified as an additional program input and return a sub-invariant that is sufficient to \emph{verify} the input lower bound, which is the most different aspect from our work. \textsc{cegispro2} produce the results by a proof rule derived from the original OST (see Section 6 in~\cite{DBLP:conf/tacas/BatzCJKKM23} and~\cref{app:classical_OST}), while we apply an extended OST (see~\cref{thm:ost-variant}). To have a richer comparison, we also feed our benchmarks paired with the piecewise lower bounds synthesized by our approach to \textsc{cegispro2}. On 5 of our benchmarks (e.g., \textsc{Growing Walk-variant, Zero-Conf-variant}, etc), it reports failure (violation of non-negativity). On 6 of our benchmarks, \textsc{cegispro2} produce the same results with our inputs. Only on two benchmarks (\textsc{k-Geo, Bin-Ran}) \textsc{cegispro2} produce a different result to verify our inputs.

For the comparison with \textsc{exist}, we note that \textsc{exist} synthesizes sub-invariants without the application of OST, which might be unsound for proving the input lower bounds (see also Section 7 in~\cite{DBLP:conf/tacas/BatzCJKKM23}). We compare with their tool on our benchmarks by assuming the soundness of their lower bounds, and feed them our piecewise linear lower bounds as an additional program input. 
On benchmarks~\textsc{Geo, k-Geo, Coin, RevBin, Mart, Fair Coin, St-Petersburg variant}, their tool can generate a tighter sub-invariant to \emph{verify} our piecewise lower bound. On these benchmarks, due to the existence of exact invariants, they are usually able to find a tighter sub-invariant by a heuristic search based on sampling and machine learning at the cost of the long time (usually about or even more than 100s). For the remaining benchmarks, either they cannot generate sub-invariants or there are internal errors within their tool. 

In conclusion,  our approaches can handle many benchmarks that these two works~\cite{DBLP:conf/tacas/BatzCJKKM23,DBLP:conf/cav/BaoTPHR22} cannot handle. When feeding our benchmarks with the bounds synthesized through our approach to
\textsc{cegispro2} and \textsc{exist}, they fail on about $40\%$ of our benchmarks. Over most of the benchmarks that
\textsc{cegispro2} and our approach can handle, our bounds are comparable with theirs. Over most of the benchmarks that
\textsc{exist} and our approach can handle, they spend much more time to generate a slightly tighter bound.

\begin{figure}[htbp]
  \centering
  \subfloat[\textsc{Geo}]
  {\includegraphics[width=0.2\textwidth]{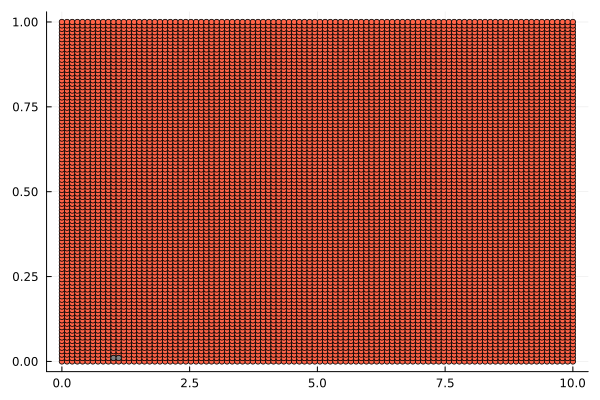}} \hfill   
  \subfloat[\textsc{Mart}]
  {\includegraphics[width=0.2\textwidth]{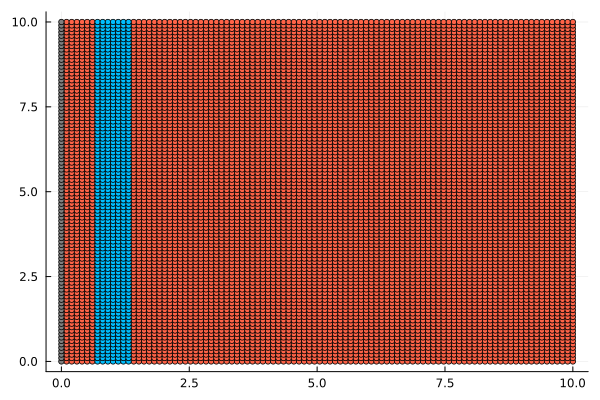}}  \hspace{0.1cm}
  \subfloat[\textsc{Bin-Ran}]
  {\includegraphics[width=0.31\textwidth]{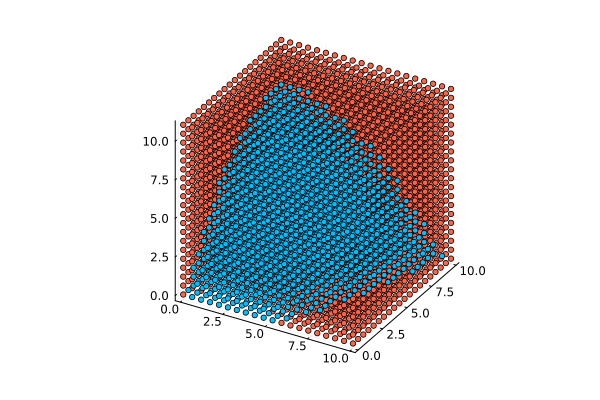}} \hspace{-0.8cm}
  \subfloat[\textsc{Coin}]
  {\includegraphics[width=0.31\textwidth]{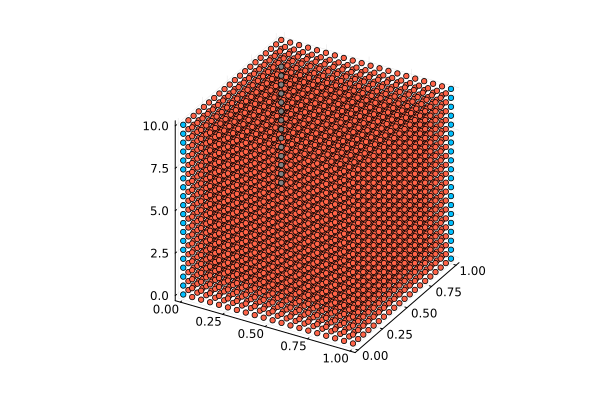}}\hspace{-0.5cm}  \\
  \subfloat[\textsc{Growing Walk}]
  {\includegraphics[width=0.2\textwidth]{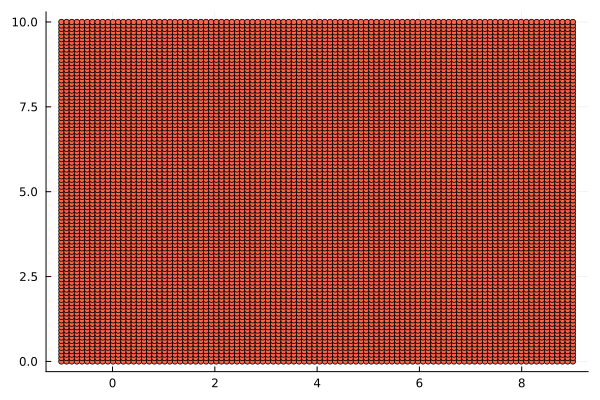}} \hfill   
  \subfloat[\parbox{0.15\textwidth}{\centering \textsc{Growing Walk \\ variant}}]
  {\includegraphics[width=0.2\textwidth]{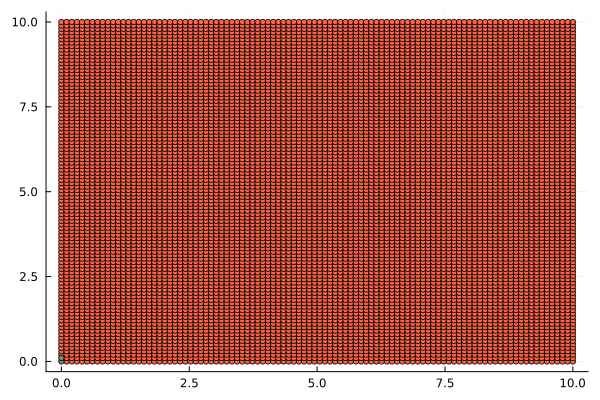}}  \hspace{0.1cm}
  \subfloat[\parbox{0.2\textwidth}{\centering \textsc{Zero Conference\\ variant}}]
  {\includegraphics[width=0.31\textwidth]{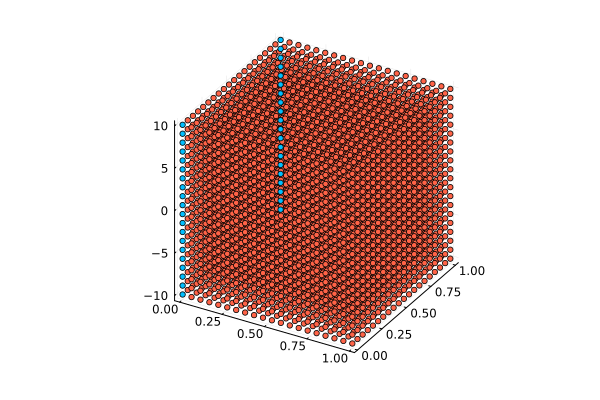}} \hspace{-0.8cm}
  \subfloat[\parbox{0.2\textwidth}{\centering \textsc{Equal Probability \\ Grid Family}}]
  {\includegraphics[width=0.31\textwidth]{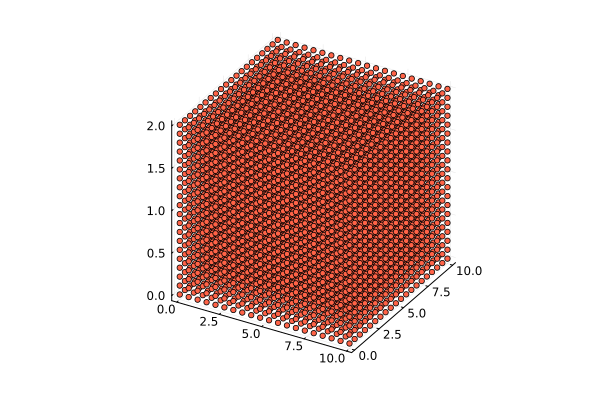}}\hspace{-0.5cm}  \\
  \subfloat[\textsc{Expected Time}]
  {\includegraphics[width=0.2\textwidth]{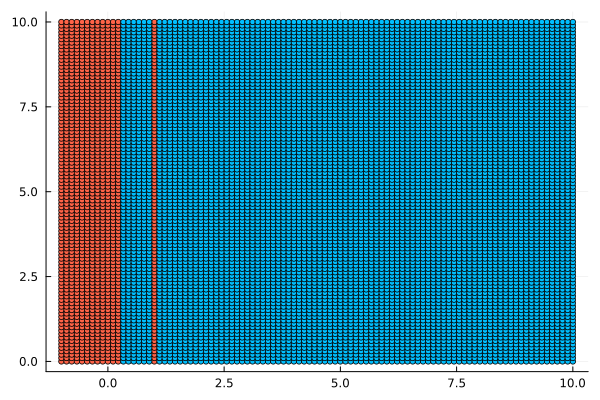}} \hfill   
  \subfloat[\textsc{RevBin}]
  {\includegraphics[width=0.2\textwidth]{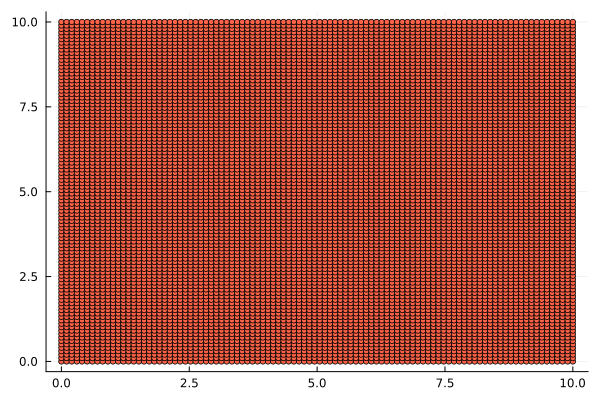}}  \hspace{0.1cm}
  \subfloat[\textsc{Fair Coin}]
  {\includegraphics[width=0.31\textwidth]{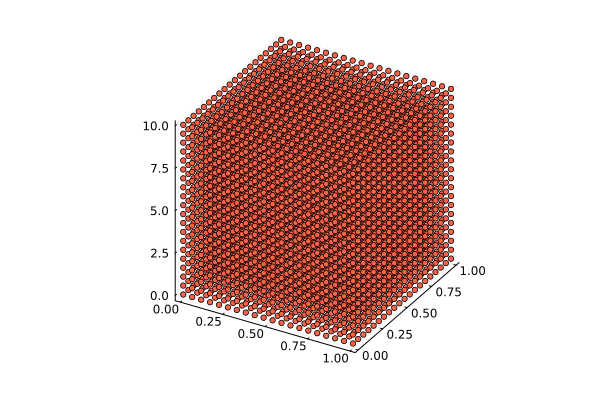}} \hspace{0.1cm}
  \subfloat[\parbox{0.15\textwidth}{\centering \textsc{St-Petersburg \\ variant}}]
  {\includegraphics[width=0.2\textwidth]{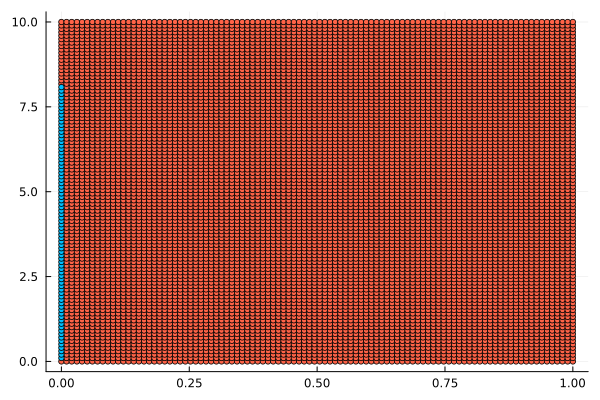}} \hspace{0.6cm}

  \caption{Difference Plots of the Comparison in Piecewise Linear Lower Case}
  \label{fig:diff_plots_lower1}
  
  \begin{tablenotes}
      \item \textcolor{red}{Red} points indicate where our piecewise lower bounds are larger than the monolithic ones \\ by more than $10^{-3}$; 
      \textcolor{blue}{blue} points indicate where the monolithic bounds are larger by more \\ than $10^{-3}$;
      \textcolor{gray}{gray} points denote cases with negligible differences ($\le 10^{-3}$).
  \end{tablenotes}
\end{figure}

\begin{table*}
    \renewcommand{\arraystretch}{1.8}
        \caption{Experimental Results for {\bf RQ3}, Linear Case (Lower Bounds). "$f$" stands for the return function considered in the benchmark, "$k$" stands for the $k$-induction condition we apply in this comparison, "Monolithic Polynomial via 1-Induction" stands for the monolithic polynomial bounds synthesized via 1-induction, and "d" stands for the degree of polynomial template we use. "PCT" stands for the percentage of the points that our piecewise lower bound are lower (i.e., not better) than monolithic polynomial, and "Diff" stands for the unbiased average difference between our piecewise polynomial bound and the monolithic polynomial. A positive value indicates how much tighter our piecewise bounds are on average. }
        \label{table:comparison_lower}
        \resizebox{\textwidth}{!}{
		\begin{threeparttable}
            \begin{tabular}{|c|c|c|c|c|c|c|c|}
				\hline
				\multicolumn{1}{|c|}{\multirow{2}{*}{\textbf{Benchmark}}}  &
				\multicolumn{1}{c|}{\multirow{2}{*}{\textbf{$f$}}}      &
				\multicolumn{2}{c|}{\multirow{1}{*}{\textbf{Our Approach}}}  &
				\multicolumn{2}{c|}{\multirow{1}{*}{\textbf{\textbf{\makecell{Monolithic  Polynomial  via 1-induction }}}}} &
                \multicolumn{1}{c|}{\multirow{2}{*}{\textbf{PCT}}} & 
                \multicolumn{1}{c|}{\multirow{2}{*}{\textbf{Diff}}} \\ 
                \cline{3-6}
                \multicolumn{1}{|c|}{}  & \multicolumn{1}{c|}{} &  
                \multicolumn{1}{c|}{$\; k\;$}  &   
				\multicolumn{1}{c|}{\textbf{Piecewise Linear Lower Bound} } 
                &\multicolumn{1}{c|}{\textbf{d}}
                &\multicolumn{1}{c|}{\textbf{Monolithic Polynomial Lower Bound}}
                &  \multicolumn{1}{c|}{}   &  \multicolumn{1}{c|}{} \\ \hline \hline

                {\textsc{Geo}}& $x$   & 4  & \makecell{$[c>0]\cdot x +$\\$[c\leq 0]\cdot (x+\frac{7}{8})$} &  3 & \makecell{$-0.0313 - 0.1902*c + 1.0478*x - $\\ $0.3980*c^2 + 0.0695*x*c - 0.0019*x^2 - $\\$0.1595*x*c^2 + 0.07227*x^2*c - 0.0147*x^3$} & $0.0\%$  & 2.6872\\
                \hline
                {\textsc{k-Geo}}& $y$  & 3 &     \makecell{$[k>N]\cdot y + $ \\ $[k\leq N] \cdot (0.75x+y+0.25)$} & 2 & \makecell{$44.6223*N -221.2813 - 0.7791*k + 1.0*y $\\$+0.9281*x - 2.1922*N^2 - 0.1043*x^2$} &  $4.19\% $ & 65.432 \\
                \hline
                {\textsc{Bin-ran}}& $y$  &  2  & \makecell{$[i>10] \cdot y+$ \\ $[1< i\leq 10] \cdot (-\frac{21}{29}i+y+\frac{9}{20}x+\frac{1068}{145}) $} & 3 & \makecell{$-22.0746 - 24.4593*i + 33.7063*y +$\\$ 20.7709*x + 1.4945*i^2 + 0.2057*y*i + $\\$ 0.0232*y^2 + 0.4741*x*i + 0.2689*x*y +$\\$ 1.9807*x^2 + 0.0006*i^3 - 0.3133*y*i^2 -$\\$ 0.0111*y^2*i + 0.0049*y^3 - 0.4668*x*i^2 +$\\$ 0.0036*x*y*i + 0.0105*x*y^2 - 0.7437*x^3 $\\$+ 0.04213*x^2*y - 0.7531*x^2*i$} &  33.39\% & 252.02 \\
                \hline
                {\textsc{Coin}}& $i$    &  4 & $[y\neq x]\cdot i + [y=x] \cdot (i+\frac{129}{64})$ & 2 & \makecell{$2.6655 + 1.0002*i - 3622.3830*y - $ \\ $5419.0667*x - 0.0001*i^2 + 0.0007*y*i +$\\$ 3619.71553*y^2 - 0.0008*x*i + $\\$1827.4383*x*y + 3594.2952*x^2$} & 2.0\% & 1587.5 \\
                \hline
                {\textsc{Mart}}& $i$   & 4   & \makecell{$[x\le 0]\cdot i + $\\$[x > 0]\cdot (i+\frac{7}{4})$} & 2 &\makecell{$i + 4.0*x - 2.0*x^2$}  &  1.0\% & 48.736 \\
                \hline
                {\makecell{\textsc{Growing} \\ \textsc{Walk}}}& $y$  & 4  &   \makecell{$[x<0]\cdot y +$ \\ $ [x\geq 0]\cdot (x+y+\frac{13}{8})$} & 3 &\makecell{$-0.0004+1.0003*y+1.3463*x-$ \\ $0.0001*y^2-0.0010*x*y-0.0590*x^2$ \\ $+ 0.0007*x^2*y-0.0022*x^3$} & 0.0\% & 1.909 \\
                \hline
                {\makecell{\textsc{Growing} \\ \textsc{Walk} \\ \textsc{variant}}}& $y$   & 3    &  \makecell{$[x<0]\cdot y +$ \\ $ [0\leq x <1]\cdot (0.5x+y-1) $ \\ $+[1 \leq x<2 ] \cdot (0.5x+y-1.5)$ \\ $+[2 \leq x] \cdot (0.75x+y-2)$} & 3 & \makecell{$-1.0000+1.0000*y-0.3903*x-$ \\ $0.0734*y^2+0.0484*x*y+0.4758*x^2-$ \\ $ 0.0250*x*y^2-0.0484*x^2*y-0.0855*x^3$} &  0.01\% &  23.987 \\
                \hline
                {\makecell{\textsc{Expected} \\ \textsc{Time}}}& $t$  &  3  & \makecell{$[x<0]\cdot t+$\\$[0\leq x <1]\cdot (0.124x+t+0.9)+$\\$[1 \leq x \leq 10] \cdot(1.1284x+t+1.9116)$} & 3 & \makecell{$-0.0784 + 1.0093*t + 3.1426*x - $\\$0.0010*t^2 + 0.0083*x*t - 0.1576*x^2 + $\\$0.0002*x*t^2 + 0.0002*x^2*t + 0.0043*x^3$} &  87.0 \% & -3.2809 \\
                \hline
                {\makecell{\textsc{Zero-Conf} \\ \textsc{-variant}}}& $\textit{cur}$   &3 &  \makecell{$[\textit{est} >0]\cdot \textit{cur}+$ \\ $[\textit{start}==0 \wedge \textit{est} \leq 0]\cdot (\textit{cur} + 1.9502)$ \\ $+[\textit{start}\geq 1 \wedge \textit{est} \leq 0]\cdot(\textit{cur} + 0.287)$} &2
                & \makecell{$140.2458 + 1.0098*cur - 424365.5964*start -$\\$ 587675.0179*est - 0.0066*start*cur + $\\$424267.3602*start^2 -  0.0095*est*cur - $\\$504437.5495*est*start + 587534.7143*est^2$}& 0.64\% & $2.9*10^5$\\
				\hline
                {\makecell{\textsc{Equal-} \\ \textsc{Prob-Grid}}}& $\textit{goal}$  & 2 &    \makecell{$[a>10\vee b>10 \vee \textit{goal} \neq 0]\cdot \textit{goal}$ \\ $+ [a\leq 10 \wedge b\leq 10 \wedge \textit{goal} = 0]\cdot 1.5$} & 2 & \makecell{$0.4950*goal - 0.2020*goal^2 + $\\$0.0053*b*goal - 0.0011*a*goal$} & 0.0\% &  0.8373 \\
				\hline
                {\textsc{RevBin}}& $z$   & 3   &  \makecell{$[x<1] \cdot z +$ \\ $[1 \leq x <2] \cdot (z+x) $ \\ $+[x \geq 2] \cdot (z+2x-2)$} & 2 & \makecell{$-1.5942 + 1.505*z + 2.2516*x - $\\$0.0842*z^2 - 0.3367*x*z - 0.7415*x^2$} & 0.0\% & 32.0514 \\
				\hline
                {\textsc{Fair Coin}}& $i $   &  4  & \makecell{$[x > 0 \vee y>0]\cdot i + $ \\ $[x\leq 0 \wedge y \leq 0]\cdot (i+\frac{21}{16})$} & 2 &\makecell{$1.0000*i - 0.3932*y - 0.39325*x $\\$- 0.3153*i^2 + 0.6305*y*i - 0.7242*y^2 $\\$+ 0.6305*x*i - 0.1796*x*y - 0.7242*x^2$} & 0.0\% & 5.0417\\
                \hline
                {\makecell{\textsc{St-Petersburg} \\ \textsc{variant}}} & $y$ & 3 & $[x > 0]\cdot y+ [x \leq 0]\cdot \frac{11}{8} y$ & 3 &\makecell{$-0.0058 + 1.4976*y - 4576.1318*x - $\\$0.015*y^2 - 40.4398xy + 2420.6062*x^2 +$\\$ 0.0142xy^2 + 39.9432x^2y + 2155.5278*x^3$} & 0.0\% & 964.78\\
                \hline
            \end{tabular}
        \end{threeparttable}}
\end{table*}

\smallskip
\noindent{\bf Answering RQ3.} Consistent with the upper case analysis, we derive monolithic polynomial lower bounds using 1-induction to serve as a baseline. Our proposed piecewise linear lower bounds are then systematically compared against these monolithic counterparts on all benchmarks. We compare two results by uniformly taking the grid points in the invariant and evaluate two results, and we compute the percentage of the points that our piecewise lower bound are lower (i.e., no better) than monolithic polynomial, which is shown in the column "PCT" in~\cref{table:comparison_lower}. We also present difference plots that classify all grid points into three disjoint regions according to the magnitude of difference: red points correspond to cases where our piecewise linear lower bounds are notably larger (diff $ > 10^{-3}$), blue points correspond to cases where the monolithic lower bounds are notably larger, and gray points represent regions where the two bounds are nearly identical (diff $ \le 10^{-3}$). We display the comparison in~\cref{fig:diff_plots_lower1}. We observe that on our benchmarks except \textsc{Expected Time}, our piecewise linear bounds are significantly tighter (i.e., greater) than monolithic polynomial lower bounds. 
In addition, we quantify the difference between the two lower bounds by subtracting the piecewise lower bound from the monolithic one and taking the unbiased average of the resulting values, which is reported in the last column “Diff” of~\cref{table:comparison_lower}. We observe that, except for \textsc{Expected Time}, our piecewise bounds are generally tighter than the monolithic ones, with especially notable improvements on benchmarks such as \textsc{Zero-Conf-Variant}.
We conjecture that the relatively poor performance of our piecewise linear algorithm in  \textsc{Expected Time} is due to the fact that the true expected value is closer to a piecewise polynomial function.

\subsection{Experimental Results of Piecewise Polynomial Lower Bound} \label{app:poly_lower}

In this section, we present the experimental results of piecewise polynomial lower bounds. For the piecewise polynomial lower bounds, we consider the same benchmarks and return functions $f$ as in~\cref{sec:poly_res}, and use the same invariant from {\em Invariants} in {\bf Experimental Setting} for each benchmark. Moreover, we follow the rest of the experimental setup described in {\bf Experimental Setting} in~\cref{sec:experiment}.

\smallskip 
\noindent{\bf Answering RQ1.} We present the experimental results for the synthesis of piecewise polynomial lower bounds on the 20 benchmarks in Table~\ref{table:poly2}. The experimental results show that our approach can compute piecewise polynomial lower bounds for most of the benchmarks within around 10 seconds. Only \textsc{inv-Pend variant} and \textsc{cav-5} require more than five minutes to compute a result. Especially, on the benchmarks~\textsc{Bin0, Bin2, DepRV, Sum0, Prinsys}, the lower bounds we obtain are the same with the upper bounds we obtain in~\cref{sec:poly_res} (see~\cref{table:poly1} for more details), which shows that we obtain the exact expected value of $X_f$ after the execution of the loop, i.e., the tightest lower bounds, on these 5 benchmarks. The exactness of these results is verified by comparison with the exact invariants synthesized in~\cite{DBLP:conf/cav/BaoTPHR22} and with our corresponding upper bounds in~\cref{sec:poly_res}.

\begin{table*}
    \renewcommand{\arraystretch}{1.5}
	\caption{Experimental Results for {\bf RQ1} and {\bf RQ2}, Polynomial Case (Lower Bounds). "$f$" stands for the return function considered in the benchmark, "T(s)" stands for the execution time of our approach (in seconds), including the parsing procedure from the program input, relaxing the $k$-induction constraint into the SDP problems, the SDP solving time and verification time. "d" stands for the degree of polynomial template we use and "Solution $h^*$" is the candidate polynomial solved directly by the solver. "Piecewise Polynomial Lower Bound" stands for the piecewise bound we synthesize. "Sub-invariant" stands for the sub-invariant synthesized by~\textsc{exist} when feeding our lower bounds .}
	\label{table:poly2}
	\resizebox{\textwidth}{!}{
		\begin{threeparttable}
			\begin{tabular}{|c|c|c|c|c|c|c|c|}
				\hline
				\multicolumn{1}{|c|}{\multirow{2}{*}{\textbf{Benchmark}}}  &
				\multicolumn{1}{c|}{\multirow{2}{*}{\textbf{$f$}}}      &
				\multicolumn{4}{c|}{\multirow{1}{*}{\textbf{Our Approach}}}  &
                    \multicolumn{2}{c|}{\multirow{1}{*}{\textbf{\textsc{exist}}}}  
				\\ \cline{3-8}
				\multicolumn{1}{|c|}{} & \multicolumn{1}{c|}{}  &  
                    \multicolumn{1}{c|}{\textbf{d}} &
                    \multicolumn{1}{c|}{\textbf{Solution $h^*$}}  & 
				\multicolumn{1}{c|}{\textbf{ T(s) }} &
				\multicolumn{1}{c|}{\textbf{ Piecewise Polynomial Lower Bound}}   &    
				\multicolumn{1}{c|}{\textbf{ Sub-invariant }} &  
                \multicolumn{1}{c|}{\textbf{T(s)}} 
                \\ \hline \hline
               {\textsc{GeoAr}}& $x$   &  2  &  \makecell{$-0.0467*y^2 + 0.8036*y*z -$\\$ 7.1202*z^2 + x + 0.6668*y $\\$+ 10.2222*z - 2.3795$}  &   3.98  & \makecell{$\max\{[z>0] \cdot (-0.0467y^2 + 0.4018*y*z$\\$ - 3.5601*z^2 + x + 1.0734*y + 5.5129*z $\\$- 1.2594) + [z\le 0] \cdot x, h^*$\}} &  \makecell{inner error} & - \\
                \hline
               {\textsc{Bin0}}& $x$   &  2 &  $x + 0.5*y*n$  &  5.56 & $x + [n>0]\cdot 0.5*y*n$  & fail & - \\
                \hline
               {\textsc{Bin2}}& $x$   &  2  &  $0.25*n + x + 0.25*n^2 + 0.5*y*n$  &  5.44   & \makecell{$x + [n>0]\cdot(0.25*n + x $\\$+ 0.25*n^2 + 0.5*y*n)$} &  fail & -\\
                \hline
               {\textsc{DepRV}}& $x*y$   &  2  &  \makecell{$-0.25*n + 0.25*n^2 + 0.5*y*n$\\$ + 0.5*x*n + x*y$}  &  5.75  & \makecell{$[n >0] \cdot (-0.25*n + 0.25*n^2 +$ \\$ 0.5*y*n + 0.5*x*n + x*y) $\\$+ [n\leq 0] \cdot x*y$} &  \makecell{inner error} & -\\
                \hline
               {\textsc{Prinsys}}& $[x==1]$   &  2  &  $0$  &  2.10   & \makecell{$[x==1]*1 + [x==0]*0.5$} &  \makecell{$[x==1]*1 + $\\$[x==0]*0.5$} & 7.29 \\
                \hline
               {\textsc{Sum0}}& $x$ & 2 & $0.25*i^2+0.25*i+x$ &  1.98 & $[i>0]*(0.25*i^2+0.25*i)+x$ & fail & -\\
                \hline
               {\textsc{Duel}}& $t$  & 2 & \makecell{$21.7319*x^2 - 0.4706*x*t + 1.3703*t^2$\\$ - 21.7099*x - 0.3707*t - 0.0011$} & 6.66 & \makecell{$\max \{[t>0\wedge x \ge 1]\cdot(10.8660x^2 +$\\$ 0.2353*x*t + \dots - 5.5451*x - $\\$ 0.8705*t + 0.2488) + [x<1]\cdot t, h^* \}$} &  inner error& -\\ \hline
               {\textsc{brp}}& \makecell{$[failed$\\$=10]$}  & 2 & \makecell{$-41834.4189*failed^2 - 6.0771*failed*sent $\\$- 0.8349*sent^2 - 1710.0678*failed $\\$+ 655.2652*sent + 2695.5257$} & 9.85 & \makecell{$\max\{[failed<10\wedge sent < 800]\cdot $\\$(-418.3442*failed^2 - 0.0608*failed*sent- $\\$ 0.8349*sent^2 - 853.7891failed + 653.5513sent$\\$ + 2907.9668) + [failed =10], h^* \}$} & inner error & -\\
                \hline
               {\textsc{chain}}& $[y=1]$  & 2 & \makecell{$-0.0001*x*y - 0.0052*y^2 + $\\$0.0032*x + 0.0173*y - 0.0347$} & 4.09 & \makecell{$\max\{[y=0\wedge x < 100]\cdot(-0.0001*x*y $\\$- 0.0051*y^2 + 0.0032*x + $\\$0.0170*y - 0.0314)+[y=1], h^*\}$} & inner error & -\\
                \hline
               {\textsc{grid small}}& \makecell{$[a<10 \wedge $\\$ b \ge 10]$}  & 3 & \makecell{$0.0006*a^3 - 0.0012*a^2*b + 0.0008*a*b^2 $\\$- 0.0071*a^2 + 0.008*a*b - 0.0056*b^2 - $\\$0.046*a + 0.0822*b + 0.4185$} & 6.75 & \makecell{$\max\{[a<10\wedge b < 10]\cdot (0.0006*a^3 - $\\$0.0012*a^2*b + 0.0008*a*b^2 - 0.0068*a^2$\\$ + 0.0076*a*b - 0.0052*b^2 - 0.0478*a + $\\$ 0.0800*b + 0.4306) + [a<10 \wedge b \ge 10], h^*\}$} & inner error & - \\
                \hline
               {\textsc{grid big}}& \makecell{$[a<1000 \wedge $\\$ b \ge 1000]$}  & 2 &   \makecell{$-0.0231*a^2 + 0.0462*a*b - 0.0231*b^2$\\$ - 0.1895*a + 0.2425*b + 0.9503$} & 7.21 &\makecell{$\max\{[a < 1000 \wedge b < 1000]\cdot (-0.0231*a^2 + $\\$ 0.0462*a*b - 0.0231*b^2 - 0.1895*a +$\\$ 0.2425b + 0.9537) + [a < 1000 \wedge b \ge 1000], h^* \}$} & inner error & -\\
                \hline
               {\textsc{CAV-2}}& $[h>1+t]$   & 3 &  \makecell{$0.0001*h^3 - 0.0003*h^2*t$\\$ + 0.0003*h*t^2 - 0.0001*t^3 + 0.0018*h^2 $\\$- 0.0057*h*t + 0.0032*t^2 - 0.002*h $\\$+ 0.054*t - 0.6863$}  & 3.45 &\makecell{$\max\{[t\ge h]\cdot (0.0001*h^3 - 0.0003*h^2*t + $\\$ 0.0003*h*t^2 - 0.0001*t^3 + 0.0023*h^2 $\\$- 0.0066*h*t + 0.0037*t^2 + 0.0076*h +$\\$ 0.0399*t - 0.5852)+[h>1+t], h^*\}$} & \makecell{inner error} & - \\
                \hline
               {\textsc{CAV-4}}& $[x\le 10]$ & 2 & \makecell{$-0.0148*x^2 - 0.0597*x*y + 0.3443*y^2$\\$ + 0.0523*x - 0.3282*y + 0.9537$} & 2.47 & \makecell{$\max\{[y\ge 1] \cdot (-0.0148*x^2 - 0.0072*x + $\\$ 0.9694) + [y<1\wedge x \le 10], h^*\}$} & \makecell{inner error} & - \\
                \hline
               {\textsc{fig-6}}& $[y\le 5]$  & 4 & \makecell{$0.0001*x^4 - 0.0007*x^3*y + 0.0009*x^2*y^2 $\\$- 0.0006*x*y^3 - 0.0011*x^3 + 0.0143*x^2*y $\\$- 0.0035*x*y^2 + 0.0032*y^3 + 0.0556*x^2 - $\\$0.1077*x*y + 0.0085*y^2 - $\\$0.3753*x + 0.1362*y + 0.5438$} & 109.28 & \makecell{$\max\{[x \le 4 ]\cdot (0.0001*x^4 - 0.0007*x^3*y +$\\$0.0009*x^2*y^2 - 0.0006*x*y^3 - 0.0014*x^3 $\\$ + 0.0140*x^2*y - 0.0035*x*y^2 + 0.0026*y^3 $\\$ + 0.0690*x^2 - 0.0960*x*y + 0.0173*y^2 $\\$- 0.3696*x + 0.1229*y + 0.5508)$\\$ + [x > 4 \wedge y \le 5], h^*\}$}  & \makecell{inner error} & - \\
                \hline
               {\textsc{fig-7}}& $[x \le 1000]$  & 2 & \makecell{$-0.0002*x*y - 0.0029*y^2 + 0.0038*y*i $\\$- 0.0009*i^2 + 0.0002*x - 0.0037*y$\\$ + 0.002*i + 0.9978$} & 21.38 & \makecell{$\max\{[y\le 0]\cdot (-0.0009*i^2 + 0.0002*x + $\\$  0.0021*i + 0.997)+[y > 0\wedge x \le 1000], h^* \}$} & \makecell{inner error} & - \\
                \hline
               {\textsc{\makecell{inv-Pend \\ variant}}}& $[pA\le1]$  & 3 & \makecell{$0.0008*pAD^2*pA - 0.0023*pAD^2*cV +$\\$ 0.0991*pAD^2*cP + 0.4931*pAD*pA^2 +$\\$0.1464*pAD*pA*cV \cdots - 5.002*cV*cP $\\$- 44.9405*cP^2 - 5.7109*cV + 1.0$} & 436.04 & \makecell{$\max\{[cP>0.5\vee pA <-0.1 \vee cP<-0.5\vee $\\$pA > 0.1]\cdot (0.0011*pAD^2*pA + $\\$0.0011*pAD^2*cV + \cdots + 0.999* cV $\\$- 0.0688*cP + 1.6061)+[cP\le 0.5 $\\$\wedge pA\le 0.1 \wedge cP \ge -0.5 \wedge cP \le 0.5], h^*\}$} & \makecell{inner error} & - \\
                \hline
               {\textsc{CAV-7}}& $[x\leq 30]$  &  3    &  \makecell{$0.0001*i^3 - 0.0002*i^2*x + 0.0001*i*x^2 - $\\$0.0006*i^2 + 0.001*i*x - 0.0002*x^2 $\\$+ 0.0007*i - 0.0005*x + 0.9981$}   & 5.17 & \makecell{$\max\{[i < 5]\cdot (- 0.0001*i^2*x - 0.0001*i^2 $\\$+ 0.0006*i*x - 0.0001*x^2 + 0.0003*i+ $\\$ 0.0001*x + 0.9985) + [i \ge 5 \wedge x \le 30], h^* \}$} &  \makecell{inner error} & -\\
                \hline
               {\textsc{cav-5}}& $[i\leq 10]$  &  3  & \makecell{$0.0009*i^2*money + 0.0043*i*money^2 $\\$+ 0.0013*money^3 - 0.9614*i^2 - $\\$17.8117*i*money - 66.2212*money^2$\\$ - 29.2611*i + 1.0$} &  897.32   &  \makecell{$\max\{[money \ge 10]\cdot (0.0009*i^2*money + $\\$0.0043*i*money^2 + 0.0013*money^3$\\$ - 0.9624*i^2 - 17.8205*i*money-$\\$ 66.2275*money^2 - 12.8062*i + $\\$118.2861*money - 1379.4033)$\\$ + [money < 10 \wedge i \le 10], h^* \}$} &  \makecell{inner error} & - \\
                \hline
               {\textsc{Add}}& $[x>5]$   &  3  &  \makecell{$0.0316 - 0.1528*y + 0.0164*x$\\$ + 0.0688*y^2 - 0.0271*x*y + 0.0048*x^2$\\$ - 0.0004*y^3 - 0.0083*x*y^2 + $\\$0.002*x^2*y - 0.0002*x^3$} &  3.74 & \makecell{$\max\{[y \le 1]\cdot (0.0315 - 0.1456y + 0.0165*x $\\$+ 0.0620*y^2 - 0.0272*x*y + 0.0047*x^2$\\$ - 0.0004*y^3 - 0.0083*x*y^2 + 0.002*x^2*y$\\$ - 0.0002*x^3) + [y > 1\wedge x > 5], h^*\}$} &  \makecell{inner error} & - \\ \hline
               {\textsc{\makecell{Growing\\Walk\\Variant2}}} & $y$ & 2 & \makecell{$-0.0055*x^2 - 0.0013*x*y - 0.0132*x*r$\\$ - 0.0027*y^2 + 0.0123*y*r - 0.0261*r^2 $\\$+ 0.0288*x + 1.0125*y + 0.0111*r - 0.0454$} & 4.83 & \makecell{$\max\{[r \le 0]\cdot (-0.0075*x^2 - 0.004*x*y $\\$- 0.0027*y^2 + 0.5230*x + 1.0174*y$\\$ - 0.0362) + [r > 0]\cdot y,  h^*\}$} & inner error & -\\
                \hline
			\end{tabular}
	\end{threeparttable}}
\end{table*}

\smallskip
\noindent{\bf Answering RQ2.} We answer RQ2 by comparing our approach with the relevant work~\textsc{Exist} in~\cref{table:poly2}. Since their tool requires a lower bound to be verified as an extra program input, we feed them our lower bounds (the column "Solution $h^*$" in~\cref{table:poly2}) synthesized by our approach. Across all benchmarks, they only successfully synthesize a sub-invariant to {\emph verify} our lower bounds on \textsc{Prinsys} and the sub-invariant they generate is the same as our piecewise lower bound. For the benchmarks \textsc{Bin0, Bin2,Sum0}, they can learn some candidates for sub-invariants but they are not able to verify them so that they fail to generate a sub-invariant. For the other 16 benchmarks, they fail to generate due to some inner errors within their tool. Overall, compared to~\cite{DBLP:conf/cav/BaoTPHR22}, we are able to handle more benchmarks, and for the benchmarks that both approaches can handle, our results coincide with theirs.

\begin{figure}[htbp]
  \centering
  \subfloat[\textsc{Add}]
  {\includegraphics[width=0.2\textwidth]{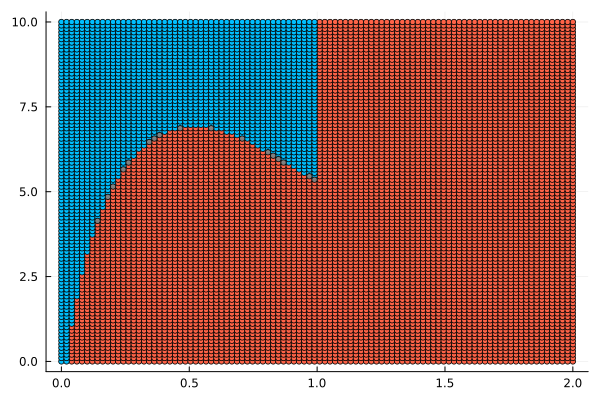}} \hfill   
  \subfloat[\textsc{cav-5}]
  {\includegraphics[width=0.2\textwidth]{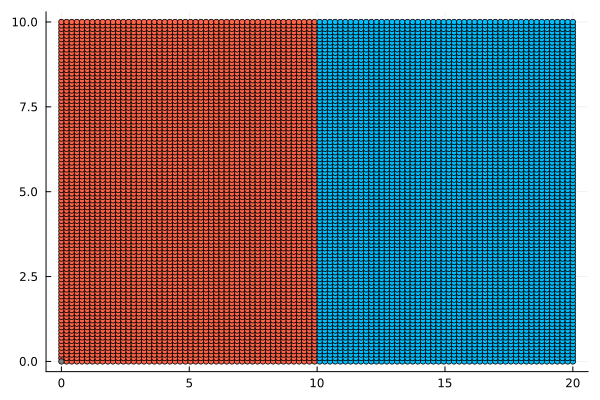}}  \hspace{0.1cm}
  \subfloat[\textsc{fig-7}]
  {\includegraphics[width=0.31\textwidth]{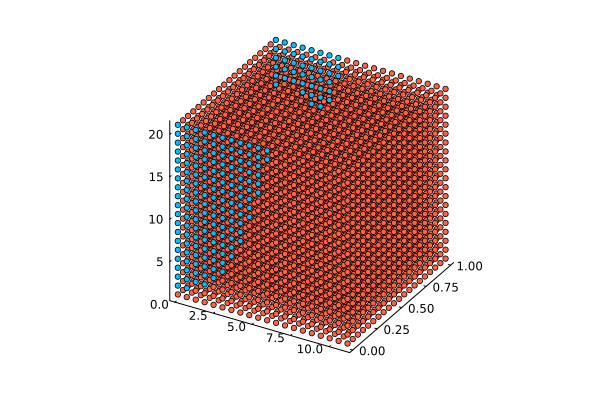}} \hspace{-0.8cm}
  \subfloat[\textsc{Growing Walk variant2}]
  {\includegraphics[width=0.31\textwidth]{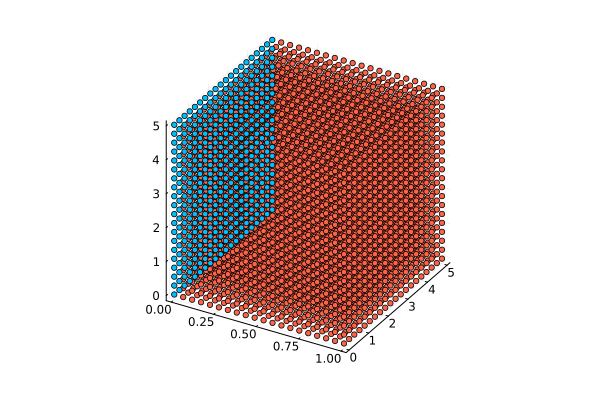}}\hspace{-0.5cm}  \\
  \subfloat[\textsc{Sum0}]
  {\includegraphics[width=0.2\textwidth]{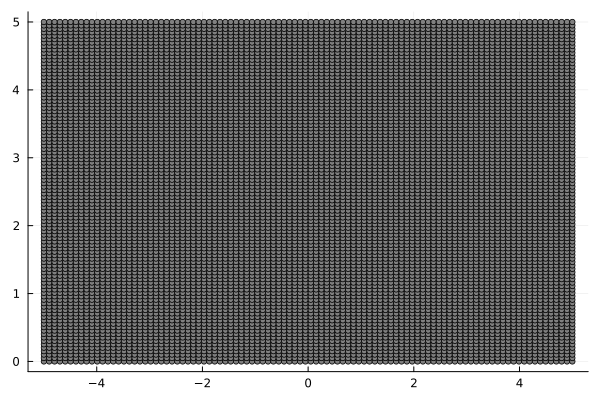}} \hfill   
  \subfloat[\textsc{Duel}]
  {\includegraphics[width=0.2\textwidth]{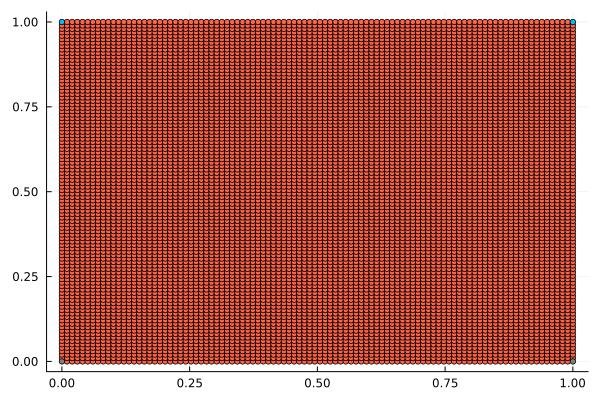}}  \hspace{0.1cm}
  \subfloat[\textsc{GeoAr}]
  {\includegraphics[width=0.31\textwidth]{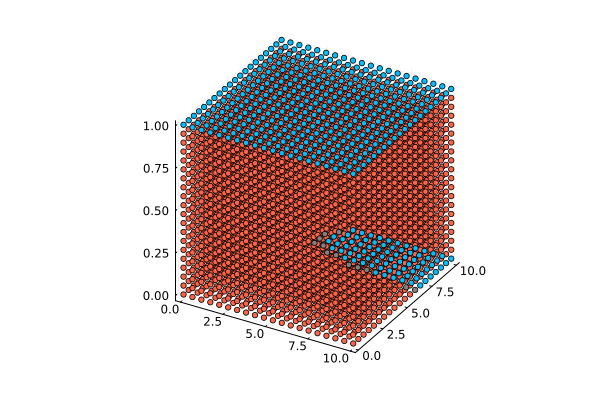}} \hspace{-0.8cm}
  \subfloat[\textsc{Bin0}]
  {\includegraphics[width=0.31\textwidth]{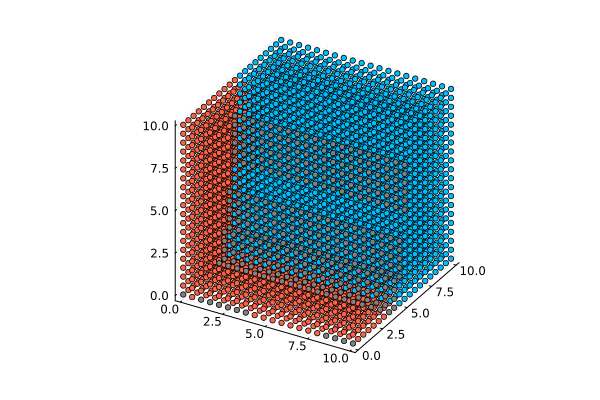}}\hspace{-0.5cm}  \\
  \subfloat[\textsc{brp}]
  {\includegraphics[width=0.2\textwidth]{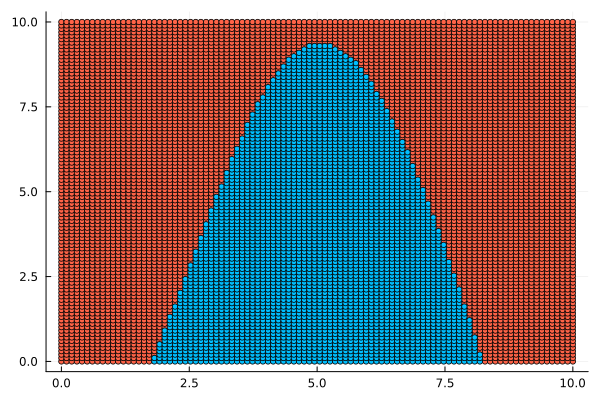}} \hfill
  \subfloat[\textsc{chain}]
  {\includegraphics[width=0.2\textwidth]{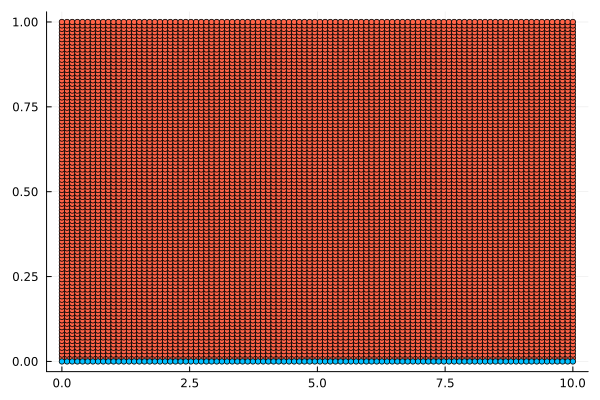}} \hspace{0.1cm}
  \subfloat[\textsc{Bin2}]
  {\includegraphics[width=0.31\textwidth]{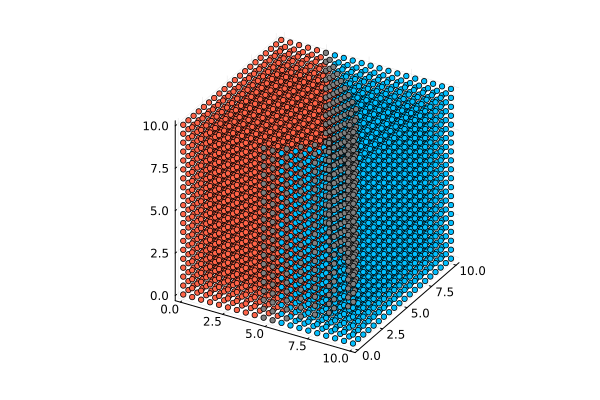}} \hspace{-0.8cm}
  \subfloat[\textsc{DepRV}]
  {\includegraphics[width=0.31\textwidth]{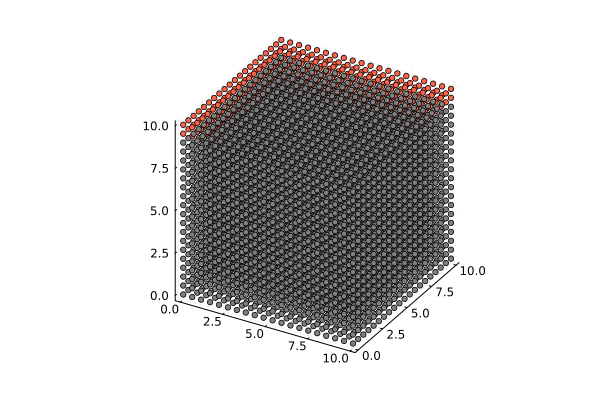}} \hspace{-0.5cm}  \\
  \subfloat[\textsc{grid small}]
  {\includegraphics[width=0.2\textwidth]{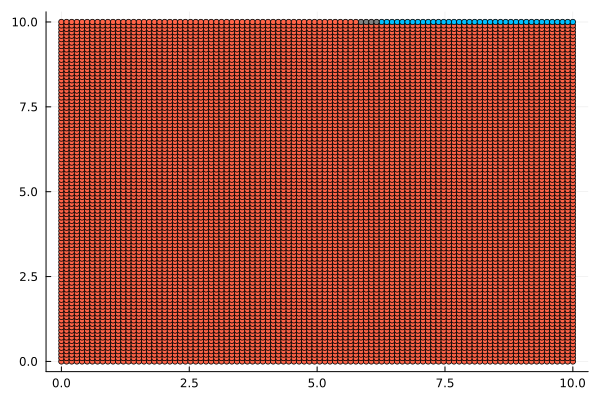}} \hfill   
  \subfloat[\textsc{grid big}]
  {\includegraphics[width=0.2\textwidth]{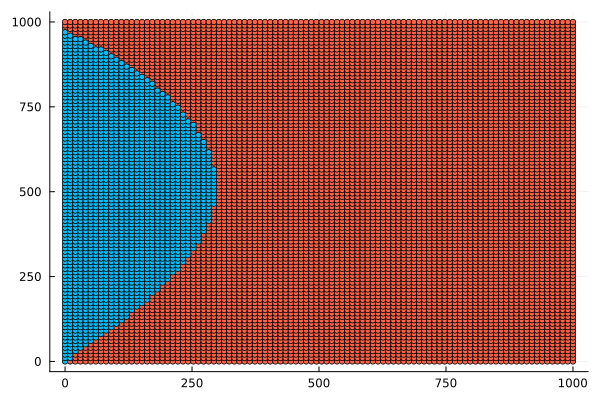}}  \hfill
  \subfloat[\textsc{cav-2}]
  {\includegraphics[width=0.2\textwidth]{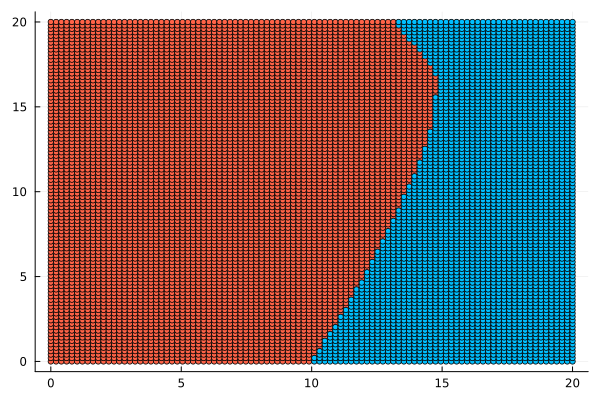}} \hfill   
  \subfloat[\textsc{cav-4}]
  {\includegraphics[width=0.2\textwidth]{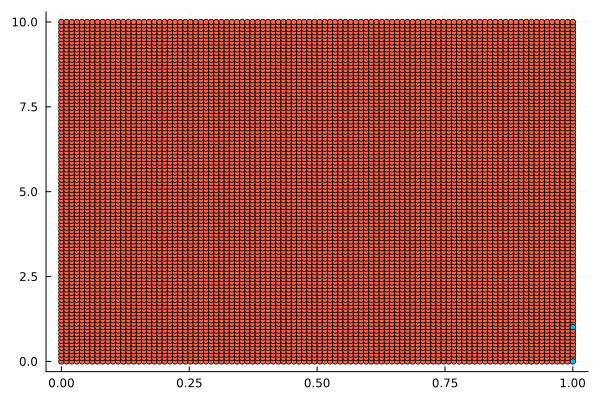}}  \hfill  \\
  \subfloat[\textsc{fig-6}]
  {\includegraphics[width=0.2\textwidth]{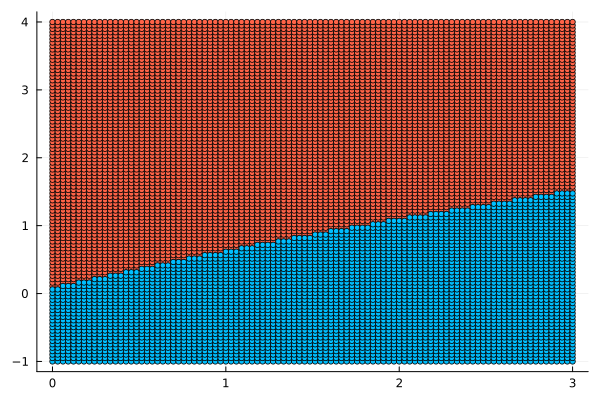}} \hspace{0.75cm}   
  \subfloat[\textsc{CAV-7}]
  {\includegraphics[width=0.2\textwidth]{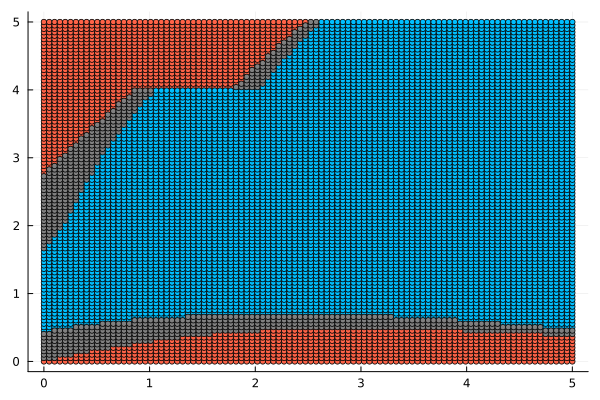}}  \hfill  

  \caption{Difference Plots of the Comparison in Piecewise Polynomial Lower Case}
  \label{fig:diff_plots_lower2}
  
  \begin{tablenotes}
      \item \textcolor{red}{Red} points indicate where our piecewise lower bounds are larger than the monolithic ones \\ by more than $10^{-3}$; 
      \textcolor{blue}{blue} points indicate where the monolithic bounds are larger by more \\ than $10^{-3}$;
      \textcolor{gray}{gray} points denote cases with negligible differences ($\le 10^{-3}$).
  \end{tablenotes}
\end{figure}

\begin{table*}
    \renewcommand{\arraystretch}{1.5}
        \caption{Experimental Results for {\bf RQ3}, Polynomial Case (Lower Bounds). "$f$" stands for the return function considered in the benchmark, "Piecewise Polynomial Lower Bound" stands for the results synthesized by our algorithm. "Monolithic Polynomial via 1-Induction" stands for the monolithic polynomial bounds synthesized via 1-induction, and "T(s)" stands for the total execution time. "PCT" stands for the percentage of the points that our piecewise polynomial lower bound are smaller (i.e., not better) than (higher degree) monolithic polynomial, and "Diff" stands for the unbiased average difference between our piecewise polynomial bound and the monolithic polynomial. A positive value indicates how much tighter our piecewise bounds are on average. }
        \label{table:comparison_lower_poly}
        \resizebox{\textwidth}{!}{
		\begin{threeparttable}
            \begin{tabular}{|c|c|c|c|c|c|c|c|c|c|}
				\hline
				\multicolumn{1}{|c|}{\multirow{2}{*}{\textbf{Benchmark}}}  &
				\multicolumn{1}{c|}{\multirow{2}{*}{\textbf{$f$}}}      &
				\multicolumn{3}{c|}{\multirow{1}{*}{\textbf{Our Approach}}}  &
				\multicolumn{3}{c|}{\multirow{1}{*}{\makecell{\textbf{Monolithic Polynomial  via 1-induction}}}} &
                \multicolumn{1}{c|}{\multirow{2}{*}{\textbf{PCT}}} & \multicolumn{1}{c|}{\multirow{2}{*}{\textbf{Diff}}}\\ 
                \cline{3-8}
                \multicolumn{1}{|c|}{}  & \multicolumn{1}{c|}{} &  
                \multicolumn{1}{c|}{\textbf{d}}  &  \multicolumn{1}{c|}{\textbf{T(s)}} &  
				\multicolumn{1}{c|}{\textbf{Piecewise Polynomial lower Bound}} &\multicolumn{1}{c|}{\textbf{d}} &  \multicolumn{1}{c|}{\textbf{T(s)}} &   
				\multicolumn{1}{c|}{\textbf{Monolithic Polynomial lower Bound}} 
                & \multicolumn{1}{c|}{} & \multicolumn{1}{c|}{}  \\ \hline \hline

                {\textsc{GeoAr}}& $x$ & 2 &  3.98 &  \makecell{$\max\{[z>0] \cdot (-0.0467y^2 + 0.4018*y*z$\\$ - 3.5601*z^2 + x + 1.0734*y + 5.5129*z $\\$- 1.2594) + [z\le 0] \cdot x, h^*$} &  3 & 0.66  & \makecell{$0.0021*x^2*z + 0.0003*x*y^2 + 0.0128xyz $\\$- 0.0745*x*z^2 - 0.0013*y^3 + 0.0006*y^2*z $\\$- 0.9011*y*z^2 - 34359.8787*z^3 - 0.0001*x^2$\\$ - 0.0028*x*y + 0.0294*x*z + 0.0154*y^2 $\\$+ 1.9735*y*z + 68717.1029*z^2 + 1.0025*x $\\$- 0.0476*y - 34355.4581*z - 0.0973$}    &  5.98\% & 2662.9\\
                \hline

                {\textsc{Bin0}}& $x$ & 2    &  5.56 &  \makecell{$x + [n>0]\cdot 0.5*y*n$}  & 3 & 0.61 & \makecell{$-0.001 - 0.0002*n - 0.0017*y + x $\\$+ 0.4994*y*n + 0.0005*y^2 + 0.0001*y^2*n$}  &  48.25\% & -0.009 \\
                \hline
                {\textsc{Bin2}}& $x$    &  2  &  5.44 & \makecell{$x + [n>0]\cdot(0.25*n + $\\$x + 0.25*n^2 + 0.5*y*n)$} &  3 & 0.53 & \makecell{$-0.0001*y^3 + 0.0001*y^2*n + 0.0001*y*n^2$\\$ + 0.0006*y^2 + 0.4992*y*n + 0.25*n^2 + x$\\$ - 0.0021*y + 0.249*n - 0.0028$} & 38.2\% & -0.002\\
                \hline
                {\textsc{DepRV}}& $x*y$ &  2 &  5.75  &  \makecell{$[n >0] \cdot (-0.25*n + 0.25*n^2 + 0.5*y*n $ \\$+ 0.5*x*n + x*y) + [n\leq 0] \cdot x*y$}  &  3 & 0.57  & \makecell{$x*y + 0.5*x*n + 0.5*y*n + $\\$ 0.25*n^2 - 0.2501*n - 0.0001$} &  0.0\% & 0.0006 \\
                \hline
                {\textsc{Prinsys}}& $[x==1]$ &   2  & 2.10 & \makecell{$[x==1]*1 + [x==0]*0.5$}     & 3 & 0.45 & 0.0 & 0.0\% & 0.0\\
                \hline
                {\textsc{Sum0}}& $x$ & 2  & 1.98 & \makecell{$[i>0]*(0.25*i^2+0.25*i)+x$}  & 4 & 0.50 & $0.25*i^2 + 0.25*i + x$ & 0.0\% & 0.0 \\
                \hline
                {\textsc{Duel}}& $t$ &   2  & 7.24 & \makecell{$\max \{[t>0\wedge x \ge 1]\cdot(10.8660x^2 +$\\$ 0.2353*x*t + 1.3703*t^2 - 11.0903*x $\\$- 1.3703*t + 0.4987)+[t\le 0\wedge x\ge 1]\cdot$\\$(5.4330*x^2 + 0.1177*x*t + 1.3703*t^2$\\$ - 5.5451*x - 0.8705*t + 0.2488) $\\$+ [x<1]\cdot t, h^* \}$}& 4 & 0.58 & \makecell{$57.6107*x^4 - 0.3086*x^3*t + 32.5537*x^2*t^2$\\$ - 0.9734*x*t^3 - 8.9958*t^4 + 31.3993*x^3 $\\$- 17.8531*x^2*t + 10.7254*x*t^2 + 26.343*t^3$\\$ - 27.2812*x^2 - 24.7154*x*t $\\$+ 13.3805*t^2 - 61.5859*x - 29.7278*t$} & 0.02\% & 30.219 \\
                \hline
                {\textsc{brp}}& \makecell{$[failed$\\$=10]$} &   2  & 9.85 & \makecell{$\max\{[failed<10\wedge sent < 800]\cdot $\\$(-418.3442*failed^2 - 0.0608*failed*sent- $\\$ 0.8349*sent^2 - 853.7891failed + 653.5513sent$\\$ + 2907.9668) + [failed =10], h^* \}$}     & 4 & 1.24 & \makecell{$-5.1928*failed^4 - 0.992*failed^3*sent -$\\$ 0.0002*failed^2*sent^2 - 1.6946*failed^3 + $\\$ 2.1022*failed^2*sent + 0.0001*failed*sent^2 - $\\$3.3782failed^2 - 1.0916failed*sent - 0.0057sent^2$\\$ - 2.09*failed + 1.1127*sent + 0.7991$} & 37.3\% & 3686.8\\
                \hline
                {\textsc{chain}}& $[y=1]$ &   2  & 4.09 & \makecell{$\max\{[y=0\wedge x < 100]\cdot(-0.0001*x*y $\\$- 0.0051*y^2 + 0.0032*x + $\\$0.0170*y - 0.0314)+[y=1], h^*\}$}     & 3 & 0.74 & \makecell{$531.0717 - 2.0558*10^6*y + 161.9284*x + $\\$1.8583*10^6*y^2 - 116.9387*x*y $\\$- 6.8309*x^2 + 196159.2965*y^3 + $\\$41.3507*x*y^2 + 0.9467*x^2*y + 0.051*x^3$} & 1.00\% & $354890.1$ \\
                \hline
                {\textsc{grid small}}& \makecell{$[a<10 \wedge$\\$ b \ge 10]$} &  3  & 6.75 & \makecell{$\max\{[a<10\wedge b < 10]\cdot (0.0006*a^3 - $\\$0.0012*a^2*b + 0.0008*a*b^2 - 0.0068*a^2$\\$ + 0.0076*a*b - 0.0052*b^2 - 0.0478*a + $\\$ 0.0800*b + 0.4306) + [a<10 \wedge b \ge 10], h^*\}$}  & 4 & 0.90 & \makecell{$-0.0002*a^2*b + 0.0001*a*b^2 + $\\$0.0002*a^2 - 0.0007*a*b + 0.0004*b^2 $\\$-0.0371*a + 0.0364*b + 0.4437$} & 0.62\% & 0.5539 \\
                \hline
                {\textsc{grid big}}& \makecell{$[a<1000 \wedge$\\$ b \ge 1000]$} &  2  & 7.24 &   
                \makecell{$\max\{[a < 1000 \wedge b < 1000]\cdot (-0.0231*a^2 + $\\$ 0.0462*a*b - 0.0231*b^2 - 0.1895*a +$\\$ 0.2425b + 0.9537) + [a < 1000 \wedge b \ge 1000], h^* \}$}
                & 3 & 0.56  & \makecell{$0.001*a^3 - 0.0005*a^2*b - 0.0018*a*b^2 $\\$+ 0.0008*b^3 - 2.9594*a^2 + 5.9103*a*b -$\\$ 2.9631b^2 - 499.9807a + 511.5109b + 253.0223$} & 19.23\% & $2.7*10^8$ \\
                \hline
                {\textsc{cav-2}}& $[h > t + 1]$ &   3  & 3.45 & \makecell{$\max\{[t\ge h]\cdot (0.0001*h^3 - 0.0003*h^2*t + $\\$ 0.0003*h*t^2 - 0.0001*t^3 + 0.0023*h^2 $\\$- 0.0066*h*t + 0.0037*t^2 + 0.0076*h +$\\$ 0.0399*t - 0.5852)+[h>1+t], h^*\}$}  & 4 &  0.47 & \makecell{$-300.5839 + 69.1282*t - 21.1489*h -$\\$ 6.092*t^2 + 3.8054*h*t - 0.413*h^2 + $\\$0.2491*t^3 - 0.269*h*t^2 + 0.0985*h^2*t - $\\$0.0179*h^3 - 0.003*t^4 + 0.0053*h*t^3 - $\\$0.0034*h^2*t^2 + 0.0004*h^3*t + 0.0004*h^4$} & 33.33\% & 104.29 \\
                \hline
                {\textsc{cav-4}}& $[x\le 10]$ &   2  & 2.47 & \makecell{$\max\{[y\ge 1] \cdot (-0.0148*x^2 - 0.0072*x + $\\$ 0.9694) + [y<1\wedge x \le 10], h^*\}$}     & 3 & 0.34 & \makecell{$-0.0017*x^3 - 0.0105*x^2*y - 0.0514*x*y^2 + $\\$ 11.376*y^3 + 0.0085*x^2 + 0.0539*x*y - $\\$ 5.0983*y^2 - 0.0103*x - 6.2928*y + 1.0$} & 0.76\% & $2.303$ \\
                \hline
                {\textsc{fig-6}}& $[y\le 5]$ &  4  & 109.28 &  \makecell{$\max\{[x \le 4 ]\cdot (0.0001*x^4 - 0.0007*x^3*y +$\\$0.0009*x^2*y^2 - 0.0006*x*y^3 - 0.0014*x^3 $\\$ + 0.0140*x^2*y - 0.0035*x*y^2 + 0.0026*y^3 $\\$ + 0.0690*x^2 - 0.0960*x*y + 0.0173*y^2 $\\$- 0.3696*x + 0.1229*y + 0.5508)$\\$ + [x > 4 \wedge y \le 5], h^*\}$}  & 5 & 0.94 & \makecell{$0.0002*x^5 + 0.0001*x^4*y + 0.0001*x^2*y^3 $\\$- 0.0001*x*y^4 - 0.0021*x^4 - 0.0033*x^3*y $\\$+ 0.001*x^2*y^2 - 0.0016*x*y^3 $\\$- 0.0001*y^4 + 0.0072*x^3 + 0.033*x^2*y$\\$ - 0.008*x*y^2 + 0.0017*y^3 + 0.0817*x^2 $\\$- 0.2069*x*y + 0.0347*y^2 - 0.8681*x$\\$ + 0.5271*y + 0.5958$} & 40.77\% & 0.2377 \\
                \hline
                {\textsc{fig-7}}& $[x \le 1000]$ &   2  & 21.38 & \makecell{$\max\{[y\le 0]\cdot (-0.0009*i^2 + 0.0002*x$\\$ + 0.0021*i + 0.997)+[y > 0\wedge x \le 1000], h^* \}$}     & 3 & 2.40 & \makecell{$0.0616*x^2*y - 0.0002*x^2*i - 47.1183*x*y^2 $\\$-0.4059*x*y*i + 0.014*x*i^2 - 3529.0989*y^3 $\\$+ 23.9641*y^2*i + 2.4655*y*i^2 - 0.38*i^3 $\\$- 0.0401*x^2 + 43.7495*x*y + 0.318*x*i $\\$+ 86697.7958*y^2 - 25.0167*y*i - 0.5461*i^2 $\\$+ 3.2993*x - 83167.42*y - 0.0624*i - 5.0013$} & 2.37\% & 11158.9 \\
                \hline
                {\textsc{\makecell{inv-Pend \\ variant}}}& $[pA\le1]$ &   3  & 436.04 & \makecell{$\max\{[cP>0.5\vee pA <-0.1 \vee cP<-0.5\vee $\\$pA > 0.1]\cdot (0.0011*pAD^2*pA + $\\$0.0011*pAD^2*cV + \cdots + 0.999* cV $\\$- 0.0688*cP + 1.6061)+[cP\le 0.5 $\\$\wedge pA\le 0.1 \wedge cP \ge -0.5 \wedge cP \le 0.5], h^*\}$}  & 4 & 6.71 & \makecell{$-0.2235*pAD^4 - 1.1293*pAD^3*pA +$\\$ 0.1015*pAD^3*cV + 0.1091*pAD^3*cP - $\\$5.2183*pAD^2*pA^2 +\cdots - 10.4965*cP^2 +$\\$ 0.0001*pA - 53.2106*cV + 1.0$} & 1.18\% & 0.0\\
                \hline
                
                {\textsc{CAV-7}}& $[x\leq 30]$ &  3 &  5.17  &  \makecell{$\max\{[i < 5]\cdot (- 0.0001*i^2*x - 0.0001*i^2 $\\$+ 0.0006*i*x - 0.0001*x^2 + 0.0003*i+ $\\$ 0.0001*x + 0.9985) + [i \ge 5 \wedge x \le 30], h^* \}$}  &  4 & 0.78 &\makecell{$-0.0007*i^4 + 0.001*i^3*x - 0.0005*i^2*x^2 + $\\$ 0.0001*i*x^3 + 0.0044*i^3 - 0.0052*i^2*x $\\$ + 0.0011*i*x^2 - 0.0134*i^2 + 0.0121*i*x $\\$- 0.0019*x^2 + 0.0128*i - 0.004*x + 0.9966$}   &  55.83\% & 0.0021\\
                \hline
                
            \end{tabular}
        \end{threeparttable}}
\end{table*}

\begin{table*}
    \renewcommand{\arraystretch}{1.5}
        \label{table:comparison_lower_poly2}
        \resizebox{\textwidth}{!}{
		\begin{threeparttable}
            \begin{tabular}{|c|c|c|c|c|c|c|c|c|c|}
				\hline
				\multicolumn{1}{|c|}{\multirow{2}{*}{\textbf{Benchmark}}}  &
				\multicolumn{1}{c|}{\multirow{2}{*}{\textbf{$f$}}}      &
				\multicolumn{3}{c|}{\multirow{1}{*}{\textbf{Our Approach}}}  &
				\multicolumn{3}{c|}{\multirow{1}{*}{\makecell{\textbf{Monolithic Polynomial  via 1-induction}}}} &
                \multicolumn{1}{c|}{\multirow{2}{*}{\textbf{PCT}}} &
                \multicolumn{1}{c|}{\multirow{2}{*}{\textbf{Diff}}}\\ 
                \cline{3-8}
                \multicolumn{1}{|c|}{}  & \multicolumn{1}{c|}{} &  
                \multicolumn{1}{c|}{\textbf{d}}  &  \multicolumn{1}{c|}{\textbf{T(s)}} &  
				\multicolumn{1}{c|}{\textbf{Piecewise Polynomial lower Bound}} &\multicolumn{1}{c|}{\textbf{d}} &  \multicolumn{1}{c|}{\textbf{T(s)}} &   
				\multicolumn{1}{c|}{\textbf{Monolithic Polynomial lower Bound}} 
                & \multicolumn{1}{c|}{} & \multicolumn{1}{c|}{}  \\ \hline \hline
                {\textsc{cav-5}}& $[i \leq 10 ]$ &  3 &  897.32   &  \makecell{$\max\{[money \ge 10]\cdot (0.0009*i^2*money + $\\$0.0043*i*money^2 + 0.0013*money^3$\\$ - 0.9624*i^2 - 17.8205*i*money-$\\$ 66.2275*money^2 - 12.8062*i + $\\$118.2861*money - 1379.4033)$\\$ + [money < 10 \wedge i \le 10], h^* \}$}&  4 & 1.08 &\makecell{$-0.0001*i^2*money^2 - 0.0004*i*money^3 $\\$- 0.0002*money^4 - 0.001*i^3 + 0.0222*i^2 - $\\$0.0257*i^2*money + 0.0526*i*money^2 $\\$+ 0.0298*money^3  - 0.4528*i*money $\\$- 4.1462*money^2 - 3.6304*i + 1.0$}  &  50.0\% & -7178.7 \\ 
                \hline
                {\textsc{Add}}& $[x > 5]$ & 3 &  3.74 & \makecell{$\max\{[y \le 1]\cdot (0.0315 - 0.1456y + 0.0165*x $\\$+ 0.0620*y^2 - 0.0272*x*y + 0.0047*x^2$\\$ - 0.0004*y^3 - 0.0083*x*y^2 + 0.002*x^2*y$\\$ - 0.0002*x^3) + [y > 1\wedge x > 5], h^*\}$} & 4 & 0.57 & \makecell{$0.0938 - 3.1538*y + 0.0629*x $\\$+ 5.9923*y^2 - 0.005*x*y + 0.0091*x^2- $\\$ 3.976*y^3 - 0.1043*x*y^2 - 0.0008*x^2*y$\\$ - 0.0001*x^3 + 0.8737*y^4 + 0.0354*x*y^3 $\\$+ 0.0013*x^2*y^2 - 0.0001*x^3*y$} &  23.59\% & 0.302 \\
                \hline
                {\textsc{\makecell{Growing\\Walk\\Variant2}}} & $y$ & 2 & 4.83 & \makecell{$\max\{[r \le 0]\cdot (-0.0075*x^2 - 0.004*x*y $\\$- 0.0027*y^2 + 0.5230*x + 1.0174*y$\\$ - 0.0362) + [r > 0]\cdot y,  h^*\}$} & 3 & 1.09 & \makecell{$0.9772 - 21942.448*r + 1.0046*y + 0.9983*x $\\$+ 21709.1212*r^2 - 0.0038*y*r - 0.0013*y^2 $\\$- 19.0802*x*r + 0.0003*x*y + 0.002*x^2 + $\\$232.3409*r^3 + 0.0007*y*r^2 + 0.0007*y^2*r + $\\$18.0697xr^2 + 0.0007*x*y*r + 0.0023*x^2*r$\\$ + 0.0001*x^2*y - 0.0006*x^3$}     & 5.03\%    & 3488.2\\
                \hline
            \end{tabular}
        \end{threeparttable}}
\end{table*}

\smallskip
\noindent{\bf Answering RQ3.} Consistent with the upper case analysis, we compare our piecewise polynomial lower bounds with high-degree monolithic polynomial lower bounds derived through 1-induction, as summarized in \cref{table:comparison_lower_poly}. For consistency across benchmarks, we generate both bounds using identical invariants and optimal objective functions. In this setup, the degree of the monolithic polynomials is restricted to a maximum of 5.

We compare two results by uniformly taking the grid points in the invariant and evaluate two results, and we compute the percentage of the points that our piecewise lower bound are lower (i.e., no better) than (high degree) monolithic polynomial, which is shown in the column "PCT" in~\cref{table:comparison_lower_poly}. We also present difference plots that classify all grid points into three disjoint regions according to the magnitude of difference: red points correspond to cases where our piecewise linear lower bounds are notably larger (diff $ > 10^{-3}$), blue points correspond to cases where the monolithic lower bounds are notably larger, and gray points represent regions where the two bounds are nearly identical (diff $ \le 10^{-3}$). We display the comparison in~\cref{fig:diff_plots_lower2}. In addition, we quantify the difference between the two lower bounds by subtracting the piecewise lower bound from the monolithic one and taking the unbiased average of the resulting values, which is reported in the column “Diff” of~\cref{table:comparison_lower_poly}. 
From the comparison results "PCT" and "Diff" in~\cref{table:comparison_lower_poly} and difference plots~\cref{fig:diff_plots_lower2},  We observe that on all the benchmarks except~\textsc{cav-5}, our piecewise polynomial bounds are significantly tighter than monolithic polynomial bounds. On the benchmarks~\textsc{Sum0, DepRV, Prinsys, inv-Pend variant}, our piecewise results are almost identical to the (high degree) monolithic bounds, and in addition, on the benchmarks~\textsc{Bin0, DepRV, Bin2}, our piecewise bounds are almost identical to the (high degree) monolithic bounds, although the difference plots show discrepancies, a closer inspection reveals that the results for these examples are nearly identical as well. On the remaining benchmarks, our piecewise bounds are generally tighter than the monolithic ones, with especially notable improvements on benchmarks such as \textsc{chain, fig-7, grid big}.
Although our running time is also a bit longer than that of monolithic polynomial experiments, our approach allows to synthesize lower-degree polynomials while achieving better precision against higher-degree polynomials. This advantage is critical as the synthesis of higher-degree polynomials suffers from a large amount of numerical errors as stated previously. Thus our approach has a value to use lower-degree piecewise polynomials to surpass the numerical problem of higher-degree polynomials.

\subsection{Full Expressions for Experimental Results}

For readability and conciseness, some of the experimental results in the main text were partially omitted and denoted the symbol $ \cdots$. \cref{table:full} provides the complete expressions corresponding to those abbreviated entries in this appendix.

\begin{table*}
    \renewcommand{\arraystretch}{2.0}
    \caption{\textbf{Full Expressions} for \textbf{Abbreviated} (Main-Text) Results}
    \centering
    \resizebox{\textwidth}{!}{
    \begin{threeparttable}
    \begin{tabular}{|c|c|c|c|}
        \hline
         \textbf{Benchmark} & \textbf{Location}  & \textbf{Abbreviated Expressions} & \textbf{Full Expressions} \\ \hline \hline
         {\textsc{grid small}} & \makecell{\textbf{Piecewise} \\ \textbf{Polynomial}\\ \textbf{Upper} Bound \\ in \textbf{\cref{table:poly1}\&\ref{table:comparison_upper_poly}}} & \makecell{$\min\{[a<10 \wedge b < 10] \cdot (-0.0003*a^3 - $\\$0.0011b^3 -0.0008*a^2*b+ 0.0018ab^2  $\\$+ 0.0109*a^2 \cdots + 0.0277*b$\\$  + 0.5109)+[a<10 \wedge b\ge 10], h^*\}$} & \makecell{$\min\{[a<10 \wedge b < 10] \cdot (-0.0003*a^3 - 0.0011*b^3 -0.0008*a^2*b+ $\\$0.0018*a*b^2   + 0.0109*a^2  - 0.0144*a*b + 0.0129*b^2 - 0.0926*a + $\\$ 0.0277*b + 0.5109)+[a<10 \wedge b\ge 10], h^*\}.$}\\ \hline
         
         {\textsc{GeoAr}}& \makecell{\textbf{Monolithic} \\ \textbf{Polynomial}\\ \textbf{Upper} Bound \\ in \textbf{\cref{table:comparison_upper_poly}}} & \makecell{$-0.0001x^3 + 0.0001*x^2*y - 0.0011*x^2*z-$\\$ 0.0004xy^2 - 0.0112xyz + 0.164*x*z^2+$\\$ 0.0012*y^3+ \cdots - 0.0137*y^2 + 2.7194yz +$\\$0.9993*x + 0.0417*y + 89867.2768z + 0.078$}  & \makecell{$-0.0001*x^3 + 0.0001*x^2*y - 0.0011*x^2*z - 0.0004*x*y^2 - 0.0112*x*y*z + $\\$0.164*x*z^2 + 0.0012*y^3 + 0.0046*y^2*z - 1.8186*y*z^2+ 89866.1344*z^3 + $\\$0.0027*x*y - 0.1236*x*z - 0.0137*y^2 + 2.7194*y*z - 179731.0721*z^2$\\$ + 0.9993*x + 0.0417*y + 89867.2768*z + 0.078.$}\\ \hline
         
         {\textsc{fig-6}}& \makecell{Solution \textbf{$h^*$} of \\ \textbf{Upper} Bound \\ in \textbf{\cref{table:poly1}}} & \makecell{$ 0.0011*x^3*y -0.0001*x^4  - 0.0001*y^4  $\\$+ 0.0008*x*y^3  - 0.001*x^2*y^2 + \cdots $\\$+ 0.5712*x - 0.281*y + 0.6009$} & \makecell{$-0.0001*x^4 + 0.0011*x^3*y - 0.001*x^2*y^2 + 0.0008*x*y^3 - 0.0001*y^4$\\$ + 0.0016*x^3 - 0.0195*x^2*y + 0.006*x*y^2 - 0.003*y^3 - 0.0627*x^2 $\\$+ 0.1018*x*y - 0.0028*y^2 + 0.5712*x - 0.281*y + 0.6009.$}\\ \hline

         {\textsc{fig-6}}& \makecell{\textbf{Piecewise} \\ \textbf{Polynomial}\\ \textbf{Upper} Bound \\ in \textbf{\cref{table:poly1}\&\ref{table:comparison_upper_poly}}} & \makecell{$\min\{[x \le 4 ]\cdot (-0.0001*x^4 + 0.0011x^3y $\\$- 0.001x^2y^2 + 0.0008*x*y^3 - 0.0001y^4+ $\\$ 0.0023*x^3 \cdots - 0.0094y^2+ 0.5530x - $\\$ 0.2782y + 0.6027) + [x > 4 \wedge y \le 5], h^*\}$} & \makecell{$\min\{[x \le 4 ]\cdot (-0.0001*x^4 + 0.0011*x^3*y - 0.001*x^2*y^2 + 0.0008xy^3 - 0.0001y^4 +$\\$ 0.0023*x^3 - 0.0182*x^2*y + 0.0064*x*y^2 - 0.0026*y^3 - 0.0788*x^2 + 0.0913xy $\\$- 0.0094*y^2 + 0.5530*x - 0.2782*y + 0.6027) + [x > 4 \wedge y \le 5], h^*\}.$}\\ \hline

        {\textsc{fig-6}} & \makecell{\textbf{Monolithic} \\ \textbf{Polynomial}\\ \textbf{Upper} Bound \\ in \textbf{\cref{table:comparison_upper_poly}}} & \makecell{$-0.0001x^5 - 0.0002*x^4*y - 0.0003x^2y^3 $\\$+ 0.0001*x*y^4 - 0.0002*y^5 + 0.0011*x^4+$\\$  0.0037x^3y \cdots + 0.1432xy + 0.0064*y^2 $\\$+ 0.9708*x - 0.6526*y + 0.575$}& \makecell{$-0.0001*x^5 - 0.0002*x^4*y - 0.0003*x^2*y^3 + 0.0001*x*y^4 - $\\$0.0002*y^5 + 0.0011*x^4 + 0.0037*x^3*y - 0.0008*x^2*y^2 + 0.0021*x*y^3 + 0.0005*y^4 - $\\$0.0012*x^3 - 0.0361*x^2*y + 0.0088*x*y^2 - 0.0042*y^3 - 0.084*x^2 + 0.1432*x*y + $\\$0.0064*y^2 + 0.9708*x - 0.6526*y + 0.575.$}\\ \hline

        {\textsc{fig-7}} & \makecell{\textbf{Monolithic} \\ \textbf{Polynomial}\\ \textbf{Upper} Bound \\ in \textbf{\cref{table:comparison_upper_poly}}} & \makecell{$0.0003x^2i-0.083x^2y  + 48.5638*x*y^2+$\\$  0.5267xyi - 0.018*x*i^2 + 2600.9691*y^3 $\\$- 36.705y^2i - 2.646*y*i^2 \cdots- 3.3923*x $\\$+ 56310.8279*y - 0.0114*i + 7.2868$} & \makecell{$-0.083*x^2*y + 0.0003*x^2*i + 48.5638*x*y^2 + 0.5267*x*y*i - 0.018*x*i^2 + 2600.9691*y^3 $\\$- 36.705*y^2*i - 2.646*y*i^2 + 0.4053*i^3 + 0.0539*x^2 - 45.1036*x*y - 0.4109*x*i- $\\$58912.9534*y^2 + 37.7582*y*i + 0.6223*i^2 - 3.3923*x + 56310.8279*y - 0.0114*i + 7.2868.$}\\ \hline

        {\textsc{Duel}} & \makecell{\textbf{Piecewise} \\ \textbf{Polynomial}\\ \textbf{Lower} Bound \\ in \textbf{\cref{table:poly2}\&\ref{table:comparison_lower_poly}}}& \makecell{$\max \{[t>0\wedge x \ge 1]\cdot(10.8660x^2 +$\\$ 0.2353*x*t + \dots - 5.5451*x - $\\$ 0.8705*t + 0.2488) + [x<1]\cdot t, h^* \}$} & \makecell{$\max \{[t>0\wedge x \ge 1]\cdot(10.8660x^2 + 0.2353*x*t + 1.3703*t^2 - 11.0903*x$\\$ - 1.3703*t + 0.4987)+[t\le 0\wedge x\ge 1]\cdot(5.4330*x^2 + 0.1177*x*t + 1.3703*t^2 $\\$- 5.5451*x - 0.8705*t + 0.2488) + [x<1]\cdot t, h^* \}$} \\ \hline

        {\textsc{\makecell{inv-Pend \\ variant}}} & \makecell{Solution \textbf{$h^*$} of \\ \textbf{Upper} Bound \\ in \textbf{\cref{table:poly1}}} &  \makecell{$0.0058*pAD^2*pA + 0.0023*pAD^2*cV- $\\$0.1313*pAD^2*cP - 0.6278*pAD*pA^2-$\\$ 0.2352pAD*pA*cV\cdots + 5.9637cV*cP$\\$+ 60.4194*cP^2 + 7.1495*cV + 1.0$} & \makecell{$0.0058*pAD^2*pA + 0.0023*pAD^2*cV - 0.1313*pAD^2*cP - 0.6278*pAD*pA^2 -$\\$ 0.2352*pAD*pA*cV - 4.2984*pAD*pA*cP + 0.0034*pAD*cV^2 - 0.0776*pAD*cV*cP + $\\$0.2901*pAD*cP^2 - 3.3499*pA^3 + 1.2174*pA^2*cV - 18.4697*pA^2*cP + 0.8063*pA*cV^2 +$\\$ 7.4278*pA*cV*cP + 2.1607*pA*cP^2 + 0.1664*cV^3 + 0.0048*cV^2*cP - 0.5863*cV*cP^2 - $\\$101.7368*cP^3 + 0.7678*pAD^2 + 4.7849*pAD*pA - 0.1664*pAD*cV - 3.5565*pAD*cP +$\\$ 28.2784*pA^2 - 2.7311*pA*cV - 20.9853*pA*cP - 1.1597*cV^2 + 5.9637*cV*cP +$\\$ 60.4194*cP^2 - 0.0002*pA + 7.1495*cV + 0.001*cP + 1.0.$}\\ \hline
         
         {\textsc{\makecell{inv-Pend \\ variant}}}& \makecell{\textbf{Piecewise} \\ \textbf{Polynomial}\\ \textbf{Upper} Bound \\ in \textbf{\cref{table:poly1}\&\ref{table:comparison_upper_poly}}} & \makecell{$\min \{[cp>0.5 \vee cp < -0.5 \vee pA > 0.1 \vee $\\$  pA < -0.1] \cdot (0.0058*pAD^2*pA - $\\$0.0011*pAD^2*cV - 0.1313*pAD^2*cP $\\$+ \cdots + 0.0689cP + 0.3238) + $\\$[-0.5 \le cp \le 0.5 \wedge -0.1 \le pA \le 0.1], h^* \}$} & \makecell{$\min \{[cp>0.5 \vee cp < -0.5 \vee pA > 0.1 \vee   pA < -0.1] \cdot (0.0058*pAD^2*pA - 0.0011*pAD^2*cV - $\\$ 0.1313*pAD^2*cP - 0.6279*pAD*pA^2 - 0.2408*pAD*pA*cV - 4.2984*pAD*pA*cP - $\\$0.0124*pAD*cV^2 - 0.0021*pAD*cV*cP + 0.2901*pAD*cP^2 - 3.3498*pA^3 + 0.4776*pA^2*cV $\\$- 18.4697*pA^2*cP + 0.4734pA*cV^2 + 5.5455*pA*cV*cP + 2.1607*pA*cP^2 + 0.1014*cV^3 - $\\$0.0334*cV^2*cP - 3.4879*cV*cP^2 - 101.7368*cP^3 + 0.5916*pAD^2 + 4.0443*pAD*pA + $\\$0.0057pADcV - 3.5023*pAD*cP + 26.6426*pA^2 - 1.1436*pA*cV - 20.7584*pA*cP $\\$- 0.5132*cV^2 + 5.5468*cV*cP + 60.3921*cP^2 - 0.4489*pAD - 1.5038*pA + 5.2348*cV $\\$+ 0.0688*cP + 0.3238) + [-0.5 \le cp \le 0.5 \wedge -0.1 \le pA \le 0.1], h^* \}.$}\\ \hline

         {\textsc{\makecell{inv-Pend \\ variant}}} & \makecell{\textbf{Monolithic} \\ \textbf{Polynomial}\\ \textbf{Upper} Bound \\ in \textbf{\cref{table:comparison_upper_poly}}}  &\makecell{$0.2264*pAD^4 + 1.1448*pAD^3*pA $\\$- 0.1026*pAD^3*cV - 0.1107*pAD^3*cP +$\\$ 5.2869*pAD^2*pA^2 + \cdots + 10.6625*cP^2 $\\$- 0.0001*pA + 53.8573*cV + 1.0$} & \makecell{$0.2264pAD^4 + 1.1448*pAD^3*pA - 0.1026*pAD^3*cV - 0.1107*pAD^3*cP + 5.2869pAD^2*pA^2 $\\$- 0.4937*pAD^2*pA*cV - 0.8938*pAD^2*pA*cP + 0.3036*pAD^2*cV^2 + 0.0478*pAD^2*cV*cP$\\$ + 0.4208*pAD^2*cP^2 + 6.8201*pAD*pA^3 - 3.2518*pAD*pA^2*cV - 2.3942*pAD*pA^2*cP + $\\$1.3927pADpAcV^2 + 0.7868*pAD*pA*cV*cP + 4.5143*pAD*pA*cP^2 - 0.1912pAD*cV^3 -$\\$ 0.1023pADcV^2cP - 0.1906*pAD*cV*cP^2 - 2.8734*pAD*cP^3 + 53.6801*pA^4 + 1.323pA^3cV $\\$ - 6.8123pA^3cP + 5.2663*pA^2*cV^2 + 2.473*pA^2*cV*cP + 47.9517*pA^2*cP^2 - 0.5451pAcV^3 $\\$ - 0.7983pAcV^2cP - 0.9821*pA*cV*cP^2 - 20.6044*pA*cP^3 + 0.0986*cV^4 + 0.0333cV^3cP +$\\$ 0.3483cV^2cP^2 + 4.5559cVcP^3 + 30.7504*cP^4 - 0.3716*pAD^3 + 3.8*pAD^2*pA + 0.4985pAD^2cV -$\\$ 0.0606pAD^2cP - 9.7537pADpA^2 - 6.8904*pAD*pA*cV + 0.4876*pAD*pA*cP - 0.748pADcV^2 $\\$- 0.1874*pAD*cV*cP - 3.858*pAD*cP^2 - 11.7619*pA^3 + 18.5549*pA^2*cV - 0.4732*pA^2*cP $\\$+ 4.6011*pA*cV^2 - 0.8554*pA*cV*cP - 9.3429*pA*cP^2 + 1.9127*cV^3 + 0.0857*cV^2*cP + $\\$5.2916*cV*cP^2 - 3.7534*cP^3 + 4.1573*pAD^2 + 17.8582*pAD*pA + 0.2247*pAD*cV - $\\$2.5151*pAD*cP + 34.1498*pA^2 + 1.7805*pA*cV - 5.4381*pA*cP - 10.9045*cV^2  $\\$+ 1.196*cV*cP + 10.6625*cP^2 - 0.0001*pA + 53.8573*cV + 1.0.$}\\ \hline
         
         {\textsc{\makecell{inv-Pend \\ variant}}} & \makecell{Solution \textbf{$h^*$} of \\ \textbf{Lower} Bound \\ in \textbf{\cref{table:poly2}}} &  \makecell{$0.0008*pAD^2*pA - 0.0023*pAD^2*cV +$\\$ 0.0991*pAD^2*cP + 0.4931*pAD*pA^2 +$\\$0.1464*pAD*pA*cV \cdots - 5.002*cV*cP $\\$- 44.9405*cP^2 - 5.7109*cV + 1.0$} &\makecell{$0.0008pAD^2pA - 0.0023pAD^2cV + 0.0991pAD^2cP + 0.4931pADpA^2 + 0.1464pADpAcV+ 2.4945*pA^3$\\$ + 3.1026pAD*pA*cP - 0.0138*pAD*cV^2 + 0.0529*pAD*cV*cP - 0.2253*pAD*cP^2- $\\$ 1.1719*pA^2*cV + 13.2225*pA^2*cP - 0.66*pA*cV^2 - 5.8813*pA*cV*cP - 1.6276*pA*cP^2  $\\$ -0.0137*cV^2*cP - 0.2948*cV*cP^2 + 70.0898*cP^3 - 0.6053*pAD^2 - 3.6586*pAD*pA$\\$ + 0.2164*pAD*cV + 2.8831*pAD*cP - 20.411*pA^2 + 2.3725*pA*cV + 16.7142*pA*cP + $\\$0.9814*cV^2 - 0.1308*cV^3 -5.002*cV*cP - 44.9405*cP^2 - 5.7109*cV + 1.0.$} \\ \hline
         
         {\textsc{\makecell{inv-Pend \\ variant}}} & \makecell{\textbf{Piecewise} \\ \textbf{Polynomial}\\ \textbf{Lower} Bound \\ in \textbf{\cref{table:poly2}\&\ref{table:comparison_lower_poly}}} & \makecell{$\max\{[cP>0.5\vee pA <-0.1 \vee cP<-0.5\vee $\\$pA > 0.1]\cdot (0.0011*pAD^2*pA + $\\$0.0011*pAD^2*cV + \cdots + 0.999* cV $\\$- 0.0688*cP + 1.6061)+[cP\le 0.5 $\\$\wedge pA\le 0.1 \wedge cP \ge -0.5 \wedge cP \le 0.5], h^*\}$} & \makecell{$\max \{[cp>0.5 \vee cp < -0.5 \vee pA > 0.1 \vee   pA < -0.1] \cdot (0.0011*pAD^2*pA + 0.0011*pAD^2*cV + $\\$0.0983*pAD^2*cP + 0.5052*pAD*pA² + 0.1661*pAD*pA*cV + 3.1298*pAD*pA*cP + $\\$0.0035*pAD*cV^2 - 0.0028*pAD*cV*cP - 0.1801*pAD*cP^2 + 2.5528*pA^3 - 0.5263*pA^2*cV $\\$+ 13.3751*pA^2*cP - 0.3872*pA*cV^2 - 4.3942*pA*cV*cP - 1.4142*pA*cP^2 - 0.0803*cV^3 $\\$+ 0.0045*cV^2*cP + 1.8662*cV*cP^2 + 70.0747*cP^3 - 0.4694*pAD^2 - 3.0860*pAD*pA + $\\$0.0414pADcV + 2.8487*pAD*cP - 19.2280*pA^2 + 0.9998*pA*cV + 16.5897pAcP + 0.45cV^2 -$\\$4.5457*cV*cP - 44.9216*cP^2 + 0.3480*pAD + 1.1902*cV^2 + 1.1903*pA + 0.999*cV $\\$- 0.0688*cP + 1.6061) + [-0.5 \le cp \le 0.5 \wedge -0.1 \le pA \le 0.1], h^* \}.$}\\ \hline
         
         {\textsc{\makecell{inv-Pend \\ variant}}} & \makecell{\textbf{Monolithic} \\ \textbf{polynomial}\\ \textbf{lower} bound \\ in \textbf{\cref{table:comparison_lower_poly}}} & \makecell{$-0.2235*pAD^4 - 1.1293*pAD^3*pA +$\\$ 0.1015pAD^3*cV + 0.1091*pAD^3*cP - $\\$5.2183pAD^2*pA^2 +\cdots - 10.4965*cP^2 $\\$ + 0.0001*pA - 53.2106*cV + 1.0$} & 
         \makecell{$-0.2235*pAD^4 - 1.1293*pAD^3*pA + 0.1015*pAD^3*cV + 0.1091*pAD^3*cP - 5.2183*pAD^2*pA^2 $\\$ + 0.4869*pAD^2*pA*cV + 0.8825*pAD^2*pA*cP - 0.3*pAD^2*cV^2 - 0.0472*pAD^2*cV*cP  - $\\$6.7225*pAD*pA^3 + 3.227*pAD*pA^2*cV + 2.3658*pAD*pA^2*cP - 1.3787*pAD*pA*cV^2 - $\\$ 0.7789*pAD*pA*cV*cP - 4.4659*pAD*pA*cP^2 + 0.189*pAD*cV^3 + 0.1012*pAD*cV^2*cP + $\\$0.189*pAD*cV*cP^2 + 2.8456*pAD*cP^3 - 53.0957*pA^4 - 1.3037*pA^3*cV + 6.7519*pA^3*cP - $\\$5.2076pA^2*cV^2 -  2.4518*pA^2*cV*cP - 47.4808*pA^2*cP^2 + 0.54*pA*cV^3 + 0.7896pA*cV^2*cP $\\$+0.9738*pA*cV*cP^2 + 20.4078*pA*cP^3 - 0.0975*cV^4 - 0.033*cV^3*cP - 0.3448*cV^2*cP^2 - $\\$4.5124*cV*cP^3 - 30.4568*cP^4 + 0.3659*pAD^3 - 3.7578*pAD^2*pA - 0.4937*pAD^2*cV +$\\$ 0.06pAD^2*cP + 9.654pAD*pA^2 + 6.8253pAD*pA*cV - 0.4846*pAD*pA*cP + 0.7351pAD*cV^2$\\$ +0.1847*pAD*cV*cP + 3.8138*pAD*cP^2 + 11.671*pA^3 - 18.344*pA^2*cV + 0.4762*pA^2*cP - $\\$4.5653*pA*cV^2 + 0.8466*pA*cV*cP + 9.2637*pA*cP^2 - 1.8886*cV^3 - 0.0833*cV^2*cP - $\\$5.2318*cV*cP^2 + 3.7165*cP^3 - 4.1093*pAD^2 - 17.6591*pAD*pA - 0.204*pAD*cV +$\\$ 2.4836*pAD*cP - 33.745*pA^2 - 1.6881*pA*cV + 5.3915*pA*cP + 10.7733*cV^2 - $\\$0.4159*pAD^2*cP^2 - 1.1846*cV*cP - 10.4965*cP^2 + 0.0001*pA - 53.2106*cV + 1.0$}\\ \hline
    \end{tabular}
    \end{threeparttable}
    }

    \label{table:full}
\end{table*}

\subsection{Other Figures for Piecewise Linear Upper Experiments}\label{app:linear_upper_fig}

\begin{figure}[htbp]
  \centering
  \subfloat[\textsc{Mart}]
  {\includegraphics[width=0.2\textwidth]{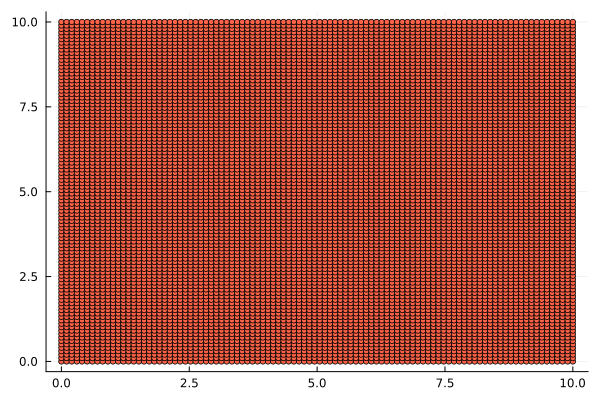}} \hfill   
  \subfloat[\parbox{0.15\textwidth}{\centering \textsc{Growing Walk \\ variant}}]
  {\includegraphics[width=0.2\textwidth]{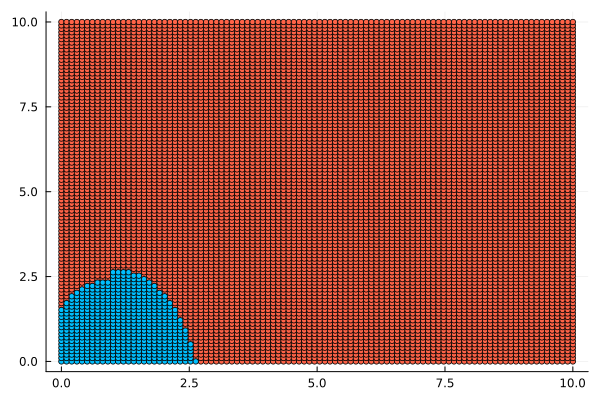}}  \hspace{0.1cm}
  \subfloat[\textsc{Coin}]
  {\includegraphics[width=0.31\textwidth]{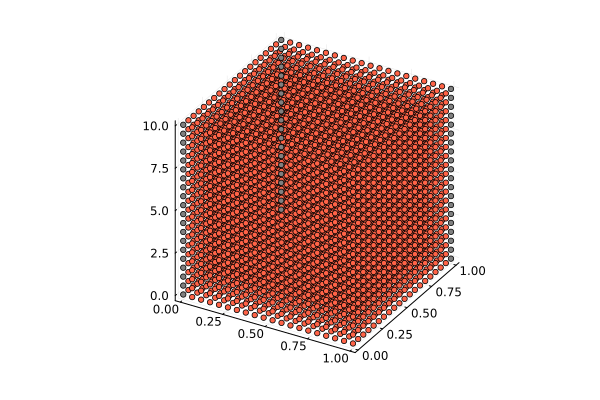}} \hspace{-0.8cm}
  \subfloat[\parbox{0.2\textwidth}{\centering \textsc{Equal Probability \\ Grid Family}}]
  {\includegraphics[width=0.31\textwidth]{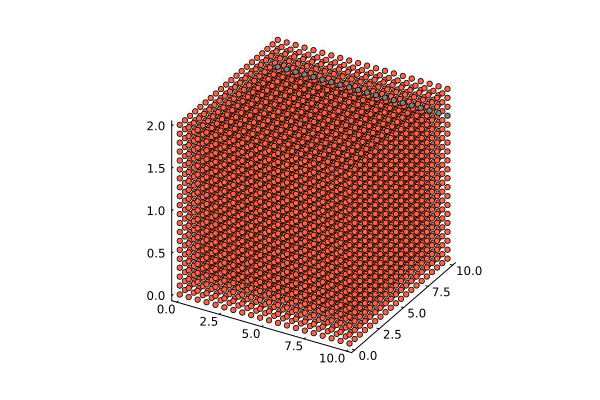}}\hspace{-0.5cm}  \\

  \subfloat[\textsc{Expected Time}]
  {\includegraphics[width=0.2\textwidth]{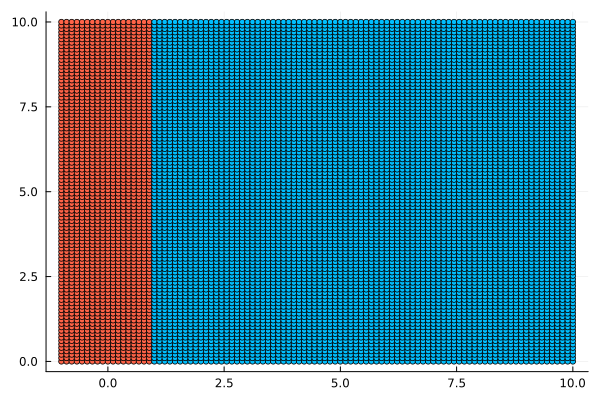}} \hfill   
  \subfloat[\textsc{RevBin}]
  {\includegraphics[width=0.2\textwidth]{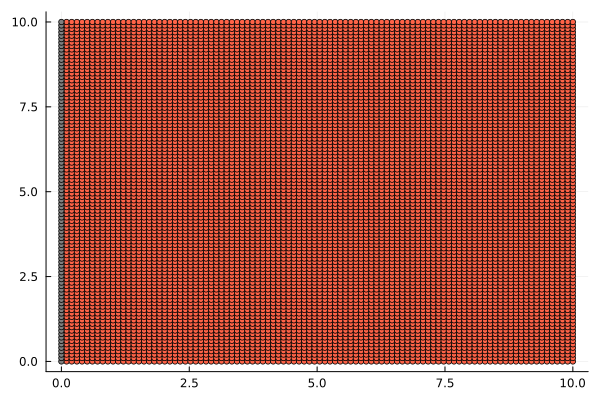}}  \hspace{0.1cm}
  \subfloat[\textsc{Fair Coin}]
  {\includegraphics[width=0.31\textwidth]{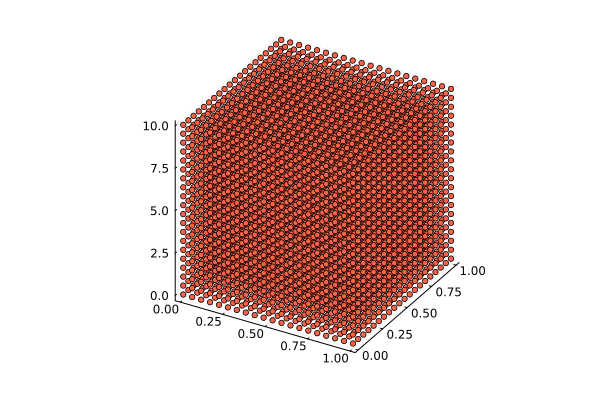}} \hspace{0.1cm}
  \subfloat[\parbox{0.15\textwidth}{\centering \textsc{St-Petersburg \\ variant}}]
  {\includegraphics[width=0.2\textwidth]{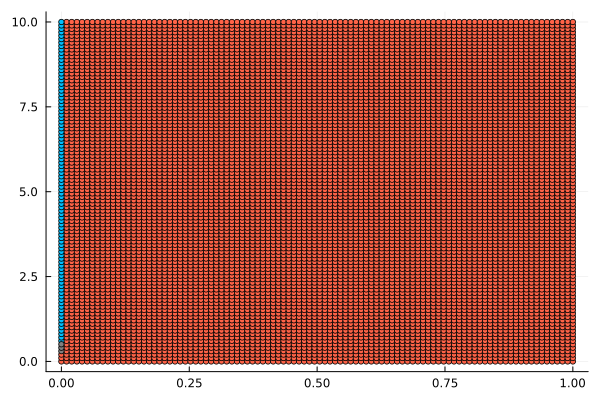}}\hspace{0.6cm}

  \caption{Other Difference Plots of the Comparison in Piecewise Linear Upper Case}
  \label{fig:diff_plots_upper1_other}
  
  \begin{tablenotes}
      \item \textcolor{red}{Red} points indicate where our piecewise upper bounds are smaller than the monolithic ones \\ by more than $10^{-3}$; 
      \textcolor{blue}{blue} points indicate where the monolithic bounds are smaller by more \\ than $10^{-3}$;
      \textcolor{gray}{gray} points denote cases with negligible differences ($\le 10^{-3}$).
  \end{tablenotes}
\end{figure}

\subsection{Other Figures for Piecewise Polynomial Upper Experiments}\label{app:poly_upper_fig}

\begin{figure}[htbp]
  \centering
  \subfloat[\textsc{Sum0}]
  {\includegraphics[width=0.2\textwidth]{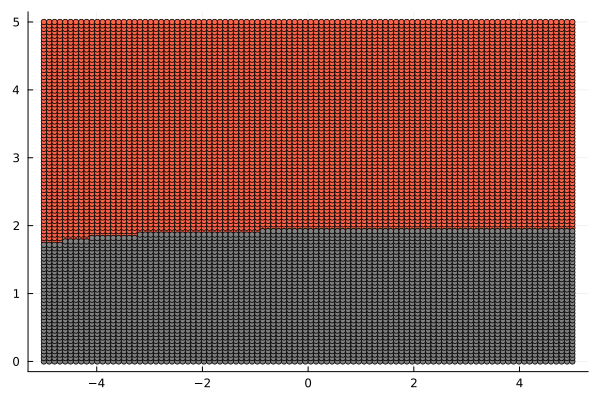}} \hfill   
  \subfloat[\textsc{Duel}]
  {\includegraphics[width=0.2\textwidth]{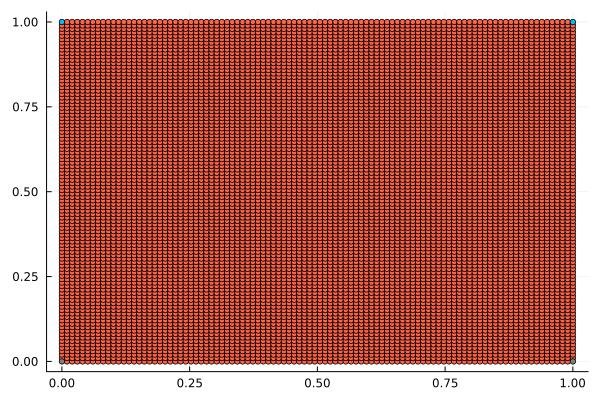}}  \hspace{0.1cm}
  \subfloat[\textsc{GeoAr}]
  {\includegraphics[width=0.31\textwidth]{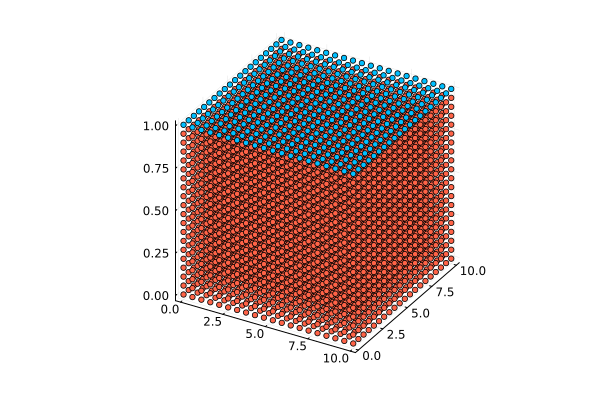}} \hspace{-0.8cm}
  \subfloat[\textsc{Bin0}]
  {\includegraphics[width=0.31\textwidth]{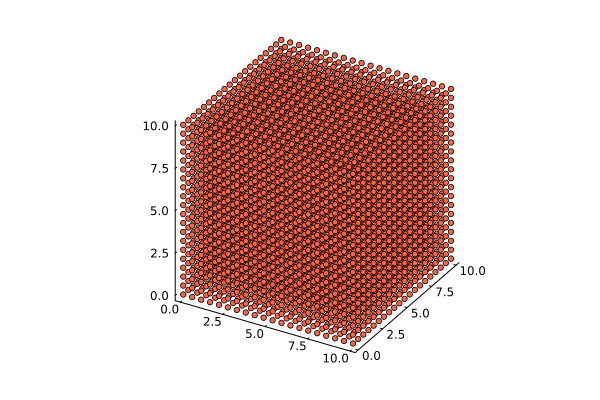}}\hspace{-0.5cm}  \\
  \subfloat[\textsc{brp}]
  {\includegraphics[width=0.2\textwidth]{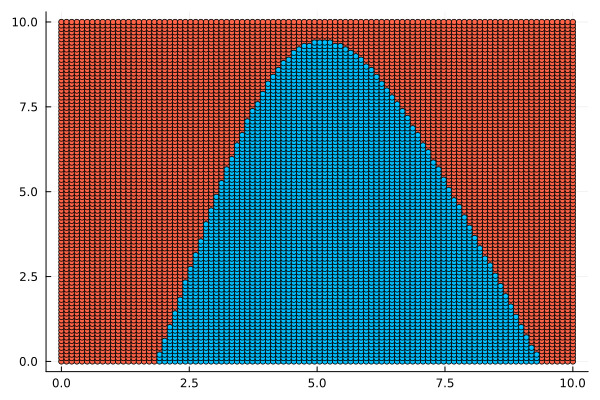}} \hfill
  \subfloat[\textsc{chain}]
  {\includegraphics[width=0.2\textwidth]{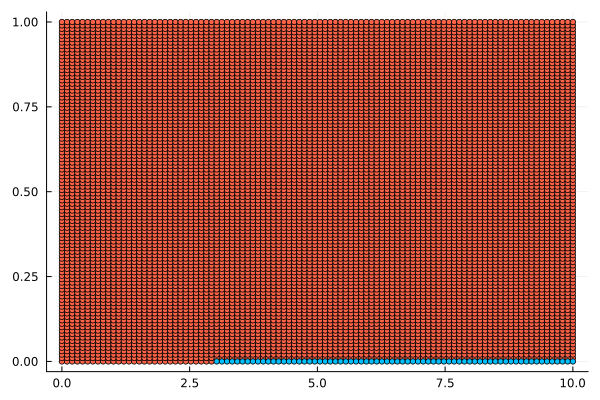}} \hspace{0.1cm}
  \subfloat[\textsc{Bin2}]
  {\includegraphics[width=0.31\textwidth]{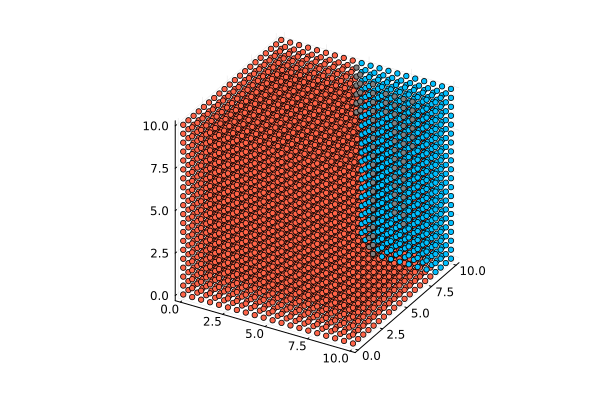}} \hspace{-0.8cm}
  \subfloat[\textsc{DepRV}]
  {\includegraphics[width=0.31\textwidth]{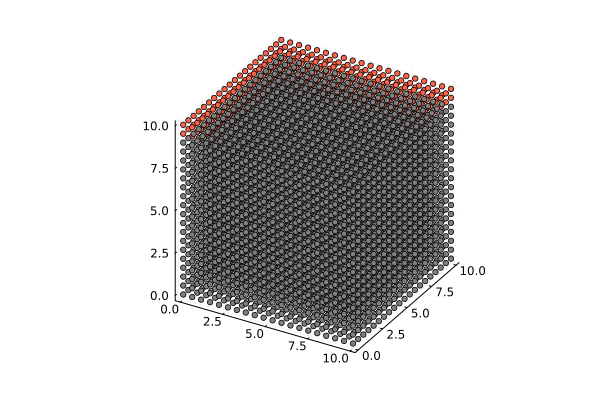}} \hspace{-0.5cm}  \\
  \subfloat[\textsc{grid small}]
  {\includegraphics[width=0.2\textwidth]{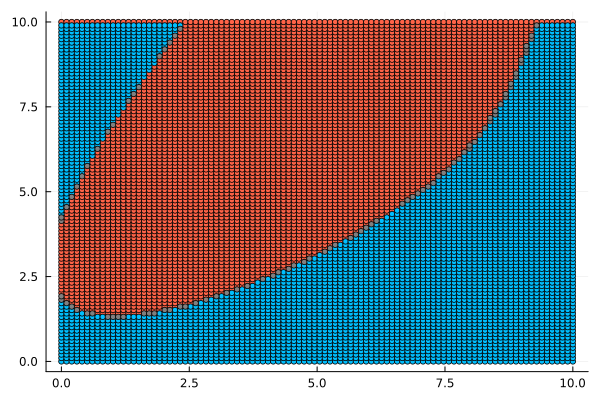}} \hfill   
  \subfloat[\textsc{grid big}]
  {\includegraphics[width=0.2\textwidth]{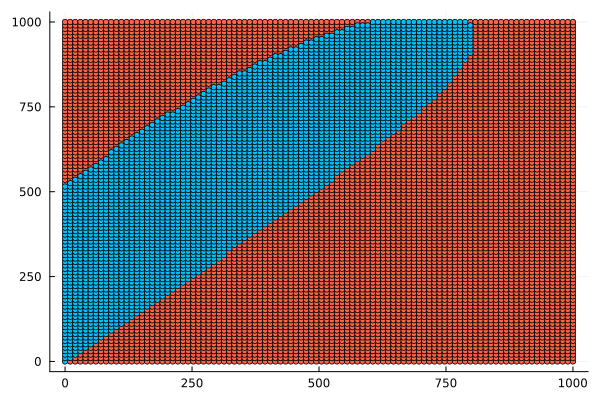}}  \hfill
  \subfloat[\textsc{cav-2}]
  {\includegraphics[width=0.2\textwidth]{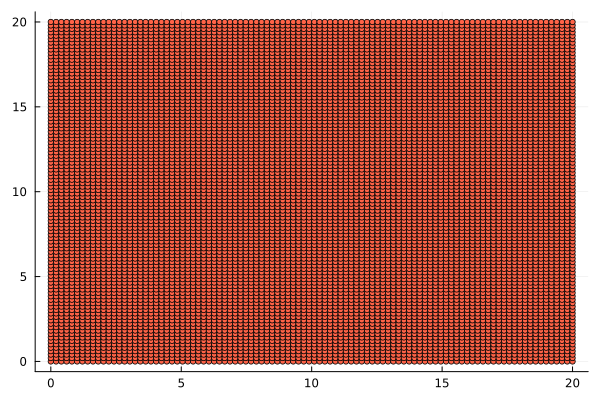}} \hfill   
  \subfloat[\textsc{cav-4}]
  {\includegraphics[width=0.2\textwidth]{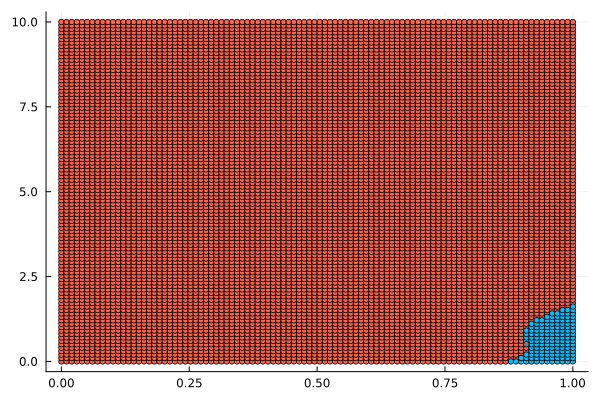}}  \hfill  \\
  \subfloat[\textsc{fig-6}]
  {\includegraphics[width=0.2\textwidth]{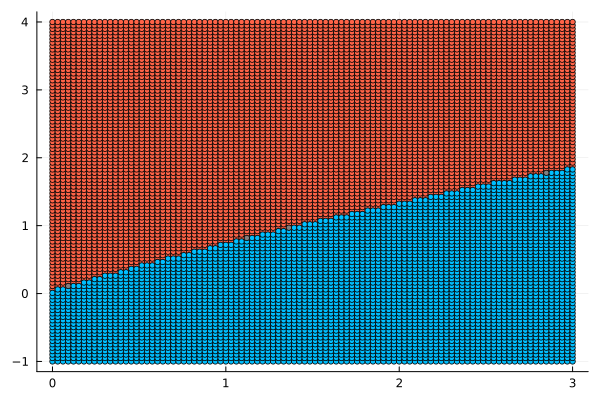}} \hspace{0.75cm}    
  \subfloat[\textsc{CAV-7}]
  {\includegraphics[width=0.2\textwidth]{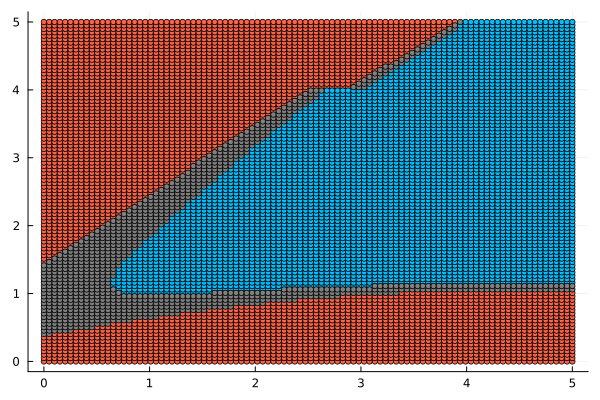}}  \hfill  

  \caption{Other Difference Plots of the Comparison in Piecewise Polynomial Upper Case}
  \label{fig:diff_plots_upper2_other}
  
  \begin{tablenotes}
      \item \textcolor{red}{Red} points indicate where our piecewise upper bounds are smaller than the monolithic ones \\ by more than $10^{-3}$; 
      \textcolor{blue}{blue} points indicate where the monolithic bounds are smaller by more \\ than $10^{-3}$;
      \textcolor{gray}{gray} points denote cases with negligible differences ($\le 10^{-3}$).
  \end{tablenotes}
\end{figure}

\section{Benchmark Programs}\label{app:programs}
This section presents the benchmark programs used in our experiments, along with the invariants employed in our algorithms. In addition, we show the results of checking the prerequisites of~\cref{thm:soundness,thm:soundness_poly}(P2), as discussed in~\cref{sec:experiment}.

\subsection{Programs in Linear Experiments}
This section contains the benchmark programs in our linear experiments, i.e., in~\cref{table:1,table:2}.

\begin{example}[\textnormal{Geo}]
    \begin{align*}
    \cc_{\textnormal{Geo}}\colon
    \quad
    &\WHILESYMBOL\left(\,0\leq c\,\right)\{\\
    &\quad	\{\ASSIGN{c}{1}\}~[0.5]~\{\ASSIGN{x}{x+1}\}\\
    &\}
\end{align*}
\end{example}

In this probabilistic program, we take the \textbf{invariant} $ I = 0\le x$, and every loop iteration terminates directly with probability $p = 0.5$. 

\begin{example}[\textnormal{k-geo}]
    \begin{align*}
        \cc_{\textnormal{k-geo}}\colon
        \quad
        &\WHILESYMBOL\left(\,k\leq N\,\right)\{\\
        &\quad	\{\ASSIGN{k}{k+1};\ASSIGN{y}{y+x};\ASSIGN{x}{0}\}~[0.5]~\{\ASSIGN{x}{x+1}\}\\
        &\}
    \end{align*}
\end{example}

In this probabilistic program, we take the \textbf{invariant} $ I = 0\le x \wedge 0 \le y \wedge k \le N+1$ , and synthesize dbRSM $k-N$.

\begin{example}[\textnormal{Binomial-random}]
    \begin{align*}
        \cc_{\textnormal{Bin-ran}}\colon
        \quad
        &\WHILESYMBOL\left(\,i\leq 10\,\right)\{\\
        &\quad  \{\ASSIGN{x}{x+1}\}~[0.5]~\{\ASSIGN{x}{0}\}\\
        &\quad	\{\ASSIGN{y}{y+x};\ASSIGN{i}{i+1}\}~[0.9]~\{\ASSIGN{y}{y+1};\ASSIGN{i}{0}\}\\
        &\}
    \end{align*}
\end{example}

In this probabilistic program, we take the \textbf{invariant} $ I = 0\le i \le 11 \wedge 0\le x \wedge 0 \le y $, and there is a probability $p \geq 0.9^{10}$ that the program will terminate immediately for every ten loop iterations.

\begin{example}[\textnormal{Coin}]
    \begin{align*}
        \cc_{\textnormal{Coin}}\colon
        \quad
        &\WHILESYMBOL\left(\,x=y\,\right)\{\\
        &\quad  \{\ASSIGN{x}{0}\}~[\nicefrac{3}{4}]~\{\ASSIGN{x}{1}\}\\
        &\quad	\{\ASSIGN{y}{0}\}~[\nicefrac{3}{4}]~\{\ASSIGN{y}{1}\}\\
        &\quad  \ASSIGN{i}{i+1};\\
        &\}
    \end{align*}
\end{example}

In this probabilistic program, we take the \textbf{invariant} $ I = 0\le i \wedge 0\le x \le 1\wedge 0 \le y \le 1$, and every loop iteration terminates directly with probability $p = \frac{5}{8}$.

\begin{example}[\textnormal{Martingale}]
    \begin{align*}
        \cc_{\textnormal{Mart}}\colon
        \quad
        &\WHILESYMBOL\left(\,0< x\,\right)\{\\
        &\quad	\{\ASSIGN{y}{y+x};\ASSIGN{x}{0}\}~[0.5]~\{\ASSIGN{y}{y-x};\ASSIGN{x}{2*x}\}\\
        &\quad  \ASSIGN{i}{i+1};\\
        &\}
    \end{align*}
\end{example}

In this probabilistic program, we take the \textbf{invariant} $ I = 0\le x$, and every loop iteration terminates directly with probability $p = 0.5$.

\begin{example}[\textnormal{Growing Walk}]
    \begin{align*}
        \cc_{\textnormal{Growing Walk}}\colon
        \quad
        &\WHILESYMBOL\left(\,0\leq x\,\right)\{\\
        &\quad	\{\ASSIGN{x}{x+1};\ASSIGN{y}{y+x}\}~[0.5]~\{\ASSIGN{x}{-1}\}\\
        &\}
    \end{align*}
\end{example}

In this probabilistic program, we take the \textbf{invariant} $ I =  -1\le x$, and every loop iteration terminates directly with probability $p = 0.5$.

\begin{example}[\textnormal{Growing Walk variant1}]
    \begin{align*}
        \cc_{\textnormal{Growing Walk1}}\colon
        \quad
        &\WHILESYMBOL\left(\,0\leq x\,\right)\{\\
        &\quad	\{\ASSIGN{x}{x-1};\ASSIGN{y}{y+x}\}~[0.5]~\{\ASSIGN{x}{1}\}\\
        &\}
    \end{align*}
\end{example}

In this probabilistic program, we take the \textbf{invariant} $ I =  -1\le x$, and every loop iteration terminates directly with probability $p = 0.5$.

\begin{example}[\textnormal{Expected Time}]
    \begin{align*}
        \cc_{\textnormal{Expected Time}}\colon
        \quad
        &\WHILESYMBOL\left(\,0\leq x\,\right)\{\\
        &\quad	\{\ASSIGN{x}{x-1};\ASSIGN{t}{t+1}\}~[0.9]~\{\ASSIGN{x}{10};\ASSIGN{t}{t+1}\}\\
        &\}
    \end{align*}
\end{example}

In this probabilistic program, we take the \textbf{invariant} $ I =  -1\le x \le 10$, and there is a probability $p \geq 0.9^{10}$ that the program will terminate immediately for every ten loop iterations.

\begin{example}[\textnormal{Zero Conference variant}]
    \begin{align*}
    \cc_{\textnormal{Zero-Conf-Var}}\colon
    \quad
    &\WHILESYMBOL\left(\,\textit{established}\leq 0  \wedge \textit{start} \leq 1\,\right)\{\\
    &\quad \IFSYMBOL\left(\, \textit{start} \geq 1 \,\right)
    \{\,\\
    &\quad \quad \{\ASSIGN{\textit{start}}{0}\}~[0.3]~\{\ASSIGN{\textit{start}}{0};\ASSIGN{\textit{established}}{1}\,\} \, \}\\
    &\quad \ELSESYMBOL\,\{\,
    \{\ASSIGN{\textit{curprobe}}{\textit{curprobe}+1}\}~[0.99]~\{\ASSIGN{\textit{start}}{1};\ASSIGN{\textit{curprobe}}{\textit{curprobe}-1}\} \, \} \,\\
    &\}
\end{align*}
\end{example}

In this probabilistic program, we take the \textbf{invariant} $ I = 0 \le start \le 1 \wedge 0 \le est \le 1$, and for the prerequisite (P2) checking, when $\textit{start}=1$, the loop iteration terminates directly with probability $p = 0.7$. When $\textit{start}=0$, the value of $\textit{start}$ has the probability of 0.01 to become 1 and turn to the branch of $\textit{start}=1$.

\begin{example}[\textnormal{Equal Probability Grid Family}]
    \begin{align*}
        \cc_{\textnormal{Equal-Prob-Grid-Family}}\colon
        \quad
        &\WHILESYMBOL\left(\, a\leq 10 \wedge b\leq 10 \wedge \textit{goal} = 0 \,\right)\{ \\
        &\quad \IFSYMBOL\left(\, b \geq 10 \,\right)\{\,\\
        &\quad \quad \{\ASSIGN{\textit{goal}}{1} \}~[0.5]~ \{\ASSIGN{\textit{goal}}{2} \} \, \}\\
        &\quad \ELSESYMBOL\,\{\,\\
        &\quad \quad \IFSYMBOL\left(\, a \geq 10 \,\right)\{\,\\
        &\quad \quad \quad \ASSIGN{a}{a-1} \, \}\\
        &\quad \quad \ELSESYMBOL\,\{\,\\
        &\quad \quad \quad \{\ASSIGN{a}{a+1} \}~[0.5]~ \{\ASSIGN{b}{b+1} \} \, \}\\
        &\}
    \end{align*}
\end{example}

In this probabilistic program, we take the \textbf{invariant} $ I =  0 \le a \le 10 \wedge 0 \le b \le 10 \wedge goal \ge 0$, and we synthesize dbRSM $10-b$.

\begin{example}[\textnormal{RevBin}]
    \begin{align*}
        \cc_{\textnormal{RevBin}}\colon
        \quad
        &\WHILESYMBOL\left(\,1\leq x\,\right)\{\\
        &\quad	\{\ASSIGN{x}{x-1};\ASSIGN{z}{z+1}\}~[0.5]~\{\ASSIGN{z}{z+1}\}\\
        &\}
    \end{align*}
\end{example}

In this probabilistic program, we take the \textbf{invariant} $ I =  0 \le x$, and we synthesize dbRSM $x$.

\begin{example}[\textnormal{Fair Coin}]
    \begin{align*}
        \cc_{\textnormal{Fair Coin}}\colon
        \quad
        &\WHILESYMBOL\left(\,x\leq 0 \wedge y\leq 0\,\right)\{\\
        &\quad  \{\ASSIGN{x}{0}\}~[0.5]~\{\ASSIGN{x}{1};\ASSIGN{i}{i+1}\}\\
        &\quad	\{\ASSIGN{y}{0}\}~[0.5]~\{\ASSIGN{y}{1};\ASSIGN{i}{i+1}\}\\
        &\}
    \end{align*}
\end{example}

In this probabilistic program, we take the \textbf{invariant} $ I =  0\le x \le 1 \wedge 0 \le y \le 1$, and every loop iteration terminates directly with probability $p = 0.25$.

\begin{example}[\textnormal{Bernoulli's St. Petersburg Paradox variant}]
    \begin{align*}
    \cc_{\textnormal{St. Petersburg1}}\colon
    \quad
    &\WHILESYMBOL\left(\,x\leq 0\,\right)\{\\
    &\quad	\{\ASSIGN{x}{1}\}~[0.75]~\{\ASSIGN{y}{2*y}\}\\
    &\}
\end{align*}
\end{example}

In this probabilistic program, we take the \textbf{invariant} $ I =  0 \le x \le 1 \wedge y \le 0$, and every loop iteration terminates directly with probability $p = 0.75$.

\subsection{Programs in Polynomial Experiments}
This section contains the benchmarks in our polynomial experiments, i.e., in~\cref{table:poly1,table:poly2}.  

\begin{example}[\textnormal{GeoAr}]
    \begin{align*}
        \cc_{\textnormal{GeoAr}}\colon
        \quad
        &\WHILESYMBOL\left(\,0 < z\,\right)\{\\
        &\quad \ASSIGN{y}{y+1}; \\
        &\quad	\{\ASSIGN{x}{x+y}\}~[0.9]~\{\ASSIGN{z}{0}\}\\
        &\}
    \end{align*}
\end{example}

In this probabilistic program, we take the \textbf{invariant} $ I =  0 \le x \wedge 0 \le y \wedge 0 \le z$, and every loop iteration terminates directly with probability $p = 0.1$.

\begin{example}[\textnormal{Bin0}]
    \begin{align*}
        \cc_{\textnormal{Bin0}}\colon
        \quad
        &\WHILESYMBOL\left(\,n > 0\,\right)\{\\
        &\quad	\{\ASSIGN{x}{x+y};\ASSIGN{n}{n-1}\}~[0.5]~\{\ASSIGN{n}{n-1}\}\\
        &\}
    \end{align*}
\end{example}

In this probabilistic program, we take the \textbf{invariant} $ I =  0 \le x \wedge 0 \le y \wedge 0 \le n$, and synthesize dbRSM $n$.

\begin{example}[\textnormal{Bin2}]
    \begin{align*}
        \cc_{\textnormal{Bin2}}\colon
        \quad
        &\WHILESYMBOL\left(\,n > 0\,\right)\{\\
        &\quad	\{\ASSIGN{x}{x+1};\ASSIGN{n}{n-1}\}~[0.5]~\{\ASSIGN{x}{x+y};\ASSIGN{n}{n-1}\}\\
        &\}
    \end{align*}
\end{example}

In this probabilistic program, we take the \textbf{invariant} $ I =  0 \le x \wedge 0 \le y \wedge 0 \le n$,  and we synthesize dbRSM $n$.

\begin{example}[\textnormal{DepRV}]
    \begin{align*}
        \cc_{\textnormal{DepRV}}\colon
        \quad
        &\WHILESYMBOL\left(\,n > 0\,\right)\{\\
        &\quad	\{\ASSIGN{x}{x+1};\ASSIGN{n}{n-1}\}~[0.5]~\{\ASSIGN{y}{y+1};\ASSIGN{n}{n-1}\}\\
        &\}
    \end{align*}
\end{example}

In this probabilistic program, we take the \textbf{invariant} $ I =  0 \le x \wedge 0 \le y \wedge 0 \le n$, and synthesize dbRSM $n$.

\begin{example}[\textnormal{Prinsys}]
    \begin{align*}
        \cc_{\textnormal{Prinsys}}\colon
        \quad
        &\WHILESYMBOL\left(\,x = 0\,\right)\{\\
        &\quad	\{\ASSIGN{x}{0}\}~[0.5]~\{ \{\ASSIGN{x}{-1}\}~[0.5]~\{\ASSIGN{x}{1}\}\} \\
        &\}
    \end{align*}
\end{example}

In this probabilistic program, we take the \textbf{invariant} $ I =  -1 \le x \le 1$, and every loop iteration terminates directly with probability $p = 0.5$.

\begin{example}[\textnormal{Sum0}]
    \begin{align*}
        \cc_{\textnormal{Sum0}}\colon
        \quad
        &\WHILESYMBOL\left(\,n > 0\,\right)\{\\
        &\quad	\{\ASSIGN{x}{x+n};\ASSIGN{n}{n-1}\}~[0.5]~\{\ASSIGN{n}{n-1}\}\\
        &\}
    \end{align*}
\end{example}

In this probabilistic program, we take the \textbf{invariant} $ I =  i \ge 0$, and synthesize dbRSM $n$.

\begin{example}[\textnormal{Duel Boy}]
    \begin{align*}
    \cc_{\textnormal{Duel}}\colon
    \quad
    &\WHILESYMBOL\left(\,x \geq 1 \right)\{ \\
    &\quad  \IFSYMBOL\left(\, t > 0 \,\right)
    \{\,\\
    &\quad \quad \{\ASSIGN{\textit{x}}{0}\}~[0.5]~\{\ASSIGN{\textit{t}}{1-t} \} \,  \\
    &\quad \} \ELSESYMBOL \{\{\ASSIGN{\textit{x}}{0}\}~[0.75]~\{\ASSIGN{\textit{t}}{1-t}\} \} \\
    &\} 
    \end{align*}
\end{example}

In this probabilistic program, we take the \textbf{invariant} $ I =  0 \le x \le 1 \wedge 0 \le t \le 1$, and every loop iteration terminates directly with probability $p\ge 0.5$.

\begin{example}[\textnormal{brp}]
    \begin{align*}
        \cc_{\textnormal{brp}}\colon
        \quad
        &\WHILESYMBOL\left(\,sent < 800 \wedge failed < 10\,\right)\{\\
        &\quad	\{\ASSIGN{sent}{sent+1};failed=0\}~[0.99]~\{\ASSIGN{failed}{failed+1}\}\\
        &\}
    \end{align*}
\end{example}

In this probabilistic program, we take the \textbf{invariant} $ I =  0 \le failed \le 10 \wedge 0 \le sent \le 800$, and there is a probability $p = 0.01^{10}$ that the program will terminate immediately for every ten loop iterations.

\begin{example}[\textnormal{chain}]
    \begin{align*}
        \cc_{\textnormal{chain}}\colon
        \quad
        &\WHILESYMBOL\left(\,y \le 0 \wedge x < 100\,\right)\{\\
        &\quad	\{\ASSIGN{y}{1}\}~[0.01]~\{\ASSIGN{x}{x+1}\}\\
        &\}
    \end{align*}
\end{example}

In this probabilistic program, we take the \textbf{invariant} $ I =  0 \le x \le 100 \wedge 0 \le y \le 1$, and every loop iteration terminates directly with probability $p=0.01$.

\begin{example}[\textnormal{grid-small}]
    \begin{align*}
        \cc_{\textnormal{grid-small}}\colon
        \quad
        &\WHILESYMBOL\left(\,a< 10 \wedge b <10\,\right)\{\\
        &\quad	\{\ASSIGN{a}{a+1}\}~[0.5]~\{\ASSIGN{b}{b+1}\}\\
        &\}
    \end{align*}
\end{example}

In this probabilistic program, we take the \textbf{invariant} $ I =  0 \le a \le 11 \wedge 0 \le b \le 11$, and synthesize dbRSM $19-(a+b)$.

\begin{example}[\textnormal{grid-big}]
    \begin{align*}
        \cc_{\textnormal{grid-big}}\colon
        \quad
        &\WHILESYMBOL\left(\,a< 1000 \wedge b <1000\,\right)\{\\
        &\quad	\{\ASSIGN{a}{a+1}\}~[0.5]~\{\ASSIGN{b}{b+1}\}\\
        &\}
    \end{align*}
\end{example}

In this probabilistic program, we take the \textbf{invariant} $ I =  0 \le a \le 1001 \wedge 0 \le b \le 1001$, and synthesize dbRSM $1999-(a+b)$.    

\begin{example}[\textnormal{cav-2}]
    \begin{align*}
        \cc_{\textnormal{cav-2}}\colon
        \quad
        &\WHILESYMBOL\left(\,h \le t\,\right)\{\\
        &\quad	\{\ASSIGN{h}{h+10}\}~[0.25]~\{\texttt{skip}\};\\
        &\quad \{\ASSIGN{t}{t+1}\}\\
        &\}
    \end{align*}
\end{example}

In this probabilistic program, we take the \textbf{invariant} $ I =  0 \le t \wedge 0 \le h \wedge h \ge t + 1$, and synthesize dbRSM $t-h$.

\begin{example}[\textnormal{cav-4}]
    \begin{align*}
        \cc_{\textnormal{cav-4}}\colon
        \quad
        &\WHILESYMBOL\left(\,y \ge 1\,\right)\{\\
        &\quad	\{\ASSIGN{y}{1}\}~[0.5]~\{\ASSIGN{y}{0}\};\\
        &\quad \{\ASSIGN{x}{x+1}\}\\
        &\}
    \end{align*}
\end{example}

In this probabilistic program, we take the \textbf{invariant} $I = 0\le y\le 1 \wedge x \ge 0$, and every loop iteration terminates directly with probability $p = 0.5$.

\begin{example}[\textnormal{fig-6}]
    \begin{align*}
        \cc_{\textnormal{fig-6}}\colon
        \quad
        &\WHILESYMBOL\left(\,x \le 4\,\right)\{\\
        &\quad	\{\ASSIGN{x}{x-1}\}~[0.5]~\{\ASSIGN{x}{x+3}\};\\
        &\quad \{\texttt{skip}\}~[0.3333]~\{\{\ASSIGN{y}{y+1}\}~[0.5]~\{\ASSIGN{y}{y+2}\}\};\\
        &\}
    \end{align*}
\end{example}

In this probabilistic program, we take the \textbf{invariant} $y \ge 0 \wedge x \le 7$, and synthesize dbRSM $4-x$.

\begin{example}[\textnormal{fig-7}]
    \begin{align*}
        \cc_{\textnormal{fig-7}}\colon
        \quad
        &\WHILESYMBOL\left(\,y \le 0\,\right)\{\\
        &\quad	\{\ASSIGN{y}{0}\}~[0.5]~\{\ASSIGN{y}{1}\};\\
        &\quad \ASSIGN{x}{2*x};\\
        &\quad \ASSIGN{i}{i+1};\\
        &\}
    \end{align*}
\end{example}

In this probabilistic program, we take the \textbf{invariant} $I = i \ge 0 \wedge x > 0 \wedge 0\le y \le 1$, and every loop iteration terminates directly with probability $p = 0.5$.

\begin{example}[\textnormal{inv-Pend variant}]
    \begin{align*}
        \cc_{\textnormal{inv-Pend variant}}\colon
        \quad
        &\WHILESYMBOL\left(\, exitcond \leq 0\,\right)\{\\
        &\quad \IFSYMBOL \left(-0.5 \le cP \right)\{\\
        &\quad \quad \IFSYMBOL \left(cP \le 0.5 \right)\{\\
        &\quad \quad \quad \IFSYMBOL \left(-0.1 \le pA \right)\{\\
        &\quad \quad \quad \quad \IFSYMBOL \left(pA \le 0.1 \right)\{\\
        &\quad \quad \quad \quad \quad \ASSIGN{exitcond}{1};\\
        &\quad \quad \quad \quad \}\ELSESYMBOL \{\texttt{skip}\}\\
        &\quad \quad \quad \}\ELSESYMBOL \{\texttt{skip}\}\\
        &\quad \quad  \}\ELSESYMBOL \{\texttt{skip}\}\\
        &\quad \}\ELSESYMBOL \{\texttt{skip}\}\\
        \\
        &\quad \ASSIGN{cP}{cP+0.01*cV};\\
        &\quad \{\ASSIGN{cV}{0.02*cP+0.5*cV-0.3*pA-0.06*pAD-1}\}~[0.5]~\\
        &\quad \{\ASSIGN{cV}{0.02*cP+0.5*cV-0.3*pA-0.06*pAD+1}\};\\
        &\quad \ASSIGN{pA}{pA +0.01*pAD};\\
        &\quad \{\ASSIGN{pAD}{0.04*cP+0.07*cV-0.51*pA+0.85*pAD-0.8}\}~[0.5]~\\
        &\quad \{\ASSIGN{pAD}{0.04*cP+0.07*cV-0.51*pA+0.85*pAD+0.8}\};\\
        &\}
    \end{align*}
\end{example}

In this probabilistic program, we take the \textbf{invariant} $I = cV \ge 0$, and synthesize a dbRSM $0.7747 * cP^2 + 0.0004 * cV^2 + 0.0222 * pA^2 + 0.0005 * pAD^2 + 0.0298 * cP * cV - 0.0919 * cP * pA - 0.0168 * cP * pAD - 0.0019 * cV * pA - 0.0003 * cV * pAD + 0.0014 * pA * pAD$ (cut to $10^{-4}$ precision).

\begin{example}[\textnormal{CAV-7}]
    \begin{align*}
        \cc_{\textnormal{CAV-7}}\colon
        \quad
        &\WHILESYMBOL\left(\,i \leq 4\,\right)\{\\
        &\quad	\{\ASSIGN{x}{x+1};\ASSIGN{i}{i+1}\}~[1-0.2*i]~\{\ASSIGN{x}{x+1}\}\\
        &\}
    \end{align*}
\end{example}

In this probabilistic program, we take the \textbf{invariant} $I = 0 \le i \le 5 \wedge 0 \le x$, and synthesize dbRSM $-i$.

\begin{example}[\textnormal{cav-5}]
    \begin{align*}
        \cc_{\textnormal{cav-5}}\colon
        \quad
        &\WHILESYMBOL\left(\,10 \leq \textit{money}\,\right)\{\\
        &\quad \{\ASSIGN{bet}{5}\}~[0.5]~\{\ASSIGN{bet}{10}\};\\
        &\quad \ASSIGN{money}{money-bet};\\
        &\quad \PASSIGN{bank\_guard}{\Uniform{(0.0, 1.0)}}\\
        &\quad \IFSYMBOL \left(bank\_guard \le 0.94737\right) \{\\
        &\quad \quad \PASSIGN{col1\_guard}{\Uniform{(0.0, 1.0)}};\\
        &\quad \quad \IFSYMBOL \left(col1\_guard \le 0.33333\right) \{\\
        &\quad \quad \quad \PASSIGN{flip\_guard1}{\Uniform{(0.0, 1.0)}};\\
        &\quad \quad \quad \IFSYMBOL \left(flip\_guard1 \le 0.5\right) \{\\
        &\quad \quad \quad \quad \ASSIGN{money}{money + 1.5 * bet}; \\
        &\quad \quad \quad \}\ELSESYMBOL \{\ASSIGN{money}{money + 1.1 * bet};\}\\
        &\quad \quad \}\ELSESYMBOL\{\\
        &\quad \quad \quad \PASSIGN{col2\_guard}{\Uniform{(0.0, 1.0)}};\\
        &\quad \quad \quad \IFSYMBOL \left(col2\_guard1 \le 0.5\right) \{\\
        &\quad \quad \quad \quad \PASSIGN{flip\_guard2}{\Uniform{(0.0, 1.0)}};\\
        &\quad \quad \quad \quad \IFSYMBOL \left(flip\_guard2 \le 0.33333\right) \{\\
        &\quad \quad \quad \quad \quad \ASSIGN{money}{money+1.5*bet}; \\
        &\quad \quad \quad \quad \}\ELSESYMBOL \{\ASSIGN{money}{money + 1.1 * bet};\}\\
        &\quad \quad \quad \} \ELSESYMBOL \{\\
        &\quad \quad \quad \quad \PASSIGN{flip\_guard3}{\Uniform{(0.0, 1.0)}};\\
        &\quad \quad \quad \quad \IFSYMBOL \left(flip\_guard3 \le 0.66667\right) \{\\
        &\quad \quad \quad \quad \quad \ASSIGN{money}{money+0.3*bet};\\
        &\quad \quad \quad \quad \} \ELSESYMBOL \{\texttt{skip}\}\\
        &\quad \quad \quad \}\\
        &\quad \quad \}\\
        &\quad \}\\
        &\quad \ASSIGN{i}{i+1}\\
        &\}
    \end{align*}
\end{example}

In this probabilistic program, we take the \textbf{invariant} $I = 0 \le i \wedge -1 \le money$, and synthesize dbRSM $money-10$.

\begin{example}[\textnormal{add}]
    \begin{align*}
        \cc_{\textnormal{add}}\colon
        \quad
        &\WHILESYMBOL\left(\,y \leq 1\,\right)\{\\
        &\quad	\{\ASSIGN{y}{y+1}\}~[0.2]~\{\ASSIGN{x}{x+1}\}\\
        &\}
    \end{align*}
\end{example}

In this probabilistic program, we take the \textbf{invariant} $I = 0 \le x \wedge 0 \le y \le 2$, and synthesize dbRSM $1-y$.

\begin{example}[\textnormal{Growing Walk Variant2}]
    \begin{align*}
        \cc_{\textnormal{Growing Walk Variant2}}\colon
        \quad
        &\WHILESYMBOL\left(\,r \le 0\,\right)\{\\
        &\quad	\{\ASSIGN{r}{0}\}~[0.5]~\{\ASSIGN{r}{1}\};\\
        &\quad  \ASSIGN{y}{y+x*r};\\
        &\quad  \ASSIGN{x}{x+1};\\
        &\}
    \end{align*}
\end{example}

In this probabilistic program, we take the \textbf{invariant} $ I =  0\le x \wedge 0\le y \wedge 0\le r\le 1$, and every loop iteration terminates directly with probability $p = 0.5$.

\end{document}